\newtheorem{theorem}{Theorem}
\theoremstyle{plain}
\newtheorem{claim}{Claim}
\newtheorem{assumption}{Assumption}
\newtheorem{corollary}{Corollary}
\newtheorem{definition}{Definition}
\newtheorem{example}{Example}
\newtheorem{lemma}{Lemma}
\newtheorem{remark}{Remark}
\newcommand\independent{\protect\mathpalette{\protect\independenT}{\perp}}
\def\independenT#1#2{\mathrel{\rlap{$#1#2$}\mkern2mu{#1#2}}}
\numberwithin{equation}{section}
\begin{document}
\title[Regulatory Policies and Social Welfare]{Evaluating the Impact of Regulatory Policies on Social Welfare in Difference-in-difference Settings}
\thanks{\scriptsize{We are grateful to the editor and three anonymous referees for helpful comments and suggestions. We also thank Manuel Arellano, Dmitry Arkhangelsky, David Autor, Bocar Ba, Brendan Beare, St\'ephane Bonhomme, Joachim Freyberger, Bulat Gafarov, Bo Honor\'e, Guido Imbens, Simon Lee, Michal Kolesar, Kory Kroft, Attila Linder, Patrick Kline, Matthew Masten, Claudia Noack, Christoph Rothe, Pedro Sant'Anna, Andres Santos, Jesse Shapiro, Liyang Sun, and Kaspar W\"uthrich as well as participants at the NBER Summer Institute 2024, Canadian Econometrics Study Group 2024, Triangle Econometrics Conference 2023, Southern Economics Association 2022, and the econometrics seminars at Boston University, Columbia, Duke University, Kentucky, Notre Dame, NYU, Ohio State, Ottawa, Princeton, Queen's, Stanford, Tulane, UNC Chapel Hill, Virginia, and Yale for helpful discussions. We are also grateful to \'Alvaro S\'anchez Leache for excellent research assistance. Dalia Ghanem is grateful to the Center for Monetary and Financial Studies (CEMFI)  for its generous hospitality during her sabbatical visit.\\
$^\dagger$Department of Agricultural \& Resource Economics, University of California, Davis. One Shields Ave, Davis CA, 95616, U.S.A., \href{dghanem@ucdavis.edu}{dghanem@ucdavis.edu}.\\
$^\ddagger$Department of Economics, University of North Carolina, Chapel Hill, Gardner Hall CB3305, Chapel Hill, NC 27599, U.S.A. \href{dkedagni@unc.edu}{dkedagni@unc.edu}.\\ 
$^\ast$Department of Economics,
  Washington University in St. Louis \& NBER. Address: One Brookings Drive
St. Louis, MO 63130-4899, USA. Email: \href{ismaelm@wustl.edu}{ismaelm@wustl.edu}.}}

\author{Dalia Ghanem$^\dagger$ \quad D\'esir\'e K\'edagni$^\ddagger$ \quad
Ismael Mourifi\'e$^\ast$}
\date{\scriptsize{The present version is of \today. 
}}
\maketitle
\pagenumbering{gobble}
\begin{center}{\footnotesize{\begin{minipage}{0.9\textwidth} \textsc{Abstract}. Quantifying the impact of regulatory policies on social welfare generally requires the identification of counterfactual distributions. Many of these policies (e.g. minimum wages or minimum working time) generate mass points and/or discontinuities in the  outcome distribution. Existing approaches in the difference-in-difference literature cannot accommodate these discontinuities while accounting for selection on unobservables and non-stationary outcome distributions. We provide a unifying partial identification result that can account for these features. Our main identifying assumption is the stability of the dependence (copula) between the distribution of the untreated potential outcome and group membership (treatment assignment) across time. Exploiting this copula stability assumption allows us to provide an identification result that is invariant to monotonic transformations. We provide sharp bounds on the counterfactual distribution of the treatment group suitable for any outcome, whether discrete, continuous, or mixed. Our bounds collapse to the point-identification result in \citet{AtheyImbens2006} for continuous outcomes with strictly increasing distribution functions. We illustrate our approach and the informativeness of our bounds by analyzing the impact of an increase in the legal minimum wage using data from a recent minimum wage study \citep{Cengizetal2019}. \\
\textbf{Keywords}: Copula, Identified Set, Changes-in-Changes, Sharp bounds, Social welfare treatment effects.\\
\textbf{JEL Classification}: C12, C14, C21 and C26\end{minipage}}}\end{center}

\setcounter{page}{0}
\newpage

\pagenumbering{arabic}
\section{Introduction}
Government's regulatory role and its impact on social welfare has been a critical question for economists. These regulatory policies often restrict the budget or choice sets for certain agents in the market by imposing floors  or quotas, such as minimum wages, minimum/maximum working time, wage floors for different occupation groups as well as action, reporting and notification thresholds in environmental monitoring. Those types of policies tend to induce behavioral responses that can generate mass points in the outcome of interest. For instance, an important question in the labor economics literature is the effect of an increase or introduction of minimum wages on low-wage jobs or overall employment, see for instance \citet{CK1994}, \citet{NeumarkWascher2008}, \citet{Cengizetal2019}, among many others. 
The figure below (taken from \citet{Cengizetal2019}) illustrates that an increase in the minimum wage will shift jobs that were previously paying below the minimum wage $MW$, and then will create “excess jobs” at and slightly above the minimum wage.
\begin{figure}\caption{Figure 1 from \citet{Cengizetal2019}}\label{fig:Ceng}
\begin{tabular}{c}
\includegraphics[width=8cm]{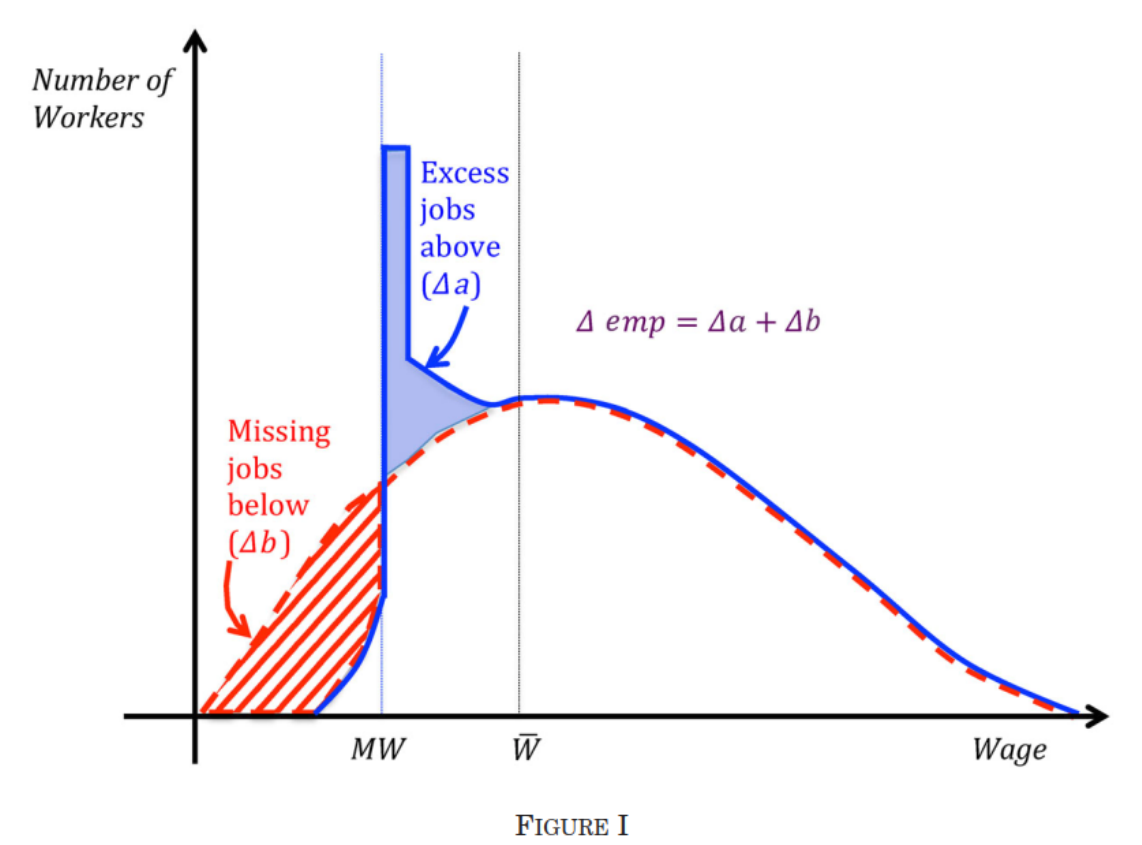}\\
\multicolumn{1}{c}{\parbox{0.8\textwidth}{\scriptsize\emph{Notes}: It is paramount to highlight that the counterfactual in Figure \ref{fig:Ceng} assumes no pre-existing minimum wage, and therefore its distribution is continuous. In practice, however, there is a pre-existing minimum wage which typically leads to a discontinuity in the counterfactual outcome distribution under the pre-existing minimum wage (see Appendix \ref{app:num_illustration_identification} for a numerical illustration).}}\\
\end{tabular}
\end{figure}
This figure also shows the heterogeneous effect of such a policy; it is expected to only affect the wage of low-wage workers and not have an effect on the upper tail of the distribution. In sum, those types of policies have two main features. First, the potential outcomes of interest are likely to exhibit some mass points. Second, the causal effect of the policy is expected to affect only a part of the distribution of the outcomes of interest. As a result, to adequately analyze the impact of these policies, a distributional treatment effect analysis is key, as in \citet{Cengizetal2019} for instance; see, also, \citet{Almondetal2011,assuncao2022}. Furthermore, measuring the impact of such policies on social welfare requires recovering the counterfactual distribution of the outcome of interest.

While these types of policies are widely studied in economics, the existing econometrics methods are not necessarily adequate to recover distributional causal effects in these settings.  In the presence of data before and after a new policy, one of the most widely used techniques to assess its impact is the  difference-in-differences (DiD) method. Its main drawbacks, however, are two-fold: (1) it does not identify the counterfactual distribution, (2) it is not invariant to monotonic transformations. While there are several methods to identify the counterfactual distribution in difference-in-difference settings \citep{AtheyImbens2006,bonhommesauder2011,CallawayLi2019,HavnesMogstad2015}, to the best of our knowledge,  the distributional DiD and changes-in-changes (CiC) are the only two approaches that are invariant to monotonic transformations.\footnote{The distributional DiD method relies on a parallel trends assumption in the cdfs as opposed to the expectations \citep[e.g.][]{HavnesMogstad2015,RothSantanna2021}. }

\citet{RothSantanna2021} show that distributional DiD requires that the distribution of the untreated potential outcome is independent of policy adoption, is stationary across time (within each group), \emph{or} consists of a mixture of two subpopulations each obeying one of the two restrictions.     Such conditions are unlikely to be valid for the policy evaluation questions we are interested in.
 Indeed, the independence assumption (random assignment) is implausible in our context since the decision to implement a new minimum wage policy is a response to the unsatisfactory features of the pre-policy outcome distribution, such as large wage inequalities, high proportion of workers under poverty, etc. When the policy is not randomly assigned, the validity of the distributional DiD essentially rests on the stationarity assumption, which is restrictive in many practical settings.\footnote{The stationarity assumption can be tested using the control group. \citet{RothSantanna2021} provide a sharp specification test of the validity of the distributional DiD assumption in general.}

 While  the CiC approach introduced in the seminal work by \citet{AtheyImbens2006} can accommodate endogenous policy (treatment) assignment as well as time-varying potential outcome distributions, their identification result does not apply to the case where the potential outcomes exhibit some mass points (mixed distributions), as in Figure \ref{fig:Ceng}.\footnote{Mass points are common for a wide range of economic outcomes resulting from censoring \citep{bottomcoding1,topcoding1} or bunching \citep{bunching3,bunching4,bunching2,bunching5,bunching6,bunching7,bunching8}.} In fact, \citet{AtheyImbens2006} introduce the CiC approach for \emph{either} continuous \emph{or} discrete outcomes that are monotonic (time-varying) functions of a scalar unobservable with a time-invariant distribution across time. In sum, the CiC approach introduced in \citet{AtheyImbens2006} should not be applied to evaluate the policies described above.

The current paper provides an alternative, unifying identification result that applies to any type of outcome distribution, is invariant to monotonic transformations, allows for endogeneity of the policy assignment, and does not restrict the evolution of the marginal distribution, nor treatment effect heterogeneity. Our identification result exploits the stability of the dependence (copula) between treatment assignment and the untreated potential outcome across time without imposing restrictions on the structural function that generates the potential outcomes. 

Exploiting our copula stability (CS) assumption, we provide a unifying partial identification result for the counterfactual distribution of the treatment group.  We then extend our analysis to the case where multiple pre-treatment periods are available. In this case, we show that if copula stability holds for multiple pre-treatment periods, then our multi-period CS bounds exploit the information from the pre-treatment periods to provide tighter bounds.\footnote{We use multi-period CS bounds to refer to CS bounds that use multiple pre-treatment periods.} The presence of multiple pre-treatment periods also allows us to provide a testable restriction of our model assumptions. We demonstrate our theoretical results numerically in Section \ref{sec:illustration_MP}.

Our CS bounds apply to any type of outcome distribution, whether it is continuous, mixed, or discrete. They shrink to the point-identification result in \citet{AtheyImbens2006} for continuous outcomes. Indeed, we show that in this case our copula stability assumption is equivalent to the CiC conditions. For discrete outcomes, we show that our copula stability assumption can be compatible with  an underlying production function featuring  multi-dimensional unobserved heterogeneity, whereas the CiC bounds for discrete outcomes require a scalar unobservable. For mixed outcomes, we demonstrate that a na\"ive implementation of the CiC approach may lead to a point-estimand that does not coincide with the true counterfactual, whereas our CS bounds will include it.\footnote{We refer to this implementation as na\"ive since \citet{AtheyImbens2006} did not provide identification results for mixed outcomes. Nonetheless, an empirical researcher might ignore the mixed-nature of this outcome and implement their point-identification result. }

We also examine the connection between our main identifying assumption and the parallel trends assumption required by DiD. The parallel trends assumption can be equivalently stated as a covariance stability assumption. It is specifically a time invariance assumption on the covariance between treatment assignment and the untreated potential outcome, whereas our assumption maintains the stability of the copula between these two variables. As a result, there are several differences between our copula stability assumption and covariance stability (parallel trends). First, the parallel trends assumption restricts the joint variability of treatment assignment and the untreated potential outcome over time, whereas our copula stability assumption only restricts their dependence structure.  Second, while the parallel trends assumption restricts the evolution of the marginal distribution of the untreated potential outcome across time, copula stability does not restrict the  evolution of the marginal distribution, nor treatment effect heterogeneity. Last but not least, parallel trends is not invariant to monotonic transformations except under strong conditions on heterogeneity \citep{RothSantanna2021}. These conditions specifically rule out the existence of a subpopulation that selects into treatment based on unobservables \emph{and} exhibits changes in its potential outcome distribution. By contrast, our copula stability condition does not rule out such a subpopulation.

Since the motivation behind policies, such as increases in the legal minimum wage, is often to reduce inequality and/or target a specific part of the outcome distribution, we introduce a broad class of social welfare treatment effect parameters that can accommodate the policymaker's objective. While this class includes the average treatment effect on the treated (ATT) as a special case, the ATT corresponds to a social welfare function that is inequality-neutral and gives equal weight to all individuals in the population. As a result, if a policymaker is averse to inequality, then the ATT would be an inadequate causal parameter to judge the policy's effectiveness. In general, the social welfare function adequate to evaluate a specific regulatory policy can be highly context-specific and may depend on the policymaker's preference and/or objective.\footnote{Please see the discussion in \citet{Bergeretal2022} which illustrates how the 
quantitative analysis of the effect of the minimum wage could highly differ depending on the social welfare weights, which are usually unknown to the researcher.}  We therefore introduce a broad class of treatment effect parameters that take into account the policy objectives. This broad class specifically includes the class of generalized Gini social welfare functions \citep[e.g.][]{Mehran1976,Weymark1981}. These social welfare functions can take into account measures of inequality by putting higher weight on individuals with lower-ranked outcomes. In addition, we include a class of parameters that can capture the welfare of individuals at the lower tail or a specific interquantile range of the distribution. Bounds on these social welfare treatment effect parameters can be easily computed using our bounds on the counterfactual distribution. We illustrate the usefulness of this broad class of parameters and compare it to the ATT in the context of our empirical application examining the impact of a minimum wage policy (Section \ref{sec:empirical}).

We organize the rest of the paper as follows. Section \ref{Sec: Main-results} introduces the analytical framework and presents our main identification results. Section \ref{sec:swtt} introduces the class of social welfare treatment effect parameters. Section \ref{sec:empirical} provides an empirical illustration examining the impact of minimum wage increases on the wage distribution revisiting \citet{Cengizetal2019}.

\subsection*{Related Literature}

A comparison between our identifying assumption and some of the related approaches in the literature is warranted. \citet{bonhommesauder2011} exploit a separable model of the potential outcome to identify the entire counterfactual distribution of the treatment group in a DiD design. By relying on restrictions on the outcome model, it is therefore similar in spirit to the identification approach in \citet{AtheyImbens2006}. \citet{botosarumuris2023} propose identification of counterfactual parameters for a class of semiparametric panel models, whereas our approach can accommodate both repeated cross-sections and panel data and is fully nonparametric. \citet{CallawayLi2019} also provide a fully nonparametric identification result exploiting a copula stability restriction on  different objects than the ones used in this paper. 
They require the copula between changes and levels of the untreated potential outcome to be invariant across time for the treatment group, while our copula stability assumption does not restrict the evolution of the marginal distribution of the untreated potential outcome (Remark \ref{rem:CL2019}). Furthermore, our approach can be applied to repeated cross-sections or panel data and only requires two time periods, whereas \citet{CallawayLi2019} require at least three periods of panel data. \citet{wooldridge2023simple} proposes alternative parallel trends assumptions that are more suitable for binary, fractional and count outcome data. The approach in \citet{wooldridge2023simple} requires the specification of a parametric transformation model of a linear index for each type of outcome and point-identifies the average treatment effect on the treated, whereas our approach applies to any outcome, is fully nonparametric and partially identifies the counterfactual distribution. 

Finally, this paper contributes to a strand in the microeconometrics literature that relies on copula theory. For cross-sectional settings with exogenoeus regressors, \citet{rothe2012} provides identification results for partial distributional effects, which hold the copula of the covariates constant, but vary their marginal distributions. \citet{mourifie2015} relies on copula theory to provide sharp bounds on the average treatment effect in a binary triangular system. \citet{arellanobonhomme2017} propose a method to correct for sample selection in quantile models, where the conditional copula of the error terms in the outcome and selection equations is a key ingredient in their approach.

\section{Analytical Framework and Main Identification Results}\label{Sec: Main-results}

Following \citet{Abadie2005}, we consider the following potential outcomes model:\footnote{Note that this model implicitly assumes that there are no anticipatory effects of the treatment, that is, $Y_{00}=Y_{01}=Y_0$.}
\begin{eqnarray}\label{seq1}
	\left\{ \begin{array}{lcl}
		Y_0 &=& Y_{00} \\ \\
		Y_1&=&Y_{11}D+Y_{10}(1-D)
	\end{array} \right.
\end{eqnarray}
where $Y_{t}$ denotes the observed outcome at period $t$ and $Y_{td}$ denotes the potential outcome at period $t\in\{0,1\}$ and treatment status $d\in\{0,1\}$. In the two-group, two-period case, $D$ denotes both group membership and the treatment status in period 1.

We use the following shorthand notation: $p\equiv \mathbb P(D=1)$, $q=1-p$, $\overline{\mathbb R}\equiv\mathbb R \cup \{-\infty, \infty\}$, $Ran H \equiv\{ H(y):y \in \mathbb R\}$, $\overline{\operatorname{Ran}}  F\equiv Ran F \cup \{\inf Ran F, \sup Ran F\}$, and $Dom H$ denotes the domain of the function $H$.
 We consider the following mappings $Q^{\mathbb  T,-}_X: [0,1] \rightarrow \mathbb T$,  and $Q^{\mathbb  T,+}_X: [0,1] \rightarrow \mathbb T$, where $Q^{\mathbb  T,-}_X(u)\equiv \inf \{x \in \mathbb T \cup \{\infty\}: F_X(x)\geq u\}$ for all $u \in [0,1]$, 
  $Q^{\mathbb  T,+}_X(u)\equiv \sup \{x \in \mathbb T \cup \{-\infty\}: F_X(x)\leq u\}$ for all $u \in [0,1]$.
 We call  $Q^{\mathbb  T,+}_X$ and $Q^{\mathbb  T,-}_X$ \textit{generalized quantile functions} whenever $F_X(.)$ is a well-defined cumulative distribution function (cdf). We denote by $\mathbb F$ the space of all well-defined cdfs. 
  $Supp X=\mathbb X$ denotes the support of $X$, and  $\mathbb X_{s|d}$ denotes the support of $X_s|D=d$ for $d \in \{0,1\}$. Finally, we define $F_X(x-)\equiv \mathbb{P}(X<x)$.

\subsection{Identifying Assumptions}
Our main identification result  relies on restrictions imposed on the dependence structure across time. To do so, we rely on copula theory. 
Copulas are functions that enable us to separate the marginal distributions from the (scale-free) dependence structure of a
given multivariate distribution. In our context, we are interested in the subcopula between the untreated potential outcome and group membership across time. Working with copulas in our case will allow us to avoid restricting the type of marginal distribution of the potential outcomes as well as its heterogeneity across time.  
To fix ideas, let us first provide a formal definition of the (sub)copula.  
\begin{definition}[\cite{Nelsen2006}]
A two-dimensional subcopula is a function $C$ with the following properties: 
\begin{enumerate}
\item $Dom C=S_1\times S_2$, where  $S_1$ and $S_2$ are subsets  of $[0,1]$ containing $0$ and $1$;
\item For all $u,u' \in S_1$, and $v,v' \in S_2$ such that $u\leq u'$, and $v\leq v'$, we have:
$$ C(u',v') +  C(u,v)\geq C(u',v) +  C(u,v');$$
\item $C(0,v)=C(u,0)=0$ for all $(u,v) \in S_1\times S_2$, and $C(1,v)=v$, $C(u,1)=u$ for all $(u,v) \in S_1\times S_2$.
\end{enumerate}
\end{definition}
A copula is a special case of a subcopula where $S_1=S_2=[0,1]$. For a fixed $v \in S_2$, $u \mapsto C(u,v)$ is usually called the horizontal subcopula.
The link between the joint distribution and the subcopula has been established by the well-known Sklar (1959) theorem, which provides the following lemma  when applied to our context.
\begin{lemma}[Sklar, 1959]
There exists a unique  subcopula $C: \overline{\operatorname{Ran}}  F_{Y_{t0}}\times \{0,q,1\} \rightarrow [0,1]$ such that
\begin{eqnarray}
\mathbb P(Y_{t0}\leq y, D=0)=C_{Y_{t0},D}(F_{Y_{t0}}(y),q),\;\;\; \text{ for }  y \in \overline{\mathbb{R}}. \label{eq:sklar}
\end{eqnarray} 
\end{lemma}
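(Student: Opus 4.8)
The plan is to read this off from Sklar's theorem applied to the bivariate random vector $(Y_{td},D)$, or equivalently to reproduce the standard constructive proof in this special case. Write $H(y,d)\equiv \mathbb{P}(Y_{td}\le y,\ D\le d)$ for the joint cdf on $[-\infty,\infty]^2$. Because $D$ is binary, $H(y,0)=\mathbb{P}(Y_{td}\le y,\ D=0)$, and the second marginal $F_D$ satisfies $F_D(d)=0$ for $d<0$, $F_D(d)=q$ for $0\le d<1$, and $F_D(d)=1$ for $d\ge 1$, so $Ran F_D=\{0,q,1\}$. Set $S_1\equiv \overline{Ran} F_{Y_{td}}$ and $S_2\equiv\{0,q,1\}$; since $F_{Y_{td}}$ maps $[-\infty,\infty]$ onto $S_1$ and $F_D$ maps $[-\infty,\infty]$ onto $S_2$, the pair $(F_{Y_{td}},F_D)$ maps $[-\infty,\infty]^2$ onto $S_1\times S_2$.

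First I would define the candidate $C$ on $S_1\times S_2$ by
\begin{eqnarray*}
C\bigl(F_{Y_{td}}(y),\ F_D(d)\bigr)\ \equiv\ H(y,d),\qquad y,d\in[-\infty,\infty],
\end{eqnarray*}
and check that this is unambiguous — this consistency check is the real content of the statement. If $F_{Y_{td}}(y_1)=F_{Y_{td}}(y_2)$ with $y_1\le y_2$, then $\mathbb{P}(y_1<Y_{td}\le y_2)=0$, hence $\mathbb{P}(y_1<Y_{td}\le y_2,\ D\le d)=0$ and so $H(y_1,d)=H(y_2,d)$; the identical monotonicity argument in the second coordinate (with $F_D$ in place of $F_{Y_{td}}$) handles that direction. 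Surjectivity of $(F_{Y_{td}},F_D)$ then shows $C$ is defined everywhere on $S_1\times S_2$ and, simultaneously, that any subcopula satisfying the displayed identity must coincide with $C$, which yields uniqueness.

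It then remains to verify the three defining properties of a subcopula from the definition above. The domain property is immediate since $S_1,S_2\subseteq[0,1]$ and each contains $0$ and $1$. The $2$-increasing inequality $C(u',v')+C(u,v)\ge C(u',v)+C(u,v')$ for $u\le u'$, $v\le v'$ becomes, after substituting $u=F_{Y_{td}}(y_1)$, $u'=F_{Y_{td}}(y_2)$, $v=F_D(d_1)$, $v'=F_D(d_2)$,
\begin{eqnarray*}
H(y_2,d_2)-H(y_1,d_2)-H(y_2,d_1)+H(y_1,d_1)\ =\ \mathbb{P}(y_1<Y_{td}\le y_2,\ d_1<D\le d_2)\ \ge\ 0,
\end{eqnarray*}
which holds because probabilities are nonnegative. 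Finally, $H(y,d)\le\min\{\mathbb{P}(Y_{td}\le y),\ \mathbb{P}(D\le d)\}$ gives $C(0,v)=C(u,0)=0$, while $H(\infty,d)=\mathbb{P}(D\le d)=F_D(d)$ and $H(y,\infty)=\mathbb{P}(Y_{td}\le y)=F_{Y_{td}}(y)$ give $C(1,v)=v$ and $C(u,1)=u$.

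I do not expect a serious obstacle. The only delicate points are the consistency check above (the crux of Sklar's theorem) and the bookkeeping around the atom of $D$ and the extended-real endpoints $y\in\{-\infty,\infty\}$ — it is precisely this that collapses the range of the second marginal to the three-point set $\{0,q,1\}$ and guarantees $0,1\in S_1$. Alternatively, one can simply invoke the subcopula version of Sklar's theorem \citep{Nelsen2006} for $(Y_{td},D)$ and identify $S_2=Ran F_D=\{0,q,1\}$.
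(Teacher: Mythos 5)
Your proof is correct and matches the paper's treatment: the paper offers no proof of this lemma, simply reading it off the subcopula form of Sklar's theorem applied to $(Y_{td},D)$ with $Ran\,F_D=\{0,q,1\}$, which is exactly the alternative you note in your closing sentence. The constructive argument you spell out — well-definedness of $C(F_{Y_{td}}(y),F_D(d))\equiv H(y,d)$ via the consistency check, the $2$-increasing and boundary properties, and uniqueness from surjectivity of $(F_{Y_{td}},F_D)$ onto $\overline{Ran}\,F_{Y_{td}}\times\{0,q,1\}$ (with the values at $v\in\{0,1\}$ forced by the subcopula axioms rather than the displayed identity) — is the standard proof underlying that citation and is sound.
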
\label{lem:sklar}

To provide intuition for the role of the horizontal subcopula at $q$, it is helpful to divide each side of Equation\ \eqref{eq:sklar} by $q$, which yields the following for $y\in \overline{\mathbb{R}}$
\begin{eqnarray}
   F_{Y_{t0}|D=0}(y)&=&C_{Y_{t0},D}(F_{Y_{t0}}(y),q)/q.\label{eq:rank_mapping}\end{eqnarray}
Now, let us assume that the copula is strictly increasing in its first argument such that its inverse $C_{Y_{t0},D}^{-1}(\cdot~;q)$ is well-defined.\footnote{Note that a horizontal copula is by definition Lipschitz continuous.}  We can then show that $C_{Y_{t0},D}^{-1}(\cdot~;q)$ is the main ingredient in the \textit{rank mapping} between the treatment and control group's untreated potential outcome distribution in period $t$, which we denote by $\Gamma_t(\cdot)$:
\begin{eqnarray}
F_{Y_{t0}|D=1}(y)
&=& \Gamma_t\left( F_{Y_{t0}|D=0}(y) \right)\quad \text{for }y\in\overline{\mathbb{R}},\label{opt-map}
\end{eqnarray}
where $\Gamma_t(u)\equiv \frac{1}{p}\left( C_{Y_{t0},D}^{-1}\left(uq;q \right)- uq \right)$ for $u\in RanF_{Y_{t0}|D=0}$. The mapping governs the relationship of the rank that a given value $y$ has in the control group's distribution of the untreated potential outcome $F_{Y_{t0}|D=0}$ (factual at each period) onto its rank in the treatment group's distribution of the untreated potential outcome, $F_{Y_{t0}|D=1}$.

Next, we introduce our main assumption.

\begin{assumption}[Copula stability]\label{stab} The following condition holds:
$C_{Y_{00},D}(u,q)=C_{Y_{10},D}(u,q)$ for all $u \in [0,1]$.
\end{assumption} 
In the following, we will refer to Assumption \ref{stab} as ``copula stabilty'' for brevity,  but we emphasize that it only requires the stability of the horizontal copula between $Y_{t0}$ and $D$ at $q$, $C_{Y_{t0},D}(u,q)$ for $u\in[0,1]$. There are multiple advantages to our copula stability assumption. First, it is invariant to strictly monotonic transformations. Specifically, for any right-continuous function $g$, that is strictly increasing on $\mathbb Y_{td}$, we have:\footnote{See \citeauthor{Embrechts_al2013} (\citeyear{Embrechts_al2013}, Proposition 4(2)) for a formal proof.} $$C_{g(Y_{td}),D}(u,q)=C_{Y_{td},D}(u,q) \;\; \forall u \in RanF_{Y_{td}}.$$
Second, it does not impose any restrictions on the variability of the marginal distribution $F_{Y_{t0}}$ across time.  Last, but not least, it does not restrict the type of marginal distribution $F_{Y_{t0}}$, whether it is continuous, discrete or mixed.

Assumption \ref{stab} is the key assumption behind our identification approach.  It implies that the rank mapping $\Gamma_t(\cdot)$ is stable across periods, i.e. $ \Gamma_1(\cdot)=\Gamma_0(\cdot)\equiv \Gamma(\cdot)$. In the presence of multiple pre-treatment periods, $\Gamma(\cdot)$ can be recovered from each pre-treatment period. As a result, analogous to pre-trend testing in difference-in-differences designs, the time-invariance of $\Gamma(\cdot)$ can also be tested as we demonstrate in Section \ref{sec:multiple_periods}.

Given the wide use of difference-in-differences, it is also helpful to clarify the relationship between our copula stability assumption and the parallel trends assumption. 
The parallel trends assumption can be equivalently rewritten as a covariance stability assumption as we show in Appendix \ref{Appen:covstability},
\begin{eqnarray}\label{eq-PT}
    \mathbb E[Y_{10}-Y_{00}|D=1]=\mathbb E[Y_{10}-Y_{00}|D=0] \iff Cov(Y_{00},D)=Cov(Y_{10},D).
\end{eqnarray}
This equivalence result provides, first, an intuition for why the parallel trends assumption is not invariant to a monotonic transformation since the covariance is not invariant to monotonic transformations. Second, it allows us to observe that the parallel trends assumption jointly restricts the evolution of the marginal distribution of $Y_{t0}$ across time and the dependence between $Y_{t0}$ and $D$. Unlike the parallel trends assumption, our copula stability assumption
does not constrain the evolution of the marginal distribution across time, yet it relies only on the stability
of the horizontal copula that governs the relationship between $Y_{t0}$ and $D$. 
As can be seen in the following equation, the two assumptions are non-nested in general:
\begin{eqnarray*}
Cov(Y_{td},D)=\int \left[C_{Y_{td,D}}(F_{Y_{td}}(y),q)-F_{Y_{td}}(y)q\right]dy.
\end{eqnarray*}
Indeed, copula stability may hold while $Cov(Y_{10},D)\neq Cov(Y_{00},D)$ because $F_{Y_{10}}\neq F_{Y_{00}}$; and the covariance stability may hold 
while  the copula stability is violated.
 
 In the following, we provide several examples to illustrate the restrictions imposed by our key assumption, and how it compares to some existing assumptions.
 
\begin{example}[Roy selection]\label{ex:1}
Consider the following data generating process (DGP) in which the treatment is received when its gain (treatment effect) is bigger than or equal to a threshold, say 0 for simplicity. This is a simple Roy model where selection into treatment is on the gain. 
\begin{eqnarray}\label{eq:example1}
\left\{ \begin{array}{lcl}
     Y_0 &=& U_0\\ \\
     Y_{1} &=& \eta D+ U_1  \\ \\
     D &=& \mathbbm{1}\{\eta\geq 0\}
     \end{array} \right.
\end{eqnarray}
where $\left(\begin{array}{c}U_0\\ U_1 \\ \eta \end{array}\right)\sim N(0,\Sigma)$, $\Sigma=\left(\begin{array}{ccc}\sigma_0^2 & \delta \sigma_0 \sigma_1 &  \rho_0 \sigma_0\\ \delta \sigma_0 \sigma_1 & \sigma_1^2 &  \rho_1 \sigma_1\\ \rho_0 \sigma_0&  \rho_1 \sigma_1 & 1\end{array}\right)$, and $\rho_t \neq 0$.
In this case, we have the following:
\begin{enumerate}
\item [(a)] {\bf Copula  stability:} $\rho_0=\rho_1 \Leftrightarrow   Corr(\eta,Y_{00})=Corr(\eta,Y_{10})$.
\item [(b)] {\bf Parallel trends:} $\rho_0 \sigma_0=\rho_1 \sigma_1 \Leftrightarrow   Cov(\eta,Y_{00})=Cov(\eta,Y_{10})$.
\item [(c)] {\bf Distributional DiD:} $\rho_0=\rho_1$ and $\sigma_0^2=\sigma_1^2$ $\Leftrightarrow$ $Y_{00}|D=d\sim Y_{10}|D=d$, for $d=0,1$.
\end{enumerate}

As can be seen, the copula stability assumption is equivalent to $\rho_0=\rho_1$, meaning that the correlation between the policy effect $\eta$ and $Y_{t0}$ is stable over time.  
It does not restrict any moment of the marginal distribution of the potential outcomes $Y_{t0}$. The parallel trends assumption, however, restricts the variances of the potential outcomes $Y_{00}$ and $Y_{10}$, since it is equivalent to $\rho_0 \sigma_0=\rho_1 \sigma_1$. The validity of the  distributional DiD in this setting is implausible, since it requires stationarity of $Y_{t0}|D=d$. This could be easily checked using the observed distribution of the control group.

Note that while the copula stability condition, result (a), does not rely on the Gaussianity assumption imposed on the marginal distribution, results (b) and (c), which involve the parallel trends and distributional DiD assumptions, are heavily dependent on this distributional assumption. For further details,  see Appendices \ref{sec:example} and \ref{Appen:example1_nonnormal}.

\end{example}

The above example demonstrates the copula stability assumption in the context of selection on the gains from the treatment. We next consider selection on untreated potential outcomes. This example shows that copula stability requires comonotonicity between the untreated potential outcomes in the pre- and post-treatment periods.

\begin{example}[Selection on untreated potential outcomes]

Consider the following model, where selection into treatment is a function of the pre-treatment outcome, such as in the Ashenfelter dip,
\begin{eqnarray*}
\left\{ \begin{array}{lcl}
     Y_0 &=& Y_{00},\\ 
     Y_{1} &=& Y_{11} D + Y_{10} (1-D),  \\ 
     D &=& \mathbbm{1}\{Y_{00}> c\}.
     \end{array} \right.
\end{eqnarray*}
Assume that $Y_{00}$ has a continuous and strictly increasing cdf. It can be shown that 
$$
 C_{Y_{t0},D}(u,q)=\mathbb P\left(F_{Y_{t0}}(Y_{t0}) \leq u, F_{Y_{00}}(Y_{00}) \leq q\right).
$$
By construction, we can see that we always have $C_{Y_{00},D}(u,q)=\min(u,q)$, while 
$C_{Y_{10},D}(u,q)=\min(u,q)$ if $Y_{00}$ and $Y_{10}$ are comonotone.
Thus, in this model with selection on the lagged outcome, the copula stability assumption holds  if $Y_{10}=h_1(U)$, $Y_{00}=h_0(U)$ for some non-decreasing functions $h_0$ and $h_1$.

Notice that when selection is on lagged outcomes, the parallel trends assumption fails in general (as $\mathbb E[Y_{10}-Y_{00} \vert Y_{00} > c]\neq \mathbb E[Y_{10}-Y_{00} \vert Y_{00} \leq c]$), unless $Y_{10}-Y_{00}$ is independent of $Y_{00}$; that is, $Y_{00}$ follows a martingale process.\footnote{Relatedly, \citet{gsw2022} show that for parallel trends to hold under selection on pre-treatment unobservables, a martingale-type restriction on the untreated potential outcome is necessary.} Therefore, in this framework, even when parallel trends fail, copula stability may still hold under a particular mapping between $Y_{10}$ and $Y_{00}$. It is important to note however that the comonotonicity assumption may not be plausible in some applications, and therefore copula stability would fail.

Finally, from the above arguments, it is straightforward to show that if selection was on post-treatment (untreated) potential outcomes, $D=\mathbbm{1}\{Y_{10}>c\}$, then copula stability would also require comonotonicity between pre- and post-treatment untreated potential outcomes.
\end{example}

We next consider selection on time-varying shocks, an example that will not be compatible with copula stability in general.

\begin{example}[Selection on time-varying shocks]
Consider a setting where selection into treatment depends only on the post-treatment shock, specifically:
\begin{eqnarray*}
\left\{ \begin{array}{lcl}
     Y_{t0} &=& f_t + \lambda + \varepsilon_{t},\\  \\
     D &=& \mathbbm{1}\{\varepsilon_{1}>c\},
     \end{array} \right.
\end{eqnarray*}
where $\lambda$ and $\varepsilon_{t}$ are time-invariant and time-varying unobservables, respectively, and $f_t$ captures nonstochastic time trend. Suppose further that  $\left(\begin{array}{c}\lambda\\ \varepsilon_{0} \\ \varepsilon_{1} \end{array}\right)\sim N(0,\Sigma)$, $\Sigma=\left(\begin{array}{ccc}\sigma_{\lambda}^2 & 0 &  0\\ 0 & \sigma_{\varepsilon_0}^2 &  0\\ 0&  0 & \sigma_{\varepsilon_1}^2 \end{array}\right)$. In this model, the copula $C_{Y_{t0},D}(u,q)$ is Gaussian, and $Corr(\varepsilon_{1},Y_{00})=0$ while $Corr(\varepsilon_{1},Y_{10})=\frac{\sigma_{\varepsilon_1}}{\sqrt{\sigma^2_{\lambda}+\sigma^2_{\varepsilon_{1}}}}$.  Hence, our CS assumption fails to hold. Note, however, that in this DGP, both CiC and PT assumptions fail to hold as well.

Again, in this example, we can demonstrate that if selection was on pre-treatment shocks, $\varepsilon_0$, then copula stability would be violated by similar arguments.
\end{example}

\begin{example}[Firm-specific wage increases]
 Consider an individual working for a firm $f$. Let $Y_{00}$ and $Y_{10}$ be, respectively, the worker's wage in periods 0 and 1 in the absence of a minimum-wage increase $D$. The worker's wage in period 1 in the absence of the minimum-wage increase would be her wage in period 0 plus any increase that firm $f$ provides to its workers. Suppose there is an increase of $100R_f\%$ in workers' salaries in firm $f$. Then, we can write $Y_{10}=(1+R_f)Y_{00}$. If the wage increase rate $R_f$ is jointly independent of the baseline salary $Y_{00}$ and the policy $D$, i.e., $R_f \perp (Y_{00},D)$, then copula stability holds.\footnote{See proof in Appendix \ref{proof:mwexample}. We allow $D$ to depend on $Y_{00}$ and the treatment effect.} On the other hand, parallel trends as well as distributional parallel trends fail to hold unless $D$ is (mean) independent of $Y_{00}$ (i.e., unless random assignment holds).

In practice, the wage increase rate $R_f$ may depend on some firm-level characteristics $X_f$ that can explain $R_f$, $Y_{00}$, and $D$. Then, our horizontal copula stability assumption will hold conditional on firm characteristics $X_f$, i.e., $C_{Y_{10},D\vert X_f}(u,q)=C_{Y_{00},D\vert X_f}(u,q)$ for all $u$, but not unconditionally. 

\label{ex:firm_wages}\end{example}

Next, we proceed to our second identifying assumption, which requires the strict monotonicity of the horizontal copula.
\begin{assumption}[Strictly increasing horizontal copula]\label{inc} The function 
 $u \mapsto C_{Y_{10},D}(u,q)$  is strictly increasing on $[0,1]$.
\end{assumption} 
While Assumption \ref{inc} is less critical for our bounding approach, it allows us to simplify the expression of our bounds. It is essentially a restriction on the type of dependence between the potential outcomes and group membership. 
Many well-known parametric classes of copulas satisfy this assumption, e.g. Frank, Gumbel, Joe, or Gaussian copulas among many others. It excludes, however, extreme types of dependence captured by the 
Fr\'echet-Hoeffding copula bounds, i.e. $C(u,v)=\min\{u,v\}$ and $C(u,v)=\max\{u+v-1,0\}$. 
It is worth noting that this assumption is implied by  some support conditions on the potential outcome distributions, as we show in the following result.
\begin{lemma}\label{cop:mon} 
If $\mathbb Y_{t0|1} \subseteq \mathbb Y_{t0|0}$, then $u \mapsto C_{Y_{t0},D}(u,q)$  is strictly increasing on $Ran F_{Y_{t0}}$ for $t \in \{0,1\}$.
\end{lemma}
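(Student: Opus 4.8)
\smallskip
\noindent\textit{Proof plan.} The plan is to pass from the horizontal subcopula to the sub-distribution function $G(y)\equiv\mathbb P(Y_{t0}\le y,D=0)$, for which Sklar's lemma gives $C_{Y_{t0},D}(F_{Y_{t0}}(y),q)=G(y)$ for every $y$. First I would note that $G$ is constant on each level set of $F_{Y_{t0}}$: if $F_{Y_{t0}}(y)=F_{Y_{t0}}(y')$ with $y<y'$ then $\mathbb P(y<Y_{t0}\le y')=0$, hence $\mathbb P(y<Y_{t0}\le y',D=0)=0$. So $u\mapsto C_{Y_{t0},D}(u,q)$ is the well-defined map sending $u\in Ran F_{Y_{t0}}$ to $G(y)$ for any $y$ with $F_{Y_{t0}}(y)=u$, and strict monotonicity on $Ran F_{Y_{t0}}$ reduces to: for $u_1<u_2$ in $Ran F_{Y_{t0}}$, choosing $y_1<y_2$ with $F_{Y_{t0}}(y_i)=u_i$ (possible since $F_{Y_{t0}}$ is nondecreasing and $u_1\neq u_2$), one has $G(y_1)<G(y_2)$.

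The core of the argument is a mixture--support contradiction. Since $u_1<u_2$, $\mathbb P(y_1<Y_{t0}\le y_2)=u_2-u_1>0$, and splitting on $D$,
\begin{eqnarray*}
\mathbb P(y_1<Y_{t0}\le y_2)=\big(G(y_2)-G(y_1)\big)+\mathbb P(y_1<Y_{t0}\le y_2,\,D=1).
\end{eqnarray*}
Assume toward a contradiction that $G(y_1)=G(y_2)$; then $\mathbb P(y_1<Y_{t0}\le y_2\mid D=1)>0$. Because the complement of $\mathbb Y_{t0|1}$ is null under $\mathbb P(\cdot\mid D=1)$, this forces a point $y^\ast\in(y_1,y_2]\cap\mathbb Y_{t0|1}$, and the hypothesis $\mathbb Y_{t0|1}\subseteq\mathbb Y_{t0|0}$ gives $y^\ast\in\mathbb Y_{t0|0}$, so every open neighbourhood of $y^\ast$ carries positive $\mathbb P(\cdot\mid D=0)$-mass. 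Picking such a neighbourhood that lies inside $(y_1,y_2]$ yields $\mathbb P(y_1<Y_{t0}\le y_2,\,D=0)>0$, i.e.\ $G(y_2)>G(y_1)$ --- a contradiction. Hence $C_{Y_{t0},D}(u_1,q)=G(y_1)<G(y_2)=C_{Y_{t0},D}(u_2,q)$, the assertion.

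The step I expect to be the main obstacle is extracting ``a neighbourhood of $y^\ast$ lying inside $(y_1,y_2]$''. It is automatic once $y^\ast$ can be taken interior to the interval --- in particular whenever $Y_{t0}$ is atomless, since then $Y_{t0}\mid D=1$ is atomless too, so $\mathbb P(y_1<Y_{t0}<y_2\mid D=1)>0$ and $y^\ast$ may be chosen in the open interval $(y_1,y_2)$. The delicate case is when $Y_{t0}\mid D=1$ puts an atom exactly at the right endpoint $y_2$ while $Y_{t0}\mid D=0$ charges no part of $(y_1,y_2]$. I would handle it by fixing the canonical representatives $y_1=Q^{\mathbb R,-}_{Y_{t0}}(u_1)$, $y_2=Q^{\mathbb R,-}_{Y_{t0}}(u_2)$ and arguing that, under the regularity present in the target applications (e.g.\ $Y_{t0}$ atomless or $\mathbb Y_{t0}$ an interval), an atom at $y_2$ together with $y_2\in\mathbb Y_{t0|0}$ forces $\mathbb P(\cdot\mid D=0)$ to charge $(y_1,y_2]$ as well; the two genuinely extreme values $u_1=\inf Ran F_{Y_{t0}}$ and $u_2=\sup Ran F_{Y_{t0}}$ are then disposed of directly from the subcopula boundary identities $C_{Y_{t0},D}(0,q)=0$, $C_{Y_{t0},D}(1,q)=q$ and the inequality $\mathbb P(Y_{t0}>y_1\mid D=0)>0$ valid for $y_1<\sup\mathbb Y_{t0|0}$. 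Beyond this atom/endpoint bookkeeping, the argument uses only the elementary fact that the support of a finite mixture equals the union of the component supports.
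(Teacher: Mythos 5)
Your argument is, at its core, the same as the paper's: the paper writes $C_{Y_{t0},D}(u,q)=F_{Y_{t},D}\left(Q^{\mathbb R,-}_{Y_{t0}}(u),0\right)$ and reduces strict monotonicity of the horizontal subcopula on $Ran F_{Y_{t0}}$ to strict monotonicity of $y\mapsto\mathbb P(Y_{t0}\le y,D=0)$ across the strictly ordered quantile representatives, invoking $\mathbb Y_{t0}\subseteq\mathbb Y_{t0|0}$ for the final step; you do the same with arbitrary representatives $y_1<y_2$ of $u_1<u_2$, and your mixture--support argument is a more careful rendering of the single sentence with which the paper disposes of that step. Your main case is correct.

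The problem is the case you flag and then defer: an atom of $Y_{t0}\mid D=1$ at the right endpoint $y_2$ with $\mathbb P(y_1<Y_{t0}\le y_2\mid D=0)=0$. This is a genuine gap, and it cannot be closed under the lemma's stated hypotheses, because the conclusion actually fails there. Take $\mathbb P(D=0)=\tfrac12$, let $Y_{t0}\mid D=0$ be uniform on $[-1,-\tfrac12]\cup[0,\tfrac12]$, and let $Y_{t0}\mid D=1$ put mass $\tfrac12$ uniformly on $[-1,-\tfrac12]$ and an atom of mass $\tfrac12$ at $0$. Then $\mathbb Y_{t0|1}=[-1,-\tfrac12]\cup\{0\}\subseteq[-1,-\tfrac12]\cup[0,\tfrac12]=\mathbb Y_{t0|0}$, and $u_1=F_{Y_{t0}}(-\tfrac12)=\tfrac12$, $u_2=F_{Y_{t0}}(0)=\tfrac34$ both lie in $Ran F_{Y_{t0}}$, yet
\begin{equation*}
C_{Y_{t0},D}\left(\tfrac12,q\right)=\mathbb P\left(Y_{t0}\le-\tfrac12,\,D=0\right)=\tfrac14=\mathbb P\left(Y_{t0}\le 0,\,D=0\right)=C_{Y_{t0},D}\left(\tfrac34,q\right),
\end{equation*}
since the point $0$ belongs to $\mathbb Y_{t0|0}$ but all the $D=0$ mass near it lies strictly to its right. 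So the delicate case is not mere bookkeeping: some additional condition is genuinely needed (e.g.\ $Y_{t0}$ atomless, or every atom $a$ of $Y_{t0}\mid D=1$ satisfying $\mathbb P(Y_{t0}\in(a-\eps,a]\mid D=0)>0$ for all $\eps>0$), and your suggested patch via interval support does not suffice either: $Y_{t0}\mid D=1=\delta_0$ with $Y_{t0}\mid D=0\sim\mathcal U[0,1]$ gives $C_{Y_{t0},D}(0,q)=C_{Y_{t0},D}(\tfrac12,q)=0$ even though $\mathbb Y_{t0|0}=[0,1]$ is an interval. For what it is worth, the paper's own proof asserts the strict inequality $F_{Y_{t},D}\left(Q^{\mathbb R,-}_{Y_{t0}}(u),0\right)<F_{Y_{t},D}\left(Q^{\mathbb R,-}_{Y_{t0}}(u'),0\right)$ at exactly this point without addressing the atom-at-the-endpoint configuration, so you have correctly located the one step that needs repair --- but your proposal does not supply the repair.
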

The main implication of the above lemma is that for continuous potential outcome distributions, we have $Ran F_{Y_{t0}}=[0,1]$, and the strict monotonicity of the copula (Assumption~\ref{inc}) is implied by a condition on the support of $Y_{t0}$, $\mathbb Y_{t0|1} \subseteq \mathbb Y_{t0|0}$. That is, the support of the untreated potential outcome of the treatment group is included in the support of the untreated potential outcome of the control group. The support condition imposed in \citet{AtheyImbens2006} on the scalar unobservable in the CiC model implies this support condition on the untreated potential outcome.

\begin{remark}\label{rem:CL2019}
    Here, we formally compare our copula stability assumption with the one introduced in \citet{CallawayLi2019}. To see this, let us define $\Delta Y_{t0}=Y_{t0}-Y_{(t-1)0}$, \citet{CallawayLi2019} require $C_{\Delta Y_{t0},Y_{(t-1)0}|D=1}(\cdot,\cdot)=C_{\Delta Y_{(t-1)0},Y_{(t-2)0}|D=1}(\cdot,\cdot)$.
As can be seen, their assumption imposes a dependence  stability on  different objects than ours, and it requires at least three time periods of panel data.  In addition, unlike us,  their identification results require an additional independence condition between the change in the untreated potential outcome and treatment assignment, $\Delta Y_{t0}\perp D$.  
\end{remark}

\subsection{Main Identification Result}

We next state our main identification result:   

\begin{theorem}\label{Main:theorem}
  Suppose that $\mathbb Y_{t0|1} \subseteq \mathbb Y_{t0|0}$ for $t \in \{0,1\}$, then 
 under Assumptions \ref{stab} and \ref{inc},  the bounds on the unobserved counterfactual distribution $F_{Y_{10}|D=1}(.)$ are: 

{\begin{eqnarray*}
&&\lim_{\tilde y \downarrow y}\sup\left\{F^{LB}(t): t\leq \tilde y \; \& \;  t \in \mathbb Y_{10|0} \cup\{-\infty \} \right\}  \\&&\qquad \leq  F_{Y_{10}|D=1}(y)\leq \nonumber\\&&  \lim_{\tilde y \downarrow y}\sup\left\{ F^{UB}(t): t\leq \tilde y \; \& \;  t \in \mathbb Y_{10|0} \cup\{-\infty \} \right\}
\end{eqnarray*}
 for all   $y \in \mathbb R$, where 
\begin{eqnarray*}
    F^{LB}(t)&=&F_{Y_0|D=1}\left(Q^{\mathbb R,+}_{Y_0|D=0}\left(F_{Y_{1|D=0}}(t)\right)-\right)\; \\\; F^{UB}(t)&=&F_{Y_0|D=1}\left(Q^{\mathbb R,-}_{Y_0|D=0}\left(F_{Y_{1|D=0}}(t)\right)\right).
\end{eqnarray*}}
The above bounds are shown to be sharp when $\overline{\operatorname{Ran}}  F_{Y_{0}}$ is closed.\footnote{We conjecture that the sharpness statement remains valid without this closure requirement, but it requires a more involved construction of the subcopula that rationalizes the data.}

\end{theorem}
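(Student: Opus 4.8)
The plan is to reduce everything to a single horizontal subcopula. By Sklar's lemma, $C_{Y_{t0},D}(F_{Y_{t0}}(y),q)=q\,F_{Y_{t0}|D=0}(y)$ for $t\in\{0,1\}$ and every $y$, and by Assumption~\ref{stab} the two horizontal subcopulas at $q$ coincide; call the common function $\phi$ on $[0,1]$. From the subcopula axioms, $\phi(0)=0$, $\phi(1)=q$, $\phi$ is non-decreasing, and $u\mapsto u-\phi(u)$ is non-decreasing; Assumption~\ref{inc} makes $\phi$ strictly increasing, hence injective. Period-$0$ data identify $F_{Y_{00}|D=0}$ and $F_{Y_{00}|D=1}$, hence $F_{Y_{00}}=pF_{Y_{00}|D=1}+qF_{Y_{00}|D=0}$, so $\phi$ is pinned on $Ran\,F_{Y_{00}}$ via $\phi(F_{Y_{00}}(z))=q\,F_{Y_{00}|D=0}(z)$; period-$1$ control data identify $F_{Y_{10}|D=0}$. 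For a fixed $y$, put $r:=F_{Y_{10}|D=0}(y)$ and $u^{\ast}:=F_{Y_{10}}(y)=pF_{Y_{10}|D=1}(y)+qr$; the counterfactual enters only through the scalar identity $\phi(u^{\ast})=qr$, with $F_{Y_{10}|D=1}(y)=(u^{\ast}-qr)/p$.

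For validity of the bounds, fix $y$ and set $\underline z:=Q^{\mathbb R,-}_{Y_{00}|D=0}(r)$, $\bar z:=Q^{\mathbb R,+}_{Y_{00}|D=0}(r)$, so $F_{Y_{00}|D=0}(\underline z)\ge r$ and $F_{Y_{00}|D=0}(z)\le r$ for $z<\bar z$. If $F_{Y_{00}|D=0}(\underline z)=r$ then injectivity of $\phi$ gives $u^{\ast}=F_{Y_{00}}(\underline z)$, hence $F_{Y_{10}|D=1}(y)=F_{Y_{00}|D=1}(\underline z)$; if $F_{Y_{00}|D=0}(\underline z)>r$ then monotonicity of $\phi$ gives $u^{\ast}<F_{Y_{00}}(\underline z)$, and monotonicity of $u\mapsto u-\phi(u)$ upgrades this to $u^{\ast}-qr\le F_{Y_{00}}(\underline z)-qF_{Y_{00}|D=0}(\underline z)=pF_{Y_{00}|D=1}(\underline z)$, i.e.\ $F_{Y_{10}|D=1}(y)\le F_{Y_{00}|D=1}(\underline z)$. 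The symmetric step applied at each $z<\bar z$ gives $F_{Y_{00}|D=1}(z)\le F_{Y_{10}|D=1}(y)$, so $F_{Y_{10}|D=1}(y)\ge F_{Y_{00}|D=1}(\bar z-)$. Passing to the right-continuous non-decreasing envelope over $t\le y$ with $t\in\mathbb Y_{10|0}$ promotes these pointwise inequalities (which reproduce $F^{UB}$ and $F^{LB}$) to the stated two-sided bound for all $y$; the support condition $\mathbb Y_{t0|1}\subseteq\mathbb Y_{t0|0}$ is what lets the envelope be restricted to control-support points with no loss of content (for instance it forces $F_{Y_{10}|D=1}=0$ below $\inf\mathbb Y_{10|0}$, matching the lower envelope there). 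In the continuous strictly increasing case $\underline z=\bar z$ and the two envelopes coincide with the \citet{AtheyImbens2006} formula.

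Sharpness is the heart of the matter. Given any c.d.f.\ $F^{\ast}$ lying pointwise within the bounds, the goal is a joint law of $(Y_{00},Y_{10},Y_{11},D)$ matching the observed sub-distributions, obeying Assumptions~\ref{stab} and~\ref{inc}, with $F_{Y_{10}|D=1}=F^{\ast}$. Since $D$ is binary and time-invariant, this reduces to producing one strictly increasing $\phi:[0,1]\to[0,q]$ with $\phi(0)=0$, $\phi(1)=q$ and $u\mapsto u-\phi(u)$ non-decreasing (any such $\phi$ is the $q$-section of a genuine copula, e.g.\ by affine interpolation in the second argument on $[0,q]$ and on $[q,1]$), such that $\phi(F_{Y_{00}}(z))=qF_{Y_{00}|D=0}(z)$ for all $z$ and $\phi(pF^{\ast}(y)+qF_{Y_{10}|D=0}(y))=qF_{Y_{10}|D=0}(y)$ for all $y$; the joint law is then recovered by coupling $(Y_{00},D)$ and $(Y_{10},D)$ through this copula and coupling the within-$D$ blocks (including $Y_{11}$) arbitrarily. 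The first requirement pins $\phi$ on $Ran\,F_{Y_{00}}$, and the free choice of $\phi$ across each complementary gap --- an honest open interval with endpoints in $Ran\,F_{Y_{00}}$ precisely because $\overline{Ran}\,F_{Y_0}$ is assumed closed --- is exactly wide enough to realize, at each $y$ whose control-rank $r$ lands in that gap, any value of $F^{\ast}(y)$ in $[F^{LB}(y),F^{UB}(y)]$ (when $r\in Ran\,F_{Y_{00}|D=0}$ instead, $\phi^{-1}(qr)$ is pinned and $F^{\ast}(y)$ is forced to the \citet{AtheyImbens2006} value). The main obstacle is carrying out this interpolation consistently and monotonically across \emph{all} $y$ simultaneously while keeping both $\phi$ and $u-\phi(u)$ monotone, in the presence of atoms and flats of the underlying c.d.f.s --- which is exactly where the generalized quantiles $Q^{\mathbb R,\pm}$ and the one-sided limits $F(\cdot-)$ genuinely enter; the closedness hypothesis is what keeps this gap structure clean, and, per the authors' footnote, its removal is left as a conjecture.
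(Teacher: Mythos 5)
Your derivation of the two-sided bounds is correct and is essentially the paper's argument, recast economically in terms of a single function $\phi$: you use Sklar plus Assumption \ref{stab} to pin the common horizontal subcopula on $Ran\,F_{Y_{00}}$, Assumption \ref{inc} for injectivity, the $2$-increasing property (monotonicity of $u\mapsto u-\phi(u)$) to pass from bounds on $F_{Y_{10}}(y)$ to bounds on $F_{Y_{10}|D=1}(y)$, and the right-continuous envelope together with $\mathbb Y_{10|1}\subseteq\mathbb Y_{10|0}$ to extend from $y\in\mathbb Y_{10|0}$ to all of $\mathbb R$. This is exactly the chain of inequalities in the paper's Section \ref{Bounds:deriv}, resting on the same Lemma \ref{lem:boundq}-type facts about $Q^{\mathbb R,\pm}$.

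The sharpness half, however, has a genuine gap. You correctly reduce the problem to exhibiting one strictly increasing $\phi:[0,1]\to[0,q]$ with $u\mapsto u-\phi(u)$ non-decreasing that simultaneously satisfies $\phi(F_{Y_{00}}(z))=qF_{Y_{00}|D=0}(z)$ for all $z$ and $\phi\bigl(pF^{\ast}(y)+qF_{Y_{10}|D=0}(y)\bigr)=qF_{Y_{10}|D=0}(y)$ for all $y$, and you correctly locate where the freedom lives (the gaps of $Ran\,F_{Y_{00}}$) and why closedness of $\overline{Ran}\,F_{Y_{0}}$ matters. But you then state that ``the main obstacle is carrying out this interpolation consistently and monotonically across all $y$ simultaneously'' and stop there. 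That obstacle \emph{is} the proof: one must actually produce $\phi$ and verify that it is strictly increasing on all of $[0,1]$ given that the constraints coming from $Ran\,F^{UB}_{Y_{10}}$ interleave in an a priori unknown way with the pinned values on $Ran\,F_{Y_{0}}$, and that $u-\phi(u)$ stays monotone on the interpolated segments. In the paper this is the explicit piecewise-linear construction of $\tilde{C}_{Y_0,D}$ (with its case analysis over $\overline{Ran}\,F_{Y_0}$, $\overline{Ran}\,\tilde F_{Y_{10}}$ and their complements) together with Lemmata \ref{lem:sharp3} and \ref{lem:sharp5} and Claims \ref{claim:sharp2}--\ref{claim:sharp4}, which prove the cross-range strict monotonicity $\tilde{C}_{Y_0,D}(u,q)<\tilde{C}_{Y_0,D}(v,q)$ precisely by comparing $u_y=F^{UB}_{Y_{10}}(y)$ with the bracketing points $v_y=F_{Y_0}(h(y))$ and $F_{Y_0}(\underline h(y)-)$; the hypothesis $F_{Y_0}(y-)\in Ran\,F_{Y_0}$ (your closedness condition) enters there in an essential, not merely cosmetic, way. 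A secondary overreach: you assert the gaps are ``exactly wide enough'' to realize \emph{any} cdf between the bounds, whereas what is needed for the theorem (and what the paper actually verifies in detail) is attainability of the bounds themselves, with the general identified-set statement only asserted to follow by a similar argument. Without the construction and the monotonicity verification, your sharpness argument is a plan rather than a proof.
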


Theorem \ref{Main:theorem} provides a general (partial) identification result on the counterfactual distribution of the treatment group for any type of potential outcome variables (discrete, continuous, or mixed). Our result neither imposes any restriction on the heterogeneity of potential outcomes within a period nor across periods. 
We specifically do not impose restrictions on individual treatment effects, $Y_{11}-Y_{10}$, or the evolution of the distribution of the untreated potential outcome across time, $F_{Y_{t0}}, t \in\{0,1\}$. The formal proof is relegated to Appendix \ref{Appen:Proof}. The derived bounds may look involved since we aim to  provide a general formulation that covers any type of distribution and want to ensure that our bounds are indeed right-continuous.\footnote{As recognized by \citet{AtheyImbens2006}, their  upper bound in the discrete outcome case may be left-continuous, and therefore may not satisfy the properties of a cdf.} The bounds simplify for some special cases as we will illustrate in Corollary \ref{cor:Iden} below.

The intuition behind our (partial) identification result is very simple and can be summarized as follows: In the first period, we identify the joint distribution $\mathbb P(Y_{00}\leq y, D=0)$ and both marginal distributions, $\mathbb P(Y_{00}\leq y)$ and $q$.  Using the Sklar result, we can recover the horizontal subcopula $C_{Y_{00},D}(u,q)$ on $RanF_{Y_{0}}$, and thereby the rank mapping $\Gamma(\cdot)$ on $Ran F_{Y_0|D=0}$. Then, since we assume the rank mapping to be stationary across time, we can then carry it over from the pre-treatment period to the post-treatment period to recover the 
treatment group's distribution of the untreated potential outcome, $F_{Y_{10}|D=1}$, as follows:
\begin{eqnarray}
F_{Y_{10}|D=1}(y)
&=& \Gamma\left( F_{Y_{1}|D=0}(y) \right)\quad \text{for } F_{Y_{1}|D=0}(y) \in Ran F_{Y_{0}|D=0}. 
\end{eqnarray}
 The main reason behind the partial identification is that in the first period we recover the subcopula  $C_{Y_{00},D}(\cdot,q)$ only on $Ran F_{Y_0}$ ($\Gamma(\cdot)$ only on $Ran F_{Y_0|D=0}$), and we do not know the rank mapping outside this range. We provide a graphical illustration of these functions as well as our bounds in the context of a minimum-wage numerical example in Appendix \ref{app:num_illustration_identification}.

In the case of continuous potential outcomes, $Ran F_{Y_{0}}=[0,1]$, our bounds shrink to a point because the pre-treatment period allows us to recover the entire rank mapping that we carry over to the post-treatment period, as we show in the following corollary of  Theorem \ref{Main:theorem}. 
\begin{corollary}\label{cor:Iden} Under Assumption \ref{stab}, whenever $\mathbb Y_{t0|1} \subseteq \mathbb Y_{t0|0}$ for $t \in \{0,1\}$ and the cdfs $F_{Y_{t0}\vert D=d}(.)$, $t, d \in \{0,1\}$ are continuous, we have:
$$F_{Y_{10}|D=1}(y)=F_{Y_0|D=1}\left(Q^{\mathbb R,-}_{Y_0|D=0}\left(F_{Y_{1|D=0}}(y)\right)\right)$$ for all  $ y \in \mathbb R$. 
\end{corollary}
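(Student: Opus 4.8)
The plan is to deduce Corollary~\ref{cor:Iden} from Theorem~\ref{Main:theorem} by checking that, under the stated hypotheses, the theorem's two bounds coincide and collapse to the claimed expression. The first step is to verify that Theorem~\ref{Main:theorem} actually applies, i.e.\ that Assumption~\ref{inc} holds (Assumption~\ref{stab} is already among the hypotheses). By Lemma~\ref{cop:mon}, the support inclusion $\mathbb Y_{10|1}\subseteq\mathbb Y_{10|0}$ gives that $u\mapsto C_{Y_{10},D}(u,q)$ is strictly increasing on $Ran F_{Y_{10}}$; since $F_{Y_{10}}$ is continuous, $Ran F_{Y_{10}}$ is dense in $[0,1]$, and since any subcopula is (Lipschitz-)continuous, strict monotonicity on the dense set $Ran F_{Y_{10}}$ extends to all of $[0,1]$. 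Hence Assumption~\ref{inc} holds and the two-sided bound of Theorem~\ref{Main:theorem} is available.

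Next I would transfer the regularity of the unconditional laws to the conditional ones, which is where the support inclusions do their work. Writing $F_{Y_{t0}}=q\,F_{Y_{t0}|D=0}+p\,F_{Y_{t0}|D=1}$ with $p,q\in(0,1)$, the jump of $F_{Y_{t0}}$ at any point is the $(q,p)$-weighted sum of the (nonnegative) jumps of the two conditional cdfs, so continuity of $F_{Y_{t0}}$ forces both $F_{Y_{t0}|D=0}$ and $F_{Y_{t0}|D=1}$ to be continuous. For strict monotonicity of $F_{Y_0|D=0}=F_{Y_{00}|D=0}$: if it were constant on some open interval $I$ then $I\cap\mathbb Y_{00|0}=\emptyset$, hence $I\cap\mathbb Y_{00|1}=\emptyset$ by the support inclusion, hence $F_{Y_{00}}$ would be constant on $I$, contradicting the hypothesis. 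Thus $F_{Y_0|D=0}$ is a continuous, strictly increasing cdf, so $Q^{\mathbb R,+}_{Y_0|D=0}=Q^{\mathbb R,-}_{Y_0|D=0}$ equals its ordinary continuous inverse (with the usual $\pm\infty$ conventions at the endpoints), and continuity of $F_{Y_0|D=1}$ gives $F_{Y_0|D=1}(z-)=F_{Y_0|D=1}(z)$ for every $z$. Plugging these into the definitions of $F^{LB}$ and $F^{UB}$ in Theorem~\ref{Main:theorem} yields $F^{LB}(t)=F^{UB}(t)=G(t)$ for all $t$, where
\[
G(t):=F_{Y_0|D=1}\!\left(Q^{\mathbb R,-}_{Y_0|D=0}\!\left(F_{Y_1|D=0}(t)\right)\right).
\]
Consequently the lower and upper bounds in the theorem are literally the same quantity $\lim_{\tilde y\downarrow y}\sup\{G(t):t\le\tilde y,\ t\in\mathbb Y_{10|0}\cup\{-\infty\}\}$, and so $F_{Y_{10}|D=1}(y)$ is pinned down to this common value.

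The remaining step is to show that this right-continuization of $G$ is vacuous here, i.e.\ $\lim_{\tilde y\downarrow y}\sup\{G(t):t\le\tilde y,\ t\in\mathbb Y_{10|0}\cup\{-\infty\}\}=G(y)$. Note that $G$ is nondecreasing (a composition of nondecreasing maps) and continuous, since $F_{Y_1|D=0}=F_{Y_{10}|D=0}$ is continuous (continuity of $F_{Y_{10}}$ again) while $Q^{\mathbb R,-}_{Y_0|D=0}$ and $F_{Y_0|D=1}$ are continuous. The inequality ``$\le$'' is immediate: $\sup\{G(t):t\le\tilde y,\ t\in\mathbb Y_{10|0}\cup\{-\infty\}\}\le G(\tilde y)$ by monotonicity, and $G(\tilde y)\to G(y)$ as $\tilde y\downarrow y$. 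For ``$\ge$'', fix $\tilde y\ge y$ and distinguish cases: if $y$ is a limit of $\mathbb Y_{10|0}$ from below (in particular if $y\in\mathbb Y_{10|0}$), take $t_n\in\mathbb Y_{10|0}$ with $t_n\uparrow y$ or $t_n=y$, so the supremum is $\ge\lim_n G(t_n)=G(y)$ by continuity of $G$; otherwise $y$ lies in a gap $(a,b)$ of $\mathbb Y_{10|0}$ with left endpoint $a\in\mathbb Y_{10|0}\cup\{-\infty\}$ (or approached by a sequence in $\mathbb Y_{10|0}$), and since $\mathbb P(Y_1\in(a,y]|D=0)=0$ we have $F_{Y_1|D=0}(y)=F_{Y_1|D=0}(a)$, whence $G(y)=G(a)$, again $\le$ the supremum. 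Combining the two inequalities gives $F_{Y_{10}|D=1}(y)=G(y)=F_{Y_0|D=1}\!\left(Q^{\mathbb R,-}_{Y_0|D=0}\!\left(F_{Y_1|D=0}(y)\right)\right)$ for all $y\in\mathbb R$, which is the assertion.

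I expect the main obstacle to be bookkeeping rather than any conceptual difficulty. The two delicate points are (i) passing continuity and strict monotonicity from the marginals $F_{Y_{t0}}$ to the conditionals $F_{Y_{t0}|D=d}$, where the support inclusions $\mathbb Y_{t0|1}\subseteq\mathbb Y_{t0|0}$ are essential, and (ii) checking that the right-continuization built into the bounds of Theorem~\ref{Main:theorem} does nothing in the continuous case, which reduces to $F_{Y_1|D=0}$ being constant on the gaps of $\mathbb Y_{10|0}$ together with continuity of $G$. Neither requires new ideas beyond Theorem~\ref{Main:theorem} and Lemma~\ref{cop:mon}.
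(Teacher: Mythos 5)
Your proposal is correct and follows exactly the route the paper intends: the paper states the corollary as an immediate specialization of Theorem \ref{Main:theorem} (noting only that the bounds ``shrink to a point'' when $Ran F_{Y_{00}}=[0,1]$) and gives no separate proof, and you supply precisely the missing details --- verifying Assumption \ref{inc} via Lemma \ref{cop:mon}, transferring continuity and strict monotonicity to the conditional cdfs so that $F^{LB}=F^{UB}$, and checking that the right-continuization is vacuous. No gaps.
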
 
The proof of this corollary is in Appendix \ref{apx:corident}.
Corollary \ref{cor:Iden} recovers the point-identification result obtained in \citet{AtheyImbens2006}. \citet{AtheyImbens2006} provide (partial) identification results for two types of potential outcomes relying on different assumptions for each of the two cases: (i) continuous outcomes that are strictly monotonic in a scalar unobservable, (ii) discrete outcomes that are monotonic in a scalar unobservable. By contrast, Theorem \ref{Main:theorem} establishes a unifying identification result for any type of outcome under consideration. In addition to the connection to our identification result, there is a link between the CiC assumptions and our copula stability condition for continuous outcomes. We provide details on this connection and compare the two identification approaches in Section \ref{sec:cic_connection}.

\begin{remark}
The bounds in Theorem \ref{Main:theorem} may cross, indicating that at least one of our key assumptions does not hold. We present the formal testable implication in the following subsection. 
\end{remark}

\subsection{Multiple pre-treatment periods}\label{sec:multiple_periods}

In this section, we characterize our bounds in the presence of multiple pre-treatment periods. Suppose we have the following model with $T_0+1$ pre-treatment periods: 

\begin{eqnarray}\label{seq1}
	\left\{ \begin{array}{lcl}
		Y_t &=& Y_{t0}, \;\; t=-T_0,\dots,0 \\ \\
		Y_1&=&Y_{11}D+Y_{10}(1-D)
	\end{array} \right.
\end{eqnarray}

We impose the following stability restriction on the horizontal copula at $q$ over multiple pre-treatment periods $t=-T_0,\dots,0$. 
\begin{assumption}[Dependence stability over multiple periods]\label{Gstab} For all $t=-T_0,\dots,0$ and $u \in [0,1]$,
$$C_{Y_{t0},D}(u,q)=C_{Y_{10},D}(u,q).$$ 
\end{assumption}

The following theorem generalizes Theorem \ref{Main:theorem} to the multiple-period case under Assumption \ref{Gstab}. Corollary \ref{result:fals-test} then provides testable restrictions of our model assumptions. 
\begin{theorem}\label{GMain:theorem}
  Suppose that $\mathbb Y_{t0|1} \subseteq \mathbb Y_{t0|0}$ for $t \in \{-T_0,\dots,0\}$.
 If  Assumptions \ref{inc} and \ref{Gstab} hold, then  the bounds on the unobserved counterfactual distribution $F_{Y_{10}|D=1}(.)$ are: 

{\begin{eqnarray*}
&&\lim_{\tilde y \downarrow y}\sup\left\{ \max_{t \in \{-T_0,\dots,0\}}F_t^{LB}(s): s\leq \tilde y \; \& \;  s \in \mathbb Y_{10|0} \cup\{-\infty \} \right\}  
\\&& \qquad \leq  F_{Y_{10}|D=1}(y) \leq \nonumber\\&&
 \lim_{\tilde y \downarrow y}\sup\left\{ \min_{t \in \{-T_0,\dots,0\}}F_t^{UB}(s): s\leq \tilde y \; \& \;  s \in \mathbb Y_{10|0} \cup\{-\infty \} \right\},
\end{eqnarray*}
 for all   $y \in \mathbb R$, where for $t\in\{-T_0,\dots,0\}$
\begin{eqnarray*}
    F_t^{LB}(s)&=&F_{Y_t|D=1}\left(Q^{\mathbb R,+}_{Y_t|D=0}\left(F_{Y_{1|D=0}}(s)\right)-\right)\; \\\; F_t^{UB}(s)&=&F_{Y_t|D=1}\left(Q^{\mathbb R,-}_{Y_t|D=0}\left(F_{Y_{1|D=0}}(s)\right)\right).
\end{eqnarray*}}
\end{theorem}

\begin{corollary}[Model's Testable Restriction]\label{result:fals-test}
Suppose that $\mathbb Y_{t0|1} \subseteq \mathbb Y_{t0|0}$ for $t \in \{-T_0,\dots,0\}$, then 
if  Assumptions \ref{inc} and \ref{Gstab} hold, the following inequalities must be satisfied:
\begin{eqnarray*}
&&\Delta(y)\leq 0 \quad \forall y \in \mathbb{Y}_{10|0},~\text{where }\end{eqnarray*}
$\Delta(y)\equiv \max_{t \in \{-T_0,\dots,0\}}F_{Y_t}\left(Q^{\mathbb R,+}_{Y_t|D=0}\left(F_{Y_{1|D=0}}(y)\right)-\right)- \min_{t \in \{-T_0,\dots,0\}}F_{Y_t}\left(Q^{\mathbb R,-}_{Y_t|D=0}\left(F_{Y_{1|D=0}}(y)\right)\right)$.

\end{corollary}
We illustrate the arguments in Theorem \ref{GMain:theorem} and Corollary \ref{result:fals-test} in a numerical example motivated by our minimum wage setting in the presence of multiple pre-treatment periods in Section \ref{sec:illustration_MP}. 
\begin{figure}[htbp]\caption{CS bounds in the minimum-wage numerical example with CS holding for two pre-treatment periods}
    \vspace{0.3cm}
    {\scriptsize{\begin{tabular}{ccc}
   (a) Using $t\in\{-1,1\}$&(b) Using $t\in\{0,1\}$&(c) Using $t\in\{-1,0,1\}$\\\\
    \includegraphics[width=5cm]{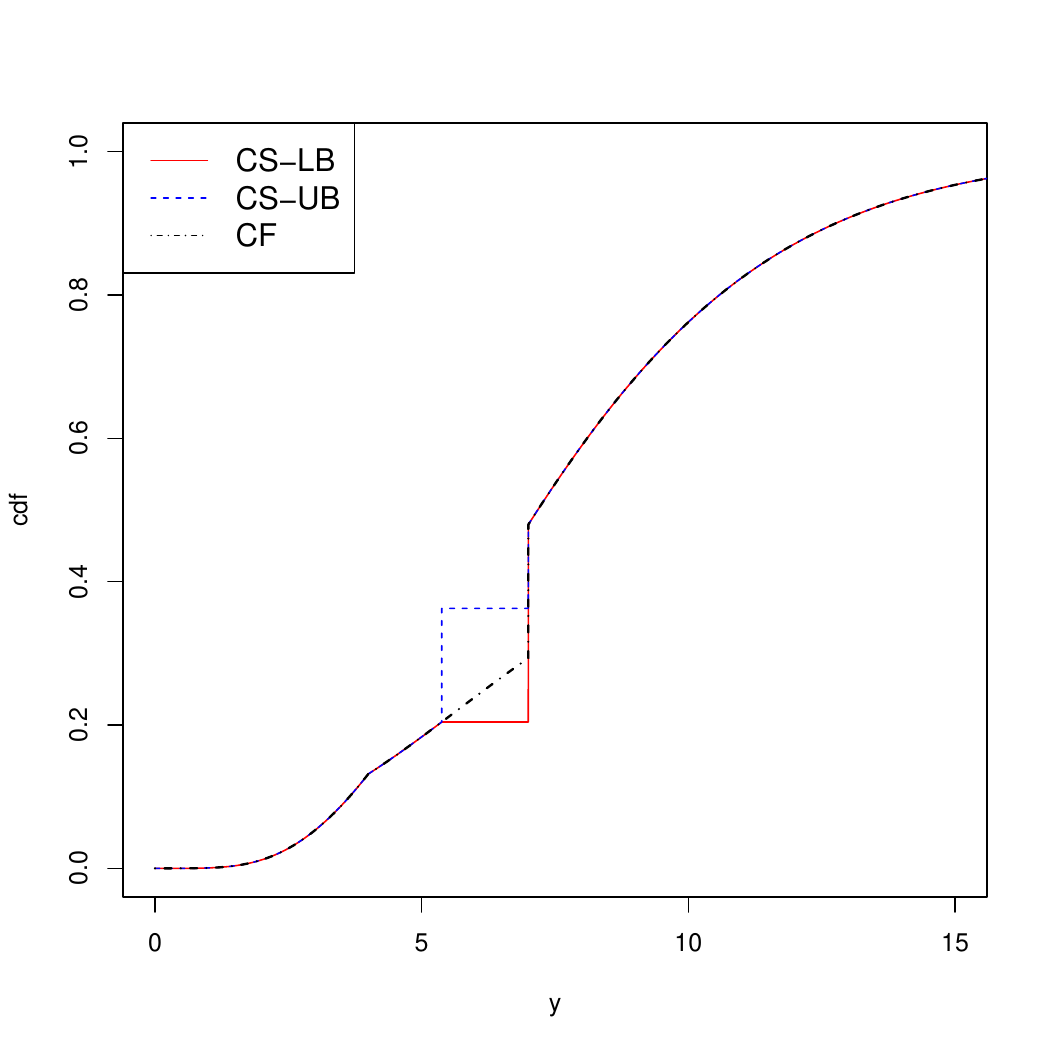}&  \includegraphics[width=5cm]{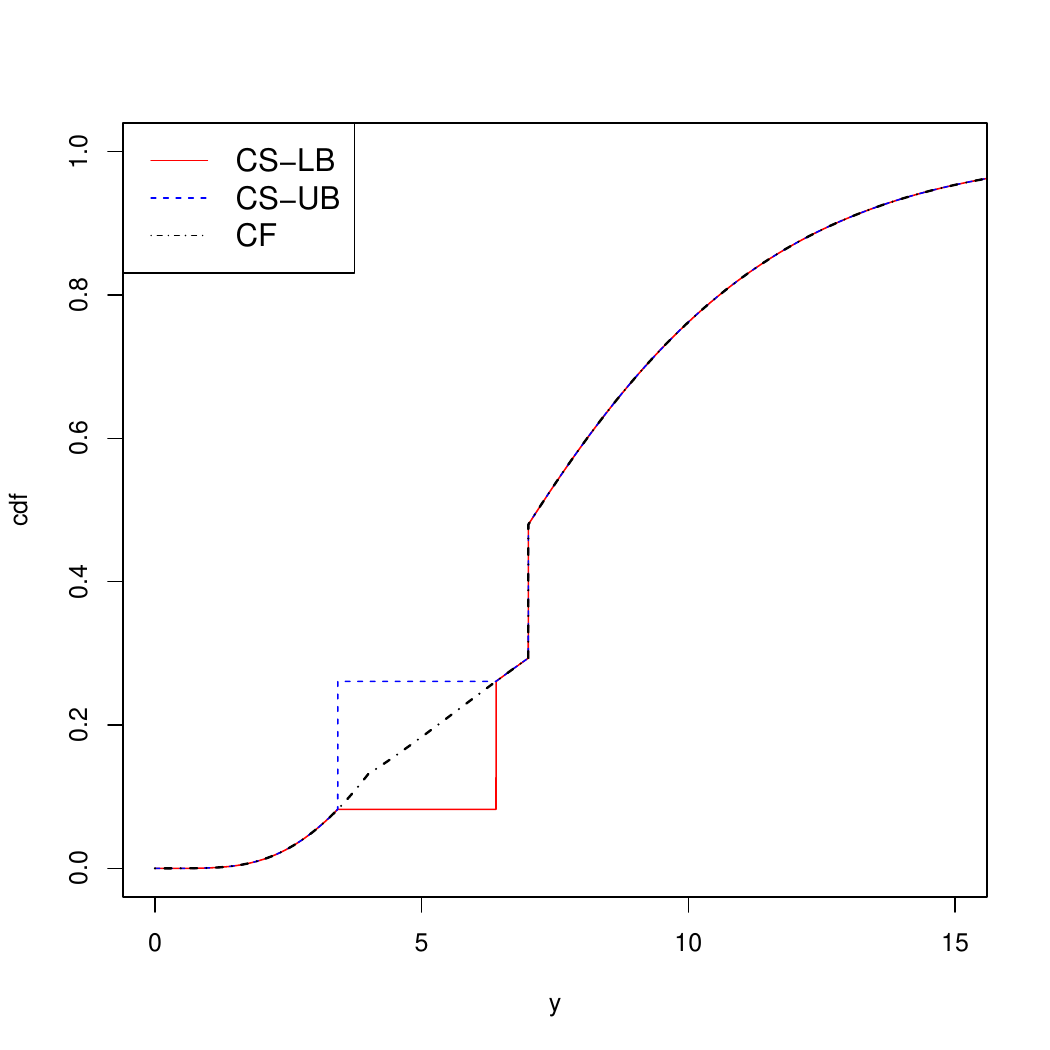}&    
    \includegraphics[width=5.25cm]{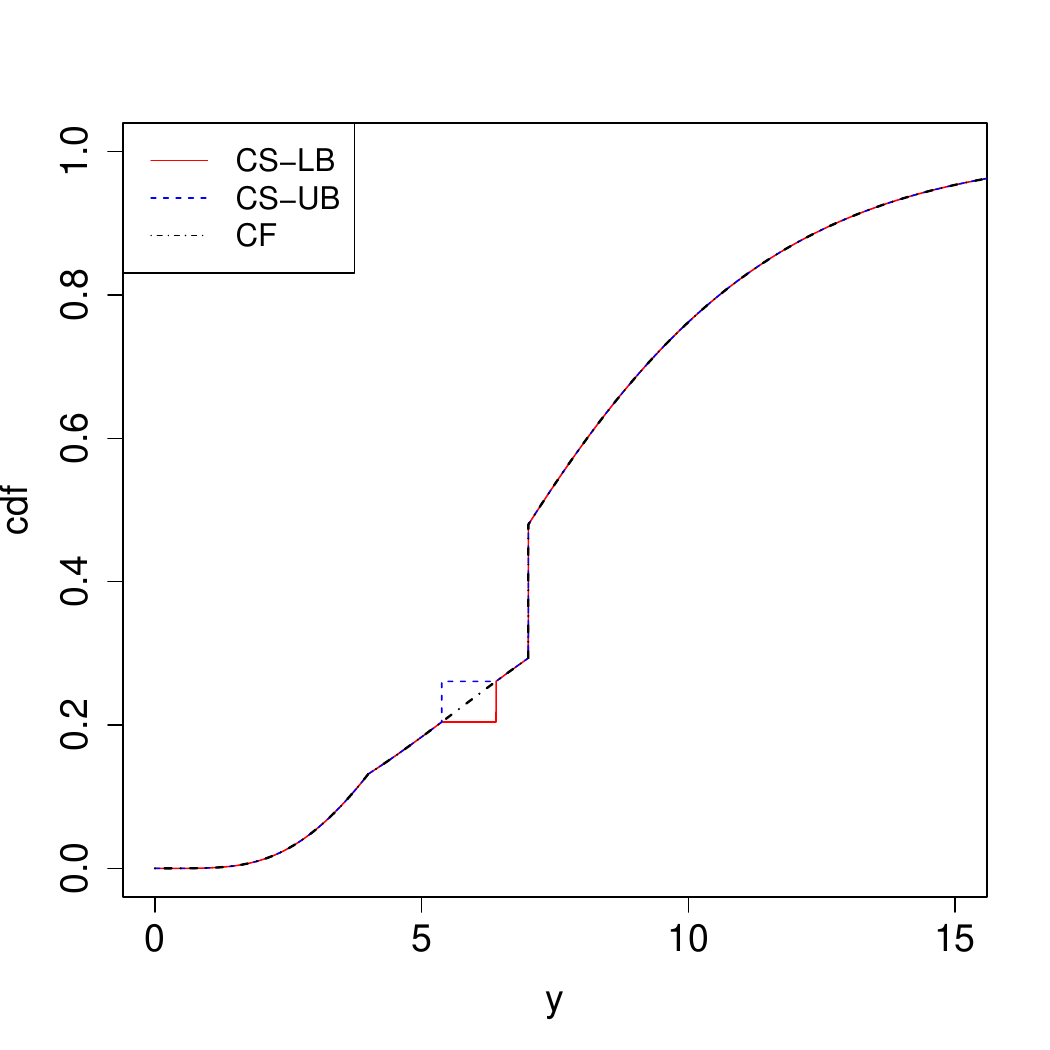}\\
    (d) $C_{Y_{t0},D}(\cdot,q)$ for $t=-1,0$&(e) $\Gamma_t(\cdot)$ for $t=-1,0$&(f) Model Testable Restriction ($\Delta$)\\
\includegraphics[width=5cm]{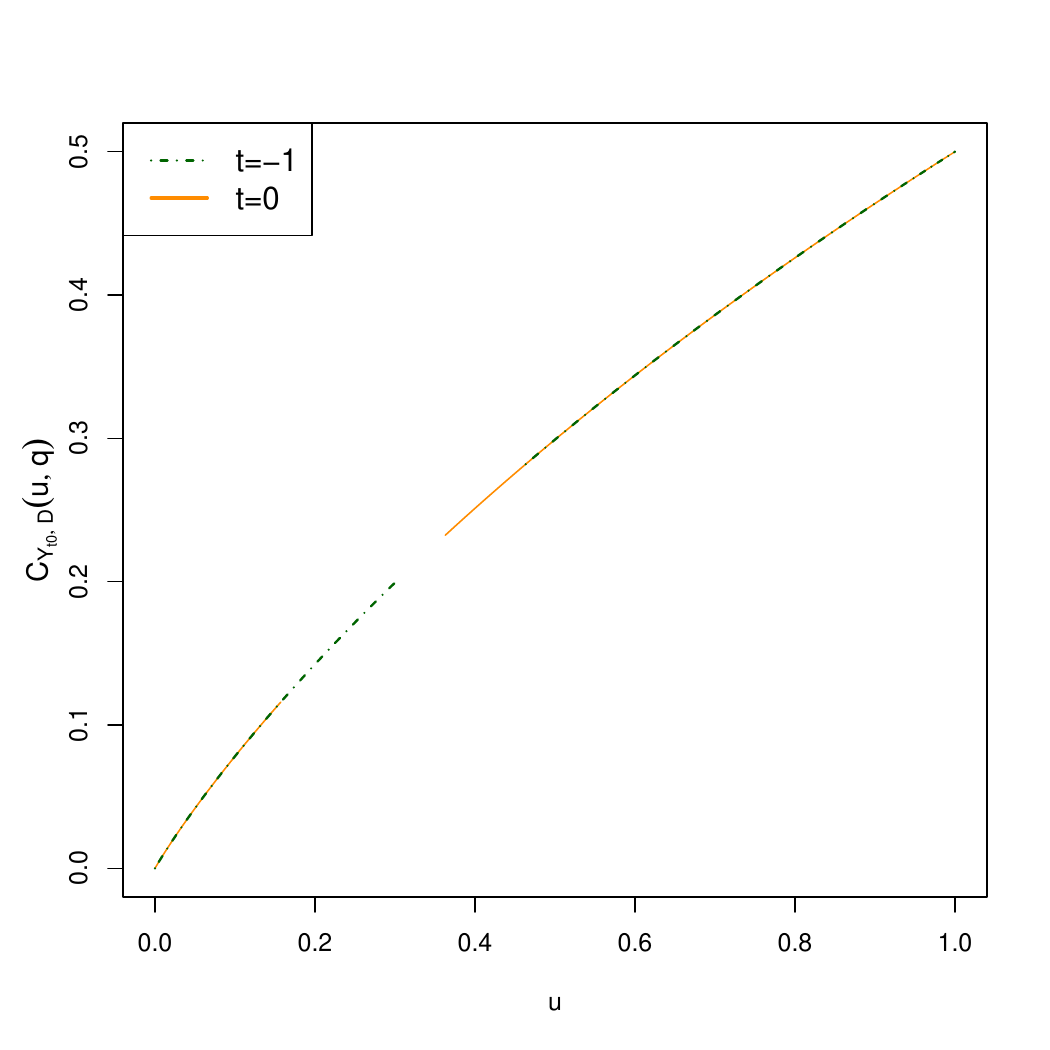}&\includegraphics[width=5cm]{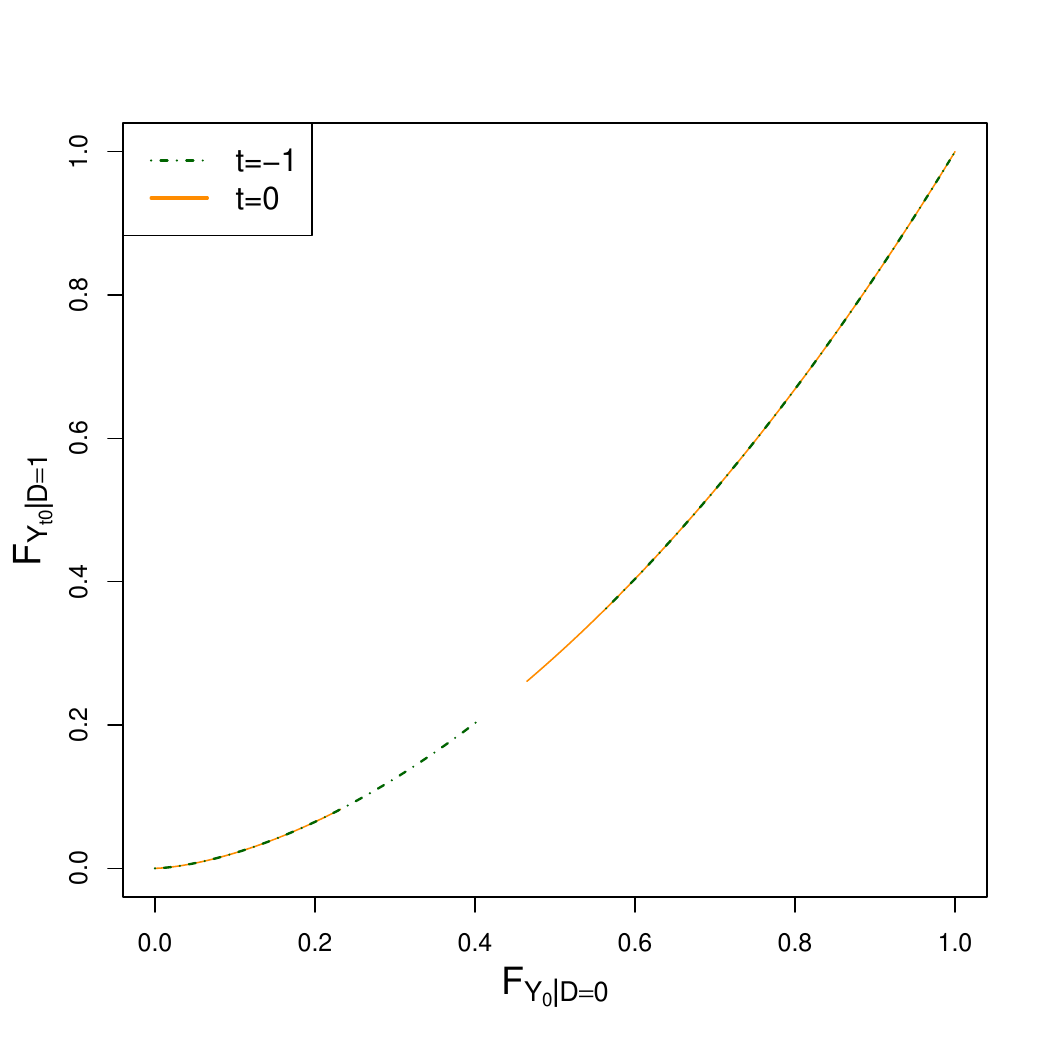}&\includegraphics[width=5cm]{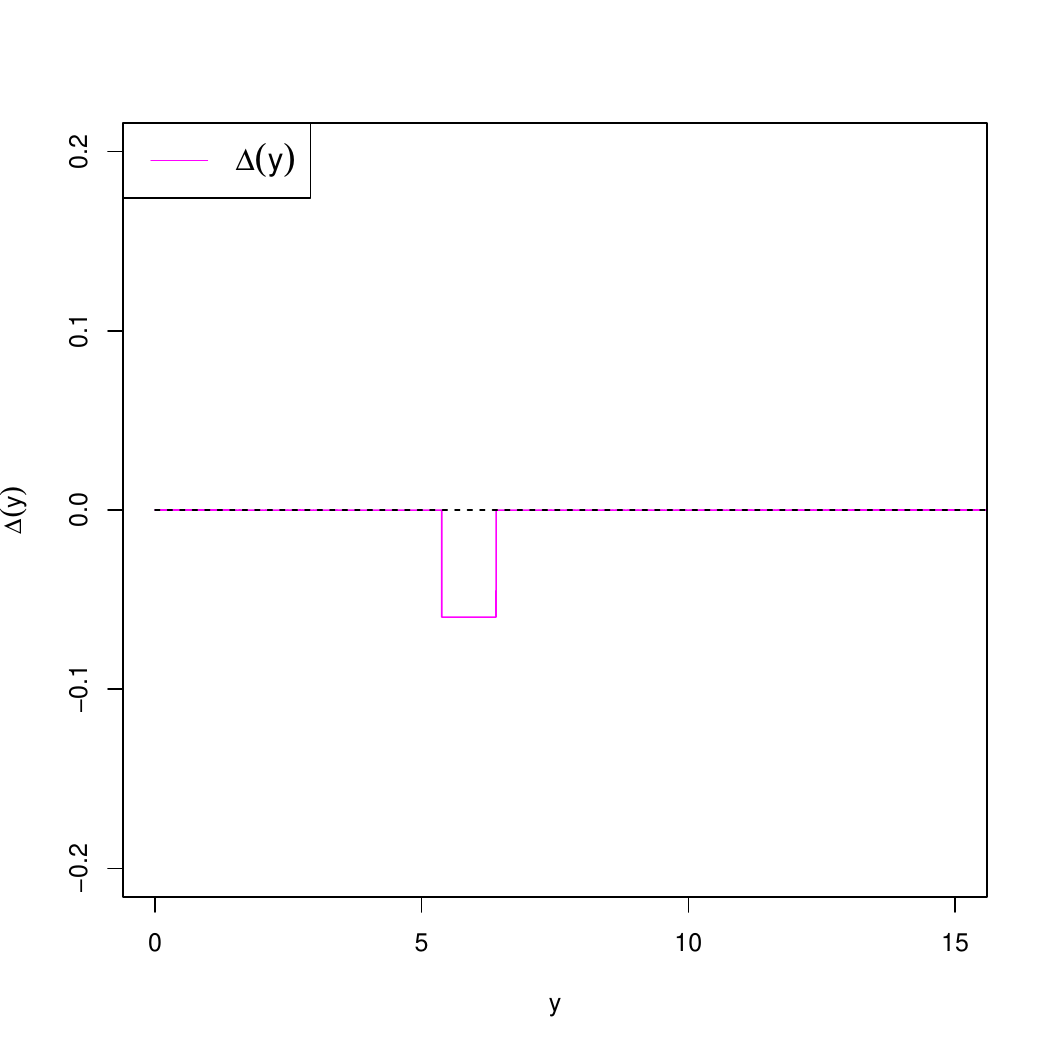}\\
    \multicolumn{3}{c}{\parbox{\textwidth}{\scriptsize{\emph{Notes}: $CF$ denotes the counterfactual distribution $F_{Y_{10}|D=1}$, $CS$-$LB$ and $CS$-$UB$ denote the CS lower and upper bound, respectively, on the counterfactual distribution.  To satisfy the copula stability assumption in periods $t\in\{-1,0,1\}$, we set $C_{Y_{-10,D}}=C_{Y_{00},D}=C_{Y_{10},D}$ to be the Clayton copula $C_{Y_0,D}(u,q)=(\max(u^{-\theta}+q^{-\theta}-1,0))^{-1/\theta}$ with $\theta=0.5$. The untreated potential outcome distributions for the treatment and control groups are given by the following for $t\in\{-1,0,1\}$:
$F_{Y_{t0}|D=0}(y)=\frac{1}{q}C_{Y_{t0},D}(F_{Y_{t}}(y),q)$,
$F_{Y_{t0}|D=1}(y)=\frac{1}{p}\left(F_{Y_{t0}}(y)-C_{Y_{t0},D}(F_{Y_{t0}}(y),q)\right)$.  The marginal distribution is given by  $F_{Y_{t0}}(y)=F_{Y_{t0}^*}(y)-b_{t0}(F_{Y_{t0}^*}(y)-F_{Y_{t0}^*}(\underline{w}_0))\mathbbm{1}\{y\in(\underline{w}_{0},c_{0})\}$. We set $c_0=7$, $\underline{w}_0=c_0-3$, $b_{-10}=0.5$, $b_{00}=0.75$, $b_{10}=0.5$, and $Y_{td}^*\sim \chi^2(k_{td})$ with $k_{-10}=8$, $k_{00}=9$ and $k_{10}=7$. }}}\\
    \end{tabular}}}\label{fig:mwexample_multiT0}
    \end{figure}
    \begin{figure}[htbp]\caption{CS bounds in the minimum-wage numerical example with CS holding for $t\in\{0,1\}$ only}
    \vspace{0.3cm}
    {\scriptsize{\begin{tabular}{ccc}
     (a) Using $t\in\{-1,1\}$&(b) Using $t\in\{0,1\}$&(c) Using $t\in\{-1,0,1\}$\\\\
    \includegraphics[width=5cm]{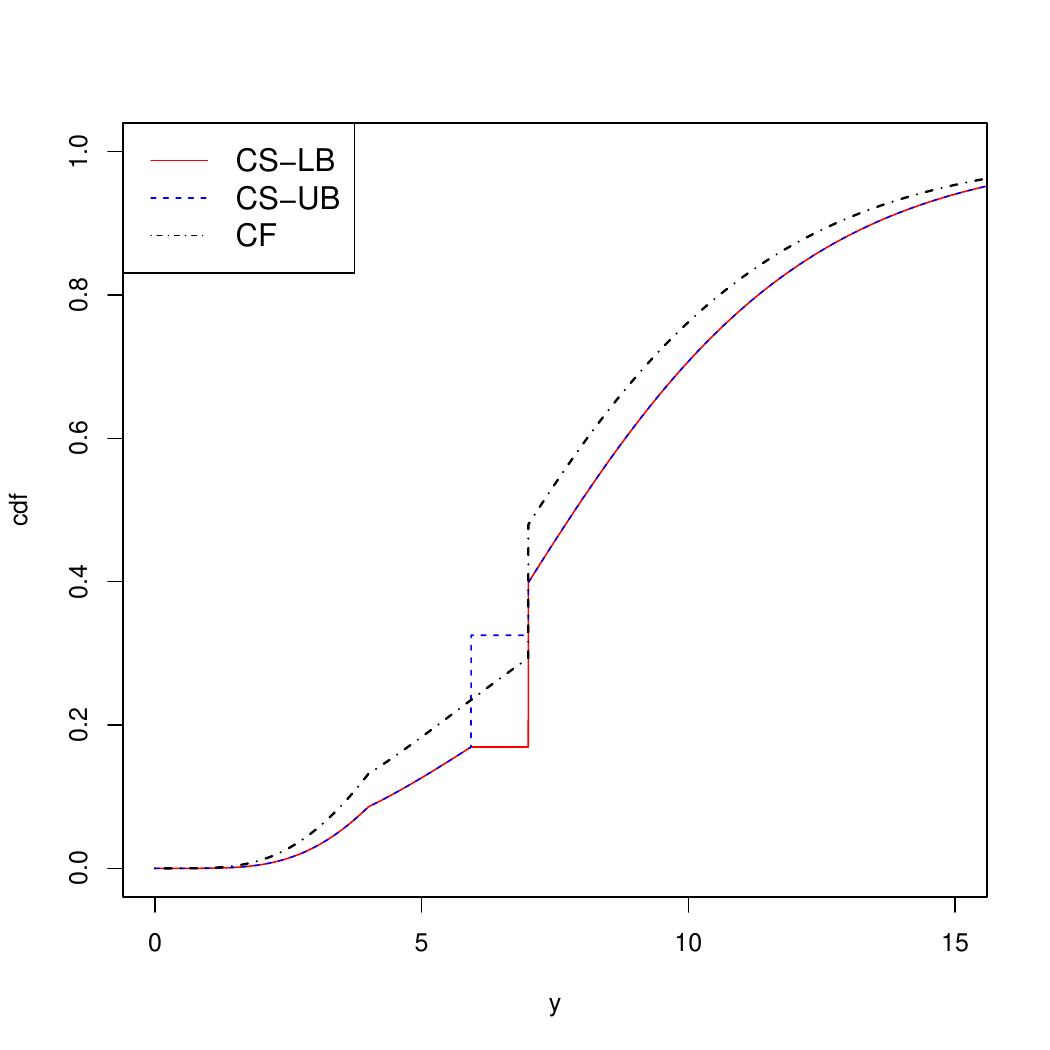}&  \includegraphics[width=5cm]{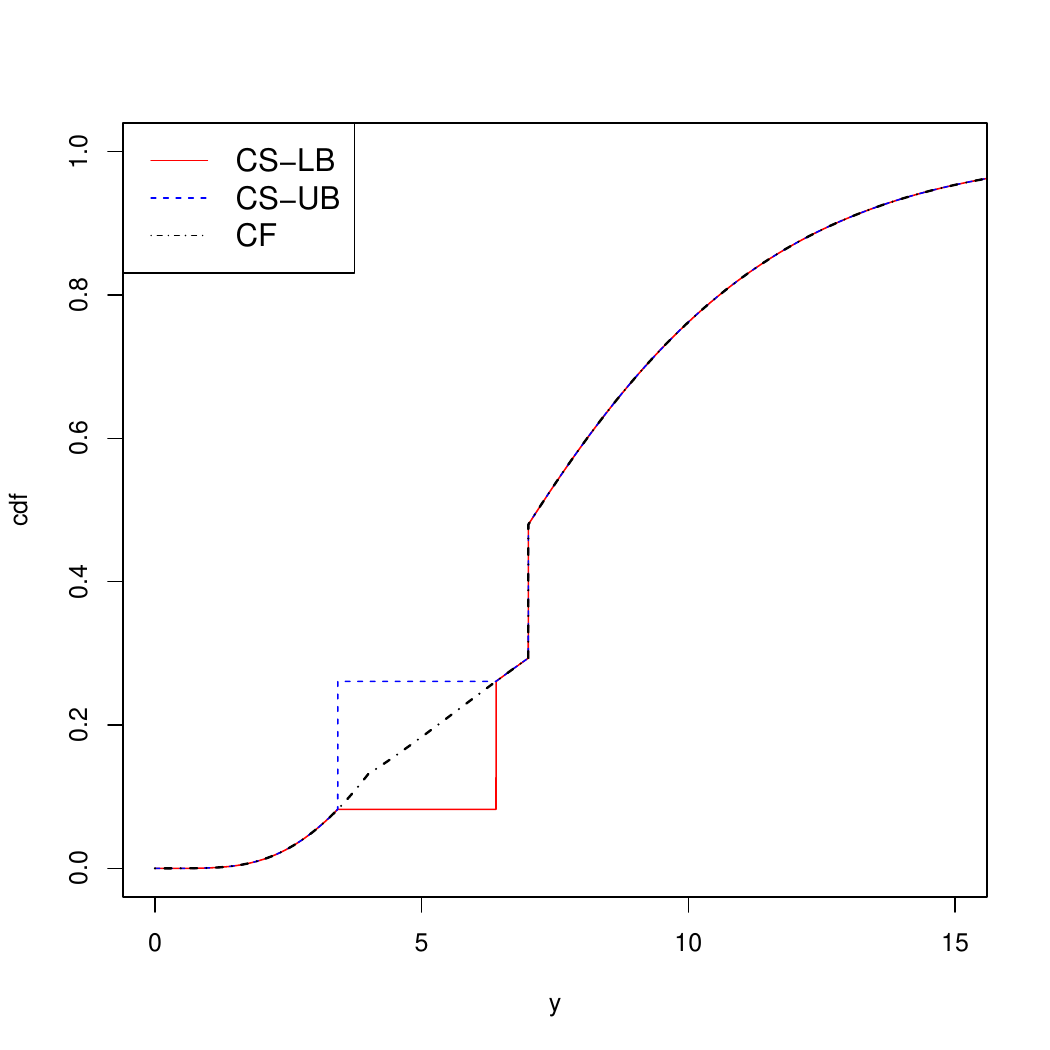}&    
    \includegraphics[width=5cm]{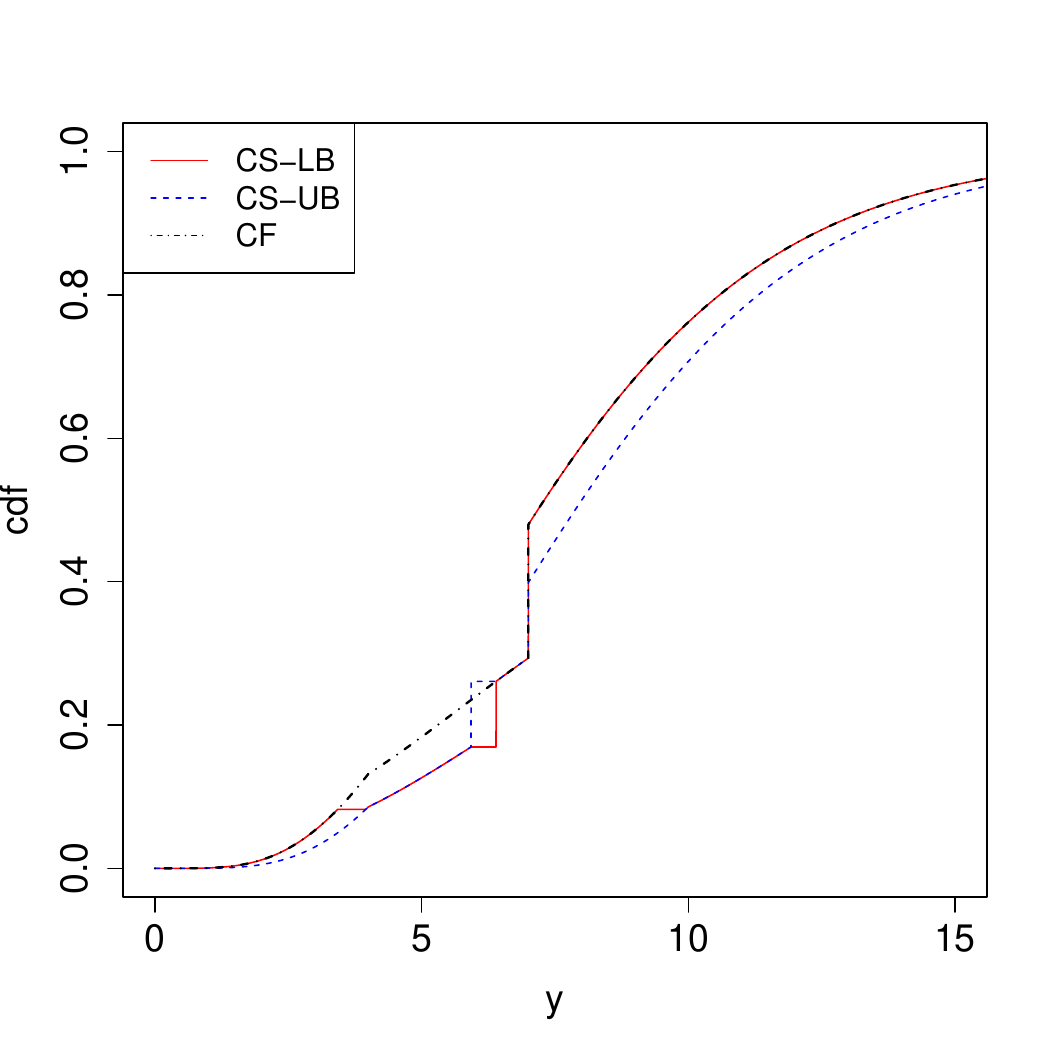}\\
    (d) $F_{Y_{t0}}(y)\mapsto F_{Y_{t0},D}(y,0)$&(e) $F_{Y_{t0}|D=0}(y)\mapsto F_{Y_{t0}|D=1}(y)$&(f) Model Testable Restriction ($\Delta$)\\
\includegraphics[width=5cm]{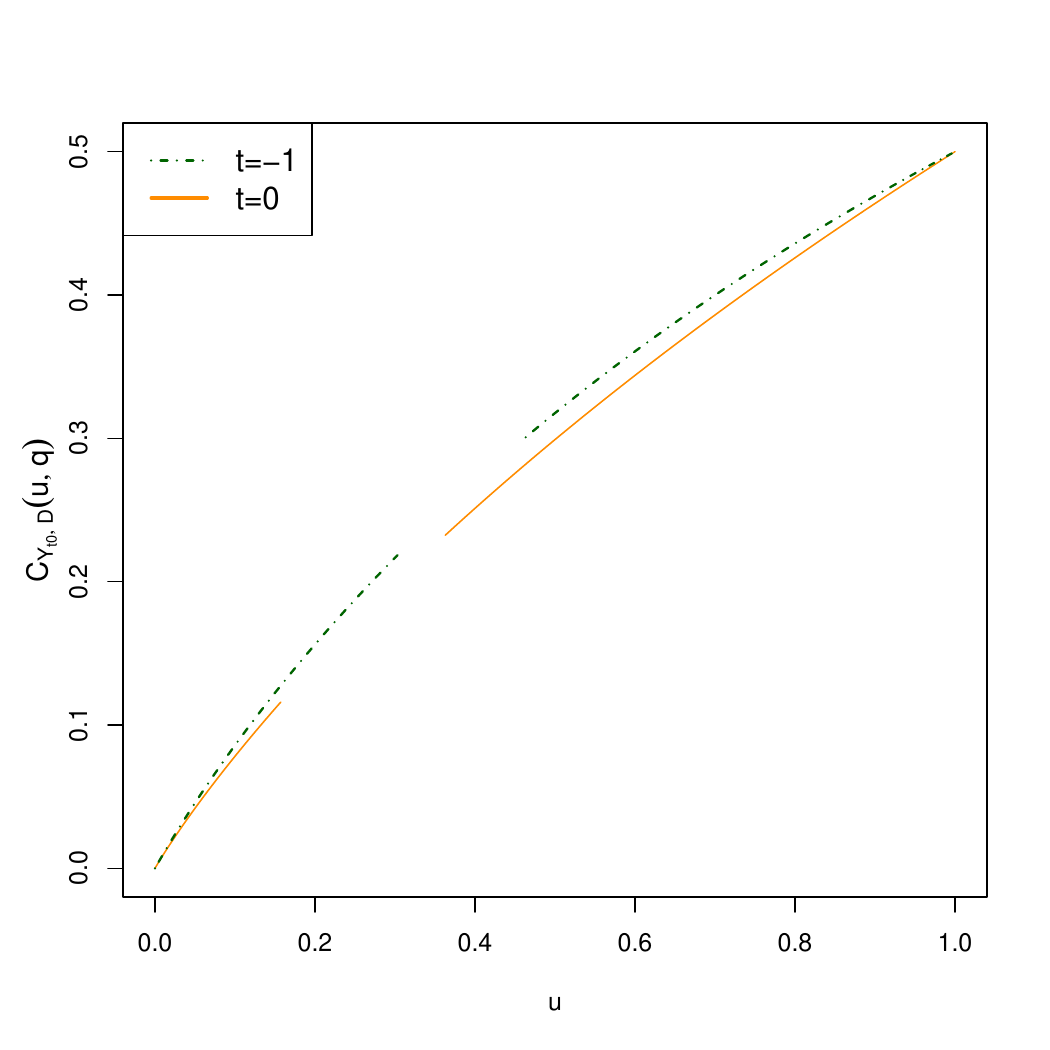}&\includegraphics[width=5cm]{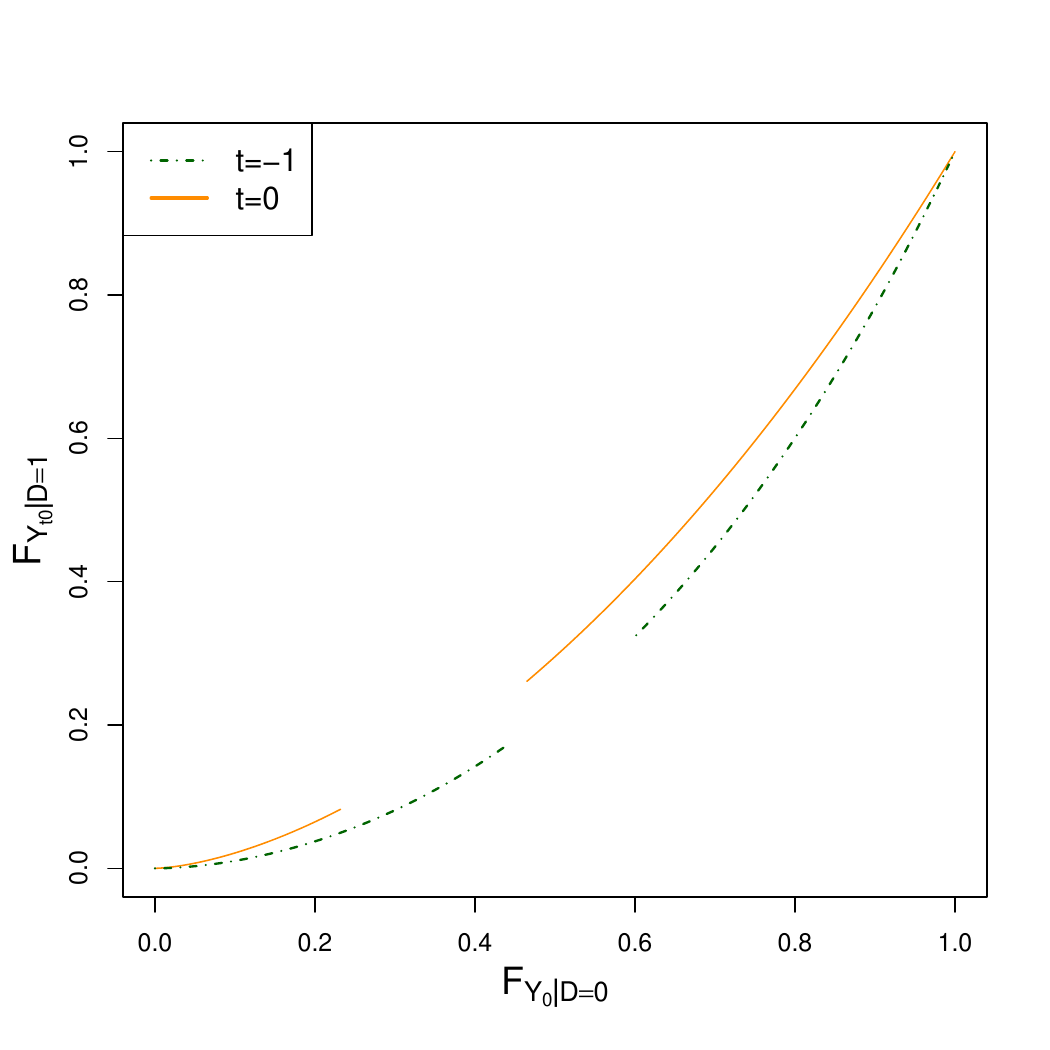}&\includegraphics[width=5cm]{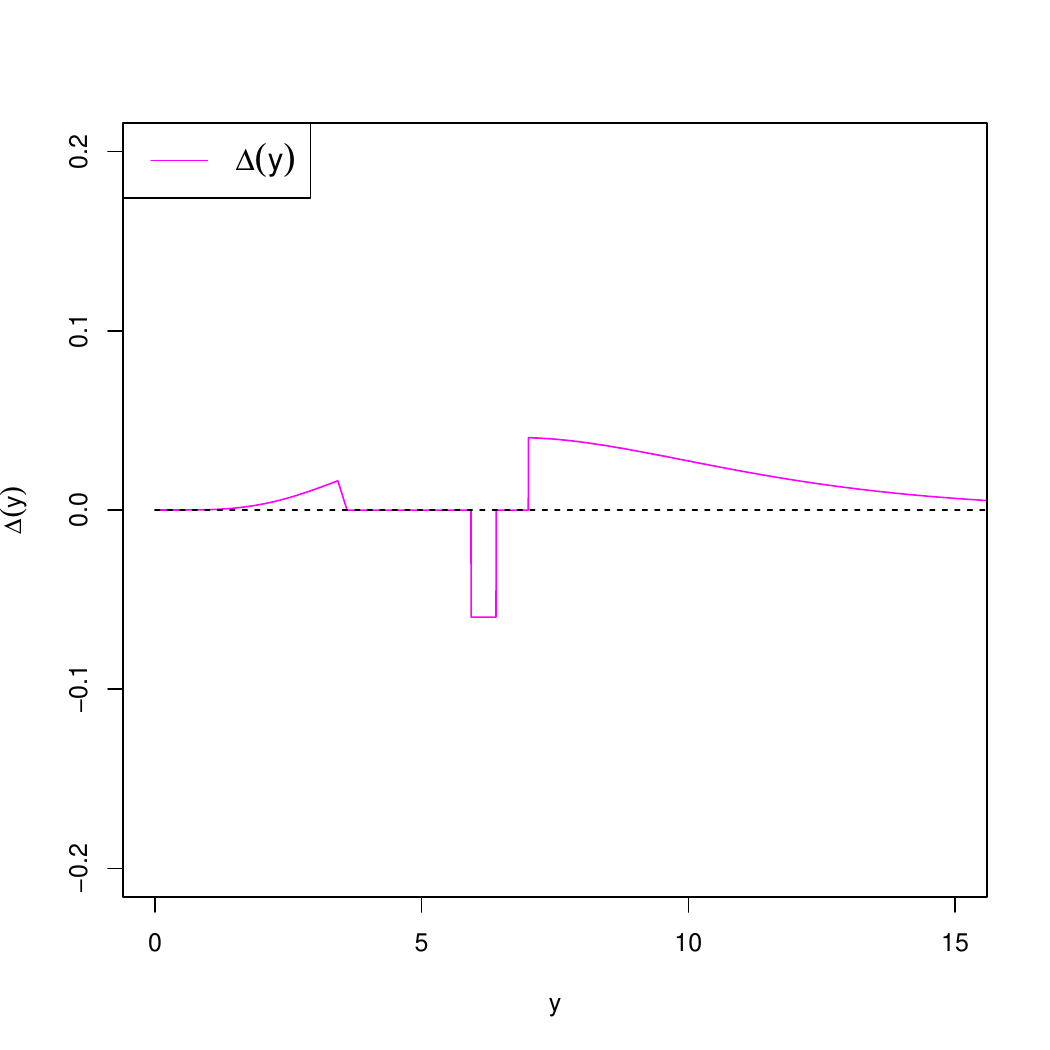}\\
\multicolumn{3}{c}{\parbox{\textwidth}{\scriptsize{\emph{Notes}: $CF$ denotes the counterfactual distribution $F_{Y_{10}|D=1}$, $CS$-$LB$ and $CS$-$UB$ denote the CS lower and upper bound, respectively, on the counterfactual distribution.  To satisfy the copula stability assumption in periods $t\in\{0,1\}$ only, we set $C_{Y_{00},D}=C_{Y_{10},D}$ to be the Clayton copula with $\theta=0.5$, whereas $C_{Y_{-10,D}}$ is the Clayton copula with $\theta=0.75$. The potential outcome distributions for the treatment and control groups are generated as described in Figure \ref{fig:mwexample_multiT0}.}}}\\
    \end{tabular}}}\\
    \medskip\label{fig:mwexample_multiT0_CSviolation}
    \end{figure}

\begin{remark}[Staggered adoption design]
    Suppose that we observe multiple post-treatment periods, $t=1,\dots,T_1$, where $D\in\{1,2,\dots,T_1,\infty\}$. For $t=1,\dots,T_1$, $D=t$ denotes the group that adopts the treatment in period $t$, and $D=\infty$ denotes the control group that is never-treated. Let $\mathbb{P}(D\leq \tau)=q_{\tau}$ for $\tau\in\{\infty,1,\dots, T_1-1\}$ and $Y_t^{\infty}$ denote the potential outcome in the control state. We can extend our identification approach to this setting under a suitable copula stability assumption, specifically assuming $C_{Y_{t}^{\infty},D}(\cdot,q_{\tau})=C_{Y_{(t-1)}^{\infty},D}(\cdot,q_{\tau})$ for $\tau\in\{\infty,1,\dots,T_1-1\}$ and $t\in\{-T_0+1,\dots,-1,0,1,\dots, T_1\}$.
\end{remark}

\subsubsection{Numerical Illustration}\label{sec:illustration_MP}
Here we illustrate the CS bounds with two pre-treatment periods as well as the testable restrictions in the context of a minimum-wage numerical example. Suppose that both treatment and control groups have a pre-existing minimum wage set at $c_{0}$ in the pre-treatment periods ($t=-1,0$). In the post-treatment period ($t=1$), the minimum wage increases for the treatment group to $c_{1}$. We consider two cases: (i) all model assumptions hold (Figure \ref{fig:mwexample_multiT0}), (ii) all assumptions except copula stability hold (Figure \ref{fig:mwexample_multiT0_CSviolation}).\footnote{In Appendix \ref{app:num_illustration_monviolation}, we demonstrate a third case, where copula stability holds, while the strict monotonicity of the horizontal copula is violated. This case demonstrates that we can detect violations of our model assumptions with only one pre-treatment period.} 

Figure \ref{fig:mwexample_multiT0} demonstrates that when copula stability holds for multiple pre-treatment periods, it can have significant gain in terms of identification as the multi-period CS bounds point-identifies the counterfactual distribution on a larger portion of its support in Panel (c) relative to Panels (a) and (b). Figure \ref{fig:mwexample_multiT0}(f) provides our model testable restriction, specifically $\Delta(y)\leq 0$, which holds in this case. Furthermore,  Panels (d) and (e) of Figure~\ref{fig:mwexample_multiT0} present $C_{Y_{t0},D}$ and $\Gamma_t$, respectively, for $t=-1,0$, which are equal on the intersection of their respective ranges.

Next, we demonstrate the case where copula stability only holds for $t\in\{0,1\}$, but not $t\in\{-1,1\}$. In Figure \ref{fig:mwexample_multiT0_CSviolation}, Panel (a) shows that using the pre-treatment period $t=-1$ only to construct the CS bounds yields bounds that do not include the counterfactual, whereas Panel (b) shows that the counterfactual is included in the CS bounds with pre-treatment period $t=0$ only. When considering the CS bounds using both pre-treatment periods in Figure \ref{fig:mwexample_multiT0_CSviolation}(c), we note that the CS lower bound is greater than the CS upper bound, and our model testable restriction is violated as indicated by Figure \ref{fig:mwexample_multiT0_CSviolation}(f). Relatedly, Figures \ref{fig:mwexample_multiT0_CSviolation}(d) and  \ref{fig:mwexample_multiT0_CSviolation}(e) demonstrate that the mappings $C_{Y_{t0},D}(\cdot,q)$ and $\Gamma_{t}(\cdot)$, respectively,  are not equal for $t=-1,0$, indicating a violation of copula stability.

\subsection{Connection to Changes-in-Changes}\label{sec:cic_connection}

In this section, we elaborate on the connection between our copula stability assumption and the CiC conditions in \citet{AtheyImbens2006}. We first show the equivalence between copula stability and the CiC conditions for continuous outcome distributions. Second, while the identification results in \citet{AtheyImbens2006} do not account for mixed outcomes, a researcher might still rely on their estimand. Here, we demonstrate that a na\"ive implementation of the CiC approach leads to a point/bound estimand that might not include the true counterfactual, whereas our CS bounds will.
Finally, for discrete outcomes, we demonstrate using an analytical example that copula stability can be compatible with multi-dimensional unobserved heterogeneity, whereas the CiC conditions require unobserved heterogeneity to be uni-dimensional.

\subsubsection{Continuous outcomes}
The following result demonstrates that the CiC conditions for continuous, strictly increasing outcome distributions are equivalent to our copula stability assumption. In Appendix \ref{Appen:equivalence_general}, we demonstrate how this result extends to all continuous outcomes. For other outcome distributions, this equivalence does not hold in general. 
\begin{claim}\label{claim:eq}
Assume the cdfs $F_{Y_{t0}}(.)$ for $t \in \{0,1\}$ are continuous and strictly increasing, then the following two statements are equivalent:
\begin{enumerate}
\item [(i)] $C_{Y_{00},D}(u,q)=C_{Y_{10},D}(u,q)$ for all $u \in [0,1]$.
\item [(ii)] There exist two strictly increasing functions $h_t(.), t \in \{0,1\}$ and two uniformly distributed random variables over $[0,1]$, $U_{00}$ and $U_{10}$, such that $Y_{t0}=h_t(U_{t0})$ and $U_{00}|D=d \sim U_{10}|D=d$ for $d \in \{0,1\}$. 
\end{enumerate}

\end{claim}
The proof of this claim is in Appendix \ref{proof:claim1}. The main intuition behind it is that for this class of distributions we can write $Y_{t0}=Q_{Y_{t0}}^{\mathbb{R},-}(U_{t0})$, where $U_{t0}=F_{Y_{t0}}(Y_{t0})\sim \mathcal{U}[0,1]$. As a result, the marginal distribution of $U_{t0}$ is stable across time by construction and the stability of the copula between $U_{t0}$ and $D$ is necessary and sufficient for the stability of $U_{t0}|D$, which is the conditional time invariance assumption in \citet{AtheyImbens2006}. Its equivalence to our copula stability assumption follows from the invariance of the copula under strictly monotonic transformations.

\subsubsection{Mixed outcomes} Here, we demonstrate that for mixed outcomes the CiC point/bound estimand may not cover the true counterfactual distribution in the context of the numerical minimum-wage example in Section \ref{sec:illustration_MP}.

The CiC bounds in the discrete case are defined for any $s \in \mathbb{Y}_{1|0}$ as follows for $t\in\{-1,0\}$,
\begin{eqnarray*}
F_{t,\text{CiC}}^{\text{LB}}(s) &=& F_{Y_t|D=1}\left(Q_{Y_t|D=0}^{\mathbb{Y}_{t|0},+}\left(F_{Y_1|D=0}(s)\right)\right), \\
F_{t,\text{CiC}}^{\text{UB}}(s) &=& F_{Y_t|D=1}\left(Q_{Y_t|D=0}^{\mathbb{Y}_{t|0},-}\left(F_{Y_1|D=0}(s)\right)\right).
\end{eqnarray*}

In the example illustrated in Figure~\ref{fig:mwexample_multiT0}, we have $\mathbb{Y}_{t|0} = \mathbb{R}^+$, and $Y_t|D=0$ has a strictly increasing cdf in $\mathbb{R}^+$. Then the following simplifications hold:
\begin{multline*}
F_{t,\text{CiC}}^{\text{UB}}(s) = F_{Y_t|D=1}\left(Q_{Y_t|D=0}^{\mathbb{Y}_{t|0},-}\left(F_{Y_1|D=0}(s)\right)\right) 
= F_{Y_t|D=1}\left(Q_{Y_t|D=0}^{\mathbb{R},-}\left(F_{Y_1|D=0}(s)\right)\right) 
= F_t^{\text{UB}}(s),
\end{multline*}
and
\begin{multline*}
F_{t,\text{CiC}}^{\text{LB}}(s) = F_{Y_t|D=1}\left(Q_{Y_t|D=0}^{\mathbb{Y}_{t|0},+}\left(F_{Y_1|D=0}(s)\right)\right) 
= F_{Y_t|D=1}\left(Q_{Y_t|D=0}^{\mathbb{R},+}\left(F_{Y_1|D=0}(s)\right)\right) \\
\geq \mathbb{P}\left(Y_t < Q_{Y_t|D=0}^{\mathbb{R},+}\left(F_{Y_1|D=0}(s)\right) \mid D=1\right) 
= F_t^{\text{LB}}(s),
\end{multline*}
where the inequality becomes strict at points of discontinuity.

More importantly, we can see that $F_{t,\text{CiC}}^{\text{LB}}(s)=F_{t,\text{CiC}}^{\text{UB}}(s)$, since $Q_{Y_t|D=0}^{\mathbb{Y}_{t|0},+}(u)=Q_{Y_t|D=0}^{\mathbb{Y}_{t|0},-}(u)$ for $u\in[0,1]$. However, this CiC point estimand is different from the true counterfactual of interest $F_{Y_{10}|D=1}$, as shown in Figure~\ref{fig:mwexample_CiC_T0}.

Therefore, in this case, our bounds  contain the CiC (point/bound) estimands  and the true counterfactual \[F_{t,\text{CiC}}^{\text{UB}}(s)\neq F_{Y_{10}|D=1}(s), \text{ where } \{F_{t,\text{CiC}}^{\text{UB}}(s),F_{Y_{10}|D=1}(s)\}\in [F_t^{\text{LB}}(s), F_t^{\text{UB}}(s)].\]

In sum, in this mixed-outcome example, if the researcher ignores the discontinuity and applies the CiC point estimand or applied the CiC bounds for the discrete case, their estimand will not cover the true counterfactual, as shown in Figure~\ref{fig:mwexample_CiC_T0}.

    \begin{figure}[H]\caption{CiC Point Estimand with CS holding for all three periods}
            \begin{tabular}{cc}
            Using $t\in\{-1,1\}$&Using $t\in\{0,1\}$\\
    \includegraphics[width=6cm]{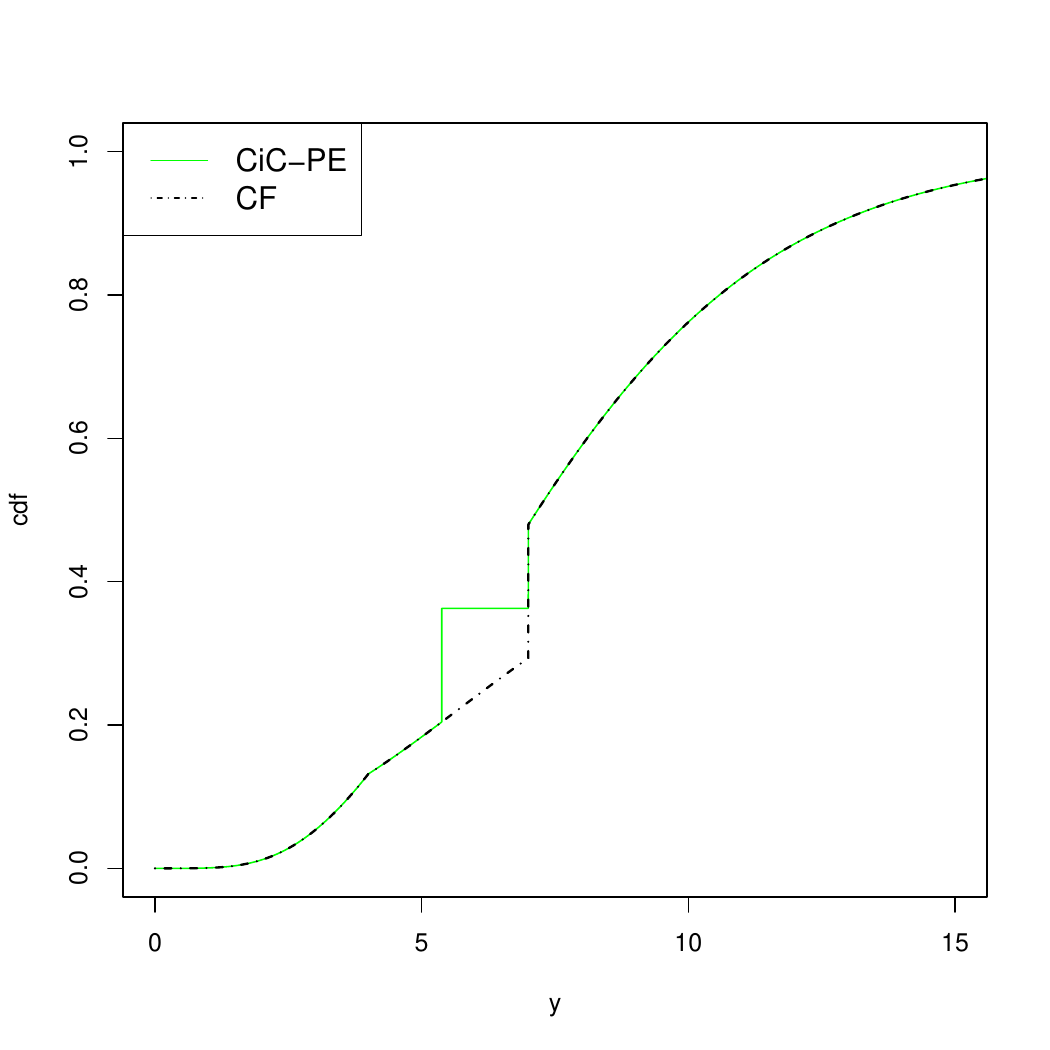}&\includegraphics[width=6cm]{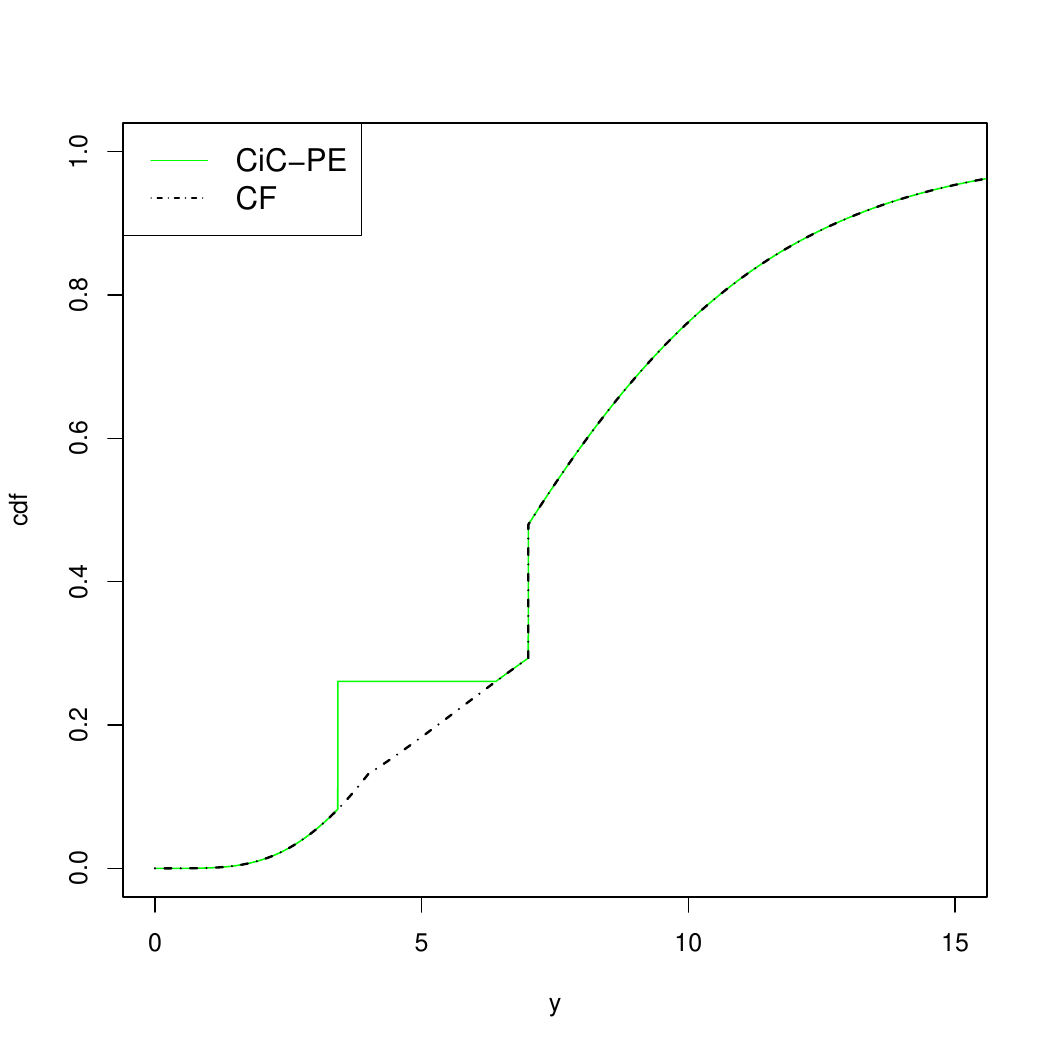}\\
\multicolumn{2}{c}{\parbox{12cm}{\scriptsize{\emph{Notes}: $CF$ denotes the counterfactual distribution $F_{Y_{10}|D=1}$, $CiC$-$PE$ denotes the CiC point-estimand. The copula and potential outcome distributions follow the specifications in Figure \ref{fig:mwexample_multiT0}.}}}\\
   \end{tabular}\label{fig:mwexample_CiC_T0}
   \end{figure}

\subsubsection{Discrete outcomes}
For the case of discrete outcomes, the following example illustrates that our identifying assumption can be compatible with multi-dimensional unobserved heterogeneity, whereas the CiC conditions require scalar unobserved heterogeneity.

\begin{example}[Binary outcome model with multidimensional unobserved heterogeneity]\label{ex:multidimensional}
    Consider the following model 
    \begin{eqnarray}
        Y_t &=&1-\mathbbm{1}\{\eta t D + U_t \leq c_t,\tilde{\eta} t D + \tilde{U}_t \leq \tilde{c}_t\},~~~ t=0,1,\nonumber\\
D&=&\mathbbm{1}\{V>q\},    \end{eqnarray}
    where $(Y_0,Y_1,D)$ is an observed random vector, $(\eta, \tilde{\eta}, U_0, U_1, \tilde{U}_0,\tilde{U}_1)$ is a latent random vector, and $(c_t, \tilde{c}_t)$ is a constant vector. For simplicity, we normalize $U_t$, $\tilde{U}_t$ and $V$ to be uniformly distributed on $[0,1]$. The untreated potential outcome $Y_{t0}$ is $$Y_{t0} =1-\mathbbm{1}\{U_t \leq c_t,\tilde{U}_t \leq \tilde{c}_t\}.$$ 
For instance, $D$ could be the student loan forgiveness program, $Y_t$ could be a college attendance decision, $U_t$ and $\tilde{U}_t$ could respectively be father's and mother's wealth in the absence of the program. This model assumes that an individual decides to attend college if at least one of the parents' wealth is above a (parent-specific) threshold, whether they were to receive the loan forgiveness program or not.  While the CiC approach does not allow multidimensional unobserved  heterogeneity, we show in Appendix \ref{proof:ex-multi-dimensional} that for the wide class of Archimedean copulas the stability of the dependence structure of the latent variables $(U_t,\tilde{U}_t,V)$ over time implies our copula stability assumption.
\end{example}

While our assumption accommodates a broader class of binary outcome models than the CiC model assumption, it does not necessarily yield tighter bounds. As illustrated in Figure \ref{fig:poisson} in the Online Appendix, both approaches produce the same bounds in the discrete-outcome case.

\section{Policy-relevant parameters: Social welfare treatment effect on the treated (SWTT)}\label{sec:swtt}
Building on our unifying, partial identification result for the counterfactual distribution, we provide a class of policy-relevant parameters that quantify the impact of policy on social welfare in the entire population, subpopulations in the lower tail of the distribution or over any interquantile range of the distribution. In general, when a policymaker decides to implement a new policy such as an increase in the legal minimum wage or legal minimum working time,  she expects the policy to have a specific social welfare impact. 
The social welfare function used by the policymaker is not necessarily known to the researcher, however. 
For instance, the policymaker may consider social welfare functions that put more weight on specific subpopulations, such as lower-income individuals, or considers only social welfare functions with specific properties like social welfare functions that respect the Pigou-Dalton principle of transfers\footnote{The Pigou-Dalton principle states that a transfer of income from a higher-ranked individual to a lower-ranked individual that does not change their ranks is always desirable.} or the rank-dependent social welfare functions introduced by
\citeauthor{Mehran1976} (\citeyear{Mehran1976}).\footnote{See \citeauthor{Aabergeetal2013} (\citeyear{Aabergeetal2013}) for a detailed discussion.}

 As we clarify below, the  widely used average treatment effect on the treated (ATT) corresponds to the case where the policymaker is inequality-neutral. If  the policymaker is averse to inequality, however, the ATT would not be an adequate causal parameter to measure the impact of the policy or judge its effectiveness.

 For this particular reason, we propose a class of parameters of interest that measure the causal effect of a particular policy in terms of a social welfare function,
 \begin{eqnarray*}
    SWTT_{\omega}&\equiv&SW_{\omega}(F_{Y_{11}|D=1})-SW_{\omega}(F_{Y_{10}|D=1}),\\
    &=&\int_{0}^{1}\omega(\tau)\left(Q^{\mathbb  R,-}_{Y_{11}|D=1}(\tau)-Q^{\mathbb  R,-}_{Y_{10}|D=1}(\tau)\right)d\tau, 
 \end{eqnarray*}
where $SW_{\omega}(F_X)=\int_{0}^{1}\omega(\tau)Q^{\mathbb  R,-}_{X}(\tau)$ denotes the social welfare function associated with a specific distribution $F_X$, and  $\omega(\tau) \in [0,1]$ is a weighting function. This social welfare function can be alternatively viewed as a weighted average of the outcomes of individuals $i$ where the weights depend on the rank of $X_i$, $SW_\omega=\int{X_i\omega(Rank(X_i))}di$ \citep{KitagawaTetenov2021}. Since the social welfare function essentially weights different quantiles of the distribution, the choice of the functional form of the weighting function relates to the inequality aversion of the policymaker and the extent thereof. We next consider several examples of weighting functions and discuss the properties of the social welfare functions they imply.

Before we proceed, it is important to emphasize that, while in many applications where measuring inequality is a concern, the outcome $Y$ is typically income or wages, our framework allows $Y$ to denote other outcomes as well as functions of different outcomes, such as consumption, income and/or human capital. Our $SWTT_{\omega}$ is also a generalization of the quantile treatment effect parameter discussed in 
\citet{Abadieetal2002}, \citet{Firpo2007}, and \citet{FrohlichMelly2008}.

\subsection{Generalized Gini social  welfare function}
The class of generalized Gini social welfare functions is the class of rank-dependent, equality-minded social welfare functions which satisfy the Pigou-Dalton principle of transfers and is given by 
\begin{eqnarray*} SW_{\Lambda}(F_X)=\int{\Lambda(F_X(x))}dx,\end{eqnarray*}
where $\Lambda(\cdot):[0,1] \mapsto  [0,1]$ 
is a convex, non-increasing, and non-negative  function with  boundary conditions $\Lambda(0)=1$ and $\Lambda(1)=0$.  This class admits the equivalent representation as a weighted sum of quantiles with weighting function $\omega(\tau)=\frac{\partial(1-\Lambda(\tau))}{\partial \tau}$,
\begin{eqnarray*}
SW_{\Lambda}(F_X)&=& SW_{\omega}(F_X)=\int{\omega(\tau)Q_X^{\mathbb{R},-}(\tau)}d\tau.
\end{eqnarray*}
As a result, the class of social welfare treatment effect parameters we introduce include this class as a special case. We proceed to present two important special cases of this class of social welfare functions, specifically the utilitarian and Gini social welfare functions.

\subsubsection*{Utilitarian welfare function}
When $\omega(\tau)=1$, we have $SW_{\omega}(F_X)=\int_{0}^{1}Q^{\mathbb  R,-}_{X}(\tau)d\tau$ $=\mathbb E[X].$ This corresponds to  the additive welfare function and in this case our proposed parameter boils down to the  ATT, i.e. $SWTT_{\omega}=ATT$. The ATT is therefore the appropriate parameter if the policymaker weights subpopulations at different quantiles of the distribution equally.

\subsubsection*{Gini social welfare function}\label{sec:giniswtt}
When $\omega(\tau)=2(1-\tau)$, we have  
$SW_{\omega}(F_X)=\int_{0}^{1}2(1-\tau)Q^{\mathbb  R,-}_{X}(\tau)d\tau$ $=\mathbb E[X]\left(1-I_{Gini}(F_X)\right),$
where $I_{Gini}(F_X)\equiv \frac{\int_{0}^{1}(2\tau-1)Q^{\mathbb  R,-}_{X}(\tau)d\tau}{\mathbb E[X]}$ is the widely used Gini inequality index, see \citet{Sen1974}. 
$SW_{\omega}(F_X)$ reflects the trade-off between the mean and (in)equality in the distribution $F_X$. The
product $\mathbb E[X]I_{Gini}(F_X)$ is a measure of the loss in social welfare due to inequality in the distribution $F_X$.
In that case, $SWTT_{\omega}$ captures the impact of the policy using the Gini social welfare function, see \citet{BlackorbyDonaldson1978} and \citet{Weymark1981}. In other words, if the policymaker implements the policy in order to reduce the  level of inequality measured by the Gini index, this parameter is the most adequate to judge the impact of this policy.

\subsection{Second-order dominance}
In many cases, when it is possible to do so, most inequality-averse policymakers  like to rank   distribution functions consistently
with second-degree dominance.
For instance, we say 
$F_{Y_{11}|D=1}$ second-order dominates $F_{Y_{10}|D=1}$
if and only if:
\begin{eqnarray*}
SWTT_{\omega}(u)&\equiv& SW_{\omega}(u,F_{Y_{11}|D=1})-SW_{\omega}(u,F_{Y_{10}|D=1})\\
&=&\int_{0}^{u}\left(Q^{\mathbb  R,-}_{Y_{11}|D=1}(\tau)-Q^{\mathbb  R,-}_{Y_{10}|D=1}(\tau)\right)d\tau\geq 0, 
 \end{eqnarray*}
 for all $u \in [0,1]$ and holds strictly for some $u$. In this special case, we have $\omega(\tau)=\mathbbm{1}\{\tau \leq u\}$. It is possible, however, that the observed and counterfactual distribution cannot be ranked using this criterion. Furthermore, the policy's objective may be to reduce inequality in a specific part of the distribution. We therefore consider the following quantile-specific Gini social welfare functions.

\subsection{Quantile-specific lower tail Gini social welfare
function}\label{sec:giniswtt_lower-tail}

In the Gini social welfare function discussed above, we assume that the policymaker is interested in the inequality of the whole population. Some policies may be concerned with reducing inequality up to specific quantiles of the distribution, such as minimum-wage policies \citep[e.g.][]{Dube2019,Cengizetal2019}. To quantify the impact of the policy on lower-tail quantiles, we extend the quantile-specific lower-tail Gini social welfare measures introduced in  \citeauthor{Aabergeetal2013} (\citeyear{Aabergeetal2013}) for continuous distributions to any type of distribution in order to accommodate the possibility of discontinuities resulting from censoring or bunching. To do so, we introduce the random variable
$X^u=Q_X^{\mathbb{R},-}(V)$, where $V\sim \mathcal{U}[0,u]$ for $u\in(0,1]$.\footnote{For $u\in Ran F_X$, $F_{X^u}(x)=\mathbb P(X\leq x|X\leq Q_X^{\mathbb{R},-}(u))$ for any $x\leq Q_X^{\mathbb{R},-}(u)$, thereby yielding the same truncated random variable introduced in \citeauthor{Aabergeetal2013}(\citeyear{Aabergeetal2013}). For $u\notin Ran F_X$, $X^u$ remains a well-defined random variable.}
 We relegate the derivations relevant to this section to Appendix \ref{sec:swtt_derivations}.

With this definition of $X^u$, we can show that the lower-tail Gini social welfare function can be decomposed into $\mathbb E[X^u]$ and the Gini coefficient associated with $F_{X^u}$ as follows
$$\int_{0}^{1}\frac{2}{u^2}(u-\tau)\mathbbm{1}\{\tau \leq u\}Q^{\mathbb  R,-}_{X}(\tau)d\tau=\mathbb E[X^u]\left(1-I_{Gini}(F_{X^u})\right),$$
where $I_{Gini}\left(F_{X^u}\right)\equiv\frac{\int_{0}^{1}(2\tau-u)\mathbbm{1}\{\tau\leq u\}Q^{\mathbb  R,-}_{X}(\tau)d\tau}{u^2\mathbb E[X^u]}$ is the lower-tail Gini coefficient at $u$ defined in \citet{Aabergeetal2013}. Therefore, 
$SWTT_{\omega}$ with  $\omega(\tau)=\frac{2}{u^2}(u-\tau)\mathbbm{1}\{\tau \leq u\}$ yields the following,
\begin{eqnarray*}
SWTT_{\omega}(u)&=&\int_{0}^{u}\frac{2}{u^2}(u-\tau)\left(Q^{\mathbb  R,-}_{Y_{11}|D=1}(\tau)-Q^{\mathbb  R,-}_{Y_{10}|D=1}(\tau)\right)d\tau,
 \end{eqnarray*}
and is interpreted as the Quantile-$u$ lower tail Gini social welfare treatment effect on the treated.

\subsection{Interquantile Gini social welfare function}\label{sec:giniswtt_interquantile} Since policies may target other parts of the distribution, such as the upper tail, we can generalize these quantile-specific social welfare treatment effect measures to any range of quantiles $[\underline{u},\overline{u}]$ a researcher may be interested in. Specifically, let $\underline{u}\in[0,1]$, $\overline{u}\in[0,1]$, $\underline{u}<\overline{u}$, $V\sim \mathcal{U}[\underline{u},\overline{u}]$, and $X^{\underline{u},\overline{u}}=Q_X^{\mathbb{R},-}(V)$. A derivation of $F_{X^{\underline{u},\overline{u}}}$ is relegated to Appendix \ref{sec:swtt_derivations}.
Now by letting  $\omega(\tau)=\frac{2}{(\overline{u}-\underline{u})^2}(\overline{u}-\tau)\mathbbm{1}\{\underline{u}<\tau\leq \overline{u}\}$, we obtain the Gini social welfare function specific to the quantile range $[\underline{u},\overline{u}]$,
$$SW_{\omega}(\underline{u},\overline{u})=\int_0^1\frac{2}{(\overline{u}-\underline{u})^2}(\overline{u}-\tau)\mathbbm{1}\{\underline{u}<\tau\leq \overline{u}\}Q_X^{\mathbb{R},-}(\tau)d\tau=\mathbb E[X^{\underline{u},\overline{u}}](1-I_{Gini}(F_{X^{\underline{u},\overline{u}}})),$$
where $\mathbb E[X^{\underline{u},\overline{u}}]\equiv\int_{\underline{u}}^{\overline{u}}Q_X^{\mathbb{R},-}(\tau)d\tau$ and $I_{Gini}\left(F_{X^{\underline{u},\overline{u}}}\right)\equiv \frac{\int_0^1(2\tau-\underline{u}-\overline{u})\mathbbm{1}\{\underline{u}<\tau\leq \overline{u}\}Q_X^{\mathbb{R},-}(\tau)d\tau}{(\overline{u}-\underline{u})^2\mathbb E[X^{\underline{u},\overline{u}}]}$.\footnote{This definition extends the upper tail Gini coefficient to any quantile range $[\underline{u},\overline{u}]$.} The interquantile Gini social welfare treatment effect on the treated over $[\underline{u},\overline{u}]$ is given by
 \begin{eqnarray*}
SWTT_{\omega}(\underline{u},\overline{u})&\equiv&SW_{\omega}(\underline{u},\overline{u},F_{Y_{11}|D=1})-SW_{\omega}(\underline{u},\overline{u},F_{Y_{10}|D=1})\\
&=&\int_{\underline{u}}^{\overline{u}}\frac{2}{(\overline{u}-\underline{u})^2}(\overline{u}-\tau)\left(Q^{\mathbb  R,-}_{Y_{11}|D=1}(\tau)-Q^{\mathbb  R,-}_{Y_{10}|D=1}(\tau)\right)d\tau.\end{eqnarray*}

\begin{remark}
It is important to note that when defining interquantile $SWTT_{\omega}(\underline{u},\overline{u})$ when $[\underline{u},\overline{u}]\subset [0,1]$, caution is required in interpreting these parameters, as we may not be comparing the same population unless certain assumptions hold. However, this concern is shared by most of the existing literature on recovering quantile treatment effects, including \citet{Abadieetal2002}, \citet{Firpo2007}, \citet{FrohlichMelly2008} and \citet{CallawayLi2019}, among many others. This issue disappears once we assume rank invariance—i.e., that there exists $U \sim \mathcal{U}[0,1]$ such that $Y_{1d|D=1} = Q_{Y_{1d|D=1}}^{\mathbb{R},-}(U), \quad \text{for } d \in \{0,1\}$.

\end{remark}

\section{Empirical Illustration}\label{sec:empirical}

In this section, we illustrate the CS bounds by revisiting the minimum wage study by \citet{Cengizetal2019}. This application demonstrates the usefulness of the class of policy-relevant parameters we introduce to examine the impact of the minimum wage increase. In particular, the lower-tail quantile social welfare treatment effect estimates allow us to zoom into the lower tail of the distribution, where we expect the minimum wage to have an impact. Overall, our CS bounds document proportionately larger impacts on the Gini social welfare in the lowest part of the distribution, where the minimum wage increase led to increase in the lower-tail mean and Gini social welfare. We also find that the distributional DiD exhibits violations of monotonicity in the lower tail of the distribution and is therefore not suitable for this application.

This empirical illustration highlights two practical advantages of our approach. First, our CS bounds relieve practitioners from having to take a stance on the support of the outcome of interest. Second, our multi-period CS bounds combine information from multiple pre-treatment periods to tighten the bounds on the parameters of interest and to simultaneously test the model assumptions.  

\subsection{Data and Implementation}
\citet{Cengizetal2019} examine 138 prominent state-level minimum wage increases between 1979 and 2016 using the individual-level NBER-merged Outgoing Rotation Group Earnings Data of the Current Population Survey. Their goal is to examine the impact of the policy on the wage distribution around the minimum wage, as illustrated in Figure \ref{fig:Ceng}. In order to make the empirical illustration of the multi-period CS bounds succinct, we focus on two pre-treatment periods, 2010 and 2011, and one post-treatment period, 2015, and examine the distributional impact of a nontrivial minimum wage increase of \$0.25 or more.\footnote{Note that starting 2009, the federal minimum has been \$7.25, so a minimum wage increase of \$0.25 or more constitutes an increase of more than 3\%. This definition of the treatment variable was also used in the empirical illustration in \citet{RothSantanna2021}.} For the purpose of this empirical illustration, we focus on the subgroup of states that had a pre-treatment minimum wage of \$8 or higher. We report the results for the remaining states in Appendix \ref{app:empirical_subgrouplessthan8}.

Table \ref{tab:Cengizetal_sumstats_multiT0} presents the summary statistics for hourly wage of both treatment and control groups in all three periods we consider. For both subgroups, the summary statistics show that the mean and standard deviation is different across treatment and control groups within the same year as well as within groups before and after the treatment.

In order to estimate the CS bounds on the counterfactual, we rely on Lemma \ref{lem:-X} to re-write the lower bound in a manner that admits straightforward numerical computation, specifically for $y\in\mathbb{Y}_{10|1}$ and for a given pre-treatment period $t$
\begin{eqnarray}
    F_{Y_{10}|D=1}^{LB,t}(y)&=&1-F_{-Y_{t}|D=1}(Q_{-Y_{t}|D=0}^{\mathbb{R},-}(1-F_{Y_1|D=0}(y)))\\
    F_{Y_{10}|D=1}^{UB,t}(y)&=&F_{Y_{t}|D=1}(Q_{Y_{t}|D=0}^{\mathbb{R},-}(F_{Y_1|D=0}(y)))
\end{eqnarray}
$F_{Y_{10}|D=1}^{LB,t}(y)$ and $F_{Y_{10}|D=1}^{UB,t}(y)$ are estimated by their sample analogues, $\widehat{F}_{Y_{10}|D=1}^{LB}(y)$ and $\widehat{F}_{Y_{10}|D=1}^{UB}(y)$, respectively, by replacing $F_X$ and $Q_X^{\mathbb{R},-}$ by their empirical counterparts, $\widehat{F}_X$ and $\widehat{Q}_X^{\mathbb{R},-}$, respectively. 

\begin{table}[H]\caption{Summary Statistics by Treatment and Control Groups}
\vspace{0.2cm}
\footnotesize{
    \begin{tabular}{lrrrrrrrrr}
    \toprule
    & \multicolumn{3}{c}{2010 (Pre-treatment)} & \multicolumn{3}{c}{2011 (Pre-treatment)}   & \multicolumn{3}{c}{2015 (Post-treatment)} \\
\cmidrule(lr){2-4}\cmidrule(lr){5-7}\cmidrule(lr){8-10}
(\$) & \multicolumn{1}{c}{Mean} & \multicolumn{1}{c}{S.D. } & \multicolumn{1}{c}{\# Obs} & \multicolumn{1}{c}{Mean} & \multicolumn{1}{c}{S.D. } & \multicolumn{1}{c}{\# Obs}& \multicolumn{1}{c}{Mean} & \multicolumn{1}{c}{S.D. } & \multicolumn{1}{c}{\# Obs} \\
 \midrule
    \multicolumn{10}{l}{States with Pre-Treatment Minimum Wage $\geq 8$}\\
    Control & 20.12 & 13.96 &            4,737  & 20.47 & 13.42 &            4,537  & 22.30 & 15.48 &            4,454  \\
    Treatment & 23.13 & 17.42 &          19,877  & 23.36 & 18.14 &          19,364  & 25.83 & 18.75 &          18,039  \\
    \bottomrule
    \end{tabular}}
\label{tab:Cengizetal_sumstats_multiT0}
\end{table}

\begin{figure} [htbp]\caption{Observed and Counterfactual Distributions: Subgroup 2 (Pre-MW$\geq \$ 8$), Bottom Quartile}
\vspace{0.25cm}{\footnotesize{
\begin{tabular}{cc}
(a) CS Bounds using 2010 pre-treatment period&(b) CS Bounds using 2011 pre-treatment period\\
\includegraphics[width=6cm]{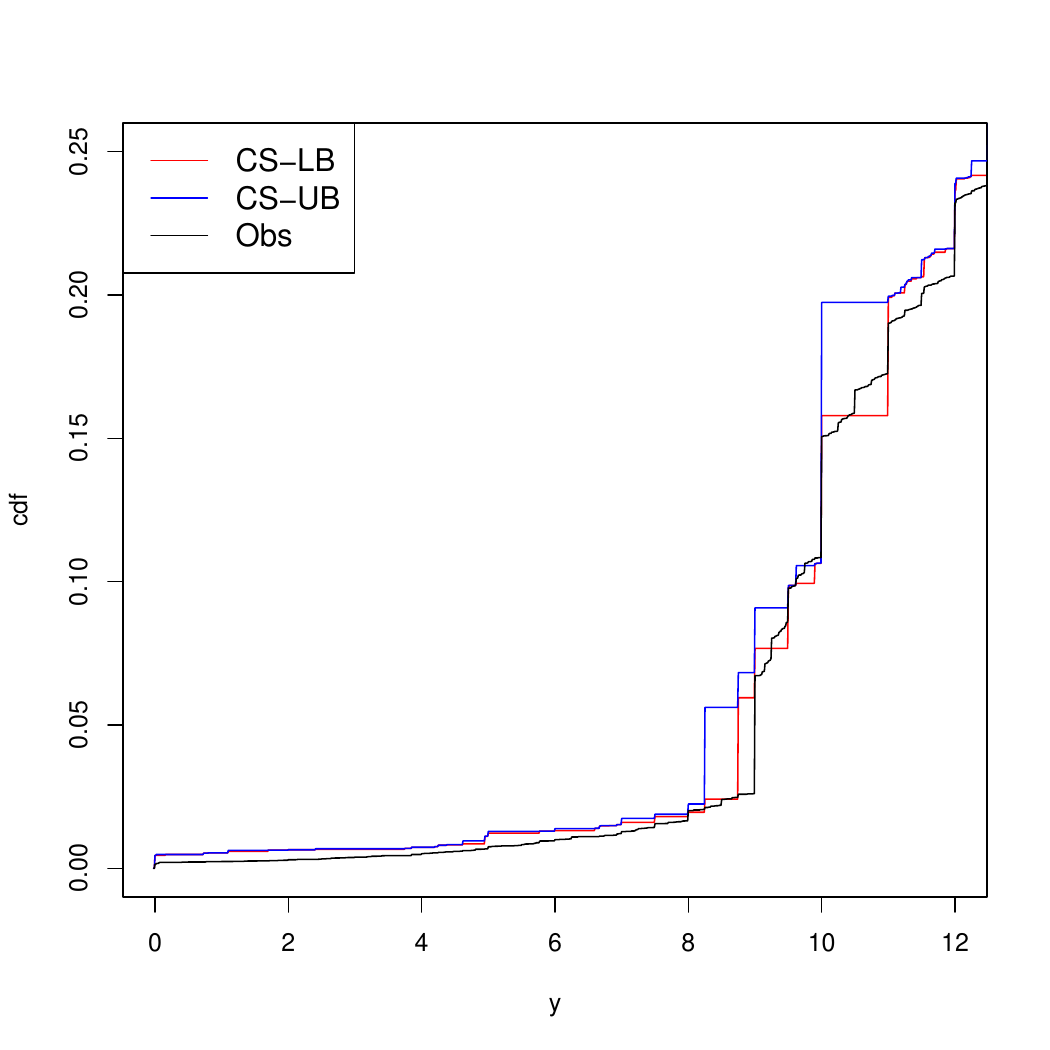}&\includegraphics[width=6cm]{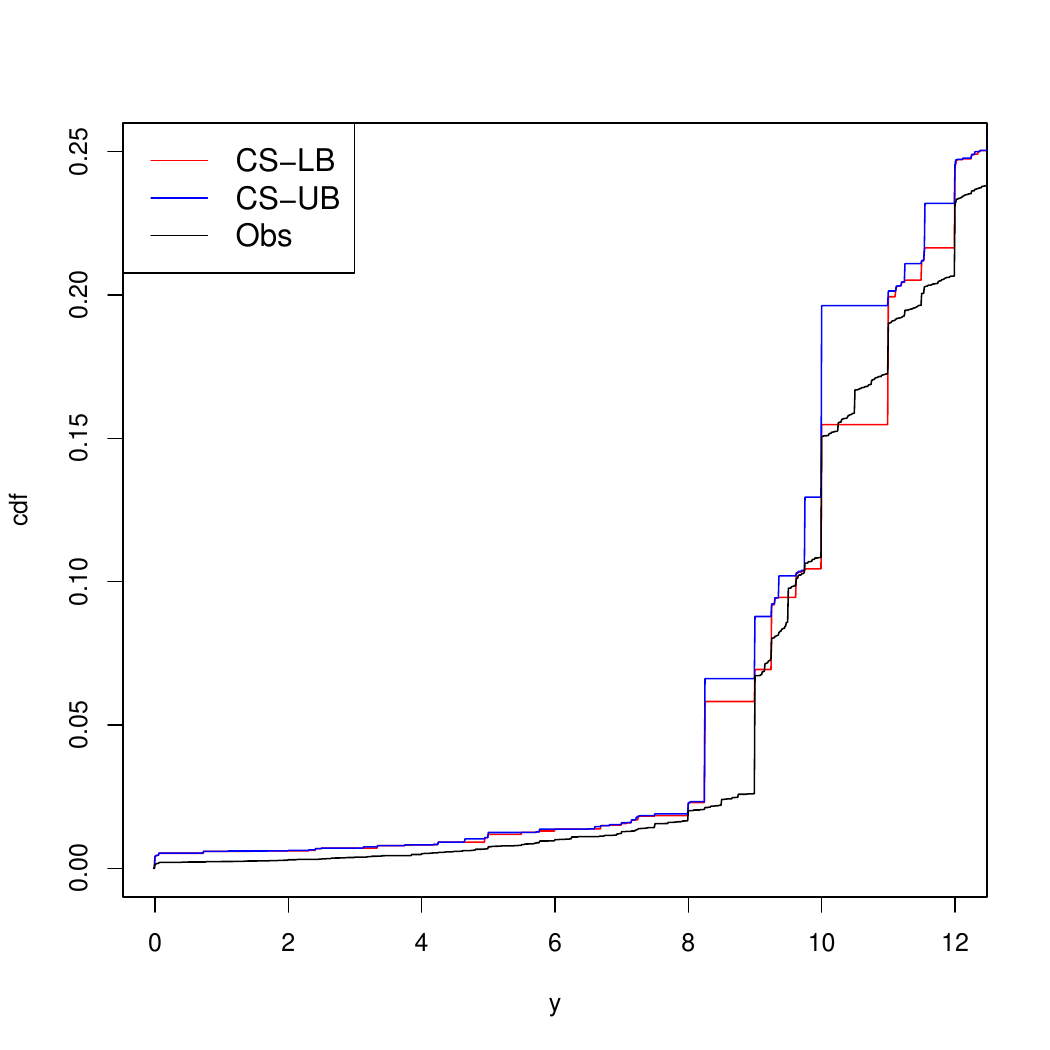}\\
(c) Dist-DiD using 2010 pre-treatment period&(d) Dist-DiD using 2011 pre-treatment period\\
\includegraphics[width=6cm]{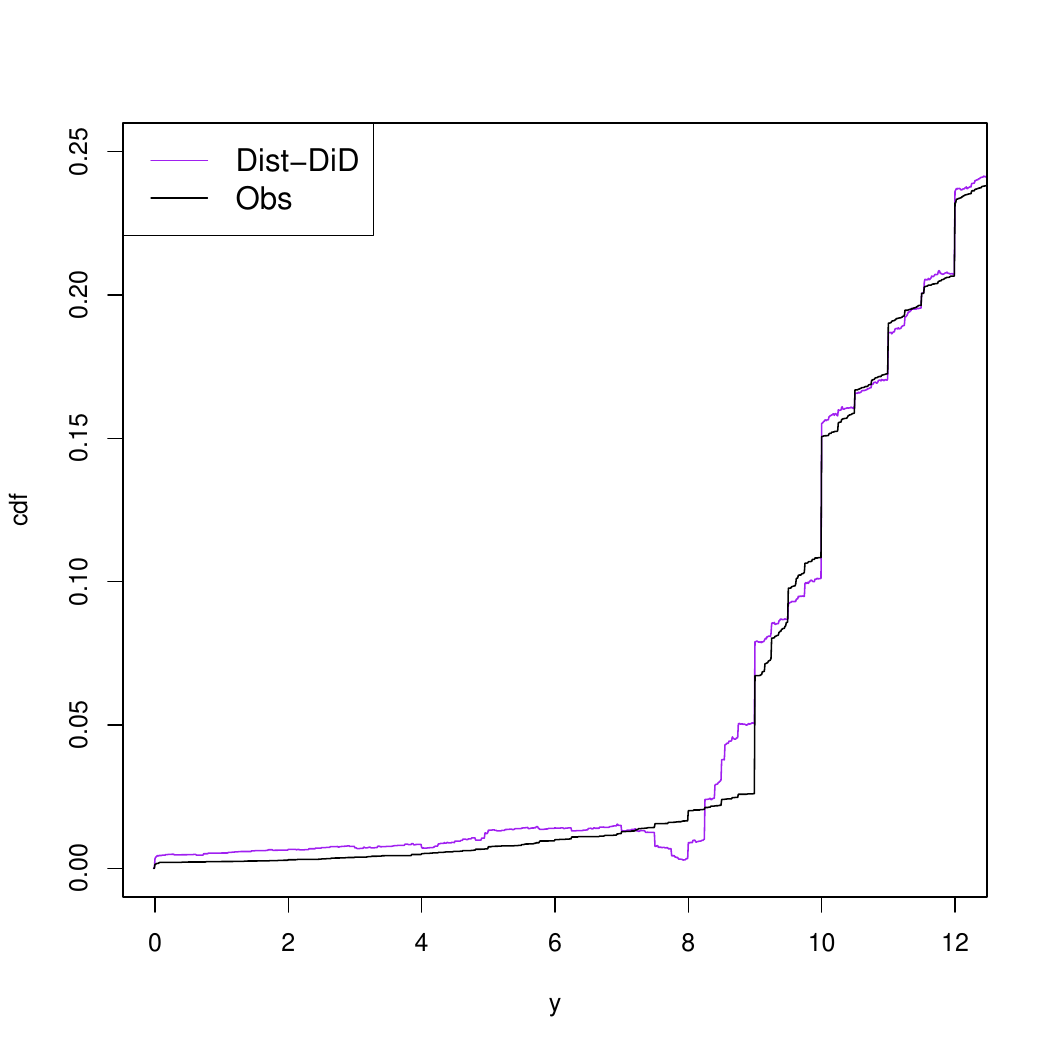}&\includegraphics[width=6cm]{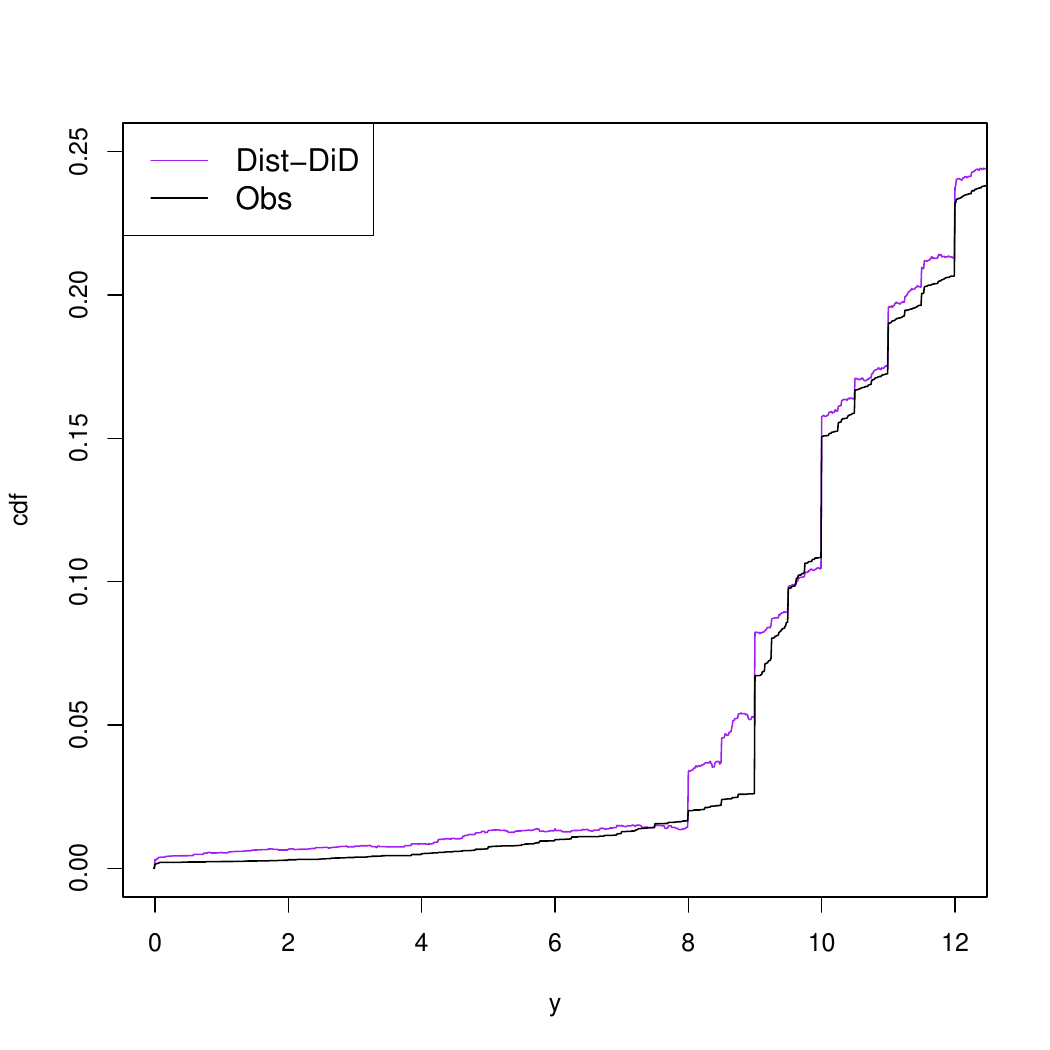}\\
(e) CiC using 2010 pre-treatment period&(f) CiC using 2011 pre-treatment period\\
\includegraphics[width=6cm]{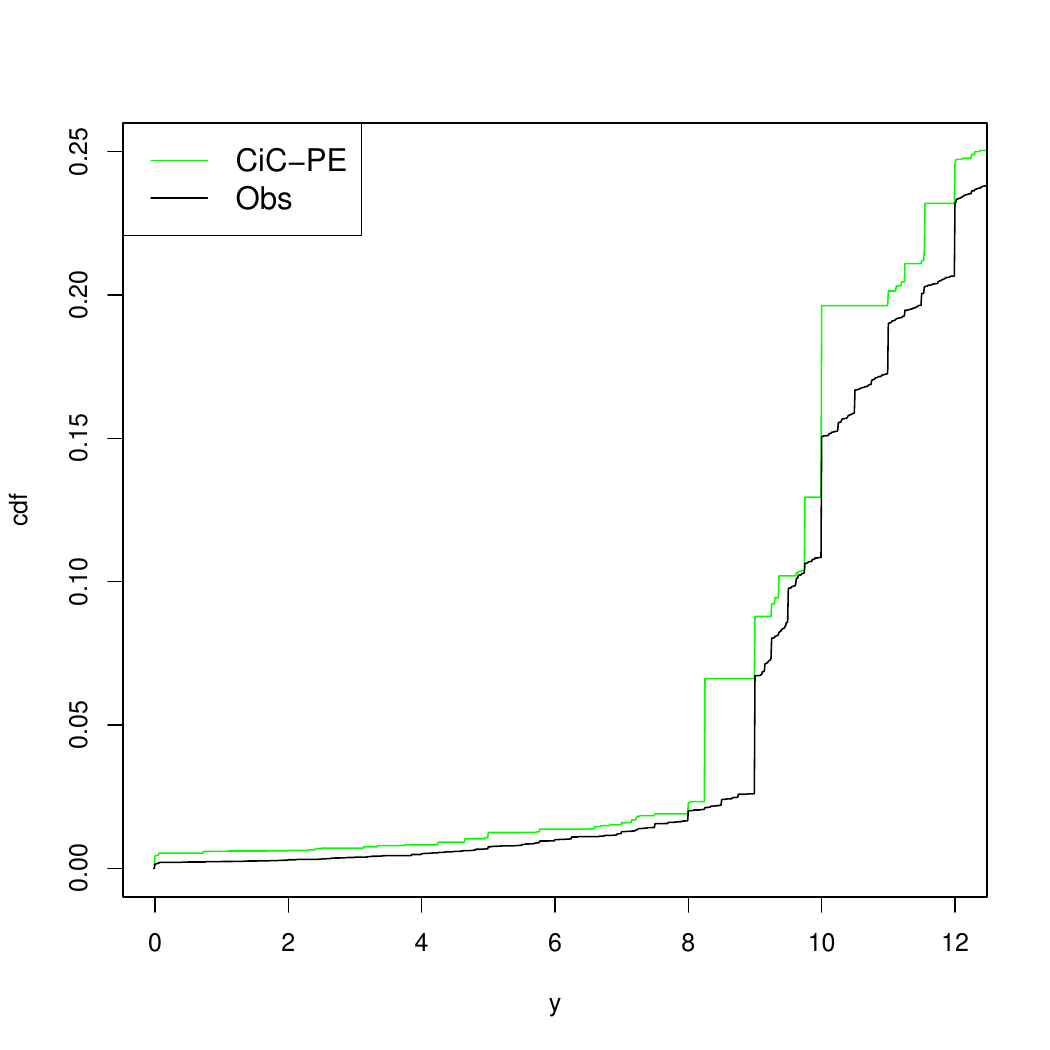}&\includegraphics[width=6cm]{Figures/FY10TCiC0_8_bottom25_06232025.pdf}\\
\multicolumn{2}{c}{\parbox{14cm}{\scriptsize{\emph{Notes}: $Obs$ denotes the observed factual $F_{Y_{11}|D=1}$, $CS$-$LB$ and $CS$-$UB$ denote the copula lower and upper bound estimates on the counterfactual distribution, respectively, $Dist$-$DiD$ depicts the distributional DiD estimator, and $CiC$-$PE$ denotes the CiC point estimator. For each point/bounds estimator, we provide estimates using each of the 2010 and 2011 pre-treatment periods. In this figure, we zoom into the lowest quartile of the distribution, see Figure \ref{fig:distribution_multiT0} for plots of the entire distribution. }}}
\end{tabular}}}
\label{fig:distribution_zoombottom_multiT0}
\end{figure}
\begin{figure}\caption{Horizontal subcopula plots scaled by $q$: $C_{Y_{t},D}(\cdot,q)/q$ on $Ran \widehat{F}_{Y_{t}}$ }\centering
    \includegraphics[width=10cm]{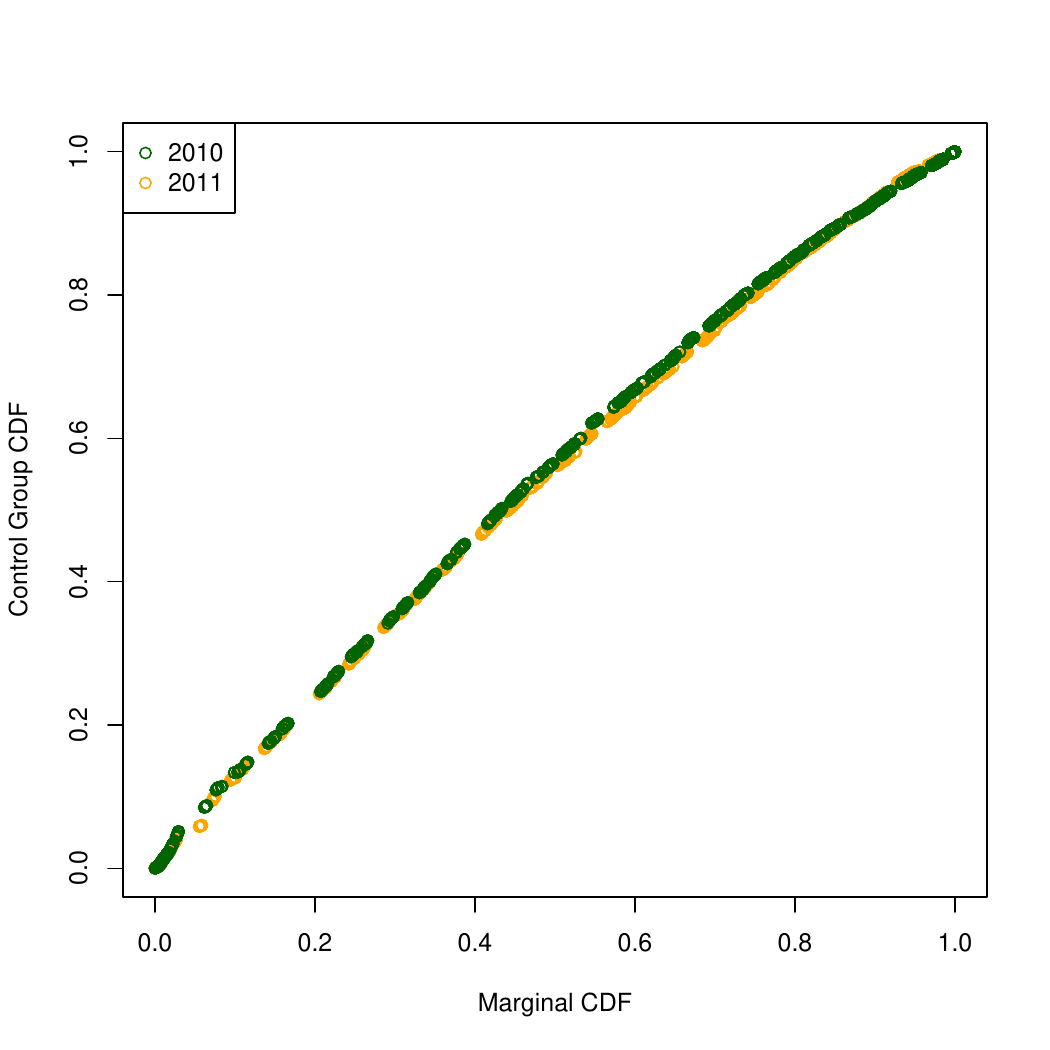}\\
    \parbox{12cm}{\scriptsize{\emph{Notes}: To plot the sample analogue of $C_{Y_{t},D}(\cdot,q)/q$, this figure plots the mapping $\widehat{F}_{Y_{t}}(y)\mapsto \widehat{F}_{Y_{t}|D=0}(y)$ for each $y$ in the empirical support of $Y_{t}$ for $t\in\{-1,0\}$, which refer to the 2010 and 2011 pre-treatment period, respectively.\\}}
    \label{fig:subcopula}
\end{figure}

The distributional DiD and CiC point estimators of $F_{Y_{10}|D=1}(y)$ are given by 
\begin{eqnarray}\widehat{F}_{Y_{10}|D=1}^{D\text{-}DiD,t}(y)&=&\widehat{F}_{Y_{t}|D=1}(y)+\widehat{F}_{Y_1|D=0}(y)-\widehat{F}_{Y_{t}|D=0}(y)\\
\widehat{F}_{Y_{10}|D=1}^{CiC,t}(y)&=&\widehat{F}_{Y_{t}|D=1}\left(\widehat{Q}_{Y_{t}|D=0}^{\mathbb{Y}_{t|0},-}\left(\widehat{F}_{Y_1|D=0}(y)\right)\right)
\end{eqnarray}

The CS bounds on the counterfactual as well as the observed factual distribution $\widehat{F}_{Y_1|D=1}$ can then be used to obtain the following sample analogues of the lower and upper bounds on the SWTT.\footnote{We compute the integral numerically using a grid with a step size of $0.01$.} For $t\in\{-1,0\}$, we obtain the following CS bounds estimator for the SWTT parameter
\begin{eqnarray}\widehat{SWTT}_\omega^{LB,t}&=&\int_\mathbb{R}{\left(\Lambda(\widehat{F}_{Y_1|D=1}(y)-\Lambda(\widehat{F}_{Y_{10}|D=1}^{LB,t}(y))\right)}dy\nonumber\\
\widehat{SWTT}_\omega^{UB,t}&=&\int_\mathbb{R}{\left(\Lambda(\widehat{F}_{Y_1|D=1}(y)-\Lambda(\widehat{F}_{Y_{10}|D=1}^{UB,t}(y))\right)}dy \label{eq:swtt-estimator}\end{eqnarray}
Similarly, we compute the multi-period CS bounds on the SWTT parameters.

To compute the SWTT parameters for the distributional DiD and CiC point estimators,
we use the following 
\begin{eqnarray}
\widehat{SWTT}_\omega^{D\text{-}DiD,t}&=&\int_\mathbb{R}{\left(\Lambda(\widehat{F}_{Y_1|D=1}(y)-\Lambda(\widehat{F}_{Y_{10}|D=1}^{D\text{-}DiD,t}(y))\right)}dy,\\
\widehat{SWTT}_\omega^{CiC,t}&=&\int_\mathbb{R}{\left(\Lambda(\widehat{F}_{Y_1|D=1}(y)-\Lambda(\widehat{F}_{Y_{10}|D=1}^{CiC,t}(y))\right)}dy.\label{eq:swtt-estimator_pe}\end{eqnarray}

\subsection{Bounds on the counterfactual distribution}\label{sec:empirical_counterfactual}
Figure \ref{fig:distribution_zoombottom_multiT0} presents the observed distribution of the treatment group in 2015, $\widehat{F}_{Y_1|D=1}$, as well as the CS bounds, distributional DiD and CiC point estimators of the counterfactual distribution using 2010 and 2011 as pre-treatment periods.  Since the minimum wage is likely to have an impact on the bottom of the distribution, we present those figures for the bottom quartile of the wage distribution where the minimum wage increase is likely to have an impact.\footnote{We relegate the figures of the entire distribution to Figure \ref{fig:distribution_multiT0} in the online appendix.}

First, we examine the CS bounds on the counterfactual distribution using each of the pre-treatment periods separately in Figure \ref{fig:distribution_zoombottom_multiT0}(a) and \ref{fig:distribution_zoombottom_multiT0}(b), respectively. Comparing the observed (factual) distribution with the CS bounds on the counterfactual using each of the pre-treatment periods, we note an obvious change in the censoring point as expected in the context of a minimum wage increase. For instance, in  Figure \ref{fig:distribution_zoombottom_multiT0}(b), the CS bounds on the counterfactual distribution exhibit a jump slightly above \$8, whereas the observed (factual) distribution exhibits a jump at about \$9. Furthermore, note that both upper and lower bounds satisfy the properties of a cdf. In addition, since the bounds do not cross, we do not have any detectable violation of the assumptions required for our identification approach. We also plot the sample analogue of the horizontal subcopula $C_{Y_{t0},D}(\cdot,q)$ for 2010 and 2011 to provide a visual check of our copula stability assumption in Figure \ref{fig:subcopula}. This plot is the counterpart of DiD pre-trends plots in our context. While this figure does not provide a formal test of the copula stability assumption, it demonstrates that the copulas governing the dependence between $Y_{t0}$ and $D$ for 2010 and 2011 are fairly similar.

Next, we examine the bottom quartile of the distributional DiD counterfactual estimates using 2010 and 2011 as pre-treatment period in Figure \ref{fig:distribution_zoombottom_multiT0}(c) and \ref{fig:distribution_zoombottom_multiT0}(d), respectively. At first glance, we note violations of the monotonicity property of cdfs in both counterfactual distributions, indicating a violation of the testable implication of the identifying assumption of distributional DiD \citep{RothSantanna2021}. The magnitude of the monotonocity violation is by far greater for the distributional DiD estimate using the 2010 pre-treatment period; the counterfactual estimate ``dips'' around the pre-treatment minimum wage of \$8, which is the part of the distribution particularly pertinent for the evaluation of the minimum wage increase.

Finally, we also present the CiC point estimator of the counterfactual using both pre-treatment periods in Figure \ref{fig:distribution_zoombottom_multiT0}(e) and \ref{fig:distribution_zoombottom_multiT0}(f), respectively. As demonstrated in Section \ref{sec:cic_connection}, the CiC point estimator coincides with the CS upper bound using the same pre-treatment period. This could translate to the CiC suffering from an upward bias in SWTT estimation as evident from comparing \eqref{eq:swtt-estimator} and \eqref{eq:swtt-estimator_pe}.

\subsection{Bounds on treatment effects}
Next, we quantify the impact of the minimum wage increase on the wage distribution using the ATT and the Gini SWTT both for the overall distribution as well as its lower tail. We report 95\% confidence intervals for all SWTT estimators using standard normal critical values and standard errors obtained using nonparametric bootstrap.\footnote{While the formal proof that these confidence intervals provide adequate coverage asymptotically is beyond the scope of the present paper, we have examined their performance in a simulation study mimicking our minimum wage setting which demonstrates that they provide adequate coverage in finite samples.}

\subsubsection{Overall social welfare treatment effects} Table \ref{tab:swtt_multiT0} presents 95\% confidence intervals on the ATT and Gini SWTT using the CS bounds, the distributional DiD and CiC point estimators.

\begin{table}[htbp]
  \centering
  \caption{Inference on SWTT using 2010 and/or 2011 as pre-treatment periods}
{\footnotesize    \begin{tabular}{lrrrrrrrrrrrl}
\multicolumn{7}{l}{Panel A. CS bounds}\\
          \toprule
          & \multicolumn{6}{c}{95\% CI}  \\
          \midrule
 Pre-period        & \multicolumn{2}{c}{2010}        & \multicolumn{2}{c}{2011} & \multicolumn{2}{c}{  2010 \& 2011}\\
          \cmidrule(lr){2-3}\cmidrule(lr){4-5}\cmidrule(lr){6-7}
          \cmidrule(lr){2-3}\cmidrule(lr){4-5}\cmidrule(lr){6-7}
  (\$)        &  \multicolumn{1}{c}{LB} & \multicolumn{1}{c}{UB} & \multicolumn{1}{c}{LB} & \multicolumn{1}{c}{UB} & \multicolumn{1}{c}{LB} & \multicolumn{1}{c}{UB} \\
          \midrule

    ATT   & -0.94 & 1.20  & -0.93 & 1.37  & -0.40 & 0.95 \\
    Gini SWTT & -0.43 & 0.85  & -0.13 & 1.11  & -0.06 & 0.79 \\
    \bottomrule 
    \end{tabular}}\\
    \vspace{0.5cm}
 {\footnotesize{   \begin{tabular}{lcrrcrrcrrcrr}

\multicolumn{9}{l}{Panel B. Distributional DiD and CiC point estimators}\\
     \toprule
 & \multicolumn{4}{c}{Dist DiD: 95\% CI}        & \multicolumn{4}{c}{CiC: 95\% CI}  \\
 \midrule
Pre-period    & \multicolumn{2}{c}{2010} 
   &\multicolumn{2}{c}{2011}
           & \multicolumn{2}{c}{2010} 
   &\multicolumn{2}{c}{2011}\\
             \cmidrule(lr){2-3}\cmidrule(lr){4-5}\cmidrule(lr){6-7}\cmidrule(lr){8-9}
  (\$)    &\multicolumn{1}{c}{LB} & \multicolumn{1}{c}{UB} &   \multicolumn{1}{c}{LB} & \multicolumn{1}{c}{UB} &    \multicolumn{1}{l}{LB} & \multicolumn{1}{l}{UB} &   \multicolumn{1}{l}{LB} & \multicolumn{1}{l}{UB}   \\
\midrule
ATT   & -0.16 & 1.23  & -0.03 & 1.37  & -0.49 & 1.20  & -0.41 & 1.37 \\
    Gini SWTT & -0.26 & 0.72  & -0.02 & 0.97  & -0.14 & 0.85  & 0.13  & 1.11 \\
    \midrule
    \multicolumn{9}{c}{\parbox{10cm}{\scriptsize{\emph{Notes}: The definitions of the SWTT bounds/point estimators are provided in \eqref{eq:swtt-estimator}--\eqref{eq:swtt-estimator_pe}. For the CS bounds, we report 95\% confidence intervals on the identified set. For the point estimators, we report 95\% confidence intervals on the SWTT parameter. All confidence intervals use standard normal critical values and nonparameteric bootstrap standard errors using 5,000 bootstrap replications.}}}\\
\end{tabular}}}
\label{tab:swtt_multiT0}
\end{table}

    \begin{table}[htbp]
  \centering
  \caption{Inference on Lower-tail SWTT using 2010 and/or 2011 as pre-treatment periods}
{\footnotesize    \begin{tabular}{lrrrrrr}
\multicolumn{7}{l}{Panel A. CS bounds}\\
          \toprule
          & \multicolumn{6}{c}{95\% CI}  \\
          \midrule
 Pre-period        & \multicolumn{2}{c}{2010}        & \multicolumn{2}{c}{2011} & \multicolumn{2}{c}{  2010\&2011}\\
          \cmidrule(lr){2-3}\cmidrule(lr){4-5}\cmidrule(lr){6-7}
          \cmidrule(lr){2-3}\cmidrule(lr){4-5}\cmidrule(lr){6-7}
  (\$)        &  \multicolumn{1}{c}{LB} & \multicolumn{1}{c}{UB} & \multicolumn{1}{c}{LB} & \multicolumn{1}{c}{UB} & \multicolumn{1}{c}{LB} & \multicolumn{1}{c}{UB} \\

\midrule
    $u=0.01$\\
      ATT(u)  & 0.43  & 2.80  & 0.52  & 3.13  & 0.73  & 2.74 \\
   Gini SWTT(u) & 0.52  & 2.41  & 0.61  & 2.56  & 0.81  & 2.42\\
   [0.25em]
    $u=0.025$\\
    ATT(u)& 0.12  & 1.95  & 0.23  & 2.11  & 0.35  & 1.85 \\
    Gini SWTT(u) & 0.31  & 2.36  & 0.39  & 2.56  & 0.57  & 2.29\\
 [0.25em]
    $u=0.05$\\
    ATT(u)  & 0.10  & 1.35  & 0.47  & 1.50  & 0.53  & 1.30 \\
    Gini SWTT(u) & 0.16  & 1.75  & 0.39  & 1.91  & 0.49  & 1.68\\
    [0.25em]
    $u=0.10$\\
    ATT(u)  & -0.03 & 0.90  & 0.23  & 1.01  & 0.31  & 0.85 \\
Gini SWTT(u)& 0.08  & 1.23  & 0.37  & 1.37  & 0.44  & 1.18\\
     [0.25em]
    $u=0.25$\\
    ATT(u)& -0.12 & 0.71  & 0.02  & 0.80  & 0.08  & 0.68 \\
 Gini SWTT(u) & -0.05 & 0.78  & 0.13  & 0.88  & 0.19  & 0.74\\
 \midrule
    \end{tabular}%
    }\\
 \vspace{0.5cm}
 {\footnotesize{  
\begin{tabular}{lcrrcrrcrrcrr}

\multicolumn{9}{l}{Panel B. Distributional DiD and CiC}\\
     \toprule
 & \multicolumn{4}{c}{Dist DiD: 95\% CI}        & \multicolumn{4}{c}{CiC: 95\% CI}  \\
 \midrule
Pre-period    & \multicolumn{2}{c}{2010} 
   &\multicolumn{2}{c}{2011}
           & \multicolumn{2}{c}{2010} 
   &\multicolumn{2}{c}{2011}\\
             \cmidrule(lr){2-3}\cmidrule(lr){4-5}\cmidrule(lr){6-7}\cmidrule(lr){8-9}
  (\$)    &\multicolumn{1}{c}{LB} & \multicolumn{1}{c}{UB} &   \multicolumn{1}{c}{LB} & \multicolumn{1}{c}{UB} &    \multicolumn{1}{l}{LB} & \multicolumn{1}{l}{UB} &   \multicolumn{1}{l}{LB} & \multicolumn{1}{l}{UB}   \\
\midrule
  $u=0.01$\\
      ATT(u) & 0.34  & 3.25  & 0.49  & 3.24  & 0.53  & 2.80  & 0.58  & 3.13 \\
      Gini SWTT(u) & 0.45  & 2.64  & 0.57  & 2.54  & 0.61  & 2.41  & 0.64  & 2.57\\
   [0.25em]
    $u=0.025$\\
    ATT(u)& -0.53 & 1.93  & -0.10 & 2.26  & 0.27  & 1.95  & 0.27  & 2.12 \\
       Gini SWTT(u) & -0.40 & 2.51  & 0.10  & 2.76  & 0.44  & 2.37  & 0.46  & 2.56\\
    $u=0.05$\\
    ATT(u)& -0.11 & 1.26  & 0.20  & 1.57  & 0.50  & 1.35  & 0.44  & 1.50 \\
    Gini SWTT(u) & -0.27 & 1.74  & 0.15  & 2.06  & 0.43  & 1.75  & 0.41  & 1.91\\
    $u=0.10$\\
    ATT(u) & -0.13 & 0.75  & 0.07  & 0.93  & 0.33  & 0.90  & 0.37  & 1.01 \\
   Gini SWTT(u) & -0.14 & 1.11  & 0.14  & 1.36  & 0.43  & 1.23  & 0.43  & 1.37\\
        [0.25em]
     $u=0.25$\\
    ATT(u) & -0.21 & 0.47  & -0.08 & 0.59  & 0.18  & 0.71  & 0.29  & 0.80 \\
    Gini SWTT(u)& -0.15 & 0.59  & 0.02  & 0.74  & 0.26  & 0.78  & 0.33  & 0.88\\
    \midrule
        \multicolumn{9}{c}{\parbox{10cm}{\scriptsize{\emph{Notes}: The definitions of the SWTT bounds/point estimators are provided in \eqref{eq:swtt-estimator}--\eqref{eq:swtt-estimator_pe}. For the CS bounds, we report 95\% confidence intervals on the identified set. For the point estimators, we report 95\% confidence intervals on the SWTT parameter. All confidence intervals use standard normal critical values and nonparameteric bootstrap standard errors using 5,000 bootstrap replications.}}}\\
    \end{tabular}%
    }}
  \label{tab:lowertail_swtt_multiT0}%
\end{table}%

When examining Table \ref{tab:swtt_multiT0}, we note that the 95\% confidence intervals on the CS bounds for the ATT and Gini SWTT include zero, whether we use 2010 and 2011 as pre-treatment periods separately or use them both in the multi-period CS bounds. This is consistent with the expectation that a minimum wage increase is unlikely to change the mean or inequality of the overall wage distribution. When we consider the 95\% confidence intervals using the distributional DiD and CiC point estimators, they suggest no improvement in terms of ATT and Gini SWTT, except using the CiC confidence interval that use the 2011 pre-treatment period. As pointed out in Section \ref{sec:empirical_counterfactual}, the CiC point estimator of the counterfactual coincides with the CS upper bound. As a result, the corresponding SWTT estimator may be upwardly biased.

\subsubsection{Lower-tail social welfare treatment effects} In the context of policies such as an increase in the legal minimum wage, the welfare of subpopulations at the lower tail of the wage distribution is an important policy target. Table \ref{tab:lowertail_swtt_multiT0} provides the lower-tail ATT and Gini social welfare treatment effects, $ATT(u)$ and $Gini~SWTT(u)$ for $u\in\{0.01,0.025,0.05,0.10,0.25\}$, respectively, introduced in Section \ref{sec:giniswtt_lower-tail}.

First, we consider the CS bounds using 2010 and 2011 as pre-treatment periods separately as well as the multi-period CS bounds that exploits both pre-treatment periods. Regardless of the pre-treatment year we use, for $u\in\{0.01,0.025,0.05\}$, the 95\% confidence intervals on the CS bounds demonstrate statistically significant improvement in terms of lower-tail mean and Gini social welfare. When we consider $u\in\{0.10,0.25\}$, we note that while the CS bounds using the 2011 pre-treatment period demonstrate statistically significant improvements in terms of lower-tail mean and Gini social welfare, the confidence intervals on the CS bounds using the 2010 pre-treatment period are not conclusive on the sign of this impact. Since the multiple-period CS bounds combine the information from both pre-treatment periods, they result in tighter confidence intervals than the CS bounds using 2010 or 2011 by itself for both the lower-tail ATT and Gini SWTT for all quantiles $u$ we consider. These tighter confidence intervals point to improvements both in terms of mean and Gini social welfare up to the lower quartile of the distribution ($u=0.25$). This demonstrates how exploiting the multiple pre-treatment periods can aid to provide tighter bounds that translate to shorter confidence intervals.

Next, we consider the distributional DiD and CiC estimators. The distributional DiD confidence intervals using the 2010 pre-treatment period do not suggest any significant improvement in terms of lower-tail mean and Gini social welfare, whereas the distributional DiD confidence intervals using the 2011 pre-treatment period suggest significant improvements in terms of both lower-tail mean and Gini social welfare for most of the quantiles we consider. When we examine the CiC point estimator, we note that the corresponding confidence intervals suggest significant improvements in terms of mean and Gini social welfare for all of the lower-tail quantiles we consider ($u=0.25$).

The confidence intervals on the lower-tail SWTT parameters demonstrate that the distributional DiD can yield contradictory results that then require an ad-hoc choice by the applied researcher regarding which period to use.\footnote{Since the distributional DiD point estimator of the counterfactual distribution using the 2010 pre-treatment period exhibits monotonicity violations, an applied researcher would likely discard those results and use the distributional DiD estimator using the 2011 pre-treatment period, for which the monotonicity violations are very minor. The selection of the pre-treatment period relies however on a pre-test, which raises the usual post-selection inference concerns. Pre-test bias issues in the context of difference-in-difference designs have been examined in \citet{Roth2022}.} The confidence intervals based on the CiC point estimator will coincide with the confidence interval on the CS upper bound and may therefore be upwardly biased.

Overall, our empirical application underscores the advantages of the CS bounds in terms of relieving the applied researcher from choosing the pre-treatment period as well as specifying the type of outcome distribution. It also demonstrates how to use the CS bounds on the counterfactual distribution to conduct inference on the SWTT parameters. Finally, The CS bounds on the counterfactual distribution can be used to bound other parameters, such as the parameters examined in \citet{Cengizetal2019}. We provide these estimates in Appendix \ref{app:empirical_deltas}.

\section{Conclusion}
With the goal of assessing the impact of regulatory policies on social welfare, this paper provides a unifying, partial identification result for the counterfactual distribution of the treatment group in difference-in-difference settings. Exploiting the stability of the dependence (copula) between group membership and the untreated potential outcome across time, our identification result has several advantages: (1) it applies to any outcome distribution, whether continuous, discrete or mixed, (2) it is invariant to monotonic transformations of the outcome, (3) it can allow for nonrandom selection into treatment without restricting the evolution of the marginal distribution of the potential outcomes across time. To quantify the impact of regulatory policies on social welfare, we introduce a broad class of treatment effect parameters. This class includes the ATT as well as the Gini social welfare treatment effect on the treated as a special case. We illustrate the empirical relevance of our results using a minimum wage application revisiting \citet{Cengizetal2019}. 
\begin{singlespace}
\bibliography{bibtex}
\bibliographystyle{econometrica}
\end{singlespace}

\renewcommand\thelemma{A.\arabic{lemma}} 
\setcounter{lemma}{0}
\renewcommand\theclaim{A.\arabic{claim}} 
\setcounter{claim}{0}
\renewcommand\theexample{A.\arabic{example}} 
\setcounter{example}{0}
\begin{appendix}

\section{Proofs of the main results} 
\label{Appen:Proof}
\subsection{An Additional  Result}

\begin{lemma}\label{lem:boundq}
Let $X$ be a random variable, we then have: 
\begin{enumerate}
\item \label{lem:boundq1} The following bounds are pointwise sharp,
\begin{eqnarray}
F_{X}\left(Q^{\mathbb R,+}_{X}\left(u\right)-\right) &\leq& u \leq F_{X}\left(Q^{\mathbb R,-}_{X}\left(u\right)\right), \text{ for all } u \in [0,1].
\end{eqnarray}

\item \label{lem:boundq2} Let $\mathbb{X}\subseteq \mathbb{Z}$, $\sup\left\{F_{X}\left(t\right): t\leq x \; \& \;  t \in \mathbb Z \cup\{-\infty \} \right\}=F_X(x).$

\end{enumerate} 
\end{lemma}
Before we proceed to provide a proof of the above lemma, we compare the bounds in Lemma \ref{lem:boundq}(1) with those used in \citet{AtheyImbens2006}, hereinafter AI2006, to bound the counterfactual distribution for discrete outcomes. These bounds are given by the following in our notation,
\begin{eqnarray}F_{X}(Q_X^{\mathbb{X},+}(u))\leq u\leq F_X(Q_X^{\mathbb{X},-}(u)).\label{eq:boundq_AI2006}
\end{eqnarray}
Now note that the upper bound employed in AI2006 only differs from the upper bound in Lemma \ref{lem:boundq}(1) in terms the use of $\mathbb{X}$ instead of $\mathbb{R}$.  These two quantiles only differ for $u=0$, since $\{x\in\mathbb{R}:F_X(x)\geq 0\}=\mathbb{R}$, whereas $\{x\in\mathbb{X}:F_X(x)\geq 0\}=\mathbb{X}$. As a result, $Q_X^{\mathbb{R},-}(0)=-\infty$ and $F_X(Q_X^{\mathbb{R},-}(0))=0$, whereas $Q_X^{\mathbb{X},-}(0)=\inf \mathbb{X}$ and $F_X(\inf\mathbb{X})\geq 0$. Therefore, our upper bound is lower than the one used in AI2006 for $u=0$.\footnote{Note that this is inconsequential for their identification result, since they provide bounds on the counterfactual distribution on its support, and set it to zero below the infimum of its support and to one above the supremum of its support.}

The lower bound in Lemma \ref{lem:boundq}(1) is starkly different from the lower bound in \eqref{eq:boundq_AI2006}. As we discuss in Section \ref{sec:cic_connection}, the lower bound in \eqref{eq:boundq_AI2006} equals the upper bound for several examples with mixed outcomes, due to censoring or bunching, because $Q_X^{\mathbb{X},+}(u)=Q_X^{\mathbb{X},-}(u)$ for $u\in[0,1]$ for some mixed outcome distributions. As a result, the lower bound is not valid in the mixed-outcome case in general. 
In those cases, the AI2006 bounds would not cover the counterfactual distribution. We demonstrate additional numerical examples in Appendix \ref{App:num_examples}. By contrast, our lower bound is valid and sharp for any outcome distribution. For discrete outcomes, our bounds collapse to theirs in numerical examples provided in Appendix \ref{App:num_examples}.

\begin{proof}(Lemma \ref{lem:boundq})\\
(1) 
$Q_X^{\mathbb{R},-}(u)\equiv\inf\{x\in\mathbb{R}:F_X(x)\geq u\}$. We know from the properties of a quantile function that $F_X(Q_X^{\mathbb{R},-}(u)) \geq u$. We now show that this inequality is sharp. Suppose that there exists $\tilde{x} \in \mathbb R: F_X(\tilde{x}) \geq u$ and $F_X(\tilde{x}) < F_X(Q_X^{\mathbb{R},-}(u))$. On the one hand, we have $F_X(\tilde{x}) < F_X(Q_X^{\mathbb{R},-}(u)) \Longrightarrow \tilde{x} < Q_X^{\mathbb{R},-}(u),$ since $F_X$ is nondecreasing. On the other hand, $F_X(\tilde{x}) \geq u \Longrightarrow \tilde{x} \in \{x\in \mathbb R: F_X(x) \geq u\}$. Therefore, $\tilde{x} \geq \inf\{x\in \mathbb R: F_X(x) \geq u\}=Q_X^{\mathbb{R},-}(u),$ which contradicts $\tilde{x} < Q_X^{\mathbb{R},-}(u)$.

We next show $F_X(Q_X^{\mathbb{R},+}(u)-)\leq u$. For a fixed $u \in [0,1]$, let us define $\Omega=\{y \in \mathbb R: F_X(y) \leq u\}$. We first show this implication: $z < Q_X^{\mathbb R,+}(u) \Longrightarrow F_X(z) \leq u$. By contradiction, suppose that (i) $z < Q_X^{\mathbb R,+}(u)$ and (ii) $F_X(z) > u$. Take $y \in \Omega$, then by (ii) we have $F_X(z) > u \geq F_X(y)$, which implies $F_X(z) > F_X(y)$, which in turn implies $y \leq z$ since $F_X$ is nondecreasing. Therefore, for all $y \in \Omega,$ we have $\ y \leq z$. It follows that $\sup \Omega \leq z$, i.e., $Q_X^{\mathbb R,+}(u) \leq z$. This leads to a contradiction since $z<Q_X^{\mathbb R,+}(u)$ by (i). Hence, we have shown that $z < Q_X^{\mathbb R,+}(u) \Longrightarrow F_X(z) \leq u$. Second, by definition, we have $F_X(Q_X^{\mathbb{R},+}(u)-)\equiv\sup_{z < Q_X^{\mathbb R,+}(u)} F_X(z) \leq \sup_{z < Q_X^{\mathbb R,+}(u)} u=u$, where the inequality holds from the previous implication.

Now we proceed to show that $F_X(Q_X^{\mathbb{R},+}(u)-)\leq u$ is sharp. First, let us show that there does not exist any $\tilde{x} \in \mathbb R$ such that (i) $F_X(\tilde{x}) \leq u$ and (ii) $F_X(Q_X^{\mathbb{R},+}(u)-) < F_X(\tilde{x}-)$. By contradiction, suppose there exists such an $\tilde{x}\in\mathbb R$. From (ii), $\sup_{z < Q^{\mathbb R,+}(u)} $ $F_X(z)$ $\equiv F_X(Q_X^{\mathbb{R},+}(u)-) < F_X(\tilde{x}-) \equiv \sup_{z < \tilde{x}} F_X(z)$, we deduce that $\{z < Q^{\mathbb R,+}(u)\} \subset \{z < \tilde{x}\}$. Therefore, $Q^{\mathbb R,+}(u) < \tilde{x}$. From (i), $F_X(\tilde{x}) \leq u$, we have $\tilde{x} \in \{x \in \mathbb R: F_X(x) \leq u\}$. Therefore, $\tilde{x} \leq \sup\{x \in \mathbb R: F_X(x) \leq u\}=Q^{\mathbb R,+}(u)$, which leads to a contradiction. It follows that there does not exist any $\tilde{x} \in \mathbb R$ such that $F_X(\tilde{x}) \leq u$ and $F_X(Q_X^{\mathbb{R},+}(u)-) < F_X(\tilde{x})$. \\

Second, let us show that there does not exist any $\tilde{x} \in \mathbb R$ such that $F_X(\tilde{x}-) \leq u$ and $F_X(Q_X^{\mathbb{R},+}(u)-) < F_X(\tilde{x}-)$. If $F_X(Q_X^{\mathbb{R},+}(u)-) < F_X(\tilde{x}-)$, then from the previous result, we must have $F_X(\tilde{x}) > u$. Hence, we have $F_X(\tilde{x}) > u \geq F_X(\tilde{x}-)$, which implies $\tilde{x}=Q_X^{\mathbb{R},+}(u)$, which in turn contradicts $F_X(Q_X^{\mathbb{R},+}(u)-) < F_X(\tilde{x}-)$.

\end{proof}

\subsection{Proof of Lemma \ref{cop:mon}}
By Sklar's Theorem \citep[Theorem 2.3.3]{Nelsen2006}, there is a unique subcopula  $C_{Y_{10}, D}$ determined  
on $Ran F_{Y_{10}} \times \{q\}$,  such that  the following hold:
\begin{eqnarray}
F_{Y_1,D}(y,0)&\equiv&\mathbb P(Y_1\leq y, D=0)=C_{Y_{10},D}\left(F_{Y_{10}}(y),q\right), \;\; y \in \overline{\mathbb R}.
\end{eqnarray}

Using Proposition 1(4) from \citet{Embrechts_al2013}, we have:
\begin{eqnarray}
C_{Y_{10},D}(u,q)&=&F_{Y_1,D}\left(Q^{\mathbb  R,-}_{Y_{10}}(u),0\right) \text{ for all } u \in \overline{\operatorname{Ran}}  F_{Y_{10}}. \label{eq:cop2}
\end{eqnarray}

The  latter equality  holds, because (i) for all $u \in \overline{\operatorname{Ran}}  F_{Y_{10}}$ there exists $y \in \overline{\mathbb R}$ such that  $y=Q^{\mathbb  R,-}_{Y_{10}}(u)$ and (ii) from Proposition 1(4) in \citet{Embrechts_al2013}  we have  $F_{Y_{10}}\left(Q^{\mathbb  R,-}_{Y_{10}}(u)\right)=u$ for all  $u \in \overline{\operatorname{Ran}}  F_{Y_{10}}$.
For $u, u' \in \overline{\operatorname{Ran}}  F_{Y_{10}}$ such that $u<u'$ we have  $Q^{\mathbb  R,-}_{Y_{10}}(u) < Q^{\mathbb  R,-}_{Y_{10}}(u') \Rightarrow  F_{Y_1,D}\left(Q^{\mathbb  R,-}_{Y_{10}}(u),0\right) <F_{Y_1,D}\left(Q^{\mathbb  R,-}_{Y_{10}}(u'),0\right) \iff C_{Y_{10},D}(u,q) <C_{Y_{10},D}(u',q)$.
The first strict inequality holds because by construction $Q^{\mathbb  R,-}_{Y_{10}}(u)$ is strictly increasing on $\overline{\operatorname{Ran}}  F_{Y_{10}}$. The second holds because $Q^{\mathbb  R,-}_{Y_{10}}(\cdot) \in \mathbb Y_{10} \subseteq  \mathbb Y_{10|0}$ since  $\mathbb Y_{10|1} \subseteq \mathbb Y_{10|0}$. 

\qed

\subsection{Proof of Theorem \ref{Main:theorem}}\label{proof:theorem1}

The proof follows in three steps. First, we derive the bounds (Section \ref{Bounds:deriv}), then we proceed to show sharpness (Section \ref{Bounds:sharpness}). Since the sharpness proof relies on two intermediate lemmata, the last step is then to prove these two lemmata (Section \ref{Bounds:intermediate_lemmata}).
\subsubsection{Derivation of the bounds}\label{Bounds:deriv}

Take a fixed $y \in \mathbb Y_{10|0}$, then the following holds for all  $\tilde y< Q^{\mathbb R,+}_{Y_0|D=0}\left(F_{Y_{1|D=0}}(y)\right)$ : 
\begin{eqnarray}
F_{Y_0|D=0}\left(\tilde y\right) &\leq& F_{Y_{1|D=0}}(y) \leq F_{Y_0|D=0}\left(Q^{\mathbb R,-}_{Y_0|D=0}\left(F_{Y_{1|D=0}}(y)\right)\right), \nonumber \\
F_{Y_0,D}\left(\tilde y,0\right) &\leq& F_{Y_{1,D}}(y,0) \leq F_{Y_0,D}\left(Q^{\mathbb R,-}_{Y_0|D=0}\left(F_{Y_{1|D=0}}(y)\right),0\right), \nonumber\\
C_{Y_0,D}\left(F_{Y_0}\left(\tilde y\right), q\right)&\leq& C_{Y_{10},D}\left(F_{Y_{10}}(y), q\right) \leq C_{Y_0,D}\left(F_{Y_0}\left(Q^{\mathbb R,-}_{Y_0|D=0}\left(F_{Y_{1|D=0}}(y)\right)\right), q\right), \nonumber\\
C_{Y_0,D}\left(F_{Y_0}\left(\tilde y\right), q\right)&\leq& C_{Y_{0},D}\left(F_{Y_{10}}(y), q\right) \leq C_{Y_0,D}\left(F_{Y_0}\left(Q^{\mathbb R,-}_{Y_0|D=0}\left(F_{Y_{1|D=0}}(y)\right)\right), q\right), \nonumber\\
 F_{Y_0}\left(\tilde y\right)&\leq& F_{Y_{10}}(y) \leq F_{Y_0}\left(Q^{\mathbb R,-}_{Y_0|D=0}\left(F_{Y_{1|D=0}}(y)\right)\right) \label{eq;it}
\end{eqnarray}
The first line of the inequality trivially holds
from Lemma \ref{lem:boundq}(\ref{lem:boundq1}) and the fact that $Y_0\leq \tilde{y}$ implies $Y_0 < Q^{\mathbb R,+}_{Y_0|D=0}\left(F_{Y_{1|D=0}}(y)\right)$.
The third line holds  by Sklar's Theorem \citep[Theorem 2.3.3.]{Nelsen2006}. The fourth line holds under Assumption \ref{stab}, and the last line holds under Assumption \ref{inc}. 
Notice that the last line requires $u \mapsto C_{Y_{10},D}(u,q)$ to be strictly increasing only on $\overline{\operatorname{Ran}}  F_{Y_{10}}\cup \overline{\operatorname{Ran}}  F_{Y_{00}} \subseteq [0,1]$.
Now, applying the monotonicity of the function $v-C_{Y_0,D}(v,q)$ on the inequality (\ref{eq;it}), for all  $\tilde y < Q^{\mathbb R,+}_{Y_0|D=0}\left(F_{Y_{1|D=0}}(y)\right)$ we have:
 \begin{eqnarray*}
&&F_{Y_0}\left(\tilde y\right)-C_{Y_0,D}\left(F_{Y_0}\left(\tilde y\right), q\right) \leq  F_{Y_{10}}(y)-C_{Y_0,D}\left(F_{Y_{10}}(y), q\right) \leq \label{ineq:it}
\\ &&F_{Y_0}\left(Q^{\mathbb R,-}_{Y_0|D=0}\left(F_{Y_{1|D=0}}(y)\right)\right)-C_{Y_0,D}\left(F_{Y_0}\left(Q^{\mathbb R,-}_{Y_0|D=0}\left(F_{Y_{1|D=0}}(y)\right)\right), q\right). \nonumber
\end{eqnarray*}
With a slight abuse of notation, we will use $F_{Y_{t0},D}(y,1)\equiv \mathbb{P}(Y_{t0}\leq y, D=1)$. Since $F_{Y_{t0}}(y)=F_{Y_{t0}, D}(y,1)+F_{Y_{t0}, D}(y,0)=F_{Y_{t0}, D}(y,1) + C_{Y_{t0},D}(F_{Y_{t0}}(y),q)$ for $t=0,1$, the latter equality implies the following:
\begin{eqnarray*}
F_{Y_0, D}\left(\tilde y, 1\right) &\leq&  F_{Y_{10}}(y)-C_{Y_0,D}\left(F_{Y_{10}}(y), q\right) \leq F_{Y_0, D}\left(Q^{\mathbb R,-}_{Y_0|D=0}\left(F_{Y_{1|D=0}}(y)\right), 1\right)\\
F_{Y_0, D}\left(\tilde y, 1\right) &\leq&  F_{Y_{10}}(y)-C_{Y_{10},D}\left(F_{Y_{10}}(y), q\right) \leq F_{Y_0, D}\left(Q^{\mathbb R,-}_{Y_0|D=0}\left(F_{Y_{1|D=0}}(y)\right), 1\right)\\
F_{Y_0, D}\left(\tilde y, 1\right)&\leq&  F_{Y_{10},D}(y, 1) \leq F_{Y_0, D}\left(Q^{\mathbb R,-}_{Y_0|D=0}\left(F_{Y_{1|D=0}}(y)\right), 1\right),\\
F_{Y_0, D}\left(\tilde y, 1\right) &\leq&  F_{Y_{10},D}(y, 1) \leq F_{Y_0, D}\left(Q^{\mathbb R,-}_{Y_0|D=0}\left(F_{Y_{1|D=0}}(y)\right), 1\right),\\
 F_{Y_0|D=1}\left(\tilde y\right) &\leq&  F_{Y_{10}|D=1}(y) \leq F_{Y_0|D=1}\left(Q^{\mathbb R,-}_{Y_0|D=0}\left(F_{Y_{1|D=0}}(y)\right)\right),
\end{eqnarray*}
where the second line holds under Assumption \ref{stab}.
So, to summarize, for any fixed $y \in \mathbb Y_{10|0}$,
we have:
$$F_{Y_0|D=1}\left(\tilde y\right) \leq  F_{Y_{10}|D=1}(y) \leq F_{Y_0|D=1}\left(Q^{\mathbb R,-}_{Y_0|D=0}\left(F_{Y_{1|D=0}}(y)\right)\right), \text{ for all } \tilde y < Q^{\mathbb R,+}_{Y_0|D=0}\left(F_{Y_{1|D=0}}(y)\right).$$
Taking the supremum over $\tilde{y}<Q_{Y_0|D=0}^{\mathbb{R},+}(F_{Y_1|D=0}(y))$ implies that:
$$ \sup_{\tilde y < Q^{\mathbb R,+}_{Y_0|D=0}\left(F_{Y_{1|D=0}}(y)\right)}F_{Y_0|D=1}\left(\tilde y\right) \leq  F_{Y_{10}|D=1}(y) \leq F_{Y_0|D=1}\left(Q^{\mathbb R,-}_{Y_0|D=0}\left(F_{Y_{1|D=0}}(y)\right)\right),$$
which is equivalent to:
$$\underbrace{F_{Y_0|D=1}\left(Q^{\mathbb R,+}_{Y_0|D=0}\left(F_{Y_{1|D=0}}(y)\right)-\right)}_{=F_{Y_0|D=1}\left(\left[Q^{\mathbb R,+}_{Y_0|D=0}\circ F_{Y_{1|D=0}}\right](y)-\right)\equiv F^{LB}(y) } \leq  F_{Y_{10}|D=1}(y) \leq \underbrace{F_{Y_0|D=1}\left(Q^{\mathbb R,-}_{Y_0|D=0}\left(F_{Y_{1|D=0}}(y)\right)\right)}_{=\left[F_{Y_0|D=1}\circ Q^{\mathbb R,-}_{Y_0|D=0}\circ F_{Y_1|D=0}\right] (y)\equiv F^{UB}(y) }.$$

We then finally have: 
\begin{eqnarray}\label{B1}
F^{LB}(y) \leq  F_{Y_{10}|D=1}(y) \leq F^{UB}(y) \text{ for all } y \in \mathbb Y_{10|0}.
\end{eqnarray}

While these above bounds are point-wise sharp for all $y \in \mathbb Y_{10|0},$ they may not be sharp for $y  \in \mathbb R \setminus \mathbb Y_{10|0}$.
And this is because  the upper bound may not be right-continuous in some cases, similarly for the  lower bound which may not be  right-continuous whenever $\{\tilde y \in \mathbb Y_{0|D=1} \cup \{-\infty\}:F_{Y_0|D=1}(\tilde y)\leq u\}$ is open for some $u\in Ran F_{Y_0|D=1}$.

To clarify this point, let us consider the simple case where $Y_{t0}$, $t \in \{0,1\}$ are all discrete random variables with $\mathbb Y_{10|0}=\{y_0,...,y_K\}$. In this case, $F^{LB}(.)$ is a well-defined cdf, while $F^{UB}(.)$ may not be a right-continuous function. Indeed, the function $u \mapsto Q^{\mathbb Y_{0|0},-}(u)$ is left-continuous and the discontinuities happen at $u\in Ran F_{Y_0|D=0}$.
Now, consider that there exists $u_k \in Ran F_{Y_0|D=0} \cap Ran F_{Y_{10}|D=0}$, thus $F^{UB}(.)$ could be left-continuous at
$y_k \in \mathbb Y_{10|0}$ such that $F_{Y_{10}|D=0}(y_k)=u_k$.
If it is left-continuous and not right-continuous in $y_k$, we have:
 $\{y \in \overline{\mathbb R}: F^{UB}(y)> F^{UB}(y_k)\}=(y_k,\infty]$. Let us consider $\epsilon>0$ such that $y_k +\epsilon < y_{k+1}$. In such a case, $F_{Y_{10}|D=1}(y_k+\epsilon)=F_{Y_{10}|D=1}(y_k)$, however, by applying naively the bounds to $y_k$ and $y_{k}+\epsilon$ we have:
\begin{eqnarray}
F^{LB}(y_k) &\leq&  F_{Y_{10}|D=1}(y_k) \leq F^{UB}(y_k), \text{ where } y_k \in \mathbb Y_{10|0}\label{eq1}\\
F^{LB}(y_k+\epsilon) &\leq&  F_{Y_{10}|D=1}(y_k+\epsilon) \leq F^{UB}(y_k+\epsilon), \text{ where } y_k +\epsilon \notin \mathbb Y_{10|0}\label{eq2}
\end{eqnarray}
which implies that the upper bound in (\ref{eq2}) is not sharp since $F^{UB}(y_k+\epsilon)> F^{UB}(y_k)$.
A valid tighter bound for $F^{LB}(y')$ for $y_{k}<y'<y_{k+1}$ is:
\begin{eqnarray*}
F^{LB}(y_k) &\leq&  F_{Y_{10}|D=1}(y') \leq F^{UB}(y_k),\;\; y_{k}\leq y'<y_{k+1}.
\end{eqnarray*}

Since extending the bounds in Eq. (\ref{B1}) to the case where $y \notin \mathbb Y_{10|0}$ provides non-sharp bounds,  we provide an alternative approach that internalizes the idea that our target function of interest must be right-continuous since it is a cdf.
Recall,
\begin{eqnarray}
F^{LB}(t) \leq  F_{Y_{10}|D=1}(t) \leq F^{UB}(t) \text{ for all } t \in \mathbb Y_{10|0}.
\end{eqnarray}
then for any fixed  $y \in \mathbb R$, we have:
\begin{eqnarray*}
&&\lim_{\tilde y \downarrow y}\sup\left\{F^{LB}(t): t\leq \tilde y \; \& \;  t \in \mathbb Y_{10|0} \cup\{-\infty \} \right\}\\&\leq& \lim_{\tilde y \downarrow y}\sup\left\{F_{Y_{10}|D=1}(t): t\leq \tilde y \; \& \;  t \in \mathbb Y_{10|0} \cup\{-\infty \} \right\} \\ &\leq &\lim_{\tilde y \downarrow y}\sup\left\{ F^{UB}(t): t\leq \tilde y \; \& \;  t \in \mathbb Y_{10|0} \cup\{-\infty \} \right\}, \;\; y \in \mathbb R.  
\end{eqnarray*}
Notice that because $\mathbb Y_{10|1} \subseteq \mathbb Y_{10|0}$, and $F_{Y_{10}|D=1}(\cdot)$ is a right-continuous function, we have the following equality by Lemma \ref{lem:boundq}(\ref{lem:boundq2}):
$$\lim_{\tilde y \downarrow y}\sup\left\{F_{Y_{10}|D=1}(t): t\leq \tilde y \; \& \;  t \in \mathbb Y_{10|0} \cup\{-\infty \} \right\}= F_{Y_{10}|D=1}(y) \text{ for all } y \in \mathbb R.$$ The last inequality therefore becomes:
\begin{multline}
\lim_{\tilde y \downarrow y}\sup\left\{F^{LB}(t): t\leq \tilde y \; \& \;  t \in \mathbb Y_{10|0} \cup\{-\infty \} \right\} \\ \leq  F_{Y_{10}|D=1}(y) \leq \lim_{\tilde y \downarrow y}\sup\left\{ F^{UB}(t): t\leq \tilde y \; \& \;  t \in \mathbb Y_{10|0} \cup\{-\infty \} \right\}, \;\; y \in \mathbb R.  
\end{multline}

\subsubsection{Sharpness of the bounds}\label{Bounds:sharpness}

In the previous subsection \ref{Bounds:deriv}, we showed that the bounds are valid. Now, we will show that both bounds are achievable. For the sake of brevity, we will focus only on the upper bound.
The main idea is to provide a DGP which is  only a function of the observable distributions but verifies the model assumptions and for which $\tilde{F}_{Y_{10}|D=1}(y)$ is equal to the  upper bound.

Consider that the unidentified counterfactual distribution is exactly the upper bound:
$$\tilde{F}_{Y_{10}\vert D=1}(y)\equiv \lim_{\tilde y \downarrow y}\sup\left\{F^{UB}(t): t\leq \tilde y \; \& \;  t \in \mathbb Y_{10|0} \cup\{-\infty \} \right\}.$$
For simplicity, we consider the case where $$\lim_{\tilde y \downarrow y}\sup\left\{F^{UB}(t): t\leq \tilde y \; \& \;  t \in \mathbb Y_{10|0} \cup\{-\infty \} \right\}=F^{UB}(y)\equiv F^{UB}_{Y_{10}\vert D=1}(y).$$
We need to define a joint distribution on $(Y_{00},Y_{10},Y_{11}, D)$ such that it is compatible with the data $(Y_0,Y_1,D)$, and Assumptions \ref{stab} and \ref{inc} hold. For any vector $X$, denote $F_{X,D}(x,d)=\mathbb P(X\leq x, D=d)$. Let $F_{Y_{00},Y_{10},Y_{11},D}(y_0,y_{10},y_{11},d)$ be a candidate joint distribution. 
We define
\begin{eqnarray*}
    \tilde{F}_{Y_{00},Y_{10},Y_{11},D}(y_0,y_{10},y_{11},0)&\equiv&F_{Y_{0},Y_{1},D}(y_0,y_{10},0)*F^{UB}_{Y_{10}\vert D=1}(y),\\
    \tilde{F}_{Y_{00},Y_{10},Y_{11},D}(y_0,y_{10},y_{11},1)&\equiv&F_{Y_{0},Y_{1},D}(y_0,y_{11},1)*F^{UB}_{Y_{10}\vert D=1}(y).
\end{eqnarray*}
We construct the proposed distribution using the following rule.
For $\tilde{F}_{Y_{00},Y_{10},Y_{11},D}(y_0,y_{10},y_{11},d)$ to be compatible with the data $(Y_0,Y_1,D)$, we must have
\begin{eqnarray*}
    \tilde{F}_{Y_{00},Y_{10},Y_{11},D}(y_0,y_{10},y_{11},0)&=&F_{Y_{0},Y_{1},D}(y_0,y_{10},0)*\tilde{F}_{Y_{11}\vert Y_{00}\leq y_0,Y_{10}\leq y_{10},D=0}(y_{11}),\\
    \tilde{F}_{Y_{00},Y_{10},Y_{11},D}(y_0,y_{10},y_{11},1)&=&F_{Y_{0},Y_{1},D}(y_0,y_{11},1)*\tilde{F}_{Y_{10}\vert Y_{00}\leq y_0,Y_{11}\leq y_{11},D=1}(y_{10}).
\end{eqnarray*}
The distributions $\tilde{F}_{Y_{11}\vert Y_{00}\leq y_0,Y_{10}\leq y_{10},D=0}(y_{11})$ and $\tilde{F}_{Y_{10}\vert Y_{00}\leq y_0,Y_{11}\leq y_{11},D=1}(y_{10})$ are counterfactual. We set both of them equal to  $\tilde{F}_{Y_{10}\vert Y_{00}\leq \infty,Y_{11}\leq \infty,D=1}(y_{10})=F^{UB}_{Y_{10}\vert D=1}(y)$, which is the counterfactual distribution that we consider above. 

We now show that $\tilde{F}_{Y_{10}|D=1}(y)$ is a cdf. It is easy to see that $\tilde{F}_{Y_{10}|D=1}(y)$ is nondecreasing since for $y \leq y'$ we have 
\begin{eqnarray*}
&&\left\{F^{UB}(t): t\leq y \; \& \;  t \in \mathbb Y_{10|0} \cup\{-\infty \} \right\}  \subseteq \left\{F^{UB}(t): t\leq y' \; \& \;  t \in \mathbb Y_{10|0} \cup\{-\infty \} \right\}.
\end{eqnarray*}
The limits of the function $\tilde{F}_{Y_{10}|D=1}(y)$ at $-\infty$ and $\infty$ are 0 and 1, respectively. By construction, the function $\tilde{F}_{Y_{10}|D=1}(y)$ is a right-continuous function. 

We have
\begin{eqnarray*}
F_{Y_{00},Y_{10},D}(y_0,y_{10},0) &=& q C_{Y_{0},Y_{1}\vert D=0}\left(\frac{1}{q} C_{Y_0,D}(F_{Y_0}(y_0),q),\frac{1}{q} C_{Y_{10},D}(F_{Y_{10}}(y_{10}),q)\right),\\
&=&C_{Y_0,Y_1,D}\left(F_{Y_0}(y_0),F_{Y_{10}}(y_{10}),q\right).
\end{eqnarray*}
We now need to construct copulas $\tilde{C}_{Y_0,D}(u,q)$, $\tilde{C}_{Y_{10},D}(u,q)$, $\tilde{C}_{Y_{0},Y_{1} \vert D=0}(u_0,u_1)$, and $\tilde{C}_{Y_{0},Y_{10},D}(u_0,u_1,q)$ such that the following holds:
\begin{eqnarray*}
F_{Y_{00},Y_{10},D}(y_0,y_{10},0) &=& q \tilde{C}_{Y_{0},Y_{1}\vert D=0}\left(\frac{1}{q} \tilde{C}_{Y_0,D}(F_{Y_0}(y_0),q),\frac{1}{q} \tilde{C}_{Y_{10},D}(\tilde{F}_{Y_{10}}(y_{10}),q)\right),\\
&=&\tilde{C}_{Y_0,Y_1,D}\left(F_{Y_0}(y_0),\tilde{F}_{Y_{10}}(y_{10}),q\right),
\end{eqnarray*}
where $\tilde{F}_{Y_{10}}(y_{10}) = p F^{UB}_{Y_{10}|D=1}(y_{10}) + q F_{Y_1 \vert D=0}(y_{10})\equiv F^{UB}_{Y_{10}}(y_{10})$. 

Since $\overline{Ran}F_{Y_0}$ is closed, we define 
{\tiny{\begin{eqnarray*}
\tilde{C}_{Y_0,D}(u,q)&=& \left\{ \begin{array}{lcl}
      F_{Y_0,D}(Q_{Y_0}^{\mathbb{R},-}(u),0)  
      \text{  if } u \in \overline{\operatorname{Ran}}  F_{Y_0} \cap \overline{\operatorname{Ran}}  \tilde{F}_{Y_{10}} \\ \\
      F_{Y_0,D}(Q_{Y_0}^{\mathbb{R},-}(u),0) \text{  if } u \in \overline{\operatorname{Ran}}  F_{Y_0} \cap (\overline{\operatorname{Ran}}  \tilde{F}_{Y_{10}})^c \\ \\
      F_{Y_1,D}(\tilde{Q}_{Y_{10}}^{\mathbb{R},-}(u),0) \text{  if } u \in (\overline{\operatorname{Ran}}  F_{Y_0})^c \cap \overline{\operatorname{Ran}}  \tilde{F}_{Y_{10}} \\ \\
       F_{Y_0,D}\left(Q_{Y_0}^{\mathbb{R},-}(\underline{u}(u)),0\right) + \left(F_{Y_0,D}\left(Q_{Y_0}^{\mathbb{R},-}(\overline{u}(u)),0\right)-F_{Y_0,D}\left(Q_{Y_0}^{\mathbb{R},-}(\underline{u}(u)),0\right)\right)\frac{u-\underline{u}(u)}{\overline{u}(u)-\underline{u}(u)}
        \\ \\
         \qquad \qquad \qquad \qquad \qquad \text{  if } u \in (\overline{\operatorname{Ran}}  F_{Y_0})^c \cap (\overline{\operatorname{Ran}}  \tilde{F}_{Y_{10}})^c,  \underline{u}(u) \in \overline{\operatorname{Ran}} F_{Y_0}, \text{ and }  \overline{u}(u) \in \overline{\operatorname{Ran}} F_{Y_0}
        \\ \\
         F_{Y_1,D}\left(\tilde{Q}_{Y_{10}}^{\mathbb{R},-}(\underline{u}(u)),0\right) + \left( F_{Y_1,D}\left(\tilde{Q}_{Y_{10}}^{\mathbb{R},-}(\overline{u}(u)),0\right)- F_{Y_1,D}\left(\tilde{Q}_{Y_{10}}^{\mathbb{R},-}(\underline{u}(u)),0\right)\right)\frac{u-\underline{u}(u)}{\overline{u}(u)-\underline{u}(u)}
        \\ \\
         \qquad \qquad \qquad \qquad \qquad \text{  if } u \in (\overline{\operatorname{Ran}}  F_{Y_0})^c \cap (\overline{\operatorname{Ran}}  \tilde{F}_{Y_{10}})^c,  \underline{u}(u) \in \overline{\operatorname{Ran}} \tilde{F}_{Y_{10}}, \text{ and }  \overline{u}(u) \in \overline{\operatorname{Ran}} \tilde{F}_{Y_{10}}
        \\ \\
        F_{Y_1,D}\left(\tilde{Q}_{Y_{10}}^{\mathbb{R},-}(\underline{u}(u)),0\right) + \left(F_{Y_0,D}\left(Q_{Y_0}^{\mathbb{R},-}(\overline{u}(u)),0\right)-F_{Y_1,D}\left(\tilde{Q}_{Y_{10}}^{\mathbb{R},-}(\underline{u}(u)),0\right)\right)\frac{u-\underline{u}(u)}{\overline{u}(u)-\underline{u}(u)}
        \\ \\
         \qquad \qquad \qquad \qquad \qquad \text{  if } u \in (\overline{\operatorname{Ran}}  F_{Y_0})^c \cap (\overline{\operatorname{Ran}}  \tilde{F}_{Y_{10}})^c,  \underline{u}(u) \in \overline{\operatorname{Ran}} \tilde{F}_{Y_{10}}, \text{ and }  \overline{u}(u) \in \overline{\operatorname{Ran}} F_{Y_0}
        \\ \\ 
       F_{Y_0,D}\left(Q_{Y_0}^{\mathbb{R},-}(\underline{u}(u)),0\right) + \left(F_{Y_1,D}\left(\tilde{Q}_{Y_{10}}^{\mathbb{R},-}(\overline{u}(u)),0\right)-F_{Y_0,D}\left(Q_{Y_0}^{\mathbb{R},-}(\underline{u}(u)),0\right)\right)\frac{u-\underline{u}(u)}{\overline{u}(u)-\underline{u}(u)}
        \\ \\
         \qquad \qquad \qquad \qquad \qquad \text{  if } u \in (\overline{\operatorname{Ran}}  F_{Y_0})^c \cap (\overline{\operatorname{Ran}}  \tilde{F}_{Y_{10}})^c,  \underline{u}(u) \in \overline{\operatorname{Ran}} F_{Y_0}, \text{ and } \overline{u}(u) \in \overline{\operatorname{Ran}} \tilde{F}_{Y_{10}}
     \end{array}\right.
     \\
   \tilde{C}_{Y_{10},D}(u,q)&=& \tilde{C}_{Y_0,D}(u,q),  
\end{eqnarray*}}}

\hspace{-0.4cm}where for any $u \in [0,1]$, $\underline{u}(u)\equiv \sup\{q\in \overline{\operatorname{Ran}} F_{Y_0} \cup \overline{\operatorname{Ran}} \tilde{F}_{Y_{10}}: q \leq u\}$, $\overline{u}(u)\equiv \inf\{q\in \overline{\operatorname{Ran}} F_{Y_0} \cup \overline{\operatorname{Ran}} \tilde{F}_{Y_{10}}: q \geq u\}$, and $   \tilde{Q}^{\mathbb{R},-}_{Y_{10}}(u)\equiv\inf\{y\in\mathbb{R}:\tilde{F}_{Y_{10}}(y)\geq u\}$.

{\tiny{\begin{eqnarray*}
\tilde{C}_{Y_0,Y_1 \vert D=0}(u_0,u_1)= \left\{ \begin{array}{lcl}
      F_{Y_0,Y_1 \vert D=0}(Q_{Y_0\vert D=0}^{\mathbb{R},-}(u_0),Q_{Y_1\vert D=0}^{\mathbb{R},-}(u_1)) \quad\text{  if } (u_0,u_1) \in \overline{\operatorname{Ran}}  F_{Y_0 \vert D=0}\times \overline{\operatorname{Ran}}  F_{Y_1 \vert D=0}\\ \\
       F_{Y_0,Y_1 \vert D=0}\bigg(Q_{Y_0\vert D=0}^{\mathbb{R},-}(\underline{u}(u_0)),Q_{Y_1\vert D=0}^{\mathbb{R},-}(\underline{u}(u_1))\bigg)+\\\bigg[F_{Y_0,Y_1 \vert D=0}\bigg(Q_{Y_0\vert D=0}^{\mathbb{R},-}(\overline{u}_0(u_0)),Q_{Y_1\vert D=0}^{\mathbb{R},-}(\overline{u}_1(u_1))\bigg)-F_{Y_0,Y_1 \vert D=0}\bigg(Q_{Y_0\vert D=0}^{\mathbb{R},-}(\underline{u}_0(u_0)),Q_{Y_1\vert D=0}^{\mathbb{R},-}(\underline{u}_1(u_1))\bigg)\bigg]\\
       \\
       * \frac{(u_0-\underline{u}_0(u_0)) (u_1-\underline{u}_1(u_1))}{(\overline{u}_0(u_0)-\underline{u}_0(u_0))(\overline{u}_1(u_1)-\underline{u}_1(u_1))} \text{  if } (u_0,u_1) \notin \overline{\operatorname{Ran}}  F_{Y_0 \vert D=0}\times \overline{\operatorname{Ran}}  F_{Y_1 \vert D=0}
     \end{array}\right.
  \end{eqnarray*}}}
  \hspace{-0.3cm}where for $t\in \{0,1\}$ and for any $(u_0,u_1) \in [0,1]^2$, $\underline{u_t}(u)\equiv \sup\{q\in \overline{\operatorname{Ran}} F_{Y_t\vert D=0}: q \leq u\}$, while $\overline{u_t}(u)\equiv \inf\{q\in \overline{\operatorname{Ran}} F_{Y_t \vert D=0}: q \geq u\}$.

We then define for $(u_0,u_1) \in [0,1]^2$
\begin{eqnarray*}
\tilde{C}_{Y_{00},Y_{10},D}(u_0,u_1,q) &=& q \tilde{C}_{Y_{0},Y_{1}\vert D=0}\left(\frac{1}{q} \tilde{C}_{Y_0,D}(u_0,q),\frac{1}{q} \tilde{C}_{Y_0,D}(u_1,q)\right).
\end{eqnarray*}
We can verify that $\tilde{C}_{Y_{00},Y_{10},D}(u_0,u_1,q)$ is a well-defined copula. We start by showing that $\tilde{C}_{Y_0,D}(u_0,q)$ is a well-defined subcopula. 
To do so, we need to introduce two  intermediate lemmata:

\begin{lemma}\label{lem:sharp3}
For any $u \in RanF_{Y_{10}}^{UB}$ and $v\in RanF_{Y_0}$ such that $u < v$, we have $\tilde{C}_{Y_0,D}(u,q) < \tilde{C}_{Y_0,D}(v,q)$.
\end{lemma}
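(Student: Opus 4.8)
The plan is to argue by cases according to which clause of the piecewise definition of $\tilde{C}_{Y_0,D}(\cdot,q)$ governs the argument $u$, after recording two observations used throughout. Since $v\in Ran F_{Y_0}\subseteq\overline{Ran}F_{Y_0}$, the first two clauses both give $\tilde{C}_{Y_0,D}(v,q)=F_{Y_0,D}(Q^{\mathbb{R},-}_{Y_0}(v),0)=q\,F_{Y_0|D=0}(Q^{\mathbb{R},-}_{Y_0}(v))$. Since $u\in Ran F^{UB}_{Y_{10}}=Ran\tilde{F}_{Y_{10}}$ (recall $\tilde{F}_{Y_{10}}=F^{UB}_{Y_{10}}$), the point $y^{\ast}:=\tilde{Q}^{\mathbb{R},-}_{Y_{10}}(u)$ satisfies $\tilde{F}_{Y_{10}}(y^{\ast})=u$, i.e. $p\,F^{UB}_{Y_{10}|D=1}(y^{\ast})+q\,F_{Y_1|D=0}(y^{\ast})=u$. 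I also write $\kappa(y):=Q^{\mathbb{R},-}_{Y_0|D=0}(F_{Y_1|D=0}(y))$, so that $F^{UB}_{Y_{10}|D=1}=F_{Y_0|D=1}\circ\kappa$ by the definition of the upper bound in Theorem \ref{Main:theorem}.

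In the first case, $u\in\overline{Ran}F_{Y_0}$, and then $\tilde{C}_{Y_0,D}(u,q)=F_{Y_0,D}(Q^{\mathbb{R},-}_{Y_0}(u),0)$ as well, so the claim reduces to strict monotonicity of $w\mapsto F_{Y_0,D}(Q^{\mathbb{R},-}_{Y_0}(w),0)=C_{Y_0,D}(w,q)$ between $u<v$ on $\overline{Ran}F_{Y_0}$; I would establish this exactly as in the proof of Lemma \ref{cop:mon}. By Proposition 1(4) of \citet{Embrechts_al2013}, $Q^{\mathbb{R},-}_{Y_0}$ is injective, hence strictly increasing, on $\overline{Ran}F_{Y_0}$, so $Q^{\mathbb{R},-}_{Y_0}(u)<Q^{\mathbb{R},-}_{Y_0}(v)$; under $\mathbb{Y}_{t0|1}\subseteq\mathbb{Y}_{t0|0}$ both quantiles lie in $\mathbb{Y}_{0|0}$, so there is strictly positive $Y_0|D=0$-mass in $(Q^{\mathbb{R},-}_{Y_0}(u),Q^{\mathbb{R},-}_{Y_0}(v)]$, which gives the strict inequality. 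The only wrinkle is when $u$ is one of the at most two points of $\overline{Ran}F_{Y_0}\setminus Ran F_{Y_0}$ (necessarily $0$ or $1$), which is handled directly using $v\in Ran F_{Y_0}$ and $v>u$.

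In the second case, $u\notin\overline{Ran}F_{Y_0}$, so the third clause gives $\tilde{C}_{Y_0,D}(u,q)=F_{Y_1,D}(y^{\ast},0)=q\,F_{Y_1|D=0}(y^{\ast})=u-p\,F^{UB}_{Y_{10}|D=1}(y^{\ast})$. Using closedness of $\overline{Ran}F_{Y_0}$ (the hypothesis under which the sharpness claim is made), $u$ lies strictly below $\overline{u}:=\inf\{w\in\overline{Ran}F_{Y_0}:w\geq u\}\in\overline{Ran}F_{Y_0}$; moreover $v>u$ with $v\in\overline{Ran}F_{Y_0}$ forces $\overline{u}\leq v$, and then $\overline{u}\in Ran F_{Y_0}$ (the only possible elements of $\overline{Ran}F_{Y_0}\setminus Ran F_{Y_0}$ are $0$ and $1$, and neither can equal $\overline{u}$ here). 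By the first case applied to the pair $\overline{u}\leq v$, $\tilde{C}_{Y_0,D}(\overline{u},q)\leq\tilde{C}_{Y_0,D}(v,q)$, so it suffices to show $\tilde{C}_{Y_0,D}(u,q)<\tilde{C}_{Y_0,D}(\overline{u},q)$. Writing $b':=Q^{\mathbb{R},-}_{Y_0}(\overline{u})$ and using $F_{Y_0}(b')=\overline{u}$, the mixture decompositions of $\tilde{F}_{Y_{10}}$ and $F_{Y_0}$ give, after a short computation, $\tilde{C}_{Y_0,D}(\overline{u},q)-\tilde{C}_{Y_0,D}(u,q)=(\overline{u}-u)+p\left(F^{UB}_{Y_{10}|D=1}(y^{\ast})-F_{Y_0|D=1}(b')\right)$ with $\overline{u}-u>0$.

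The step I expect to be the main obstacle is showing the bracketed term is nonnegative, that is $F^{UB}_{Y_{10}|D=1}(y^{\ast})=F_{Y_0|D=1}(\kappa(y^{\ast}))\geq F_{Y_0|D=1}(b')$, for which it is enough to prove $\kappa(y^{\ast})\geq b'$. The argument I propose: $F_{Y_0|D=0}(\kappa(y^{\ast}))\geq F_{Y_1|D=0}(y^{\ast})$ by Lemma \ref{lem:boundq}(1) applied to $Y_0|D=0$, whence $F_{Y_0}(\kappa(y^{\ast}))=p\,F_{Y_0|D=1}(\kappa(y^{\ast}))+q\,F_{Y_0|D=0}(\kappa(y^{\ast}))\geq p\,F^{UB}_{Y_{10}|D=1}(y^{\ast})+q\,F_{Y_1|D=0}(y^{\ast})=u$; since $F_{Y_0}(\kappa(y^{\ast}))\in Ran F_{Y_0}\subseteq\overline{Ran}F_{Y_0}$ while $\overline{u}$ is the least element of $\overline{Ran}F_{Y_0}$ that is $\geq u$, this forces $F_{Y_0}(\kappa(y^{\ast}))\geq\overline{u}$, hence $\kappa(y^{\ast})\geq\inf\{x:F_{Y_0}(x)\geq\overline{u}\}=b'$. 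Combining, $\tilde{C}_{Y_0,D}(u,q)<\tilde{C}_{Y_0,D}(\overline{u},q)\leq\tilde{C}_{Y_0,D}(v,q)$. What remains is routine range-of-cdf bookkeeping (verifying $\kappa(y^{\ast})$ is a finite point, which holds because $u>0$ in this case, and the endpoint subcases of the first case), none of which affects the displayed inequalities.
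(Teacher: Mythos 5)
Your proof is correct and rests on the same two pillars as the paper's: bridge from $u$ to the smallest element of $Ran F_{Y_0}$ lying above it, using Lemma \ref{lem:boundq}(1) to control that first leg, and then finish with the strict monotonicity of $v\mapsto F_{Y_0,D}(Q^{\mathbb R,-}_{Y_0}(v),0)$ on $Ran F_{Y_0}$ from Lemma \ref{cop:mon}. The organization differs in two ways worth noting. The paper (Claims \ref{claim:sharp2} and \ref{claim:sharp1}) takes the bridging point to be $v_y=F_{Y_0}(h(y))$ with $h(y)=Q^{\mathbb R,-}_{Y_0|D=0}(F_{Y_1|D=0}(y))$, proves the weak inequality $\tilde C_{Y_0,D}(u_y,q)\le\tilde C_{Y_0,D}(v_y,q)$ by the chain $qF_{Y_1|D=0}(y)\le qF_{Y_0|D=0}(h(y))$, and then extracts strictness by splitting on $v=v_y$ versus $v_y<v$; you instead take the bridging point to be $\overline u=\inf\{w\in\overline{Ran}F_{Y_0}:w\ge u\}$ and obtain strictness in one stroke from the additive identity $\tilde C_{Y_0,D}(\overline u,q)-\tilde C_{Y_0,D}(u,q)=(\overline u-u)+p\left(F^{UB}_{Y_{10}|D=1}(y^{\ast})-F_{Y_0|D=1}(b')\right)$, whose bracket you show is nonnegative via $\kappa(y^{\ast})\ge b'$ — the same content as Lemma \ref{lem:boundq}(1), repackaged. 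Your version buys two things: the preliminary case split on $u\in\overline{Ran}F_{Y_0}$ tracks the piecewise definition of $\tilde C_{Y_0,D}$ more faithfully (the paper's chain implicitly evaluates $\tilde C_{Y_0,D}(u_y,q)$ as $F_{Y_1,D}(y,0)$ even when $u_y$ falls in $\overline{Ran}F_{Y_0}$, where the definition instead reads $F_{Y_0,D}(Q^{\mathbb R,-}_{Y_0}(u_y),0)$ and consistency of the two must be argued), and the identity makes the source of strictness ($\overline u-u>0$) transparent without subcases. The cost is that you need closedness of $\overline{Ran}F_{Y_0}$ to guarantee $\overline u$ is attained, whereas the paper's $v_y$ is attained by construction as a value of $F_{Y_0}$; since Theorem \ref{Main:theorem} only asserts sharpness under that closure hypothesis, this is not a real restriction. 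The residual endpoint bookkeeping you flag ($\overline u\in\{0,1\}$, $\kappa(y^{\ast})$ infinite) is of the same order as what the paper itself leaves implicit.
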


\begin{lemma}\label{lem:sharp5}
Suppose $F_{Y_0}(y-) \in RanF_{Y_0}$ for all $y$. For any $u \in RanF_{Y_{10}}^{UB}$ and $v\in RanF_{Y_0}$ such that $v < u$, we have $\tilde{C}_{Y_0,D}(v,q) < \tilde{C}_{Y_0,D}(u,q)$.
\end{lemma}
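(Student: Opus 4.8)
The plan is to reduce both $\tilde{C}_{Y_0,D}(v,q)$ and $\tilde{C}_{Y_0,D}(u,q)$ to the two building-block functions appearing in the piecewise definition of $\tilde{C}_{Y_0,D}(\cdot,q)$ and then to compare them directly, mirroring the structure of the proof of Lemma \ref{lem:sharp3}. Since $v\in RanF_{Y_0}\subseteq\overline{Ran}F_{Y_0}$, the first two branches of the definition both give $\tilde{C}_{Y_0,D}(v,q)=F_{Y_0,D}(Q^{\mathbb R,-}_{Y_0}(v),0)=q\,F_{Y_0|D=0}(Q^{\mathbb R,-}_{Y_0}(v))$. Since $u\in RanF_{Y_{10}}^{UB}=Ran\tilde{F}_{Y_{10}}\subseteq\overline{Ran}\tilde{F}_{Y_{10}}$, exactly one of two cases holds: (a) $u\in\overline{Ran}F_{Y_0}$, so the first branch gives $\tilde{C}_{Y_0,D}(u,q)=q\,F_{Y_0|D=0}(Q^{\mathbb R,-}_{Y_0}(u))$; or (b) $u\in(\overline{Ran}F_{Y_0})^c\cap\overline{Ran}\tilde{F}_{Y_{10}}$, so the third branch gives $\tilde{C}_{Y_0,D}(u,q)=F_{Y_1,D}(\tilde{Q}^{\mathbb R,-}_{Y_{10}}(u),0)=q\,F_{Y_1|D=0}(\tilde{Q}^{\mathbb R,-}_{Y_{10}}(u))$.

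In case (a), both quantities are values of the map $w\mapsto F_{Y_0,D}(Q^{\mathbb R,-}_{Y_0}(w),0)=C_{Y_0,D}(w,q)$ at the points $v<u$ of $\overline{Ran}F_{Y_0}$. The support condition $\mathbb Y_{0|1}\subseteq\mathbb Y_{0|0}$ (at $t=0$) together with the argument in the proof of Lemma \ref{cop:mon} shows that this map is strictly increasing on $RanF_{Y_0}$; the hypothesis $F_{Y_0}(y-)\in RanF_{Y_0}$ is what lets us rule out degeneracy at the top endpoint of $RanF_{Y_0}$ and extend the strict inequality to all of $\overline{Ran}F_{Y_0}$. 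Hence $\tilde{C}_{Y_0,D}(v,q)<\tilde{C}_{Y_0,D}(u,q)$.

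Case (b) is the crux, and I would handle it by a short contrapositive argument. Set $y_0\equiv Q^{\mathbb R,-}_{Y_0}(v)$ and $y'\equiv\tilde{Q}^{\mathbb R,-}_{Y_{10}}(u)$; since $v\in RanF_{Y_0}$ and $u\in Ran\tilde{F}_{Y_{10}}$, the generalized-inverse identity gives $F_{Y_0}(y_0)=v$ and $\tilde{F}_{Y_{10}}(y')=u$. Using $\tilde{F}_{Y_{10}}=F^{UB}_{Y_{10}}=p\,F^{UB}_{Y_{10}|D=1}+q\,F_{Y_1|D=0}$, with $F^{UB}_{Y_{10}|D=1}(\cdot)=F_{Y_0|D=1}(Q^{\mathbb R,-}_{Y_0|D=0}(F_{Y_1|D=0}(\cdot)))$ the upper-bound function of Theorem \ref{Main:theorem}, and writing $s\equiv F_{Y_1|D=0}(y')$ and $z_0\equiv Q^{\mathbb R,-}_{Y_0|D=0}(s)$, we get $u=p\,F_{Y_0|D=1}(z_0)+q\,s$ while $v=F_{Y_0}(y_0)=p\,F_{Y_0|D=1}(y_0)+q\,F_{Y_0|D=0}(y_0)$. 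Since $\tilde{C}_{Y_0,D}(v,q)=q\,F_{Y_0|D=0}(y_0)$ and $\tilde{C}_{Y_0,D}(u,q)=q\,s$ with $q>0$, it suffices to show $F_{Y_0|D=0}(y_0)<s$. Suppose not, so $F_{Y_0|D=0}(y_0)\ge s$; then $y_0\in\{z:F_{Y_0|D=0}(z)\ge s\}$, hence $y_0\ge z_0$, hence $F_{Y_0|D=1}(y_0)\ge F_{Y_0|D=1}(z_0)$, and therefore $v=p\,F_{Y_0|D=1}(y_0)+q\,F_{Y_0|D=0}(y_0)\ge p\,F_{Y_0|D=1}(z_0)+q\,s=u$, contradicting $v<u$. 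This gives $\tilde{C}_{Y_0,D}(v,q)<\tilde{C}_{Y_0,D}(u,q)$.

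The step I expect to be most delicate is not the inequality in case (b), which is essentially two lines, but the bookkeeping in the reduction: verifying that the eight-branch definition of $\tilde{C}_{Y_0,D}(\cdot,q)$ really does collapse to the stated formulas on $RanF_{Y_0}$ and on $\overline{Ran}\tilde{F}_{Y_{10}}$, that these two formulas coincide on the overlap $RanF_{Y_0}\cap Ran\tilde{F}_{Y_{10}}$ so that the case split is unambiguous, and the endpoint accounting in case (a), where the extra hypothesis $F_{Y_0}(y-)\in RanF_{Y_0}$ is used.
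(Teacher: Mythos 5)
Your proof is correct, but it takes a genuinely different route from the paper's in the decisive case. The paper first establishes an intermediate sandwich result (Claim \ref{claim:sharp4}): starting from $u_y=F^{UB}_{Y_{10}}(y)$ it constructs $w_y$ via the \emph{lower}-bound function $F^{LB}_{Y_{10}|D=1}(y)=F_{Y_0|D=1}(\underline{h}(y)-)$ and $v_y=F_{Y_0}(\underline{h}(y)-)$, splits on whether $F^{LB}=F^{UB}$ at $y$, and in the non-degenerate case has to work through the linear-interpolation branches of $\tilde{C}_{Y_0,D}$; the hypothesis $F_{Y_0}(y-)\in Ran F_{Y_0}$ is exactly what places $v_y$ in $\overline{Ran}F_{Y_0}$ so that the comparison with $v$ can be closed by Lemma \ref{cop:mon}. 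You instead bypass the interpolation branches and the left-limit construction entirely: in your case (b) you write $\tilde{C}_{Y_0,D}(v,q)=qF_{Y_0|D=0}(y_0)$ and $\tilde{C}_{Y_0,D}(u,q)=qF_{Y_1|D=0}(y')=qs$, and derive $F_{Y_0|D=0}(y_0)<s$ by contraposition from the defining property of $Q^{\mathbb R,-}_{Y_0|D=0}(s)$ as an infimum — essentially the mirror image of the algebra the paper uses in Claims \ref{claim:sharp2}--\ref{claim:sharp1} for the opposite ordering $u<v$ in Lemma \ref{lem:sharp3}. Your argument is shorter and more transparent, and it makes the extra hypothesis nearly superfluous (it only enters your endpoint accounting in case (a), where in fact the support condition $\mathbb Y_{0|1}\subseteq\mathbb Y_{0|0}$ already suffices); what it does not do, and what the paper's detour through Claim \ref{claim:sharp4} additionally delivers, is the explicit control of $\tilde{C}_{Y_0,D}$ at the interpolated points $w_y\notin \overline{Ran}F_{Y_0}\cup\overline{Ran}\tilde F_{Y_{10}}$, which the paper reuses when verifying that the constructed object is a valid subcopula. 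The bookkeeping worry you flag — that the first and third branches of the definition must agree on $\overline{Ran}F_{Y_0}\cap\overline{Ran}\tilde F_{Y_{10}}$ for the case split to be unambiguous — is real, but it is a well-definedness issue of the paper's construction itself, implicitly relied upon in the paper's own proof as well, and not a gap specific to your argument.
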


First, we have $\tilde{C}_{Y_0,D}(1,q)=F_{Y_0,D}(Q_{Y_0}^{\mathbb{R},-}(1),0)=q$. Now let us show that for all $(u,v)\in[0,1]^2$ such that $u < v$, we have $\tilde{C}_{Y_0,D}(u,q)< \tilde{C}_{Y_0,D}(v,q)$. From the definition of $\tilde{C}_{Y_0,D}(u,q)$ and Lemma \ref{cop:mon}, it follows that, when $u$ and $v$ belong to the same range, this monotonicity condition holds. We are going to prove it when $u$ and $v$ belong to different ranges. On the one hand, if $u \in Ran\tilde{F}_{Y_{10}}$ and $v \in RanF_{Y_0}$, then from Lemma \ref{lem:sharp3}, we have $\tilde{C}_{Y_0,D}(u,q) < \tilde{C}_{Y_0,D}(v,q)$. On the other hand, if $v \in Ran\tilde{F}_{Y_{10}}$ and $u \in RanF_{Y_0}$, then from Lemma \ref{lem:sharp5}, we have $\tilde{C}_{Y_0,D}(u,q) < \tilde{C}_{Y_0,D}(v,q)$.    
Since $\tilde{C}_{Y_0,Y_1 \vert D=0}(u_0,u_1)$ is an extended copula of the identified part of the copula of $(Y_0,Y_1) \vert D=0$ through the Sklar theorem, it is a well-defined copula. Any extended copula of this form should work for the proof, as we do not impose any additional restrictions on the true copula of $(Y_0,Y_1) \vert D=0$.   

We also need to check that $\tilde{C}_{Y_{00},Y_{10},D}\left(F_{Y_0}(y_0),\tilde{F}_{Y_{10}}(y_{10}),q\right)=F_{Y_0,Y_1,D}(y_0,y_{10},0).$
This latter equality holds by construction of $\tilde{C}_{Y_{00},Y_{10},D}(u_0,u_1,q)$.

When we let $u_0$ go to 1, we obtain
\begin{eqnarray*}
\tilde{C}_{Y_{10},D}(u_1,q)=\tilde{C}_{Y_{00},Y_{10},D}(1,u_1,q) &=& q \tilde{C}_{Y_{0},Y_{1}\vert D=0}\left(\frac{1}{q} \tilde{C}_{Y_0,D}(1,q),\frac{1}{q} \tilde{C}_{Y_0,D}(u_1,q)\right)\nonumber\\
&=&q \frac{1}{q} \tilde{C}_{Y_0,D}(u_1,q)=\tilde{C}_{Y_0,D}(u_1,q).
\end{eqnarray*}
Similarly,
\begin{eqnarray*}
\tilde{C}_{Y_{00},D}(u_0,q)=\tilde{C}_{Y_{00},Y_{10},D}(u_0,1,q) &=& q \tilde{C}_{Y_{0},Y_{1}\vert D=0}\left(\frac{1}{q} \tilde{C}_{Y_0,D}(u_0,q),\frac{1}{q} \tilde{C}_{Y_0,D}(1,q)\right)\nonumber\\
&=&q \frac{1}{q} \tilde{C}_{Y_0,D}(u_0,q)=\tilde{C}_{Y_0,D}(u_0,q).
\end{eqnarray*}
And by construction, we have $\tilde{C}_{Y_{10},D}(u,q)=\tilde{C}_{Y_0,D}(u,q)$ for all $u\in 
[0,1]$ (Assumption \ref{stab} holds). 
Furthermore, we have shown above that $\tilde{C}_{Y_0,D}(u,q)$ is strictly increasing in $u$ (Assumption \ref{inc} holds). 

By construction, the proposed joint distribution $\tilde{F}_{Y_{00},Y_{10},Y_{11},D}(y_0,y_1,y_2,d)$ is compatible with the data and the proposed copulas $\tilde{C}_{Y_0,D}(u,q)$, and $\tilde{C}_{Y_{10},D}(u,q)$ satisfy Assumptions \ref{stab} and \ref{inc}. 

The proof is similar for the lower bound on $F_{Y_{10\vert D=1}}(y)$ and any distribution in the identified set of $F_{Y_{10}\vert D=1}(y_{10})$.

To complete the proof, it remains to show the two intermediate lemmata. 

\subsubsection{Proofs of Intermediate Lemmata}\label{Bounds:intermediate_lemmata}

\subsubsection*{Proof of Lemma \ref{lem:sharp3}}

First, we start by the following claims:

\begin{claim}\label{claim:sharp2}
For any $u_y=F_{Y_{10}}^{UB}(y) \in RanF_{Y_{10}}^{UB}$, the smallest $v \in RanF_{Y_{0}}$ such that $u_y \leq v$ is $v_y=F_{Y_0}(h(y))$.
\end{claim}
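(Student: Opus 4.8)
The plan is to verify the two defining properties of a minimum: first, that $F_{Y_0}(h(y))$ is itself an admissible candidate, namely $F_{Y_0}(h(y))\in Ran F_{Y_0}$ and $F_{Y_0}(h(y))\geq u_y$; and second, that any $v\in Ran F_{Y_0}$ with $v\geq u_y$ already satisfies $v\geq F_{Y_0}(h(y))$. Throughout I recall that $h(y)\equiv Q^{\mathbb R,-}_{Y_0|D=0}(F_{Y_1|D=0}(y))$ (so that $F^{UB}(y)=F_{Y_0|D=1}(h(y))$), that $u_y=F^{UB}_{Y_{10}}(y)=pF_{Y_0|D=1}(h(y))+qF_{Y_1|D=0}(y)$, and that $F_{Y_0}=pF_{Y_0|D=1}+qF_{Y_0|D=0}$; I also take $0<p<1$, since $p\in\{0,1\}$ is vacuous. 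The starting point is the mixture identity
\[F_{Y_0}(h(y))-u_y=q\bigl(F_{Y_0|D=0}(h(y))-F_{Y_1|D=0}(y)\bigr),\]
which follows at once from the two displayed decompositions.

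To get admissibility, I would apply Lemma \ref{lem:boundq}(1) to the random variable $Y_0|D=0$ at the level $u=F_{Y_1|D=0}(y)$, which yields $F_{Y_0|D=0}\bigl(Q^{\mathbb R,-}_{Y_0|D=0}(u)\bigr)\geq u$, that is $F_{Y_0|D=0}(h(y))\geq F_{Y_1|D=0}(y)$; plugging this into the identity above and using $q>0$ gives $F_{Y_0}(h(y))\geq u_y$. Membership in $Ran F_{Y_0}$ is immediate when $h(y)$ is finite, and in the boundary cases $h(y)=\pm\infty$ it uses the closedness of $\overline{Ran}F_{Y_0}$ that is assumed in the sharpness part of Theorem \ref{Main:theorem}.

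For minimality, I would take $v=F_{Y_0}(y')\in Ran F_{Y_0}$ with $v\geq u_y$ and show $y'\geq h(y)$ (which gives $v\geq F_{Y_0}(h(y))$ by monotonicity of $F_{Y_0}$). Suppose instead $y'<h(y)$. Since $h(y)=\inf\{x:F_{Y_0|D=0}(x)\geq F_{Y_1|D=0}(y)\}$, any $x<h(y)$ fails the defining inequality, so $F_{Y_0|D=0}(y')<F_{Y_1|D=0}(y)$; moreover $F_{Y_0|D=1}(y')\leq F_{Y_0|D=1}(h(y))$ by monotonicity. Combining these with weights $p$ and $q>0$,
\[v=pF_{Y_0|D=1}(y')+qF_{Y_0|D=0}(y')<pF_{Y_0|D=1}(h(y))+qF_{Y_1|D=0}(y)=u_y,\]
contradicting $v\geq u_y$. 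Hence $y'\geq h(y)$, and the claim follows.

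The bulk of this is routine; the point that needs care --- and the one I would flag as the main obstacle --- is the treatment of the degenerate boundary cases where $F_{Y_1|D=0}(y)\in\{0,1\}$ forces $h(y)=\mp\infty$, so that ``the smallest $v\in Ran F_{Y_0}$'' has to be read via $\overline{Ran}F_{Y_0}$ and its closedness (exactly as in the hypotheses of Theorem \ref{Main:theorem}). A related but harmless subtlety is that $F_{Y_0|D=0}$ may have an atom at $h(y)$: this does not affect the argument, because in the minimality step I only use the one-sided implication $y'<h(y)\Rightarrow F_{Y_0|D=0}(y')<F_{Y_1|D=0}(y)$, never the value of $F_{Y_0|D=0}$ at $h(y)$ itself. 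The minor mismatch between $Q^{\mathbb R,-}$ and $Q^{\mathbb Y_{0|0},-}$ in the definition of $h$ is likewise immaterial, as these differ only at argument $0$.
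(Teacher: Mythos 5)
Your proof is correct and follows essentially the same route as the paper's: decompose $u_y$ and $F_{Y_0}(h(y))$ into their $D=0$ and $D=1$ mixture components and invoke Lemma \ref{lem:boundq}(1) to identify $F_{Y_0|D=0}(h(y))$ as the smallest element of $Ran F_{Y_0|D=0}$ weakly above $F_{Y_1|D=0}(y)$. Your explicit minimality step (showing $y'<h(y)$ forces $F_{Y_0}(y')<u_y$ because both components are then dominated) is a welcome elaboration of a point the paper's terser argument leaves implicit, and your flagged boundary and atom caveats are accurate but inessential.
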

\noindent \emph{Proof}. We have 
\begin{eqnarray*}
u_y = q F_{Y_{10}\vert D=0}(y) + p F^{UB}_{Y_{10}\vert D=1}(y) = q F_{Y_{10}\vert D=0}(y) + p F_{Y_0\vert D=1}(h(y)).
\end{eqnarray*}
Since $F_{Y_0\vert D=1}(h(y)) \in RanF_{Y_0 \vert D=1},$ to obtain the smallest element $v \in RanF_{Y_{0}}$, we need to find the smallest element $s$ on $RanF_{Y_0\vert D=0}$ such that $F_{Y_{1}\vert D=0}(y) \leq s$. From Lemma \ref{lem:boundq}.(\ref{lem:boundq1}), $s=F_{Y_0\vert D=0}(Q^{\mathbb R,-}_{F_{Y_0\vert D=0}}(F_{Y_{1}\vert D=0}(y)))$. This completes the proof of Claim \ref{claim:sharp2}.\qed

\begin{claim}\label{claim:sharp1}
For any $u \in RanF_{Y_{10}}^{UB}$, there exists $v \in RanF_{Y_{0}}$ such that $u \leq v \Longrightarrow \tilde{C}_{Y_0,D}(u,q)\leq \tilde{C}_{Y_0,D}(v,q).$
\end{claim}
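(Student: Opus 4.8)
The plan is to take as the witness $v$ exactly the point $v_y$ singled out in Claim \ref{claim:sharp2} and then to verify the inequality $\tilde C_{Y_0,D}(u,q)\le\tilde C_{Y_0,D}(v_y,q)$ for it. First I would write $u=F^{UB}_{Y_{10}}(y)$ for some $y\in\mathbb R$ and, using Proposition 1(4) of \citet{Embrechts_al2013}, pass to the canonical representative $y=\tilde{Q}^{\mathbb R,-}_{Y_{10}}(u)$, so that $F^{UB}_{Y_{10}}(y)=u$ and $F_{Y_1,D}(y,0)=q\,F_{Y_1|D=0}(y)$. Setting $h(y)\equiv Q^{\mathbb R,-}_{Y_0|D=0}(F_{Y_1|D=0}(y))$ and $v_y\equiv F_{Y_0}(h(y))$, Claim \ref{claim:sharp2} tells us that $v_y$ is the smallest element of $Ran F_{Y_0}$ with $u\le v_y$; in particular $u\le v_y$, so $v_y$ is an admissible choice of $v$ and the task reduces to comparing the two subcopula values.

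Next I would dispose of the degenerate case $u\in\overline{Ran}F_{Y_0}$: if $u\in Ran F_{Y_0}$ then $v_y=u$ by Claim \ref{claim:sharp2} and there is nothing to prove, while if $u\in\{0,1\}$ then $\tilde C_{Y_0,D}(u,q)\in\{0,q\}$ and the inequality is trivial (recall $\tilde C_{Y_0,D}(\cdot,q)\le q$). In the main case $u\notin\overline{Ran}F_{Y_0}$, I would identify which branch of the piecewise definition of $\tilde C_{Y_0,D}$ applies on each side. Since $u\in Ran F^{UB}_{Y_{10}}\subseteq\overline{Ran}\tilde{F}_{Y_{10}}$, the third branch gives $\tilde C_{Y_0,D}(u,q)=F_{Y_1,D}(\tilde{Q}^{\mathbb R,-}_{Y_{10}}(u),0)=q\,F_{Y_1|D=0}(y)$; since $v_y=F_{Y_0}(h(y))\in\overline{Ran}F_{Y_0}$, the first or second branch gives $\tilde C_{Y_0,D}(v_y,q)=F_{Y_0,D}(Q^{\mathbb R,-}_{Y_0}(v_y),0)=q\,F_{Y_0|D=0}\big(Q^{\mathbb R,-}_{Y_0}(v_y)\big)$. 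Thus everything comes down to showing $F_{Y_1|D=0}(y)\le F_{Y_0|D=0}\big(Q^{\mathbb R,-}_{Y_0}(v_y)\big)$.

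For this I would use a flatness argument. Because $Q^{\mathbb R,-}_{Y_0}(v_y)\le h(y)$ while $F_{Y_0}\big(Q^{\mathbb R,-}_{Y_0}(v_y)\big)=v_y=F_{Y_0}(h(y))$ (the left equality again by Proposition 1(4) of \citet{Embrechts_al2013}), the cdf $F_{Y_0}$ is constant on $\big[Q^{\mathbb R,-}_{Y_0}(v_y),h(y)\big]$; and since $F_{Y_0}=q\,F_{Y_0|D=0}+p\,F_{Y_0|D=1}$ is a convex combination of nondecreasing functions, each summand, in particular $F_{Y_0|D=0}$, is also constant on that interval. Hence $F_{Y_0|D=0}\big(Q^{\mathbb R,-}_{Y_0}(v_y)\big)=F_{Y_0|D=0}(h(y))=F_{Y_0|D=0}\big(Q^{\mathbb R,-}_{Y_0|D=0}(F_{Y_1|D=0}(y))\big)\ge F_{Y_1|D=0}(y)$, the last step being Lemma \ref{lem:boundq}(1). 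Chaining the identities gives $\tilde C_{Y_0,D}(u,q)=q\,F_{Y_1|D=0}(y)\le q\,F_{Y_0|D=0}\big(Q^{\mathbb R,-}_{Y_0}(v_y)\big)=\tilde C_{Y_0,D}(v_y,q)$, which is the claim.

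The only part that is not essentially mechanical is the bookkeeping in the middle step: one must correctly verify the membership relations ($u\notin\overline{Ran}F_{Y_0}$ together with $u\in\overline{Ran}\tilde{F}_{Y_{10}}$, and $v_y\in\overline{Ran}F_{Y_0}$) that select the right branch of the multi-case definition of $\tilde C_{Y_0,D}$ on each side, and one must keep the endpoint cases under control ($u\in\{0,1\}$, or $h(y)=+\infty$ so that $v_y=1$ and the inequality is again immediate since $\tilde C_{Y_0,D}(\cdot,q)\le q$; the possibility $h(y)=-\infty$ does not arise in the main case). Once those are handled, the flatness observation together with Lemma \ref{lem:boundq}(1) finishes the argument.
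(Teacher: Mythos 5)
Your proposal is correct and follows essentially the same route as the paper's proof: you take the witness $v_y=F_{Y_0}(h(y))$ from Claim \ref{claim:sharp2}, reduce the subcopula comparison to $q\,F_{Y_1|D=0}(y)\leq q\,F_{Y_0|D=0}(h(y))$ via the branch definitions of $\tilde{C}_{Y_0,D}$, and close with Lemma \ref{lem:boundq}(1). The only difference is that you make explicit the flatness argument identifying $F_{Y_0,D}\bigl(Q^{\mathbb R,-}_{Y_0}(v_y),0\bigr)$ with $F_{Y_0,D}(h(y),0)$ and the degenerate endpoint cases, which the paper leaves implicit.
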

\noindent\emph{Proof}. $u_y\equiv F_{Y_{10}}^{UB}(y) = q F_{Y_{10}\vert D=0}(y) + p F^{UB}_{Y_{10}\vert D=1}(y),$ where $F^{UB}_{Y_{10}\vert D=1}(y)=F_{Y_0\vert D=1}(h(y))$ with $h(y)=Q_{Y_{0\vert 0}}^{\mathbb R,-}(F_{Y_1\vert D=0}(y))$. Then, 
$u_y \leq q F_{Y_{0}\vert D=0}(h(y)) + p F_{Y_0\vert D=1}(h(y))$, since $F_{Y_{10}\vert D=0}(y) \leq F_{Y_{0}\vert D=0}(h(y))$ by construction. So, $u_y \leq F_{Y_0}(h(y))\equiv v_y \in RanF_{Y_0}$. Now, the following hold:
\begin{eqnarray*}
u_y \leq v_y &\Longrightarrow& q F_{Y_{10}\vert D=0}(y) + p F_{Y_0\vert D=1}(h(y)) \leq q F_{Y_{0}\vert D=0}(h(y)) + p F_{Y_0\vert D=1}(h(y)),\\
&\Longrightarrow& q F_{Y_{10}\vert D=0}(y) \leq q F_{Y_{0}\vert D=0}(h(y)),\\
&\Longrightarrow& F_{Y_{10},D}(y,0) \leq F_{Y_{0},D}(h(y),0),\end{eqnarray*}
\begin{eqnarray*}
&\Longrightarrow& \tilde{C}_{Y_0,D}(F_{Y_{10}}^{UB}(y),q)\leq \tilde{C}_{Y_0,D}(F_{Y_0}(h(y)),q),\\
&\Longrightarrow& \tilde{C}_{Y_0,D}(u_y,q)\leq \tilde{C}_{Y_0,D}(v_y,q).
\end{eqnarray*}
This completes the proof of Claim \ref{claim:sharp1}.\qed

Now we proceed to complete the proof of the lemma. Take $u \in RanF_{Y_{10}}^{UB}$ and $v\in RanF_{Y_0}$ such that $u < v$. Since $u \in RanF_{Y_{10}}^{UB}$, there exits $y$ such that $u_y=F_{Y_{10}}^{UB}(y)$. Then, from Claim \ref{claim:sharp2}, there exists $v_y= F_{Y_0}(h(y))$ such that $u_y \leq v_y$. From Claim \ref{claim:sharp2}, we have $v_y \leq v$. If $v=v_y$, then we have $F_{Y_{10}}^{UB}(y) < F_{Y_0}(h(y))$, which implies successively  
\begin{eqnarray*}
q F_{Y_{10}\vert D=0}(y) + p F_{Y_0\vert D=1}(h(y)) &<& q F_{Y_{0}\vert D=0}(h(y)) + p F_{Y_0\vert D=1}(h(y)),\\
q F_{Y_{10}\vert D=0}(y) &<& q F_{Y_{0}\vert D=0}(h(y)),\\
F_{Y_{10},D}(y,0) &<& F_{Y_{0},D}(h(y),0),\\
\tilde{C}_{Y_0,D}(F_{Y_{10}}^{UB}(y),q) &<& \tilde{C}_{Y_0,D}(F_{Y_0}(h(y)),q),\\
\tilde{C}_{Y_0,D}(u,q) &<& \tilde{C}_{Y_0,D}(v,q).
\end{eqnarray*}
If $v_y < v$, then from Claim \ref{claim:sharp1} we have $\tilde{C}_{Y_0,D}(u,q)\leq \tilde{C}_{Y_0,D}(v_y,q)$. And since $\tilde{C}_{Y_0,D}(v,q)$ is strictly increasing on $RanF_{Y_0}$ from Lemma \ref{cop:mon}, we have $\tilde{C}_{Y_0,D}(v_y,q) < \tilde{C}_{Y_0,D}(v,q)$. Therefore, $\tilde{C}_{Y_0,D}(u,q) < \tilde{C}_{Y_0,D}(v,q)$.
\qed

\subsubsection*{ Proof of Lemma \ref{lem:sharp5}}

We first start by stating and proving the following claim:

\begin{claim}\label{claim:sharp4}
Suppose $F_{Y_0}(y-) \in RanF_{Y_0}$ for all $y$. For any $u_y=F_{Y_{10}}^{UB}(y) \in RanF_{Y_{10}}^{UB}$, there exist $w_y \in [0,1]$ and $v_y\in RanF_{Y_0}$ such that $v_y \leq w_y \leq u_y$ and $\tilde{C}_{Y_0,D}(v_y,q) \leq \tilde{C}_{Y_0,D}(w_y,q) \leq \tilde{C}_{Y_0,D}(u_y,q)$.
\end{claim}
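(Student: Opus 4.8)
The plan is to take $w_y=u_y$ and to let $v_y$ be the largest element of $\overline{Ran} F_{Y_0}$ that does not exceed $u_y$, i.e.\ $v_y\equiv\sup\{s\in\overline{Ran} F_{Y_0}:s\leq u_y\}$. With these choices the ordering $v_y\leq w_y\leq u_y$ and the equality $\tilde{C}_{Y_0,D}(w_y,q)=\tilde{C}_{Y_0,D}(u_y,q)$ are immediate, so the two substantive points to establish are (i) that $v_y$ actually lies in $Ran F_{Y_0}$, and (ii) that $\tilde{C}_{Y_0,D}(v_y,q)\leq\tilde{C}_{Y_0,D}(u_y,q)$.

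For (i), I would first record two facts about $u_y=F_{Y_{10}}^{UB}(y)=qF_{Y_{10}|D=0}(y)+pF_{Y_0|D=1}(h(y))$, where $h(y)=Q_{Y_0|D=0}^{\mathbb{R},-}(F_{Y_1|D=0}(y))$. Since $F_{Y_0|D=0}(h(y))\geq F_{Y_1|D=0}(y)$, one gets $F_{Y_0}(h(y))\geq u_y$ --- this is exactly the first step in the proof of Claim~\ref{claim:sharp1} --- while since every $z<h(y)$ satisfies $F_{Y_0|D=0}(z)<F_{Y_1|D=0}(y)$, taking the supremum over $z<h(y)$ and using monotonicity of $F_{Y_0|D=1}$ gives $F_{Y_0}(h(y)-)=\sup_{z<h(y)}F_{Y_0}(z)\leq u_y$. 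Hence every element of $\overline{Ran} F_{Y_0}$ that is $\leq u_y$ is in fact $\leq F_{Y_0}(h(y)-)$, whereas $F_{Y_0}(h(y))>u_y$ unless $u_y\in Ran F_{Y_0}$. Consequently one may take $v_y=u_y$ when $u_y\in Ran F_{Y_0}$, and $v_y=F_{Y_0}(h(y)-)$ otherwise; in this second case the standing hypothesis $F_{Y_0}(h(y)-)\in Ran F_{Y_0}$ is precisely what keeps $v_y$ inside $Ran F_{Y_0}$. Either way $v_y\in Ran F_{Y_0}$ and $v_y\leq u_y=w_y$.

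For (ii), if $v_y=u_y$ there is nothing to prove, so suppose $v_y=F_{Y_0}(h(y)-)<u_y$; then $u_y\notin\overline{Ran} F_{Y_0}$ while $u_y=\tilde{F}_{Y_{10}}(y)\in\overline{Ran}\,\tilde{F}_{Y_{10}}$, so the third branch of the definition of $\tilde{C}_{Y_0,D}$, together with the computation already carried out in the proof of Claim~\ref{claim:sharp1} (which shows $\tilde{C}_{Y_0,D}(F_{Y_{10}}^{UB}(y),q)=F_{Y_{10},D}(y,0)$), yields $\tilde{C}_{Y_0,D}(u_y,q)=F_{Y_1,D}(\tilde{Q}_{Y_{10}}^{\mathbb{R},-}(u_y),0)=F_{Y_{10},D}(y,0)=qF_{Y_1|D=0}(y)$. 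On the other hand, writing $t^{\ast}\equiv Q_{Y_0}^{\mathbb{R},-}(v_y)$ one has $F_{Y_0}(t^{\ast})=v_y<F_{Y_0}(h(y))$, hence $t^{\ast}<h(y)$, so the first two branches of the definition of $\tilde{C}_{Y_0,D}$ give $\tilde{C}_{Y_0,D}(v_y,q)=F_{Y_0,D}(t^{\ast},0)=qF_{Y_0|D=0}(t^{\ast})$, and $t^{\ast}<h(y)$ forces $F_{Y_0|D=0}(t^{\ast})<F_{Y_1|D=0}(y)$ by the definition of $h(y)$. Combining, $\tilde{C}_{Y_0,D}(v_y,q)=qF_{Y_0|D=0}(t^{\ast})<qF_{Y_1|D=0}(y)=\tilde{C}_{Y_0,D}(u_y,q)$, which gives the required chain $\tilde{C}_{Y_0,D}(v_y,q)\leq\tilde{C}_{Y_0,D}(w_y,q)\leq\tilde{C}_{Y_0,D}(u_y,q)$ --- indeed with the first inequality strict whenever $v_y\neq u_y$, which is the strengthening the proof of Lemma~\ref{lem:sharp5} ultimately relies on.

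I expect the main obstacle to be part (i): identifying $v_y$ with $F_{Y_0}(h(y)-)$ and invoking the standing hypothesis to keep it inside $Ran F_{Y_0}$, since if $v_y$ instead fell into a gap of the range of $F_{Y_0}$ then $\tilde{C}_{Y_0,D}$ at $v_y$ would be given by one of the interpolation branches, for which the comparison with $\tilde{C}_{Y_0,D}(u_y,q)$ is no longer a one-line quantile computation. A secondary point requiring care is the two ``inversion identities'' $\tilde{C}_{Y_0,D}(F_{Y_0}(t),q)=F_{Y_0,D}(t,0)$ and $\tilde{C}_{Y_0,D}(F_{Y_{10}}^{UB}(y),q)=F_{Y_{10},D}(y,0)$: these remain valid even when $Q^{\mathbb{R},-}$ shifts its argument leftward across a flat stretch of the relevant marginal, because $F_{\cdot,D}(\cdot,0)$ is constant on any interval where the corresponding full cdf is constant; this is already used implicitly in the proof of Claim~\ref{claim:sharp1}.
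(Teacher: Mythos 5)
Your proof is correct, but it follows a genuinely different route from the paper's. The paper keeps $w_y$ strictly intermediate: it sets $w_y = qF_{Y_{10}\vert D=0}(y) + pF^{LB}_{Y_{10}\vert D=1}(y)$ (the aggregate built from the \emph{lower}-bound counterfactual) and $v_y = F_{Y_0}(\underline{h}(y)-)$ with $\underline{h}(y) = Q^{\mathbb{R},+}_{Y_0\vert D=0}(F_{Y_1\vert D=0}(y))$, then splits into the cases $F^{LB}_{Y_{10}\vert D=1}(y) = F^{UB}_{Y_{10}\vert D=1}(y)$ and $F^{LB}_{Y_{10}\vert D=1}(y) < F^{UB}_{Y_{10}\vert D=1}(y)$; in the latter, $w_y$ typically falls in a gap of both ranges, so $\tilde{C}_{Y_0,D}(w_y,q)$ is evaluated through the linear-interpolation branches and sandwiched between the values at $v_y$ and at $\overline{u}(w_y)$. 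You instead collapse $w_y$ to $u_y$, take $v_y$ to be the largest element of $\overline{Ran}F_{Y_0}$ below $u_y$ (identified with $F_{Y_0}(h(y)-)$, $h = Q^{\mathbb{R},-}_{Y_0\vert D=0}\circ F_{Y_1\vert D=0}$), and compare $\tilde{C}_{Y_0,D}$ at $v_y$ and $u_y$ directly from the first and third branches, never touching the interpolation branches. Both routes invoke the hypothesis $F_{Y_0}(y-)\in Ran F_{Y_0}$ in exactly the same role, and both rest on the inversion identities $\tilde{C}_{Y_0,D}(F_{Y_0}(t),q)=F_{Y_0,D}(t,0)$ and $\tilde{C}_{Y_0,D}(F^{UB}_{Y_{10}}(y),q)=F_{Y_1,D}(y,0)$; your explicit treatment of the flat-stretch issue for the latter is a point the paper leaves implicit. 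What your version buys is brevity and a directly strict inequality $\tilde{C}_{Y_0,D}(v_y,q)<\tilde{C}_{Y_0,D}(u_y,q)$ whenever $v_y\neq u_y$, which is the form Lemma~\ref{lem:sharp5} actually needs (when $v_y=u_y\in Ran F_{Y_0}$, Lemma~\ref{cop:mon} supplies strictness instead); what the paper's intermediate $w_y$ buys is that the Case~2 interpolation computation can be reused almost verbatim in the proof of Lemma~\ref{lem:sharp5}. One small caveat on your side: you establish $u_y\notin Ran F_{Y_0}$, whereas the third branch formally requires $u_y\notin\overline{Ran}F_{Y_0}$; the residual possibilities $u_y\in\{\inf Ran F_{Y_0},\sup Ran F_{Y_0}\}$ are degenerate boundary cases that the paper's own proof also does not treat separately.
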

\noindent\emph{Proof}. We have 
    \begin{eqnarray*}
    u_y&=&F_{Y_{10}}^{UB}(y)= q F_{Y_{10}\vert D=0}(y) + p F^{UB}_{Y_{10}\vert D=1}(y),\\
    &\geq& q F_{Y_{10}\vert D=0}(y) + p F^{LB}_{Y_{10}\vert D=1}(y)= q F_{Y_{10}\vert D=0}(y) + p F_{Y_0\vert D=1}(\underline{h}(y)-)\equiv w_y,\\
    &\geq& q F_{Y_{0}\vert D=0}(\underline{h}(y)-) + p F_{Y_0\vert D=1}(\underline{h}(y)-)\equiv v_y,
    \end{eqnarray*}
    where $\underline{h}(y)= Q_{Y_0\vert 0}^{\mathbb R,+}(F_{Y_1\vert D=0}(y))$, and the second inequality holds from Lemma \ref{lem:boundq}. 
We discuss two cases.

    Case 1: $F^{UB}_{Y_{10}\vert D=1}(y)=F^{LB}_{Y_{10}\vert D=1}(y)$

    In this case, $u_y=w_y$, we have
\begin{eqnarray*}
u_y \geq v_y &\Longrightarrow& q F_{Y_{10}\vert D=0}(y) + p F_{Y_0\vert D=1}(\underline{h}(y)-) \geq q F_{Y_{0}\vert D=0}(\underline{h}(y)-) + p F_{Y_0\vert D=1}(\underline{h}(y)-),\\
&\Longrightarrow& q F_{Y_{10}\vert D=0}(y) \leq q F_{Y_{0}\vert D=0}(\underline{h}(y)-),\\
&\Longrightarrow& F_{Y_{10},D}(y,0) \leq F_{Y_{0},D}(\underline{h}(y)-,0),\\
&\Longrightarrow& \tilde{C}_{Y_0,D}(F_{Y_{10}}^{UB}(y),q)\leq \tilde{C}_{Y_0,D}(F_{Y_0}(\underline{h}(y)-),q),\\
&\Longrightarrow& \tilde{C}_{Y_0,D}(u_y,q)\leq \tilde{C}_{Y_0,D}(v_y,q).
\end{eqnarray*}

Case 2: $F^{LB}_{Y_{10}\vert D=1}(y)< F^{UB}_{Y_{10}\vert D=1}(y)$

In this case, $w_y \notin \overline{\operatorname{Ran}} F_{Y_{10}}^{UB}$. From Lemma \ref{lem:boundq}, $v_y$ is the highest element of $\overline{\operatorname{Ran}} F_{Y_0}$ such that $w_y \geq v_y$. 
First, suppose $w_y \notin \overline{\operatorname{Ran}} F_{Y_0}.$ Then $w_y \in (\overline{\operatorname{Ran}} F_{Y_0})^c \cap (\overline{\operatorname{Ran}} F^{UB}_{Y_{10}})^c$. Let $\overline{u}(w_y)\equiv \inf\{q\in \overline{\operatorname{Ran}} F_{Y_0} \cup \overline{\operatorname{Ran}} F^{UB}_{Y_{10}}: q \geq w_y\}$. We have $v_y \leq w_y < \overline{u}(w_y) \leq u_y$, and either $\overline{u}(w_y) \in \overline{\operatorname{Ran}} F_{Y_0}$ or $\overline{u}(w_y) \in \overline{\operatorname{Ran}} F^{UB}_{Y_{10}}$. 

If $\overline{u}(w_y) \in \overline{\operatorname{Ran}} F_{Y_0}$, then
\begin{eqnarray*}&&
\tilde{C}_{Y_0,D}(w_y,q)\\&=& 
       F_{Y_0,D}\left(Q_{Y_0}^{\mathbb{R},-}(v_y),0\right)+\bigg[F_{Y_0,D}\left(Q_{Y_0}^{\mathbb{R},-}(\overline{u}(w_y)),0\right)-F_{Y_0,D}\left(Q_{Y_0}^{\mathbb{R},-}(v_y),0\right)\bigg] * \frac{w_y-v_y}{\overline{u}(w_y)-v_y}.
\end{eqnarray*}
Since $0 \leq \frac{w_y-v_y}{\overline{u}(w_y)-v_y} \leq 1$ and $\bigg[F_{Y_0,D}\left(Q_{Y_0}^{\mathbb{R},-}(\overline{u}(w_y)),0\right)-F_{Y_0,D}\left(Q_{Y_0}^{\mathbb{R},-}(v_y),0\right)\bigg] \geq 0$ from Lemma \ref{cop:mon}, the following holds: 
\begin{eqnarray*}
&&\tilde{C}_{Y_0,D}(v_y,q) \equiv F_{Y_0,D}\left(Q_{Y_0}^{\mathbb{R},-}(v_y),0\right) 
 \leq \tilde{C}_{Y_0,D}(w_y,q) \leq  F_{Y_0,D}\left(Q_{Y_0}^{\mathbb{R},-}(\overline{u}(w_y)),0\right)\\
 && \qquad \qquad \qquad \qquad \qquad \leq  F_{Y_0,D}\left(Q_{Y_0}^{\mathbb{R},-}(u_y),0\right) \equiv \tilde{C}_{Y_0,D}(u_y,q),
\end{eqnarray*}
where the last inequality holds because $Q_{Y_{0}}^{\mathbb{R},-}(u)$ is monotone in $u$. Hence, 
$$\tilde{C}_{Y_0,D}(v_y,q) \leq \tilde{C}_{Y_0,D}(w_y,q) \leq \tilde{C}_{Y_0,D}(u_y,q).$$

If $\overline{u}(w_y) \in \overline{\operatorname{Ran}} F^{UB}_{Y_{10}}$, then $\overline{u}(w_y) \in \overline{\operatorname{Ran}} F^{UB}_{Y_{10}}=u_y$, and 
\begin{eqnarray*}
&&\tilde{C}_{Y_0,D}(w_y,q)\\
&=& 
       F_{Y_0,D}\left(Q_{Y_0}^{\mathbb{R},-}(v_y),0\right)+\bigg[F_{Y_{1},D}\left(\tilde{Q}_{Y_{10}}^{\mathbb{R},-}(u_y),0\right)-F_{Y_0,D}\left(Q_{Y_0}^{\mathbb{R},-}(v_y),0\right)\bigg] * \frac{w_y-v_y}{\overline{u}(w_y)-v_y},\end{eqnarray*}
\begin{eqnarray*}       &=& 
       \tilde{C}_{Y_0,D}(v_y,q)+\bigg[F_{Y_1,D}(y,0)-F_{Y_0,D}\left(\underline{h}(y)-,0\right)\bigg] * \frac{w_y-v_y}{\overline{u}(w_y)-v_y}
\end{eqnarray*}
Since $0 \leq \frac{w_y-v_y}{\overline{u}(w_y)-v_y} \leq 1$ and $\bigg[F_{Y_1,D}(y,0)-F_{Y_0,D}\left(\underline{h}(y)-,0\right)\bigg] \geq 0$ from Lemma \ref{lem:boundq}, the following holds: 
\begin{eqnarray*}
\tilde{C}_{Y_0,D}(v_y,q) \leq \tilde{C}_{Y_0,D}(w_y,q) \leq F_{Y_1,D}\left(y,0\right) \equiv \tilde{C}_{Y_0,D}(u_y,q).
\end{eqnarray*}

Second, suppose $w_y \in \overline{\operatorname{Ran}} F_{Y_0}.$ Then, from Lemma \ref{lem:boundq}, we must have $w_y=v_y$, which implies $F_{Y_{1}, D}(y,0)=F_{Y_{0},D}(\underline{h}(y)-,0)$, which in turn implies $\tilde{C}_{Y_0,D}(u_y,q)=\tilde{C}_{Y_0,D}(v_y,q)=\tilde{C}_{Y_0,D}(w_y,q)$.

\noindent This completes the proof of Claim \ref{claim:sharp4}. \qed

Now we proceed to complete the proof of the lemma. Take $u \in RanF_{Y_{10}}^{UB}$ and $v\in RanF_{Y_0}$ such that $v < u$. Since $u \in RanF_{Y_{10}}^{UB}$, there exits $y$ such that $u_y=F_{Y_{10}}^{UB}(y)$. From Claim \ref{claim:sharp4}, there exists $w_y \in [0,1]$ and $v_y \in \overline{\operatorname{Ran}} F_{Y_0}$ such that $v\leq v_y < w_y \leq u$.

 Case 1: $F^{UB}_{Y_{10}\vert D=1}(y)=F^{LB}_{Y_{10}\vert D=1}(y)$

    In this case, $u_y=w_y$, we have
\begin{eqnarray*}
u_y > v_y &\Longrightarrow& q F_{Y_{10}\vert D=0}(y) + p F_{Y_0\vert D=1}(\underline{h}(y)-) \geq q F_{Y_{0}\vert D=0}(\underline{h}(y)-) + p F_{Y_0\vert D=1}(\underline{h}(y)-),\\
&\Longrightarrow& q F_{Y_{10}\vert D=0}(y) > q F_{Y_{0}\vert D=0}(\underline{h}(y)-),\\
&\Longrightarrow& F_{Y_{10},D}(y,0) > F_{Y_{0},D}(\underline{h}(y)-,0),\\
&\Longrightarrow& \tilde{C}_{Y_0,D}(F_{Y_{10}}^{UB}(y),q) > \tilde{C}_{Y_0,D}(F_{Y_0}(\underline{h}(y)-),q),\\
&\Longrightarrow& \tilde{C}_{Y_0,D}(u_y,q) > \tilde{C}_{Y_0,D}(v_y,q) \geq \tilde{C}_{Y_0,D}(v,q),\ \text{ since } v_y, v \in \overline{\operatorname{Ran}} F_{Y_0},\\
&\Longrightarrow& \tilde{C}_{Y_0,D}(u,q) > \tilde{C}_{Y_0,D}(v,q).
\end{eqnarray*}

Case 2: $F^{LB}_{Y_{10}\vert D=1}(y)< F^{UB}_{Y_{10}\vert D=1}(y)$

The proof here is very similar to Case 2 in Claim \ref{claim:sharp4}, except the strict inequality $0 < \frac{w_y-v_y}{\overline{u}(w_y)-v_y} < 1$. This strict inequality implies
$$\tilde{C}_{Y_0,D}(v,q) \leq \tilde{C}_{Y_0,D}(v_y,q) < \tilde{C}_{Y_0,D}(w_y,q) \leq \tilde{C}_{Y_0,D}(u_y,q).$$
Hence, $\tilde{C}_{Y_0,D}(v,q) < \tilde{C}_{Y_0,D}(u,q)$.

Now we have completed the proof of the two intermediate lemmata and thereby the proof of Theorem \ref{Main:theorem}.

\qed

\subsection{Proof of Corollary \ref{cor:Iden}}\label{apx:corident}
\begin{proof}
    In the continuous cdfs case, we have
    \begin{eqnarray*}
    F^{LB}(t)&=&F_{Y_0|D=1}\left(Q^{\mathbb R,+}_{Y_0|D=0}\left(F_{Y_{1|D=0}}(t)\right)-\right)\; \\\; 
    &=&\mathbb P\left(Y_0 < Q^{\mathbb R,+}_{Y_0|D=0}\left(F_{Y_{1|D=0}}(t)\right)\vert D=1\right) \; \text{ by definition} \\\; 
    &=&\mathbb P\left(Y_0 \leq Q^{\mathbb R,+}_{Y_0|D=0}\left(F_{Y_{1|D=0}}(t)\right)\vert D=1\right) \; \text{ under continuity}\\\; 
    &=&F_{Y_0|D=1}\left(Q^{\mathbb R,+}_{Y_0|D=0}\left(F_{Y_{1|D=0}}(t)\right)\right)\; \\\;
    F^{UB}(t)&=&F_{Y_0|D=1}\left(Q^{\mathbb R,-}_{Y_0|D=0}\left(F_{Y_{1|D=0}}(t)\right)\right).
\end{eqnarray*}
We know that $Q^{\mathbb R,-}_{Y_0|D=0}(u) \leq Q^{\mathbb R,+}_{Y_0|D=0}(u)$ since $Q^{\mathbb R,+}_{Y_0|D=0}(u)=\sup\{y \in \mathbb R: F_{Y_0\vert D=0}(y)=u\}$ and $Q^{\mathbb R,-}_{Y_0|D=0}(u)=\inf\{y \in \mathbb R: F_{Y_0\vert D=0}(y)=u\}$ in the continuous cdf case.
Since $F_{Y_0\vert D=1}$ is nondecreasing, $F_{Y_0|D=1}\left(Q^{\mathbb R,-}_{Y_0|D=0}\left(F_{Y_{1|D=0}}(t)\right)\right) \leq F_{Y_0|D=1}\left(Q^{\mathbb R,+}_{Y_0|D=0}\left(F_{Y_{1|D=0}}(t)\right)\right)$, that is,
$F^{UB}(t)\leq F^{LB}(t)$. We know that under our model assumptions $F^{LB}(t)\leq F^{UB}(t)$, therefore it follows that $F^{LB}(t)=F^{UB}(t)$. 

\end{proof}
\subsection{Proof of Theorem \ref{GMain:theorem}}
The proof of this theorem follows by similar arguments to the proof of Theorem \ref{Main:theorem} and is therefore provided in Section \ref{app:proof_theorem2} of the online appendix.

\subsection{Proof of Claim \ref{claim:eq}}\label{proof:claim1}
(i) $\Longrightarrow$ (ii).

Since the cdf $F_{Y_{t0}}$ is continuous and strictly increasing, we have 
\begin{eqnarray*}
Y_{t0}&=&Q^{\mathbb R,-}_{Y_{t0}}\left(F_{Y_{t0}}(Y_{t0})\right),\\
&=& Q^{\mathbb R,-}_{Y_{t0}}\left(U_{t0})\right),\ \text{ where } U_{t0}\equiv F_{Y_{t0}}(Y_{t0})\sim \mathcal U_{[0,1]},\\
&=& h_t(U_{t0}), \ \text{ where } h_t(u)\equiv Q^{\mathbb R,-}_{Y_{t0}}(u).
\end{eqnarray*}

By definition, $h_t$ is continuous and strictly increasing as is the quantile function $Q^{\mathbb R,-}_{Y_{t0}}$. Then, the following equalities hold: 
\begin{eqnarray*}
C_{Y_{t0},D}&=&C_{h_t(U_{t0}),D}=C_{U_{t0},D},
\end{eqnarray*}
where the second equality holds from the invariance principle in \citeauthor{Embrechts_al2013} (\citeyear{Embrechts_al2013}, Proposition 4(2)). Therefore, 
\begin{eqnarray*}
C_{Y_{00},D}(u,q) = C_{Y_{10},D}(u,q) &\Longrightarrow& C_{U_{00},D}(u,q) = C_{U_{10},D}(u,q),\\
&\Longrightarrow& C_{U_{00},D}(F_{U_{00}}(u), F_D(0)) =C_{U_{10},D}(F_{U_{10}}(u), F_D(0)),\\
&\Longrightarrow& F_{U_{00},D}(u,0) = F_{U_{10},D}(u,0),\\
&\Longrightarrow& u-F_{U_{00},D}(u,0) = u-F_{U_{10},D}(u,0),\\
&\Longrightarrow& F_{U_{00}}(u)-F_{U_{00},D}(u,0) = F_{U_{10}}(u)-F_{U_{10},D}(u,0),\\
&\Longrightarrow& \mathbb P(U_{00} \leq u, D=1) = \mathbb P(U_{10} \leq u, D=1),
\end{eqnarray*}
where the second implication follows from $U_{t0} \sim \mathcal U_{[0,1]}$ and $F_D(0)=q$, the third holds from Sklar's theorem, and the fifth follows from $U_{t0} \sim \mathcal U_{[0,1]}$. Hence, we have:
\begin{eqnarray*}
C_{Y_{00},D}(u,q) = C_{Y_{10},D}(u,q)
&\Longrightarrow& \mathbb P(U_{00} \leq u, D=d) = \mathbb P(U_{10} \leq u, D=d) \text{ for } d\in\{0,1\},\\ 
&\Longrightarrow& \mathbb P(U_{00} \leq u, D=d)/\mathbb P(D=d) = \mathbb P(U_{10} \leq u, D=d)/\mathbb P(D=d),\\
&\Longrightarrow& F_{U_{00}\vert D}(u\vert d) \equiv \mathbb P(U_{00} \leq u \vert D=d) = \mathbb P(U_{10} \leq u \vert D=d)\equiv F_{U_{10}\vert D}(u\vert d),\\
&\Longrightarrow& U_{00}\vert D=d \sim U_{10}\vert D=d.
\end{eqnarray*}

(ii) $\Longrightarrow$ (i). Suppose there exist two strictly increasing functions $h_t(.), t \in \{0,1\}$ and two uniformly distributed random variables over $[0,1]$ $U_{00}$ and $U_{10}$ such that $Y_{t0}=h_t(U_{t0})$ and $U_{00}|D=d \sim U_{10}|D=d$. Then, we have 
\begin{eqnarray*}
F_{U_{00}\vert D}(u\vert d) =F_{U_{10}\vert D}(u\vert d)
&\Longrightarrow& F_{U_{00}\vert D}(u\vert d)\mathbb P(D=d) = F_{U_{10}\vert D}(u\vert d)\mathbb P(D=d),\\ 
&\Longrightarrow& \mathbb P(U_{00} \leq u, D=d) = \mathbb P(U_{10} \leq u, D=d),\\
&\Longrightarrow& F_{U_{00},D}(u, 0) =F_{U_{10}, D}(u, 0) \text{ for } d=0,\\
&\Longrightarrow& C_{U_{00},D}(F_{U_{00}}(u), F_D(0)) =C_{U_{10},D}(F_{U_{10}}(u), F_D(0)),\\
&\Longrightarrow& C_{U_{00},D}(u, q) =C_{U_{10},D}(u, q),\\
&\Longrightarrow& C_{h_0(U_{00}),D}(u, q)=C_{U_{00},D}(u, q) =C_{U_{10},D}(u, q)=C_{h_1(U_{00}),D}(u, q),\\
&\Longrightarrow& C_{Y_{00},D}(u, q)=C_{Y_{10},D}(u, q),
\end{eqnarray*}
where the fourth implication holds from Sklar's theorem, the fifth follows from $U_{t0} \sim \mathcal U_{[0,1]}$, the sixth follows by the invariance principle in \citeauthor{Embrechts_al2013} (\citeyear{Embrechts_al2013}, Proposition 4.(2)), and the last holds from $Y_{t0}=h_t(U_{t0})$.   
\qed

\newpage

\renewcommand\thefigure{A.\arabic{figure}} 
\setcounter{figure}{0}

\renewcommand\thetable{A.\arabic{table}} 
\setcounter{table}{0}
\setcounter{page}{0} 
\renewcommand\theclaim{D.\arabic{claim}} 
\setcounter{claim}{0}
\pagenumbering{gobble}
\begin{center}
\vspace{2cm}
    \huge{{Online Appendix}}

\huge{Evaluating the Impact of Regulatory Policies on Social Welfare\\ in Diff-in-Diff Settings}\\
\bigskip
    \Large{Dalia Ghanem \quad D\'esir\'e K\'edagni\quad Ismael Mourifi\'e}
\end{center}
\vspace{1cm}
\startcontents[sections]
\printcontents[sections]{l}{1}{\setcounter{tocdepth}{1}}
\setcounter{page}{0}
\newpage
\pagenumbering{arabic}

\section{Supplementary results for Section \ref{Sec: Main-results}}

\subsection{Parallel trends as covariance stability}\label{Appen:covstability}
\begin{lemma}\label{lem:covstability} Suppose $\mathbb P(D=1)\in(0,1)$.
$$\mathbb E[Y_{10}-Y_{00}\vert D=1]=\mathbb E[Y_{10}-Y_{00}\vert D=0]~~\Longleftrightarrow~~ Cov(Y_{00},D)=Cov(Y_{10,D}).$$
\end{lemma}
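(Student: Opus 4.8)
The plan is to reduce both sides of the claimed equivalence to a single algebraic identity exploiting that $D$ is binary. First I would record the elementary fact that, for any integrable random variable $W$ and a binary $D$ with $p\equiv\mathbb P(D=1)$, $q\equiv 1-p$, we have $Cov(W,D)=pq\big(\mathbb E[W\vert D=1]-\mathbb E[W\vert D=0]\big)$. This follows by writing $Cov(W,D)=\mathbb E[WD]-\mathbb E[W]\mathbb E[D]$, substituting $\mathbb E[WD]=p\,\mathbb E[W\vert D=1]$ and the law of total expectation $\mathbb E[W]=p\,\mathbb E[W\vert D=1]+q\,\mathbb E[W\vert D=0]$, and collecting terms; the $\mathbb E[D]=p$ factors combine to leave $p(1-p)=pq$ multiplying the difference of conditional means.

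Next I would apply this identity twice, with $W=Y_{00}$ and with $W=Y_{10}$, to obtain $Cov(Y_{00},D)=pq\big(\mathbb E[Y_{00}\vert D=1]-\mathbb E[Y_{00}\vert D=0]\big)$ and $Cov(Y_{10},D)=pq\big(\mathbb E[Y_{10}\vert D=1]-\mathbb E[Y_{10}\vert D=0]\big)$. Subtracting gives $Cov(Y_{10},D)-Cov(Y_{00},D)=pq\Big(\big(\mathbb E[Y_{10}\vert D=1]-\mathbb E[Y_{00}\vert D=1]\big)-\big(\mathbb E[Y_{10}\vert D=0]-\mathbb E[Y_{00}\vert D=0]\big)\Big)$, and by linearity of conditional expectation the right-hand bracket equals $\mathbb E[Y_{10}-Y_{00}\vert D=1]-\mathbb E[Y_{10}-Y_{00}\vert D=0]$.

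Finally, since $\mathbb P(D=1)\in(0,1)$ implies $pq>0$, the quantity $Cov(Y_{10},D)-Cov(Y_{00},D)$ is zero if and only if $\mathbb E[Y_{10}-Y_{00}\vert D=1]-\mathbb E[Y_{10}-Y_{00}\vert D=0]$ is zero, which is exactly the asserted equivalence. The only point requiring a word of care is that all the conditional expectations are finite so the rearrangements are valid, which is implicit once the covariances in the statement are assumed to exist; apart from that there is no real obstacle, as the lemma is a direct computation, so the ``hard part'' is simply being explicit about the binary-$D$ covariance identity and the nonvanishing of $pq$.
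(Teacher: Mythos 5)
Your proof is correct and follows essentially the same route as the paper: both arguments come down to the binary-$D$ identity $Cov(W,D)=pq\left(\mathbb E[W\vert D=1]-\mathbb E[W\vert D=0]\right)$, which the paper applies directly to $W=Y_{10}-Y_{00}$ while you apply it to $Y_{00}$ and $Y_{10}$ separately and subtract; by bilinearity of the covariance these are the same computation. The concluding step, that $pq>0$ lets one divide and obtain the two-way implication, matches the paper's as well.
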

\begin{proof} The result follows by first multiplying $\mathbb E[Y_{10}-Y_{00}\vert D=1]-\mathbb E[Y_{10}-Y_{00}\vert D=0]$ by $\mathbb P(D=1)\mathbb P(D=0)$ and then simplifying the resulting expression as follows, 
\begin{eqnarray*}&&\mathbb P(D=1)\mathbb P(D=0)\mathbb E[Y_{10}-Y_{00}\vert D=1]-\mathbb P(D=1)\mathbb P(D=0)\mathbb E[Y_{10}-Y_{00}\vert D=0]\\&=& \mathbb P(D=0)\mathbb E[(Y_{10}-Y_{00})D]-\mathbb P(D=1)\mathbb E[(Y_{10}-Y_{00})(1-D)]\\
&=&
\mathbb E\left[(Y_{10}-Y_{00})(1-\mathbb P(D=1))D-(Y_{10}-Y_{00})(1-D)\mathbb P(D=1)\right]\\
&=&\mathbb E\left[(Y_{10}-Y_{00})(D-\mathbb P(D=1))\right]=
\mathbb E\left[(Y_{10}-Y_{00})(D-\mathbb E[D])\right]\\
&=&Cov(Y_{10}-Y_{00},D).
\end{eqnarray*}
The $\Longrightarrow$ ($\Longleftarrow$) direction follows from noting that it would imply the left-hand (right-hand) side of the equality is zero.
\end{proof}

\subsection{Dependence stability vs parallel trends in Example \ref{ex:1}}\label{sec:example}
Consider the DGP in Example \ref{ex:1}. We have $Q^{\mathbb R,-}_{Y_{0}}(u)=\Phi^{-1}(u)\sigma_0$, and $Q^{\mathbb R,-}_{Y_{10}}(u)=\Phi^{-1}(u)\sigma_1$, where $\Phi^{-1}(u)$ denotes the quantile of the standard normal distribution. We also have:
\begin{eqnarray*}
F_{Y_0,D}(y,0)\equiv\mathbb P(Y_0 \leq y, D\leq 0) &=& \Phi_{\Sigma_{U_0 \eta}}\left(\frac{y}{\sigma_0},0;\rho_0\right),\\
F_{Y_{10},D}(y,0)\equiv\mathbb P(Y_{10} \leq y, D\leq 0) &=& \Phi_{\Sigma_{U_1 \eta}}\left(\frac{y}{\sigma_1},0;\rho_1\right),
\end{eqnarray*}
where $\Phi_{\Sigma}(.,.; \rho)$ is the joint cdf of a bivariate normal random variable with variance-covariance matrix $\Sigma$ and coefficient of correlation $\rho$. 

From \citet[Corollary 2.3.7]{Nelsen2006}, we have for $u\in[0,1]$,
\begin{eqnarray*}
C_{Y_0,D}(u,q)=F_{Y_0,D}(Q^{\mathbb R,-}_{Y_{0}}(u),Q^{\mathbb R,-}_{D}(q))= \Phi_2\left(\Phi^{-1}(u),0;\rho_0\right),\\
C_{Y_{10},D}(u,q)=F_{Y_{10},D}(Q^{\mathbb R,-}_{Y_{10}}(u),Q^{\mathbb R,-}_{D}(q))= \Phi_2\left(\Phi^{-1}(u),0;\rho_1\right),
\end{eqnarray*}
where $\Phi_2(.,.; \rho)$ is the joint cdf of a standard bivariate normal random variable with parameter $\rho$.
Since the function $\Phi_2(.,.; \rho)$ is strictly increasing in $\rho$,\footnote{See \cite{Sibuya1959} and \cite{Sungur1990}.} we conclude that $C_{Y_0,D}(u,q)=C_{Y_{10},D}(u,q)$ if and only if $\rho_0=\rho_1$.

In Example \ref{ex:1}, parallel trends in distribution implies $\sigma_1=\sigma_0$ and $\rho_1=\rho_0$, i.e., $U_0$ and $U_1$ have the same distribution $N(0,\sigma^2_1)$, and copula stability (Assumption \ref{stab}) holds. Indeed, parallel trends in distribution states:
\begin{eqnarray*}
F_{Y_{10}\vert D=1}(y)-F_{Y_{0}\vert D=1}(y)&=& F_{Y_{10}\vert D=0}(y)-F_{Y_{0}\vert D=0}(y),
\end{eqnarray*}
which implies
\begin{eqnarray*}
\frac{F_{Y_{10},D}(y,1)-F_{Y_{0},D}(y,1)}{\mathbb P(D=1)}&=& \frac{F_{Y_{10},D}(y,0)-F_{Y_{0},D}(y,0)}{\mathbb P(D=0)},\\
\frac{F_{Y_{10},D}(y,1)-F_{Y_{0},D}(y,1)}{0.5}&=& \frac{F_{Y_{10},D}(y,0)-F_{Y_{0},D}(y,0)}{0.5},\\
F_{Y_{10}}(y)-F_{Y_{10},D}(y,0)-F_{Y_{0}}(y)+F_{Y_{0},D}(y,0)&=& F_{Y_{10},D}(y,0)-F_{Y_{0},D}(y,0),\end{eqnarray*}
\begin{eqnarray*}
F_{Y_{10}}(y)-F_{Y_{0}}(y)&=& 2(F_{Y_{10},D}(y,0)-F_{Y_{0},D}(y,0)),
\end{eqnarray*}
that is,
$\Phi(\frac{y}{\sigma_1})-\Phi(\frac{y}{\sigma_0})= 2(\Phi_{\Sigma_{U_1 \eta}}(\frac{y}{\sigma_1},0;\rho_1)-\Phi_{\Sigma_{U_0 \eta}}(\frac{y}{\sigma_0},0;\rho_0))$ for all $y$. 
For $y=0$, this equality implies $\Phi_{\Sigma_{U_1 \eta}}(0,0;\rho_1)-\Phi_{\Sigma_{U_0 \eta}}(0,0;\rho_0)=0$, that is, $\frac{1}{4}+\frac{\arcsin(\rho_1)}{2\pi}=\frac{1}{4}+\frac{\arcsin(\rho_0)}{2\pi}$, which implies $\rho_1=\rho_0$ because the function $\arcsin$ is continuous and strictly increasing.

Parallel trends in distribution implies the standard parallel trends, which according to Lemma \ref{lem:covstability} is equivalent to covariance stability $Cov(Y_{00},D)=Cov(Y_{10,D})$, that is, $\rho_0 \sigma_0=\rho_0 \sigma_1$. Since $\rho_1=\rho_0$, we have $\sigma_0=\sigma_1$ because $\rho_t\neq 0$ by assumption. 

\subsection{A variant of Example \ref{ex:1} with non-normal marginals}\label{Appen:example1_nonnormal}
In this section, we present a variant on Example \ref{ex:1} with exponential, instead of Gaussian, marginals. We make two observations on the following example: (i) the parallel trends assumption no longer has a simple interpretation as in Example \ref{ex:1}, (ii) the copula stability restriction is identical to Example \ref{ex:1} despite the difference in the marginal distribution.

\begin{example}\label{example1:1}
Consider the following data generating process (DGP) in which the treatment is received when its gain (treatment effect) is bigger than or equal to a threshold, say 0 for simplicity. This is a simple Roy model where selection into treatment is on the gain. 
\begin{eqnarray}\label{eq:example1}
\left\{ \begin{array}{lcl}
     Y_0 &=& U_0\\ \\
     Y_{1} &=& \eta D+ U_1  \\ \\
     D &=& \mathbbm{1}\{\eta\geq 0\}
     \end{array} \right.
\end{eqnarray}
where $U_t \sim \exp{(\theta_t)}$, $C_{U_t,\eta}(u,v)=\Phi_2\left(\Phi^{-1}(u),\Phi^{-1}(v);\rho_t\right)$, $\rho_t \neq 0$. 

In this case, we have the following:
\begin{enumerate}
\item [(a)] {\bf Copula  stability:} $\rho_0=\rho_1$ 

$C_{U_0,\eta}=C_{U_1,\eta} \Leftrightarrow \rho_0=\rho_1$
since $\Phi_2\left(.,.;\rho\right)$ is strictly increasing in $\rho$. 
\smallskip
\item [(b)] {\bf Parallel trends:} $\int( C_{U_0,D}(1-e^{-\theta_0 u},q)-(1-e^{-\theta_0 u})q )du=\int( C_{U_1,D}(1-e^{-\theta_1 u},q)-(1-e^{-\theta_1 u})q) du$ $$\Leftrightarrow$$ 
$\int (\Phi_2\left(\Phi^{-1}(1-e^{-\theta_0 u}),\Phi^{-1}(q);\rho_0\right)-(1-e^{-\theta_0 u})q) du=\int( \Phi_2\left(\Phi^{-1}(1-e^{-\theta_1 u}),\Phi^{-1}(q);\rho_1\right)-(1-e^{-\theta_1 u})q )du$
where
\begin{eqnarray*}
   C_{U_t,D}(1-e^{-\theta_t u},q)&=&\mathbb P(U_t \leq u, D=0),\\
   &=& \mathbb P(U_t \leq u, \eta \leq 0),\\
   &=&C_{U_t,\eta}(F_{U_t}(u),F_{\eta}(0)),\\
   &=& \Phi_2\left(\Phi^{-1}(1-e^{-\theta_t u}),\Phi^{-1}(q);\rho_t\right).
\end{eqnarray*}
\item [(c)] {\bf Distributional DiD:} $\rho_0=\rho_1$ and $\theta_0=\theta_1$ 

Since $\rho_t \neq 0,$ $D \not\independent U_t$. Therefore, from \cite{RothSantanna2021}, distributional PT holds iff stationarity holds, i.e., $\mathbb P(U_0\leq u \vert D=d)=\mathbb P(U_1\leq u \vert D=d)$ for all $u$ and $d$, which implies $\mathbb P(U_0\leq u,D=d)=\mathbb P(U_1\leq u,D=d)$ for all $u$ and $d$, which in turn implies $\mathbb P(U_0\leq u)=\mathbb P(U_1\leq u)$, which finally implies $\theta_0=\theta_1$. Now, using the equality $\mathbb P(U_0\leq u,D=0)=\mathbb P(U_1\leq u,D=0)$, we have 
$\Phi_2\left(\Phi^{-1}(1-e^{-\theta_0 u}),\Phi^{-1}(q);\rho_0\right)=\Phi_2\left(\Phi^{-1}(1-e^{-\theta_0 u}),\Phi^{-1}(q);\rho_1\right)$, which implies $\rho_0=\rho_1$ since $\Phi_2\left(.,.;\rho\right)$ is strictly increasing in $\rho$. 
\end{enumerate}

\end{example}

\subsection{Proof of Theorem \ref{GMain:theorem}}\label{app:proof_theorem2}

Take a fixed $y \in \mathbb Y_{10|0}$, then  for any $t \in \{-T_0,\dots,0\}$, the following holds for all  $\tilde y< Q^{\mathbb R,+}_{Y_t|D=0}\left(F_{Y_{1|D=0}}(y)\right)$: 
\begin{eqnarray}
F_{Y_t|D=0}\left(\tilde y\right) &\leq& F_{Y_{1|D=0}}(y) \leq F_{Y_t|D=0}\left(Q^{\mathbb R,-}_{Y_t|D=0}\left(F_{Y_{1|D=0}}(y)\right)\right), \nonumber \\
F_{Y_t,D}\left(\tilde y,0\right) &\leq& F_{Y_{1,D}}(y,0) \leq F_{Y_t,D}\left(Q^{\mathbb R,-}_{Y_t|D=0}\left(F_{Y_{1|D=0}}(y)\right),0\right), \nonumber\\
C_{Y_t,D}\left(F_{Y_t}\left(\tilde y\right), q\right)&\leq& C_{Y_{10},D}\left(F_{Y_{10}}(y), q\right) \leq C_{Y_t,D}\left(F_{Y_t}\left(Q^{\mathbb R,-}_{Y_t|D=0}\left(F_{Y_{1|D=0}}(y)\right)\right), q\right), \nonumber\\
C_{Y_t,D}\left(F_{Y_t}\left(\tilde y\right), q\right)&\leq& C_{Y_{t},D}\left(F_{Y_{10}}(y), q\right) \leq C_{Y_t,D}\left(F_{Y_t}\left(Q^{\mathbb R,-}_{Y_t|D=0}\left(F_{Y_{1|D=0}}(y)\right)\right), q\right), \nonumber\\
 F_{Y_t}\left(\tilde y\right)&\leq& F_{Y_{10}}(y) \leq F_{Y_t}\left(Q^{\mathbb R,-}_{Y_t|D=0}\left(F_{Y_{1|D=0}}(y)\right)\right) \label{eq;it}
\end{eqnarray}
The first line of the inequality trivially holds
from Lemma \ref{lem:boundq}(\ref{lem:boundq1}) and the fact that $Y_0 \leq \tilde{y}$ implies $Y_0 < Q^{\mathbb R,+}_{Y_0|D=0}\left(F_{Y_{1|D=0}}(y)\right)$.
The third line holds  by Sklar's Theorem \citep[Theorem 2.3.3.]{Nelsen2006}. The fourth line holds under Assumption \ref{Gstab}, and the last line holds under Assumption \ref{inc}. 
Notice that the last line requires $u \mapsto C_{Y_{10},D}(u,q)$ to be strictly increasing only on $\overline{\operatorname{Ran}}  F_{Y_{10}}\cup \overline{\operatorname{Ran}}  F_{Y_{t0}} \subseteq [0,1]$.
Now, applying the monotonicity of the function $v-C_{Y_t,D}(v,q)$ on the inequality (\ref{eq;it}), for all  $\tilde y < Q^{\mathbb R,+}_{Y_t|D=0}\left(F_{Y_{1|D=0}}(y)\right)$ we have:
 \begin{eqnarray*}
&&F_{Y_t}\left(\tilde y\right)-C_{Y_t,D}\left(F_{Y_t}\left(\tilde y\right), q\right) \leq  F_{Y_{10}}(y)-C_{Y_t,D}\left(F_{Y_{10}}(y), q\right) \leq \label{ineq:it}
\\ &&F_{Y_t}\left(Q^{\mathbb R,-}_{Y_t|D=0}\left(F_{Y_{1|D=0}}(y)\right)\right)-C_{Y_t,D}\left(F_{Y_t}\left(Q^{\mathbb R,-}_{Y_t|D=0}\left(F_{Y_{1|D=0}}(y)\right)\right), q\right). \nonumber
\end{eqnarray*}
In addition, since $F_{Y_{t0}}(y)=F_{Y_{t0}, D}(y,1)+F_{Y_{t0}, D}(y,0)=F_{Y_{t0}, D}(y,1) + C_{Y_{t0},D}(F_{Y_{t0}}(y),q)$ for $t=-T_0,\cdots,1$, the latter equality implies the following:
\begin{eqnarray*}
F_{Y_t, D}\left(\tilde y, 1\right) &\leq&  F_{Y_{10}}(y)-C_{Y_t,D}\left(F_{Y_{10}}(y), q\right) \leq F_{Y_t, D}\left(Q^{\mathbb R,-}_{Y_t|D=0}\left(F_{Y_{1|D=0}}(y)\right), 1\right)\\
F_{Y_t, D}\left(\tilde y, 1\right) &\leq&  F_{Y_{10}}(y)-C_{Y_{10},D}\left(F_{Y_{10}}(y), q\right) \leq F_{Y_t, D}\left(Q^{\mathbb R,-}_{Y_t|D=0}\left(F_{Y_{1|D=0}}(y)\right), 1\right)\\
F_{Y_t, D}\left(\tilde y, 1\right)&\leq&  F_{Y_{10},D}(y, 1) \leq F_{Y_t, D}\left(Q^{\mathbb R,-}_{Y_t|D=0}\left(F_{Y_{1|D=0}}(y)\right), 1\right),\\
F_{Y_t, D}\left(\tilde y, 1\right) &\leq&  F_{Y_{10},D}(y, 1) \leq F_{Y_t, D}\left(Q^{\mathbb R,-}_{Y_t|D=0}\left(F_{Y_{1|D=0}}(y)\right), 1\right),\\
 F_{Y_t|D=1}\left(\tilde y\right) &\leq&  F_{Y_{10}|D=1}(y) \leq F_{Y_t|D=1}\left(Q^{\mathbb R,-}_{Y_t|D=0}\left(F_{Y_{1|D=0}}(y)\right)\right),
\end{eqnarray*}
where the second line holds under Assumption \ref{Gstab}.
So, to summarize, for any fixed $y \in \mathbb Y_{10|0}$, and for any $t \in \{-T_0,\dots,0\}$
we have:
$$F_{Y_t|D=1}\left(\tilde y\right) \leq  F_{Y_{10}|D=1}(y) \leq F_{Y_t|D=1}\left(Q^{\mathbb R,-}_{Y_t|D=0}\left(F_{Y_{1|D=0}}(y)\right)\right), \text{ for all } \tilde y < Q^{\mathbb R,+}_{Y_t|D=0}\left(F_{Y_{1|D=0}}(y)\right).$$
Taking the supremum over $\tilde{y}<Q_{Y_t|D=0}^{\mathbb{R},+}(F_{Y_1|D=0}(y))$ implies that:
$$ \sup_{\tilde y < Q^{\mathbb R,+}_{Y_t|D=0}\left(F_{Y_{1|D=0}}(y)\right)}F_{Y_t|D=1}\left(\tilde y\right) \leq  F_{Y_{10}|D=1}(y) \leq F_{Y_t|D=1}\left(Q^{\mathbb R,-}_{Y_t|D=0}\left(F_{Y_{1|D=0}}(y)\right)\right),$$
which is equivalent to:
$$\underbrace{F_{Y_t|D=1}\left(Q^{\mathbb R,+}_{Y_t|D=0}\left(F_{Y_{1|D=0}}(y)\right)-\right)}_{=F_{Y_t|D=1}\left(\left[Q^{\mathbb R,+}_{Y_t|D=0}\circ F_{Y_{1|D=0}}\right](y)-\right)\equiv F_t^{LB}(y) } \leq  F_{Y_{10}|D=1}(y) \leq \underbrace{F_{Y_t|D=1}\left(Q^{\mathbb R,-}_{Y_t|D=0}\left(F_{Y_{1|D=0}}(y)\right)\right)}_{=\left[F_{Y_t|D=1}\circ Q^{\mathbb R,-}_{Y_t|D=0}\circ F_{Y_1|D=0}\right] (y)\equiv F_t^{UB}(y) }.$$

Thus we have: for any $t \in \{-T_0,\dots,0\}$
\begin{eqnarray}\label{B1mp}
F_t^{LB}(y) \leq  F_{Y_{10}|D=1}(y) \leq F_t^{UB}(y), \text{ for all } y \in \mathbb Y_{10|0}.
\end{eqnarray}
Then we finally have the following bounds:

\begin{eqnarray}\label{B2mp}
\max_{t \in \{-T_0,\dots,0\}}F_t^{LB}(y) \leq  F_{Y_{10}|D=1}(y) \leq \min_{t \in \{-T_0,\dots,0\}}F_t^{UB}(y), \text{ for all } y \in \mathbb Y_{10|0}.
\end{eqnarray}

Notice that the above bounds naturally extend to the case where  $y  \in \mathbb R \setminus \mathbb Y_{10|0}$, however for $y  \in \mathbb R \setminus \mathbb Y_{10|0}$ the bounds may no longer be (point-wise) sharp. And this is because  the upper bound may not be right-continuous in some cases, similarly for the  lower bound which may not be  right-continuous whenever $\{\tilde y \in \mathbb Y_{t|D=1} \cup \{-\infty\}:F_{Y_t|D=1}(\tilde y)\leq u\}$ is open for some $u\in Ran F_{Y_t|D=1}$.

To clarify this point, let us consider the simple case where $Y_{t0}$, for all $t$, are all discrete random variables with $\mathbb Y_{10|0}=\{y_0,...,y_K\}$. In this case, $F_t^{LB}(.)$ is a well-defined cdf, while $F_t^{UB}(.)$ may not be a right-continuous function. Indeed, the function $u \mapsto Q^{\mathbb Y_{t|0},-}(u)$ is left-continuous and the discontinuities happen at $u\in Ran F_{Y_t|D=0}$.
Now, consider that there exists $u_k \in Ran F_{Y_t|D=0} \cap Ran F_{Y_{10}|D=0}$, thus $F^{UB}(.)$ could be left-continuous at
$y_k \in \mathbb Y_{10|0}$ such that $F_{Y_{10}|D=0}(y_k)=u_k$.
If it is left-continuous and not right-continuous in $y_k$, we have:
 $\{y \in \overline{\mathbb R}: F_t^{UB}(y)> F_t^{UB}(y_k)\}=(y_k,\infty]$. Let us consider $\epsilon>0$ such that $y_k +\epsilon < y_{k+1}$. In such a case, $F_{Y_{10}|D=1}(y_k+\epsilon)=F_{Y_{10}|D=1}(y_k)$, however, by applying naively the bounds to $y_k$ and $y_{k}+\epsilon$ we have:
\begin{eqnarray}
F_t^{LB}(y_k) &\leq&  F_{Y_{10}|D=1}(y_k) \leq F_t^{UB}(y_k), \text{ where } y_k \in \mathbb Y_{10|0}\label{eq1mp}\\
F_t^{LB}(y_k+\epsilon) &\leq&  F_{Y_{10}|D=1}(y_k+\epsilon) \leq F_t^{UB}(y_k+\epsilon), \text{ where } y_k +\epsilon \notin \mathbb Y_{10|0}\label{eq2mp}
\end{eqnarray}
which implies that the upper bound in (\ref{eq2mp}) is not sharp since $F_t^{UB}(y_k+\epsilon)> F_t^{UB}(y_k)$.
A valid tighter bound for $F_t^{LB}(y')$ for $y_{k}<y'<y_{k+1}$ is:
\begin{eqnarray*}
F_t^{LB}(y_k) &\leq&  F_{Y_{10}|D=1}(y') \leq F_t^{UB}(y_k),\;\; y_{k}\leq y'<y_{k+1}.
\end{eqnarray*}

Since extending the bounds in Eq. (\ref{B2mp}) to the case where $y \notin \mathbb Y_{10|0}$ provides non-sharp bounds,  we provide an alternative approach that internalizes the idea that our targeting function of interest must be right-continuous since it is a cdf.
Recall,
\begin{eqnarray}
\max_{t \in \{-T_0,\dots,0\}}F_t^{LB}(s) \leq  F_{Y_{10}|D=1}(s) \leq \min_{t \in \{-T_0,\dots,0\}}F_t^{UB}(s), \text{ for all } s \in \mathbb Y_{10|0}.
\end{eqnarray}
then for any fixed  $y \in \mathbb R$, we have:
\begin{eqnarray*}
&&\lim_{\tilde y \downarrow y}\sup\left\{\max_{t \in \{-T_0,\dots,0\}}F_t^{LB}(s): s\leq \tilde y \; \& \;  s \in \mathbb Y_{10|0} \cup\{-\infty \} \right\}\\&\leq& \lim_{\tilde y \downarrow y}\sup\left\{F_{Y_{10}|D=1}(s): s\leq \tilde y \; \& \;  s \in \mathbb Y_{10|0} \cup\{-\infty \} \right\} \leq \\ &&\lim_{\tilde y \downarrow y}\sup\left\{ \min_{t \in \{-T_0,\dots,0\}}F_t^{UB}(s): s\leq \tilde y \; \& \;  s \in \mathbb Y_{10|0} \cup\{-\infty \} \right\}, \;\; y \in \mathbb R.  
\end{eqnarray*}
Notice that because $\mathbb Y_{10|1} \subseteq \mathbb Y_{10|0}$, and $F_{Y_{10}|D=1}(\cdot)$ is a right-continuous function, we have the following equality by Lemma \ref{lem:boundq}(\ref{lem:boundq2}):
$$\lim_{\tilde y \downarrow y}\sup\left\{F_{Y_{10}|D=1}(s): s\leq \tilde y \; \& \;  s \in \mathbb Y_{10|0} \cup\{-\infty \} \right\}= F_{Y_{10}|D=1}(y) \text{ for all } y \in \mathbb R;$$ therefore the last inequality becomes:
\begin{multline*}
\lim_{\tilde y \downarrow y}\sup\left\{ \max_{t \in \{-T_0,\dots,0\}}F_t^{LB}(s): s\leq \tilde y \; \& \;  s \in \mathbb Y_{10|0} \cup\{-\infty \} \right\} \\ \leq  F_{Y_{10}|D=1}(y) \leq \lim_{\tilde y \downarrow y}\sup\left\{ \min_{t \in \{-T_0,\dots,0\}}F_t^{UB}(s): s\leq \tilde y \; \& \;  s \in \mathbb Y_{10|0} \cup\{-\infty \} \right\}, \;\; y \in \mathbb R.  
\end{multline*}
\qed
\subsection{Proof of Example \ref{ex:multidimensional}}\label{proof:ex-multi-dimensional}

We have: $C_{U_0,\tilde{U}_0,V}(u,\tilde{u},q)=C_{U_1,\tilde{U}_1,V}(u,\tilde{u},q)$ for all $(u,\tilde{u},q)\in[0,1]^3$ implies successively 
\begin{eqnarray*}
   && C_{U_0,\tilde{U}_0,V}(u,1,q)=C_{U_1,\tilde{U}_1,V}(u,1,q),\\
   && C_0(C_{U_0,\tilde{U}_0}(u,1),q)=C_1(C_{U_1,\tilde{U}_1}(u,1),q),\\
   && C_0(u,q) = C_1(u,q),\\
   && C_{Y_{00},D}(u,q) = C_{Y_{10},D}(u,q).
\end{eqnarray*}
We need to check that the Sklar theorem holds on the range in this model. We have
\begin{eqnarray*}
    \mathbb P(Y_{t0}=0,D=0) &=& \mathbb P(U_t \leq c_t, \tilde{U}_t \leq \tilde{c}_t, V\leq q),\\
    &=& C_{U_t,\tilde{U}_t,V}(c_t,\tilde{c}_t,q),\\
    &=& C_t(C_{U_t,\tilde{U}_t}(c_t,\tilde{c}_t),q),\\
    &=& C_{Y_{t0},D}(C_{U_t,\tilde{U}_t}(c_t,\tilde{c}_t),q),\\
    &=& C_{Y_{t0},D}(\mathbb P(Y_{t0}=0),q)\ \text{as }\ \mathbb P(Y_{t0}=0)=C_{U_t,\tilde{U}_t}(c_t,\tilde{c}_t),\\
    &=& C_{Y_{t0},D}(\mathbb P(Y_{t0}=0),\mathbb P(D=0))\ \text{as }\ \mathbb P(D=0)=q.    
\end{eqnarray*}
\qed
\subsection{Auxiliary lemma}
\begin{lemma}\label{lem:-X} For a random variable $X$ with cdf denoted by $F_X(x)$ for $x\in\mathbb{R}$,
    \begin{enumerate}[(i)]\item $F_X(x-)=1-F_{-X}(-x)$ for $x\in\mathbb{R}$.\label{FX-}
    \item $Q_{X}^{\mathbb{R},+}(q)=-Q_{-X}^{\mathbb{R},-}(1-q)$ for $q\in[0,1]$.\label{quantile}
\end{enumerate}
    \end{lemma}
\begin{proof} 
(i) is straightforward from the following.
\begin{eqnarray}
    F_X(x-)&=&\mathbb{P}(X<x)=1-\mathbb{P}(X\geq x)=1-\mathbb{P}(-X\leq -x)=1-F_{-X}(-x).\label{eq:FXleft}\end{eqnarray}
To show (ii), the following equality is convenient
    \begin{eqnarray}
    F_X(x)&=&1-\mathbb{P}(X>x)=1-\mathbb{P}(-X< -x)\label{eq:Fx}
\end{eqnarray}
Using the above equality, we can write $Q_{X}^{\mathbb{R},+}(q)$ as follows. First note that:
$$\{x\in\mathbb{R}:F_X(x)\leq q\}=\{x\in\mathbb{R}: 1-\mathbb{P}(-X< -x)\leq q\}=\{x\in\mathbb{R}:\mathbb{P}(-X<-x)\geq 1-q\}$$
By the above and the left-continuity of $\mathbb{P}(-X<-x)$, it follows that 
\begin{eqnarray}
    Q_{X}^{\mathbb{R},+}(q)&=&\sup\{x\in\mathbb{R}:F_X(x)\leq q\}\nonumber=\sup\{x\in\mathbb{R}:\mathbb{P}(-X<-x)\geq 1-q\}\nonumber\\
    &=&-\inf\{-x\in\mathbb{R}:\mathbb{P}(-X<-x)\geq 1-q\}=-\inf\{-x\in\mathbb{R}:\mathbb{P}(-X\leq-x)\geq 1-q\}\nonumber\\
    &=&-Q_{-X}^{\mathbb{R},-}(1-q)
\end{eqnarray}
where the first equality follows by definition. The second equality follows from \eqref{eq:FXleft}. The penultimate equality follows by the left-continuity of $\mathbb{P}(-X<-x)$ and $\mathbb{P}(-X\leq -x)$ being its right-continuous counterpart.

\end{proof} 

\section{Supplementary results for Section \ref{sec:swtt}}\label{sec:swtt_derivations}
Here, we provide the distributions of $X^u$ and $X^{\underline{u},\overline{u}}$ which are used to define the quantile-specific social welfare functions in Section \ref{sec:swtt}.

Let $X^u=Q_X^{\mathbb{R},-}(V)$, where $V\sim \mathcal{U}[0,u]$. Note that by definition, $F_{X^u}(x)=1$ for $x\geq Q_X^{\mathbb{R},-}(u)$. As for $x<Q_X^{\mathbb{R},-}(u)$, by Proposition 1(5) in \citet{Embrechts_al2013}, it follows that
\begin{align}F_{X^u}(x)=\mathbb P(Q_X^{\mathbb{R},-}(V)\leq x)=\mathbb P\left(V\leq F_X(x)\right)=\frac{F_X(x)}{u}\end{align}
As a result,
\begin{eqnarray}F_{X^u}(x)&=\left\{\begin{array}{cc}\frac{F_X(x)}{u}& \text{for }x<Q_X^{\mathbb{R},-}(u),\\1&\text{for }x\geq Q_X^{\mathbb{R},-}(u).\end{array}\right.\end{eqnarray}
For $u\in Ran F_X$, $F_{X^u}(x)=\frac{F_X(x)}{u}=\frac{F_X(x)}{F_X(Q_X^{\mathbb{R},-}(u))}=\mathbb P(X\leq x|X\leq Q_X^{\mathbb{R},-}(u))$ for any $x\leq Q_X^{\mathbb{R},-}(u)$, thereby yielding the same truncated random variable introduced in \citeauthor{Aabergeetal2013}(\citeyear{Aabergeetal2013}). For $u\notin Ran F_X$, $X^u$ remains a well-defined random variable.

Now consider $X^{\underline{u},\overline{u}}=Q_X^{\mathbb{R},-}(V)$, where $V\sim \mathcal{U}[\underline{u},\overline{u}]$. By similar arguments to the case of $X^u$, it follows that
\begin{eqnarray}F_{X^{\underline{u},\overline{u}}}(x)&=\left\{\begin{array}{cl}0&\text{for }x<Q_X^{\mathbb{R},-}(\underline{u}),\\
\frac{F_X(x)-\underline{u}}{\overline{u}-\underline{u}}& \text{for }Q_X^{\mathbb{R},-}(\underline{u})\leq x <Q_X^{\mathbb{R},-}(\overline{u}),\\
1&\text{for }x\geq Q_X^{\mathbb{R},-}(\overline{u}).\end{array}\right.\end{eqnarray}

\section{Equivalence between copula stability and CiC assumptions for continuous outcomes: General result}\label{Appen:equivalence_general}

In this section, we generalize the equivalence result in Claim \ref{claim:eq} to any continuous outcome. To do so, we rely on two lemmas. The first lemma characterizes the implication of the copula stability of $(U_t,D)$ for the copula $(Y_{t0},D)$, and vice versa, under a representation condition, specifically $(Y_{t0},D)\overset{d}{=}(Q_{Y_{t0}}^{\mathbb R,-}(U_t),D)$ for $U_t\sim\mathcal{U}[0,1]$. This lemma is of independent interest, as it demonstrates why copula stability and the CiC conditions are not equivalent outside of the continuous outcome case.
\begin{lemma}\label{lemma:representation_cs}  For $t=0,1$, consider $(Y_{t0},D)$ such that $Y_{t0}\sim F_{Y_{t0}}$ and $D$ is a binary variable with $\mathbb P(D=0)=q\in(0,1)$.  Suppose that there exist $U_t\sim \mathcal{U}[0,1]$ for $t=0,1$ such that $(Y_{t0},D)\overset{d}{=}(Q_{Y_{t0}}^{\mathbb R,-}(U_t),D)$ for $t=0,1$. 
\begin{enumerate}[(i)]
\item If $C_{Y_{00},D}(u,q)=C_{Y_{10},D}(u,q)$ for all $u\in[0,1]$, then for
 $v\in \overline{\operatorname{Ran}} (F_{Y_{00}})\cap \overline{\operatorname{Ran}} (F_{Y_{10}})$
\begin{align}C_{U_0,D}(v,q)=C_{U_1,D}(v,q).\end{align}
\label{claim:general1}
\item If $C_{U_0,D}(v,q)=C_{U_1,D}(v,q)$ for all $v\in[0,1]$, then for $u\in \overline{\operatorname{Ran}} (F_{Y_{00}})\cap \overline{\operatorname{Ran}} (F_{Y_{10}})$,
\begin{align}C_{Y_{00},D}(u,q)=C_{Y_{10},D}(u,q).\end{align} \label{claim:general2}
\end{enumerate}
\end{lemma}
\begin{proof}
(i) For $y\in\mathbb{R}$
\begin{align}&C_{Y_{t0},D}(F_{Y_t}(y),q)=F_{Y_{t0},D}(y,0)=\mathbb P(Y_{t0}\leq y, D=0)=\mathbb P(Q_{Y_{t0}}^{\mathbb R,-}(U_t)\leq y,D=0)\nonumber\\
=&\mathbb P(U_t\leq F_{Y_{t0}}(y),D=0)=C_{U_t,D}(F_{Y_{t0}}(y),q),
\label{eq:cop_invariance}
\end{align}
where the first two equalities follow by definition. The third equality follows by the assumption that $(Y_t,D)\overset{d}{=}(Q_{Y_{t0}}^{\mathbb{R},-}(U_t),D)$. The penultimate equality holds by Proposition 1(5) in \citet{Embrechts_al2013} and the right-continuity of $F_{Y_{t0}}$, which ensure that $Q_{Y_{t0}}^-(u)\leq y \Leftrightarrow u\leq F_{Y_{t0}}(y)$. As a result, for $t=0,1$, $C_{Y_{t0},D}(v,q)=C_{U_t,D}(v,q)$ for $v\in \overline{\operatorname{Ran}} (F_{Y_{t0}})$.

As a result, the dependence stability condition in Lemma \ref{lemma:representation_cs}(i), $C_{Y_{00},D}(u,q)=C_{Y_{10},D}(u,q)$ for all $u\in[0,1]$, implies the following for $v\in \overline{\operatorname{Ran}} (F_{Y_{00}})\cap \overline{\operatorname{Ran}} (F_{Y_{10}})$
\begin{align}C_{U_0,D}(v,q)=C_{Y_{00},D}(v,q)=C_{Y_{10},D}(v,q)=C_{U_1,q}(v,q),\end{align}
where the first and last equalities follow from \eqref{eq:cop_invariance}, whereas the second follows by the dependence stability assumption on $C_{Y_{t0},D}$ imposed in Lemma \ref{lemma:representation_cs}(i).

(ii) For $y\in\mathbb{R}$,
\begin{align}&F_{U_t,D}(F_{Y_{t0}}(y),0)=C_{U_t,D}(F_{Y_{t0}}(y),q)=\mathbb P(U_t\leq F_{Y_{t0}}(y),D=0)=\mathbb P(Q_{Y_{t0}}^{\mathbb R,-}(U_t)\leq y, D=0)\nonumber\\
=&\mathbb P(Y_{t0}\leq y, D=0)=C_{Y_{t0},D}(F_{Y_{t0}}(y),q),\label{eq:cop_invariance_U}\end{align}
where the first two equalities follow by definition, the third follows from  Proposition 1(5) in \citet{Embrechts_al2013} since $F_{Y_{t0}}$ is increasing and right-continuous.  The last two equalities follow by definition.

As a result, the dependence stablity condition imposed in Lemma \ref{lemma:representation_cs}(ii) implies the following for $u\in \overline{\operatorname{Ran}} (F_{Y_{00}})\cap \overline{\operatorname{Ran}} (F_{Y_{10}})$,
\begin{align}C_{Y_{00},D}(u,q)=C_{U_0,D}(u,q)=C_{U_1,D}(u,q)=C_{Y_{10},D}(u,q),\end{align}
where the first and last equalities follow from \eqref{eq:cop_invariance_U}, whereas the second follows by the dependence stability assumption on $C_{U_t,D}$ imposed in Lemma \ref{lemma:representation_cs}(ii).
\end{proof}
The following lemma is well-established in the literature. We provide a proof for completeness, as we cannot find a reference for it.
\begin{lemma}\label{lem:as_representation}
    $Y_{t0}=Q_{Y_{t0}}^{\mathbb R,-}(F_{Y_{t0}}(Y_{t0}))$ a.s.
\end{lemma}
\begin{proof}
From Proposition 2(2) in \citet{Embrechts_al2013}, $Y_{t0}$ has the same distribution as $\tilde{Y}_{t0}\equiv Q_{Y_{t0}}^{\mathbb R,-}(U)$ where $U \sim \mathcal{U}_{[0,1]}.$ We can write $$Q_{Y_{t0}}^{\mathbb R,-}(F_{Y_{t0}}(\tilde{Y}_{t0}))=Q_{Y_{t0}}^{\mathbb R,-}(F_{Y_{t0}}(Q_{Y_{t0}}^{\mathbb R,-}(U))).$$
From the definition of $Q_{Y_{t0}}^{\mathbb R,-}$, we have $F_{Y_{t0}}(Q_{Y_{t0}}^{\mathbb R,-}(U)) \geq U$. Therefore, since the quantile function is nondecreasing, $Q_{Y_{t0}}^{\mathbb R,-}(F_{Y_{t0}}(Q_{Y_{t0}}^{\mathbb R,-}(U))) \geq Q_{Y_{t0}}^{\mathbb R,-}(U),$ which implies $Q_{Y_{t0}}^{\mathbb R,-}(F_{Y_{t0}}(\tilde{Y}_{t0})) \geq \tilde{Y}_{t0}.$

On the other hand, since $F_{\tilde{Y}_{t0}}(y) \geq F_{\tilde{Y}_{t0}}(y)$, from the definition of the quantile function, it follows that $Q_{\tilde{Y}_{t0}}^{\mathbb R,-}(F_{\tilde{Y}_{t0}}(y)) \leq y.$ This latter inequality implies $Q_{\tilde{Y}_{t0}}^{\mathbb R,-}(F_{\tilde{Y}_{t0}}(\tilde{Y}_{t0})) \leq \tilde{Y}_{t0}.$ Finally, since $F_{\tilde{Y}_{t0}}=F_{Y_{t0}}$, we have $Q_{Y_{t0}}^{\mathbb R,-}(F_{Y_{t0}}(\tilde{Y}_{t0})) \leq \tilde{Y}_{t0}.$ As a result, we have $Q_{Y_{t0}}^{\mathbb R,-}(F_{Y_{t0}}(\tilde{Y}_{t0})) = \tilde{Y}_{t0}.$

Now, define $S=\left\{y \in \mathbb R: Q_{Y_{t0}}^{\mathbb R,-}(F_{Y_{t0}}(y)) = y \right\}$. We have 
$$\mathbb P(Y_{t0} \in S)=\mathbb P(\tilde{Y}_{t0} \in S)=\mathbb P\left(Q_{Y_{t0}}^{\mathbb R,-}(F_{Y_{t0}}(\tilde{Y}_{t0})) = \tilde{Y}_{t0}\right)=1,$$
where the first equality follows from $F_{\tilde{Y}_{t0}}=F_{Y_{t0}}$ and the second follows from the definition of $S$.

Hence, 
$$1=\mathbb P(Y_{t0} \in S)=\mathbb P\left(Q_{Y_{t0}}^{\mathbb R,-}(F_{Y_{t0}}(Y_{t0}))=Y_{t0}\right).$$

\end{proof}

Finally, we proceed to demonstrate the equivalence between conditional time invariance and copula stability for continuous outcomes.
\begin{claim}
Assume $F_{Y_{t0}}$ is a continuous outcome distribution. Then, (i) and (ii) from Claim \ref{claim:eq} are equivalent almost surely.
\end{claim}
\begin{proof}
$\Longrightarrow$
First, we note that for any potential outcome $Y_{t0}$, we have $Y_{t0}=Q_{Y_{t0}}^{\mathbb R,-}(F_{Y_{t0}}(Y_{t0}))$ almost surely by Lemma \ref{lem:as_representation}. Suppose now that the potential outcome $Y_{t0}$ is continuous. Then $U_{t0}\equiv F_{Y_{t0}}(Y_{t0}) \sim \mathcal U_{[0,1]},$ and $\overline{\operatorname{Ran}} F_{Y_{t0}}=[0,1]$, which implies $\overline{\operatorname{Ran}} F_{Y_{00}} \cap \overline{\operatorname{Ran}} F_{Y_{10}} = [0,1].$ Hence, $(Y_{t0},D)=(Q_{Y_{t0}}^{\mathbb R,-}(U_t),D)$ a.s., which implies $(Y_{t0},D)\overset{d}{=}(Q_{Y_{t0}}^{\mathbb R,-}(U_t),D)$. The conditions of Lemma \ref{lemma:representation_cs} hold.

Therefore, from Lemma \ref{lemma:representation_cs} (\ref{claim:general1}), if $C_{Y_{00},D}(u,q)=C_{Y_{10},D}(u,q)$ for all $u\in[0,1]$, then for $C_{U_{00},D}(u,q)=C_{U_{10},D}(u,q)$ for all $u\in \overline{\operatorname{Ran}} F_{Y_{00}} \cap \overline{\operatorname{Ran}} F_{Y_{10}} = [0,1]$. Note that $C_{U_{00},D}(u,q)=C_{U_{10},D}(u,q)$ is equivalent to $\mathbb P(U_{00} \leq u, V \leq q)=\mathbb P(U_{10} \leq u,  V \leq q),$ i.e., $\mathbb P(U_{00} \leq u, D=0)=\mathbb P(U_{10} \leq u,  D=0)$. Since $\mathbb P(U_{00}\leq u)=\mathbb P(U_{10}\leq u)$, we have $\mathbb P(U_{00}\leq u)-\mathbb P(U_{00} \leq u, D=0)=\mathbb P(U_{10}\leq u)-\mathbb P(U_{10} \leq u, D=0),$ i.e., $\mathbb P(U_{00} \leq u, D=1)=\mathbb P(U_{10} \leq u,  D=1)$. As a result, if $C_{Y_{00},D}(u,q)=C_{Y_{10},D}(u,q)$ for all $u\in[0,1]$, then $\mathbb P(U_{00} \leq u, D=d)=\mathbb P(U_{10} \leq u,  D=d)$ for all $u$ and $d$, i.e., $U_{00} \mid D=d \sim U_{10} \mid D=d.$

$\Longrightarrow$ From Lemma \ref{lemma:representation_cs} (\ref{claim:general2}), if $C_{U_{00},D}(u,q)=C_{U_{10},D}(u,q)$ for all $u\in[0,1]$, then for all $u\in \overline{\operatorname{Ran}} F_{Y_{00}} \cap \overline{\operatorname{Ran}} F_{Y_{10}} = [0,1],$ $C_{Y_{00},D}(u,q)=C_{Y_{10},D}(u,q)$. Since we have shown above that $C_{Y_{00},D}(u,q)=C_{Y_{10},D}(u,q)$ and $U_{00} \mid D=d \sim U_{10} \mid D=d$ are equivalent (given that $U_{00} \sim U_{10}$), we conclude that if $U_{00} \mid D=d \sim U_{10} \mid D=d$, then $C_{Y_{00},D}(u,q)=C_{Y_{10},D}(u,q)$ for all $u \in [0,1]$. 

\end{proof}
\section{CS bounds for binary outcomes}
\subsection{Bounds for binary outcomes with multiple pre-treatment periods}
Suppose that the outcome of interest is binary, i.e., $Y_t\in\{0,1\}$ for all periods $t$. We have  
\begin{eqnarray*}
    Q^{\mathbb Y,+}_{Y}(u)&=& -\infty \mathbbm{1}\{u < \mathbb P(Y=0)\}+1-\mathbbm{1}\{\mathbb P(Y=0) \leq u <1\} \; \\\; 
    Q^{\mathbb R,+}_{Y}(u)&=& \mathbbm{1}\{\mathbb P(Y=0) \leq u <1\} + \infty \mathbbm{1}\{u=1\}\; \\\; 
    Q^{\mathbb R,-}_{Y}(u)&=& 1- \mathbbm{1}\{\mathbb P(Y=0) \geq u\}=\mathbbm{1}\{\mathbb P(Y=0) < u\}.
\end{eqnarray*}
The following corollary of Theorem \ref{GMain:theorem} holds.
\begin{corollary}\label{Cor:binary}
  Suppose that $\mathbb Y_{t0|1} \subseteq \mathbb Y_{t0|0}$ for $t \in \{-T_0,\ldots,0\}$.
 If  Assumptions \ref{inc} and \ref{Gstab} hold, then  the bounds  for the unobserved counterfactual $\mathbb P(Y_{10}=0|D=1)$ are: 

\begin{eqnarray*}
\max_{t \in \{-T_0,\dots,0\}} F^{LB}_t(0) \leq \mathbb P(Y_{10}=0|D=1) \leq \min_{t \in \{-T_0,\dots,0\}} F^{UB}_t(0),
\end{eqnarray*}
where 
\begin{eqnarray*}
    F_t^{LB}(0)&=& F_{Y_t|D=1}\bigg(-\infty \mathbbm{1}\{\mathbb P(Y_1=0|D=0) < \mathbb P(Y_t=0|D=0)\}\\
    && \qquad +1-\mathbbm{1}\{\mathbb P(Y_t=0|D=0) \leq \mathbb P(Y_1=0|D=0) <1\}\bigg)\\
    &=&
    \mathbb P(Y_t=0|D=1)\mathbbm{1}\{\mathbb P(Y_t=0|D=0) \leq \mathbb P(Y_1=0|D=0) <1\}\\
    && + \mathbbm{1}\{\mathbb P(Y_1=0|D=0)=1\}\; \\\; 
    F_t^{UB}(0)&=& F_{Y_t|D=1}\left(1-\mathbbm{1}\{\mathbb P(Y_t=0|D=0) \geq \mathbb P(Y_1=0|D=0)\}\right),\\
    &=& \mathbb P(Y_t=0|D=1)\mathbbm{1}\{\mathbb P(Y_t=0|D=0) \geq \mathbb P(Y_1=0|D=0)\}\\
    && + \mathbbm{1}\{\mathbb P(Y_t=0|D=0) < \mathbb P(Y_1=0|D=0)\}.
\end{eqnarray*}
\end{corollary}

The following example demonstrates a case where (distributional) DiD would yield a negative counterfactual probability $\mathbb P(Y_{10}=1|D=1)$, which does not equal to the true counterfactual probability, whereas the multi-period CS bounds would contain $\mathbb P(Y_{10}=1|D=1)$.
\begin{example}\label{ex:multi_DGP}
    Consider the following anti-double hurdle model 
    \begin{eqnarray*}
        Y_t =1-\mathbbm{1}\{-0.5\mathbbm{1}\{t=1\} D + U_t \leq 0.6^{|t|+1},-0.4\mathbbm{1}\{t=1\} D + \tilde{U}_t \leq 0.7^{|t|+1}\},\ \ \  t\in\{-1,0,1\}.
    \end{eqnarray*}
Suppose $D=\mathbbm{1}\{V>0.5\}$, $U_t, \tilde{U}_t, V \sim \mathcal U_{[0,1]}$, and $C_{U_t,\tilde{U}_t,V}(u,\tilde{u},v)=C_t\left(C_{U_t,\tilde{U}_t}(u,\tilde{u}),v\right)$ where $C_t(u,v)=\left(u^{-1/2}+v^{-1/2}-1\right)^{-2}$ and $C_{U_t,\tilde{U}_t}(u,\title{u})=u \tilde{u}$. Define $C_{Y_{t0},D}(u,q)\equiv C_t(u,q)$.
The probability that $Y_t=0$ in the control and treatment group is given by:
\begin{eqnarray*}
    \mathbb P(Y_t=0|D=0) &=& \mathbb P(U_t \leq 0.6^{|t|+1}, \tilde{U}_t \leq 0.7^{|t|+1}| V\leq 0.5),\\
    &=& 2\left(0.42^{-(|t|+1)/2}+0.5^{-1/2}-1\right)^{-2}\\
    \mathbb P(Y_t=0|D=1) &=& \mathbb P(-0.5\mathbbm{1}\{t=1\} + U_t \leq 0.6^{|t|+1},-0.4\mathbbm{1}\{t=1\} + \tilde{U}_t \leq 0.7^{|t|+1} | V>0.5),\\
    &=&\frac{1}{\mathbb P(V>0.5)}\mathbb P(-0.5\mathbbm{1}\{t=1\} + U_t \leq 0.6^{|t|+1},-0.4\mathbbm{1}\{t=1\} + \tilde{U}_t \leq 0.7^{|t|+1}, V>0.5),\\
    &=& 2\bigg[(0.6^{|t|+1}+0.5\mathbbm{1}\{t=1\})(0.7^{|t|+1}+0.4\mathbbm{1}\{t=1\})\\
    && - \left([(0.6^{|t|+1}+0.5\mathbbm{1}\{t=1\})(0.7^{|t|+1}+0.4\mathbbm{1}\{t=1\})]^{-1/2}+0.5^{-1/2}-1\right)^{-2}\bigg],
\end{eqnarray*}
whereas the true counterfactual probability $\mathbb P(Y10=1|D=1)$ is $0.9715$, and the true ATT is $-0.4079$.
\end{example}

\noindent The (distributional) DiD estimand would yield a negative counterfactual probability $$\mathbb P(Y_{10}=0\mid D=1)=\mathbb P(Y_{1}=0\mid D=0)+\mathbb P(Y_{0}=0\mid D=1)-\mathbb P(Y_{0}=0\mid D=0)=-0.0845<0$$
and the corresponding DiD estimand $\theta_{DiD}=-0.5209$, which is not equal to the true counterfactual probability.

Our identifying assumptions (horizontal copula stability assumption + its strict monotonicity in the first argument) hold, our CS bounds are valid and yield $\mathbb P(Y_{10}=1\mid D=1)^{LB}=0.6821$, $\mathbb P(Y_{10}=1\mid D=1)^{UB}=1$, and $ATT \in [-0.4364, -0.1185]$. 

Note that since the untreated potential outcome depends on two-dimensional unobservables $(U_t, \tilde{U}_t)$, an informed researcher would not use the CiC approach.
\subsubsection{Point-identification in the binary outcome case}
We provide a condition under which we can achieve point-identification when the outcome variable is binary. 
\begin{corollary}\label{cor:pointid_binary}
Suppose that $\mathbb Y_{t0|1} \subseteq \mathbb Y_{t0|0}$ for $t \in \{-T_0,\ldots,0\}$, and Assumptions \ref{inc} and \ref{Gstab} hold. Suppose there exists $t_0 \in \{-T_0,\ldots,0\}$ such that $\mathbb P(Y_{t_0}=0|D=0)=\mathbb P(Y_1=0|D=0)$ (this can be checked). Then, $\mathbb P(Y_{10}=0|D=1)=\mathbb P(Y_{t_0}=0|D=1)$.
\end{corollary}

\begin{proof}
The condition in the corollary implies $F^{LB}_{t_0}(0)=F_{t_0}^{UB}(0)=\mathbb P(Y_{t_0}=0|D=1)$ (from the definition of $F^{LB}_{t_0}(0)$ and $F_{t_0}^{UB}(0)$). Therefore, 
$$\min_{t \in \{-T_0,\dots,0\}} F^{UB}_t(0) \leq F^{UB}_{t_0}(0)=F^{LB}_{t_0}(0) \leq \max_{t \in \{-T_0,\dots,0\}} F^{LB}_t(0).$$
From Corollary \ref{result:fals-test}, we must have under our identifying assumptions
$$\max_{t \in \{-T_0,\dots,0\}} F^{LB}_t(0) \leq \min_{t \in \{-T_0,\dots,0\}} F^{UB}_t(0).$$
Therefore, the following equality holds. 
$$\max_{t \in \{-T_0,\dots,0\}} F^{LB}_t(0) = \min_{t \in \{-T_0,\dots,0\}} F^{UB}_t(0).$$
Hence,
$$\max_{t \in \{-T_0,\dots,0\}} F^{LB}_t(0) = \min_{t \in \{-T_0,\dots,0\}} F^{UB}_t(0)=F^{UB}_{t_0}(0)=F^{LB}_{t_0}(0)=\mathbb P(Y_{t_0}=0|D=1).$$

\end{proof}

Notice that the condition in Corollary \ref{cor:pointid_binary} is satisfied in Example \ref{ex:multi_DGP} for $-T_0=-1$. Hence, point identification is achieved with two pre-treatments periods $\{-1,0\}$.

\subsection{Comparison with \cite{wooldridge2023simple}}
In the binary outcome setting, our (horizontal) copula stability assumption states: $C_{Y_{00},D}(u,q)=C_{Y_{10},D}(u,q)$ for $u \in [0,1]$, where $q\equiv \mathbb P(D=0)$. The \cite{wooldridge2023simple} parallel trends assumption states that there is a known, strictly increasing, continuously differentiable function $G(.)$ such that 
\begin{eqnarray}\label{WPT}
    G^{-1}(\mathbb E[Y_{10} \vert D=1])-G^{-1}(\mathbb E[Y_{00} \vert D=1])=G^{-1}(\mathbb E[Y_{10} \vert D=0])-G^{-1}(\mathbb E[Y_{00} \vert D=0]). 
\end{eqnarray} 
In general, the two assumptions are not nested. 
To illustrate his assumption, \cite{wooldridge2023simple} considers the following specification
\begin{eqnarray*}
    Y_{t0} &=& \mathbbm{1}\{Y_{t0}^* >0 \},\ \ t=0,1,\\
    Y_{t0}^* &=& \alpha + \beta D +\gamma\cdot t + U_t,\ a.s. \ \ t=0,1,   
\end{eqnarray*}
where $U_0$, $U_1$ are continuous and independent of $D$, and $\alpha$ and $\beta$ are fixed; $U_0$, $U_1$ are identically distributed with a known strictly increasing cdf $F$.

In this example, $\mathbb E[Y_{t0}\vert D=d]=1-F[-(\alpha+\beta d+\gamma\cdot t)]\equiv G(\alpha+\beta d+\gamma\cdot t)$. 
While the usual linear PT holds for $Y_{t0}^*$, it generally fails for $Y_{t0}$. But, the \cite{wooldridge2023simple} PT condition in Equation \eqref{WPT} holds.

Since $U_0$ and $U_1$ are continuous and independent of $D$ with strictly increasing cdfs, then $C_{Y_{00},D}(u,q)=u\cdot q=C_{Y_{10},D}(u,q)$ for some $u$. The model is therefore consistent with (horizontal) copula stability assumption. However, our assumption does not require that the copula  $C_{Y_{00},D}(u,q)$ be known. Furthermore, our assumption does not require that $U_0$ and $U_1$ have the same marginal distribution, as we allow $U_0$ and $U_1$ to follow different distributions. Finally, if $U_0$ and $U_1$ have the same marginal distribution $F$, but the researcher does not know $F$, \citeauthor{wooldridge2023simple}'s (\citeyear{wooldridge2023simple}) approach would not identify the counterfactual quantity $\mathbb E[Y_{t0}\vert D=1]$.

\section{Sufficient condition for copula stability in Example \ref{ex:firm_wages}}\label{proof:mwexample}
Suppose $Y_{10}=(1+R_f)Y_{00},$ where $R_f > 0,$ $R_f \perp (Y_{00},D)$, and $RanF_{Y_{10}}=RanF_{Y_{00}}=[0,1]$. We have 
\begin{eqnarray*}
C_{Y_{10},D}(u,q)&=& \int C_{Y_{10},D \vert R_f=r}(u,q) dF_{R_f}(r),\\
   C_{Y_{10},D \vert R_f=r}(u,q) &=& \mathbb P(Y_{10} \leq Q_{Y_{10}\vert R_f=r}^{\mathbb R,-}(u),D \leq Q_{D \vert R_f=r}^{\mathbb R,-}(q) \vert R_f=r).
\end{eqnarray*}

\begin{eqnarray*}
   \mathbb P(Y_{10} \leq Q_{Y_{10} \vert R_f=r}^{\mathbb R,-}(u),D \leq Q_{D \vert R_f=r}^{\mathbb R,-}(q) \vert R_f=r)
    &=& \mathbb P((1+r)Y_{00} \leq Q_{Y_{10} \vert R_f=r}^{\mathbb R,-}(u),D \leq Q_{D}^{\mathbb R,-}(q) \vert R_f=r),\\
    &=& \mathbb P\left(Y_{00} \leq \frac{Q_{Y_{10}\vert R_f=r}^{\mathbb R,-}(u)}{1+r},D=0 \vert R_f=r\right),
\end{eqnarray*}
where the first equality because $R_f \perp D$, and the second holds because $Q_{D}^{\mathbb R,-}(q)=0$. 

\begin{eqnarray*}
   Q_{Y_{10}\vert R=r}^{\mathbb R,-}(u) &=& \inf\left\{y \in \mathbb R: F_{Y_{10}\vert R_f=r}(y) \geq u\right\},\\
   &=& \inf\left\{y \in \mathbb R: \mathbb P(Y_{10} \leq y \vert R_f=r) \geq u\right\},\\
   &=& \inf\left\{y \in \mathbb R: \mathbb P((1+r)Y_{00} \leq y \vert R_f=r) \geq u\right\},\\
   &=& \inf\left\{y \in \mathbb R: \mathbb P(Y_{00} \leq \frac{y}{1+r} \vert R_f=r) \geq u\right\},\\
   &=& \inf\left\{y \in \mathbb R: F_{Y_{00}\vert R_f=r}\left(\frac{y}{1+r}\right) \geq u\right\},\\
   &=& \inf\left\{(1+r)y' \in \mathbb R: F_{Y_{00}\vert R_f=r}\left(\frac{(1+r)y'}{1+r}\right) \geq u\right\},\\
   &=& (1+r)\inf\left\{y' \in \mathbb R: F_{Y_{00}\vert R_f=r}\left(y'\right) \geq u\right\},\\
   &=& (1+r)Q_{Y_{00}}^{\mathbb R,-}(u), 
\end{eqnarray*}
where the last equality holds because $R_f \perp Y_{00}$.

Hence,
\begin{eqnarray*}
   \mathbb P(Y_{10} \leq Q_{Y_{10} \vert R_f=r}^{\mathbb R,-}(u),D \leq Q_{D \vert R_f=r}^{\mathbb R,-}(q) \vert R_f=r)
    &=& \mathbb P(Y_{00} \leq Q_{Y_{00}}^{\mathbb R,-}(u),D=0 \vert R_f=r),\\
    &=& \mathbb P(Y_{00} \leq Q_{Y_{00}}^{\mathbb R,-}(u),D=0),\\
    &=& C_{Y_{00},D}(u,q),
\end{eqnarray*}
where the second equality holds because $R_f \perp (Y_{00},D)$.  Therefore,
\begin{eqnarray*}
    C_{Y_{10},D}(u,q) &=& \int C_{Y_{00},D}(u,q) dF_{R_f}(r),\\
    &=& C_{Y_{00},D}(u,q) \int  dF_{R_f}(r)=C_{Y_{00},D}(u,q),
\end{eqnarray*}
where the last equality holds because $\int  dF_{R_f}(r)=1$.

\section{Supplementary numerical illustration}
\subsection{Numerical illustration of identification result}\label{app:num_illustration_identification}
To provide a graphical illustration of the identification result, it is helpful to consider a numerical example motivated by our minimum wage setting. Suppose that both treatment and control groups have a pre-existing minimum wage set at $c_{0}$ in the pre-treatment period ($t=0$). In the post-treatment period ($t=1$), the minimum wage increases for the treatment group to $c_{1}$. 

Following the conceptual framework presented in Figure \ref{fig:Ceng}, we expect to find bunching at the relevant minimum wage. Due to the presence of such a threshold policy in both periods, bunching at the relevant threshold is prevalent in all observed distributions as demonstrated by Figure \ref{fig:mwexample_observed}(a)--(d). The counterfactual distribution, $F_{Y_{10}|D=1}$, presented in Figure \ref{fig:mwexample_observed}(d) also exhibits a discontinuity at the pre-treatment policy threshold $c_0$. 

While the distributions in Figure \ref{fig:mwexample_observed}(a)-(d) satisfy copula stability (Assumption \ref{stab}), a visual inspection of $F_{Y_{t0}|D=d}$ for $t=0,1$ and $d=0,1$ demonstrates our point that this assumption is compatible with time and group heterogeneity in the distribution of the potential outcomes. What it requires, however, is the time-invariance of the horizontal copula $C_{Y_{t0},D}(\cdot,q)$ and subsequently the rank mapping between the control and treatment group's distribution, $\Gamma(\cdot)$. Figure \ref{fig:mwexample_observed}(e) and \ref{fig:mwexample_observed}(f) plot $C_{Y_{00},D}(\cdot,q)$ and $\Delta(\cdot)$ analytically, respectively. Figure \ref{fig:mwexample_observed}(g) and \ref{fig:mwexample_observed}(h) plot the mappings $F_{Y_{00}}(y)\mapsto F_{Y_{00}|D=0}(y)$ and $F_{Y_{00}|D=0}(y)\mapsto F_{Y_{00}|D=1}(y)$, which point-identify $C_{Y_{00},D}(\cdot,q)$ on $Ran F_{Y_{00}}$ and $\Gamma(\cdot)$ on $RanF_{Y_{00}|D=0}$, respectively. 

To bound the counterfactual distribution $F_{Y_{10}|D=1}$, we transport the dependence structure from the pre-treatment period to the post-treatment period. We can point-identify $F_{Y_{10}|D=1}(y)$ for $y\in\mathbb{Y}_{10|0}$ with $F_{Y_{10}|D=0}(y)\in RanF_{Y_{00}|D=0}\cap RanF_{Y_{10}|D=0}$. Outside of this intersection, we have to extend the (horizontal) copula. Since there are multiple extensions possible, we can only partially identify $F_{Y_{10}|D=1}$ as demonstrated in Figure \ref{fig:mwexample_observed}(i).

\begin{figure}[htbp]\caption{Numerical Minimum-Wage Example: Observed and counterfactual distributions}
\vspace{0.2cm}
    \begin{tabular}{cccc}\\
\scriptsize{(a) $F_{Y_{00}|D=0}$} & \scriptsize{(b) $F_{Y_{10}|D=0}$}&\scriptsize{(c) $F_{Y_{00}|D=1}$} &\scriptsize{(d) $F_{Y_{11}|D=1}$\& $F_{Y_{10}|D=1}$}\\
\includegraphics[width=3.5cm]{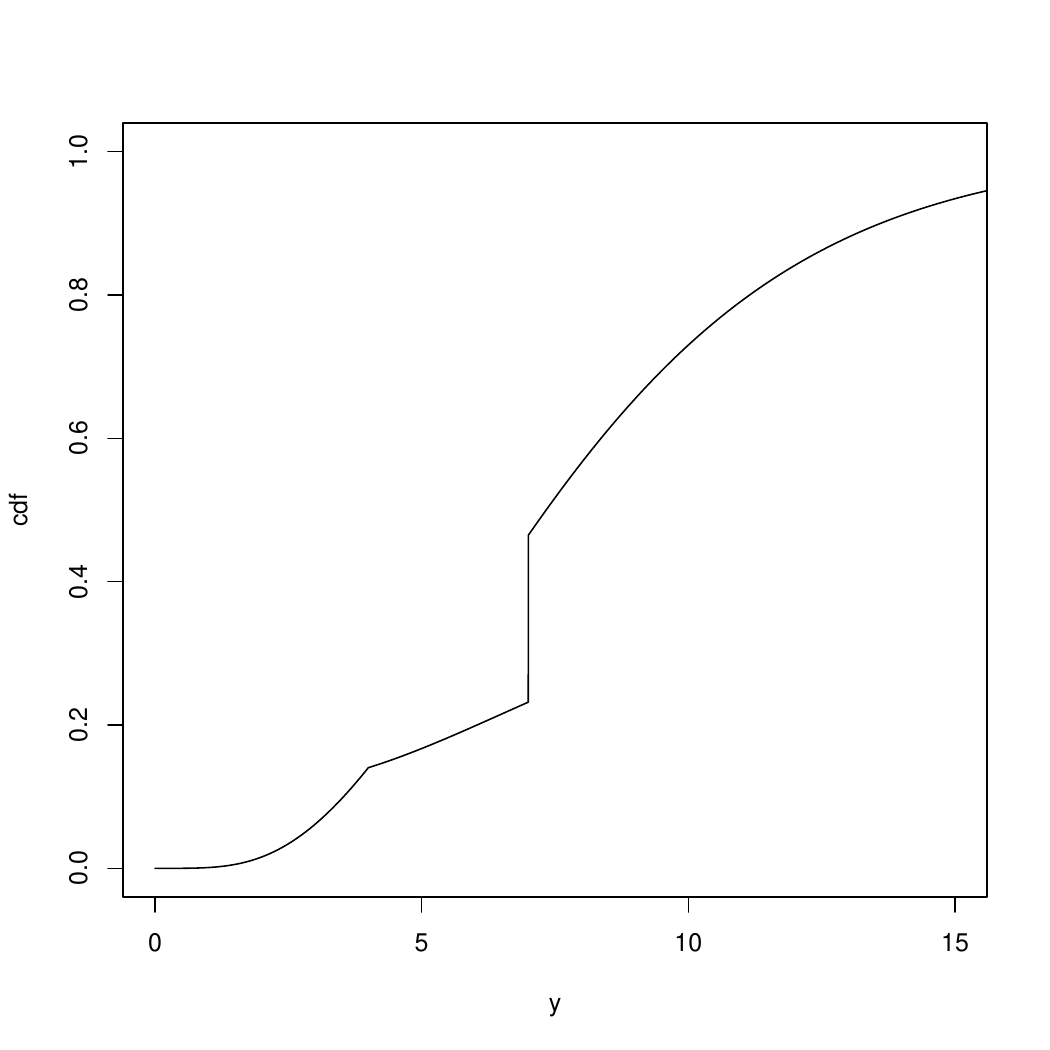}&\includegraphics[width=3.5cm]{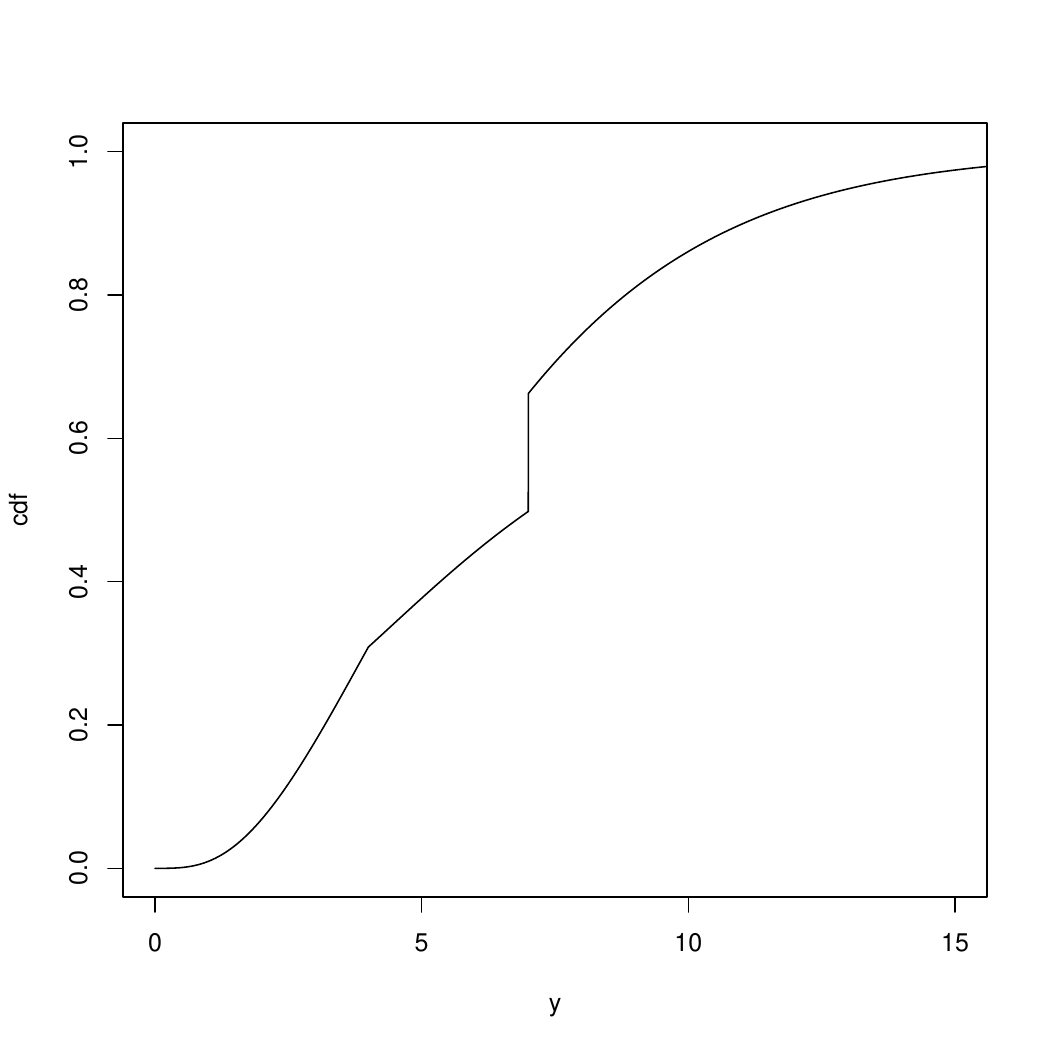}&
\includegraphics[width=3.5cm]{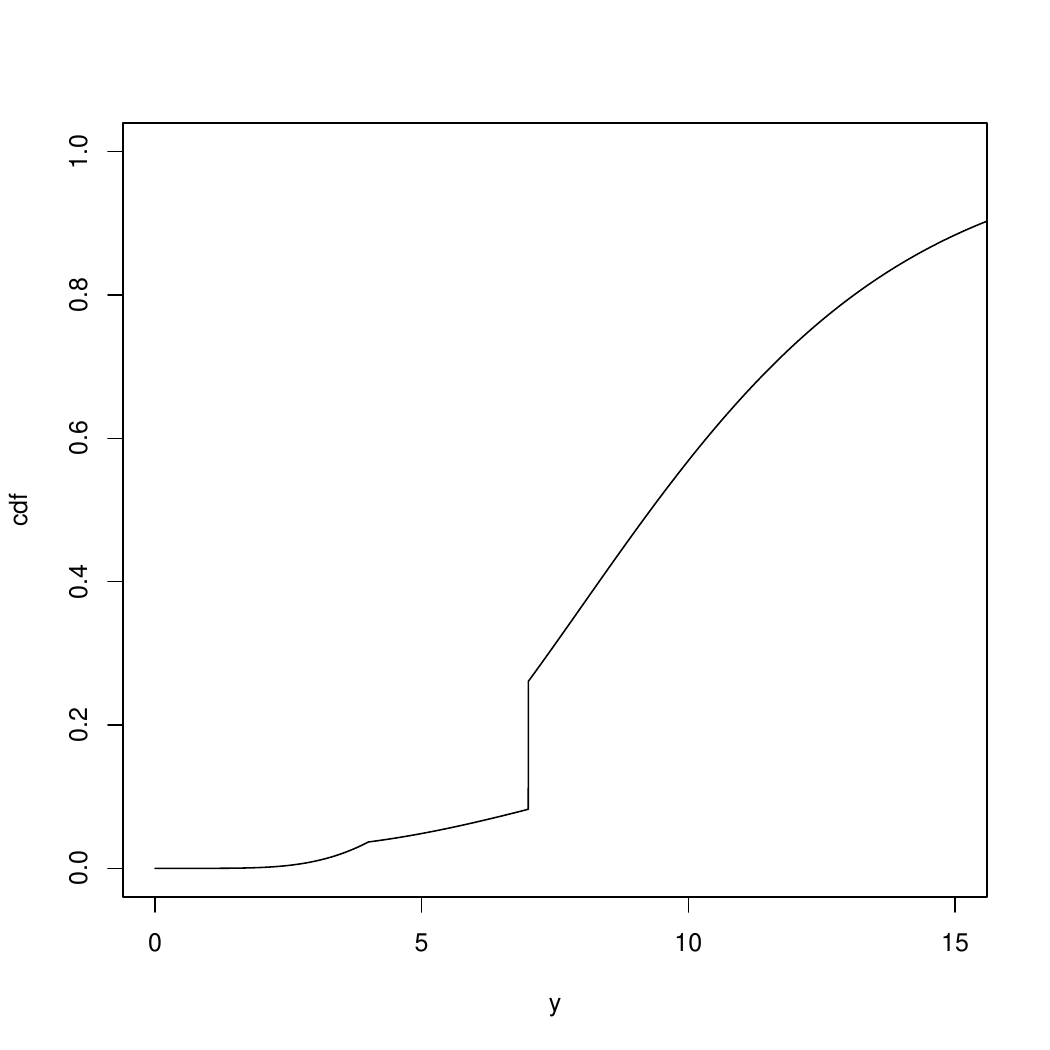}&
\includegraphics[width=3.5cm]{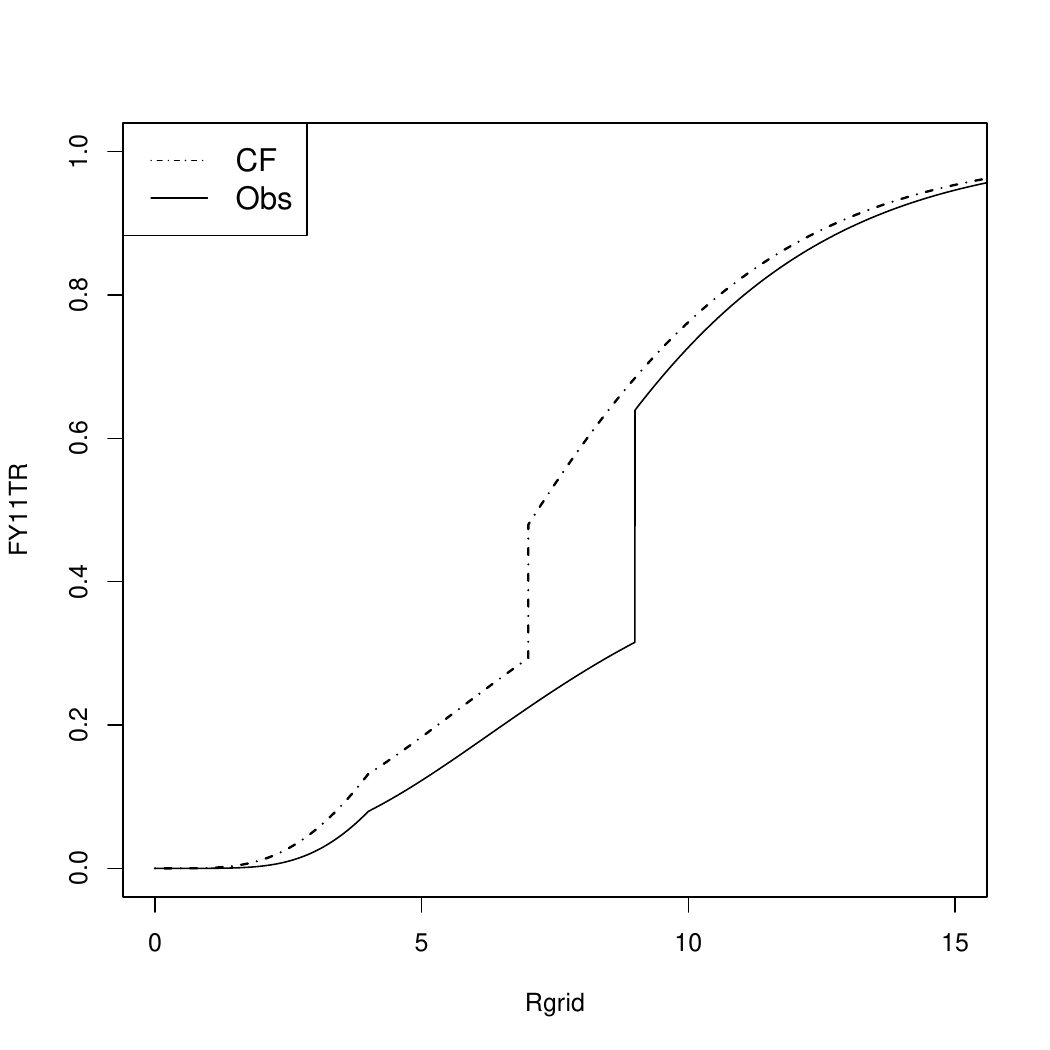}\\
\\
\\
\scriptsize (e) $C_{Y_{00},D}(\cdot,q)$ (analytical)&\scriptsize (f) $\Gamma(\cdot)$ (analytical) &\scriptsize (g) $F_{Y_{00}}(y)\mapsto F_{Y_{00}|D=0}(y)$ &\scriptsize (h) $F_{Y_{00}|D=0}(y)\mapsto F_{Y_{00}|D=1}(y)$ \\
\includegraphics[width=3.5cm]{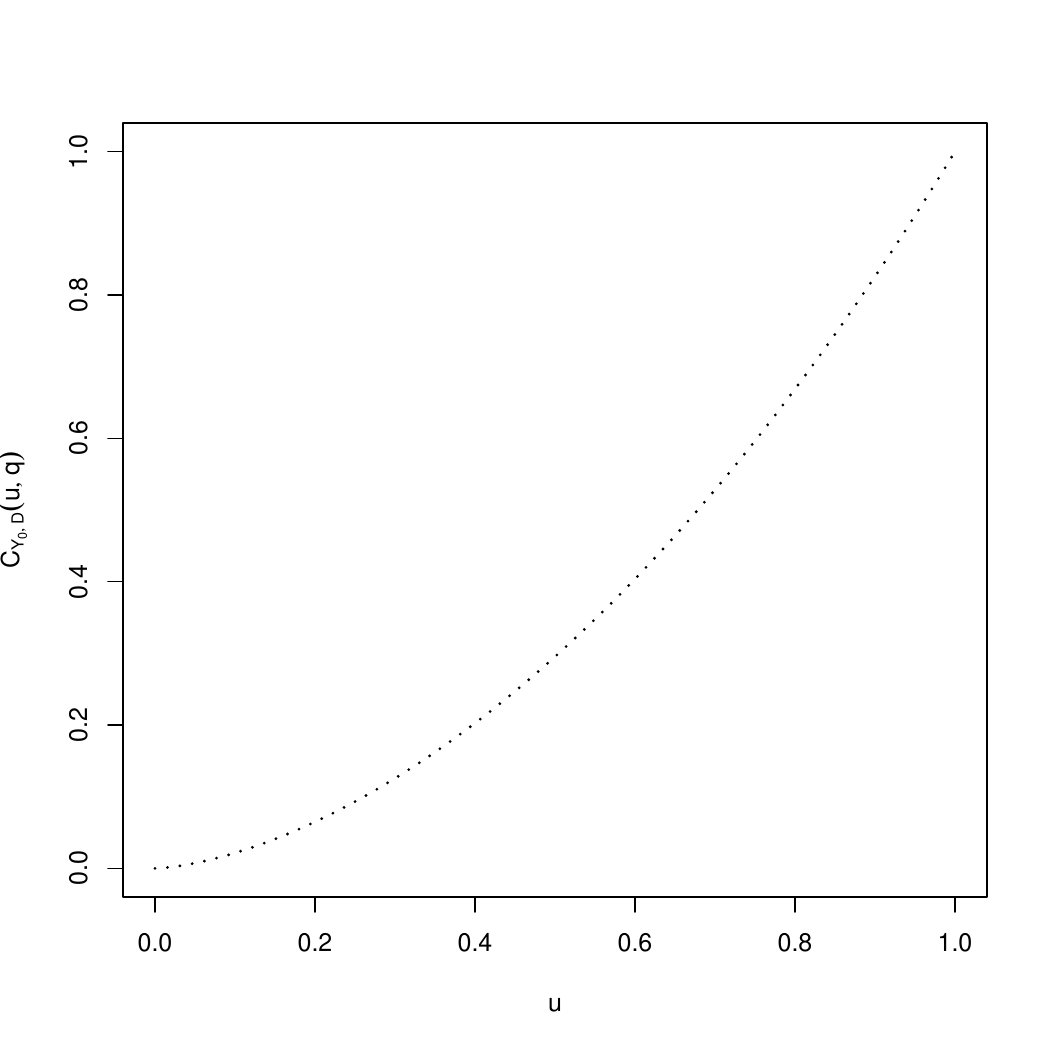}&\includegraphics[width=3.5cm]{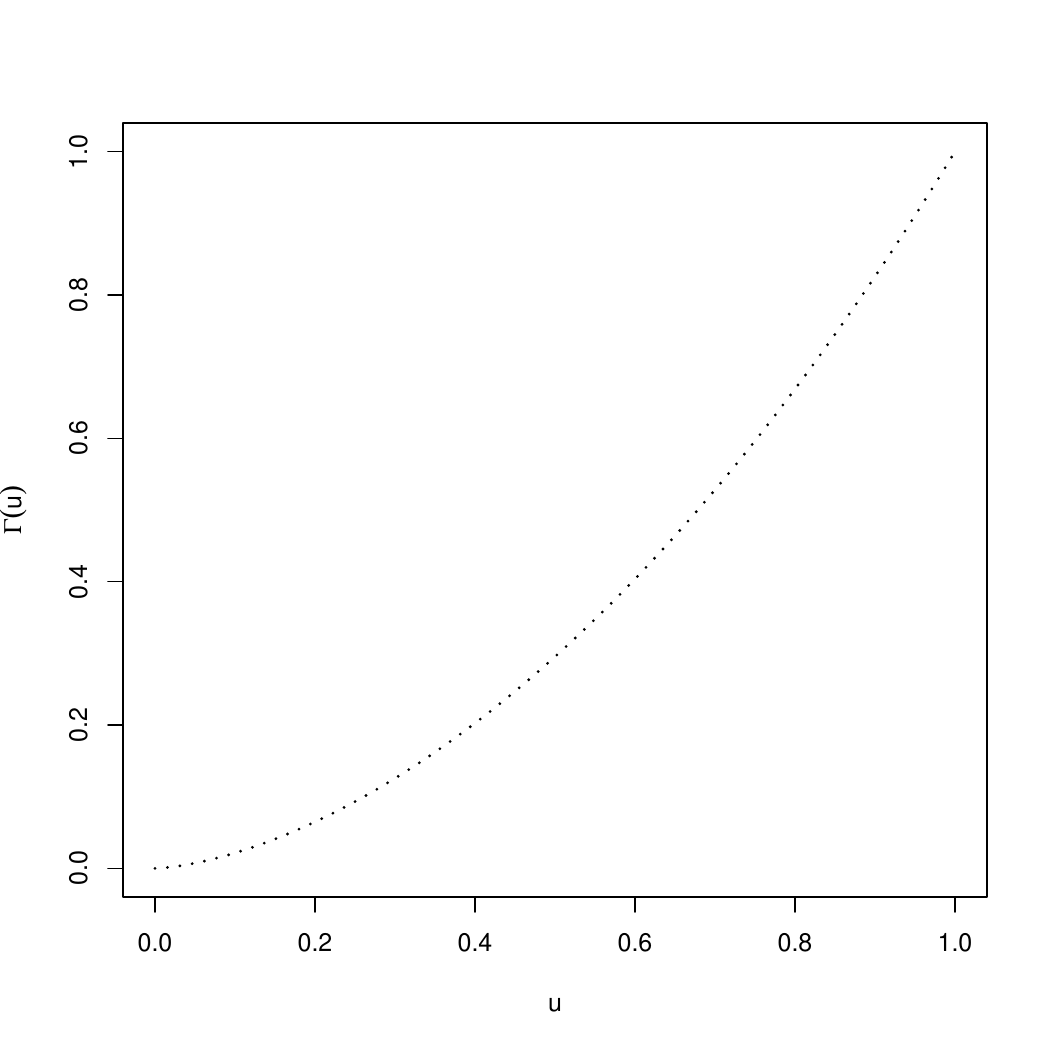}&\includegraphics[width=3.5cm]{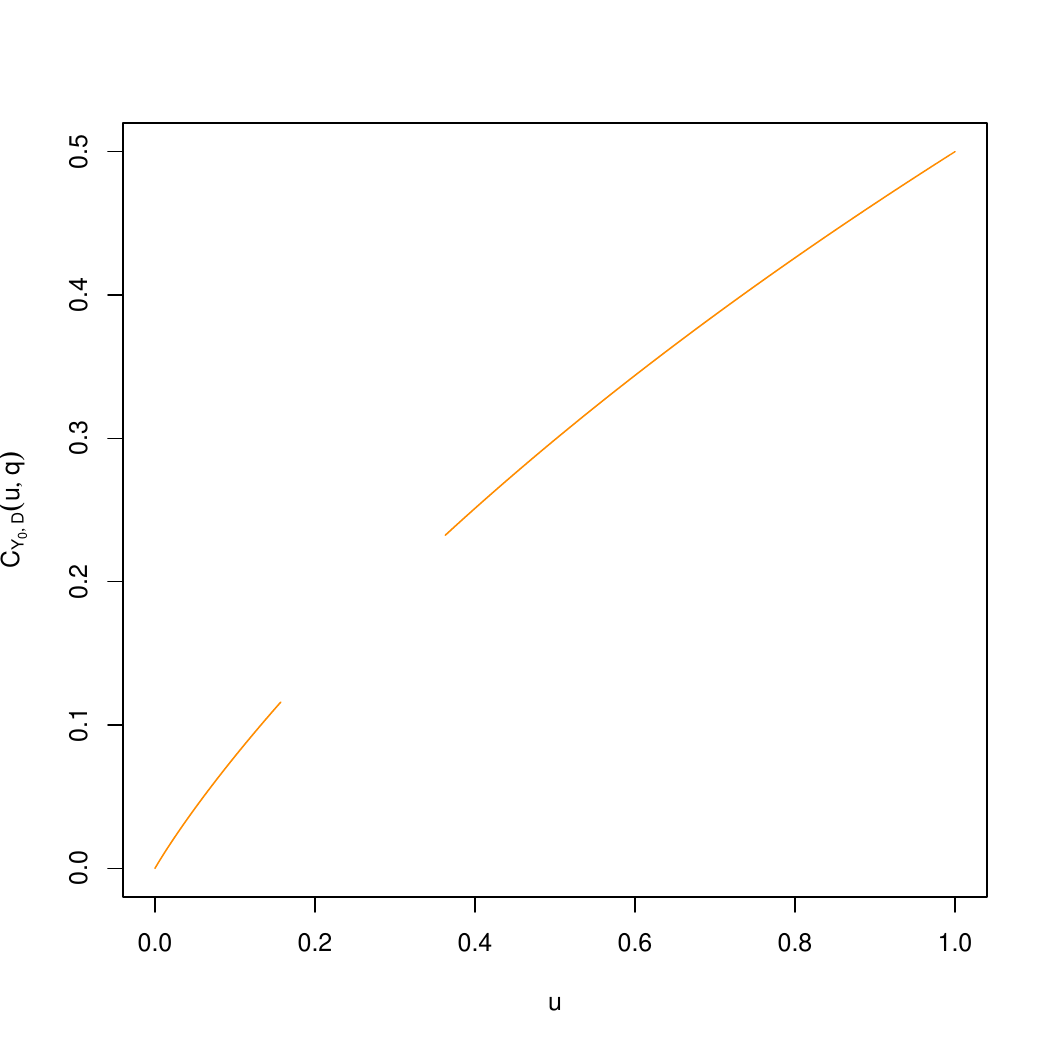}&\includegraphics[width=3.5cm]{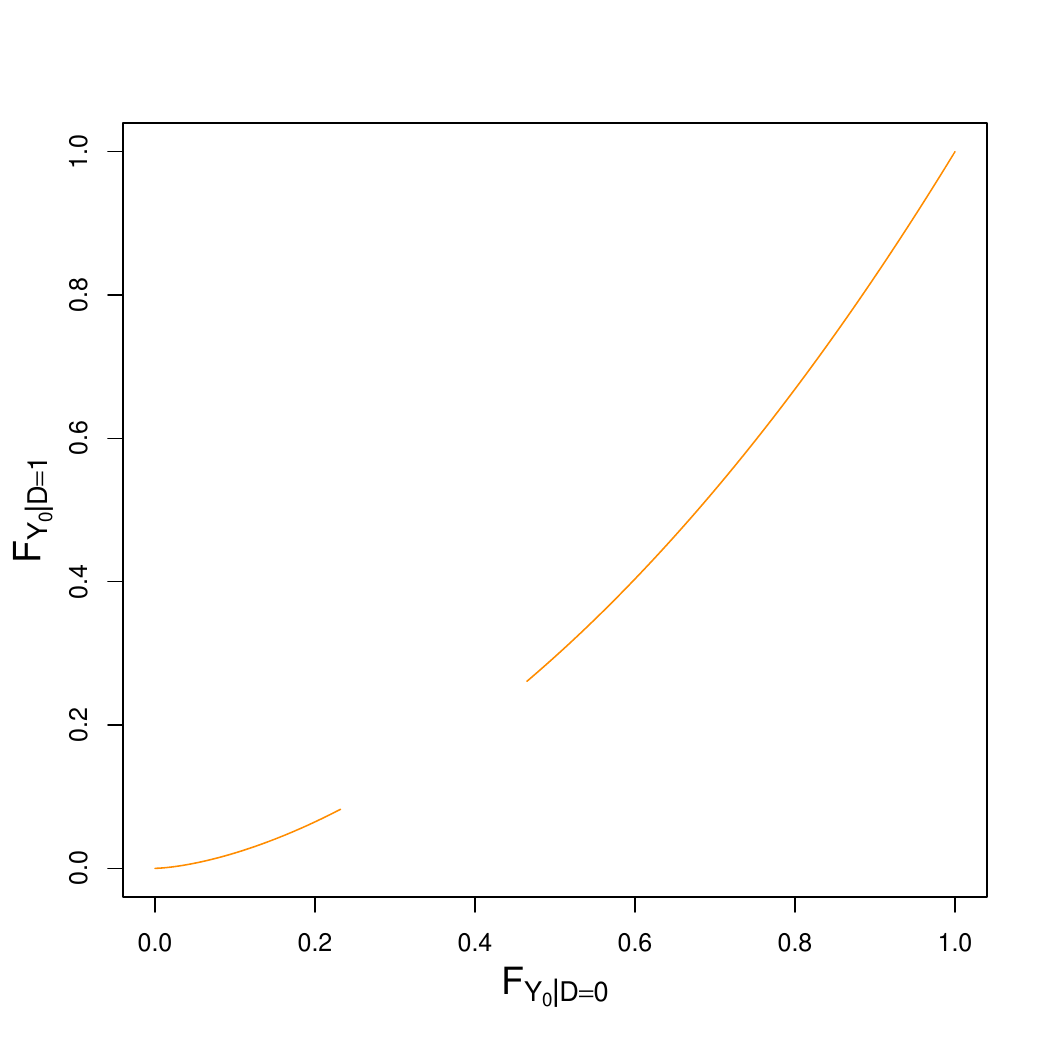}\\
\\
&\multicolumn{2}{c}{\scriptsize (i) CS Bounds on $F_{Y_{10}|D=1}$}&\\
&\multicolumn{2}{c}{\includegraphics[width=6cm]{Figures/FTCSbounds0_mwexample_0.5_2_7_7_0.75_0.5_9_7.pdf}}&\\
\multicolumn{4}{c}{\parbox{0.95\textwidth}{\scriptsize{
\emph{Notes}: Dotted black curves demonstrate curves that depend on unobservables. In Panel (d), $CF$ denotes the counterfactual distribution $F_{Y_{10}|D=1}$. The copula and potential outcome distributions are specified as in Figure \ref{fig:mwexample_multiT0} for $t=0,1$. }}}\\
\end{tabular}\\
\medskip
\label{fig:mwexample_observed}
\end{figure}

\subsection{Numerical illustration of violation of Assumption \ref{cop:mon} in the multiple pre-treatment case}\label{app:num_illustration_monviolation} Here we consider a case where copula stability holds for both pre-treatment periods, but the strict monotonicity of the copula is violated. The violation of the strict monotonicity of the copula leads to support violations in this case.  Unlike the previous two cases, the CS lower and upper bounds obtained from one pre-treatment period cross as in Panels (a) and (b) in Figure \ref{fig:mwexample_multiT0_monviolation}, demonstrating that it is possible to detect a violation of Assumption \ref{cop:mon} when relying on a single pre-treatment period. The testable restriction of Assumptions \ref{cop:mon} and \ref{Gstab} as well as the support condition demonstrates violations in Figure \ref{fig:mwexample_multiT0_monviolation}(f). Note that Figures \ref{fig:mwexample_multiT0_monviolation}(d) and \ref{fig:mwexample_multiT0_monviolation}(e) show that the mappings $C_{Y_{t0},D}(\cdot,q)$ ($\Gamma_t(\cdot)$) are equal for $t=-1,0$ (at the respective intersection of their ranges), a consequence of copula stability. They indicate, however, that the copula is not strictly monotonic (Figure \ref{fig:mwexample_multiT0_monviolation}(d)).

    \begin{figure}[htbp]\caption{CS bounds in the minimum-wage numerical example with Assumption 2 violated but CS holding for all periods}
    \vspace{0.3cm}
    {\scriptsize{\begin{tabular}{ccc}
   (a) Using $t\in\{-1,1\}$&(b) Using $t\in\{0,1\}$&(c) Using $t\in\{-1,0,1\}$\\\\
    \includegraphics[width=5cm]{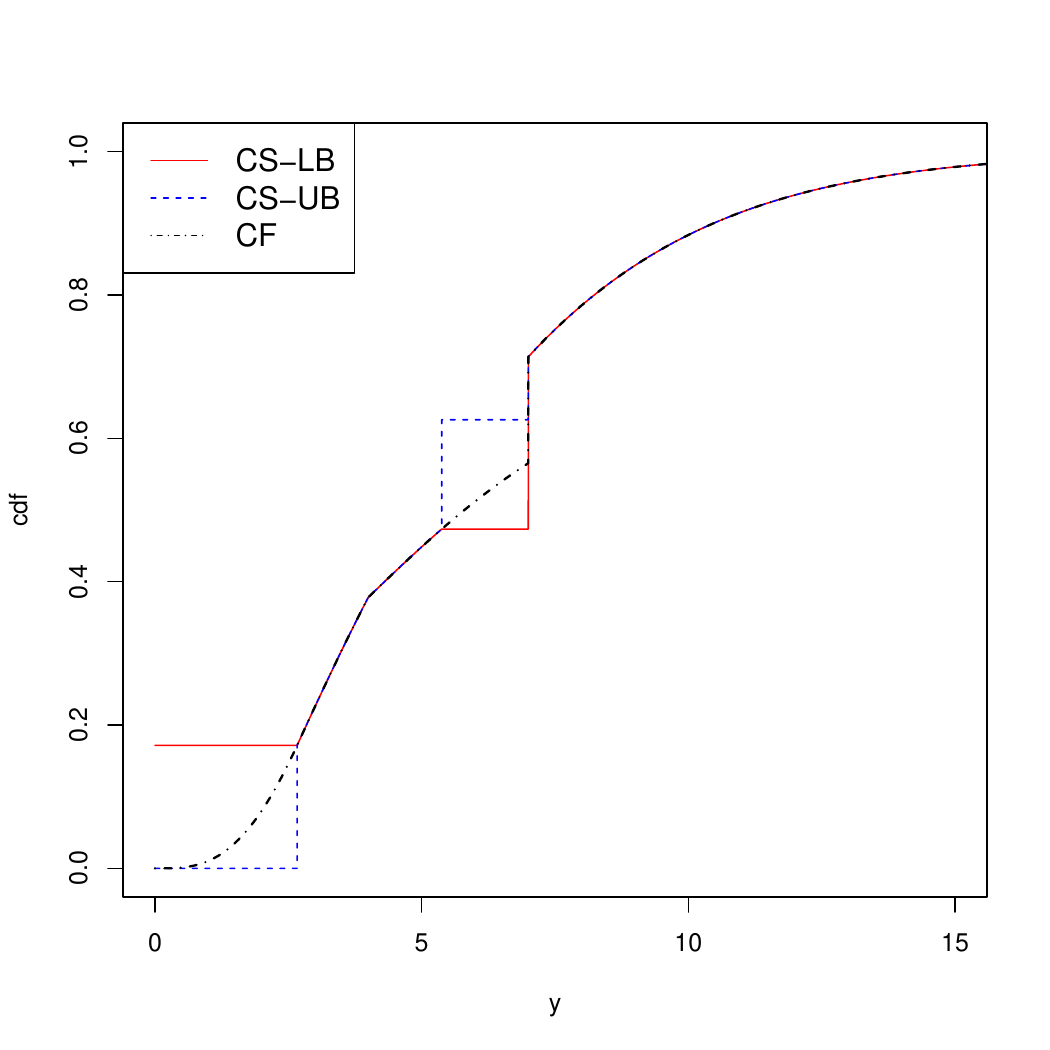}&  \includegraphics[width=5cm]{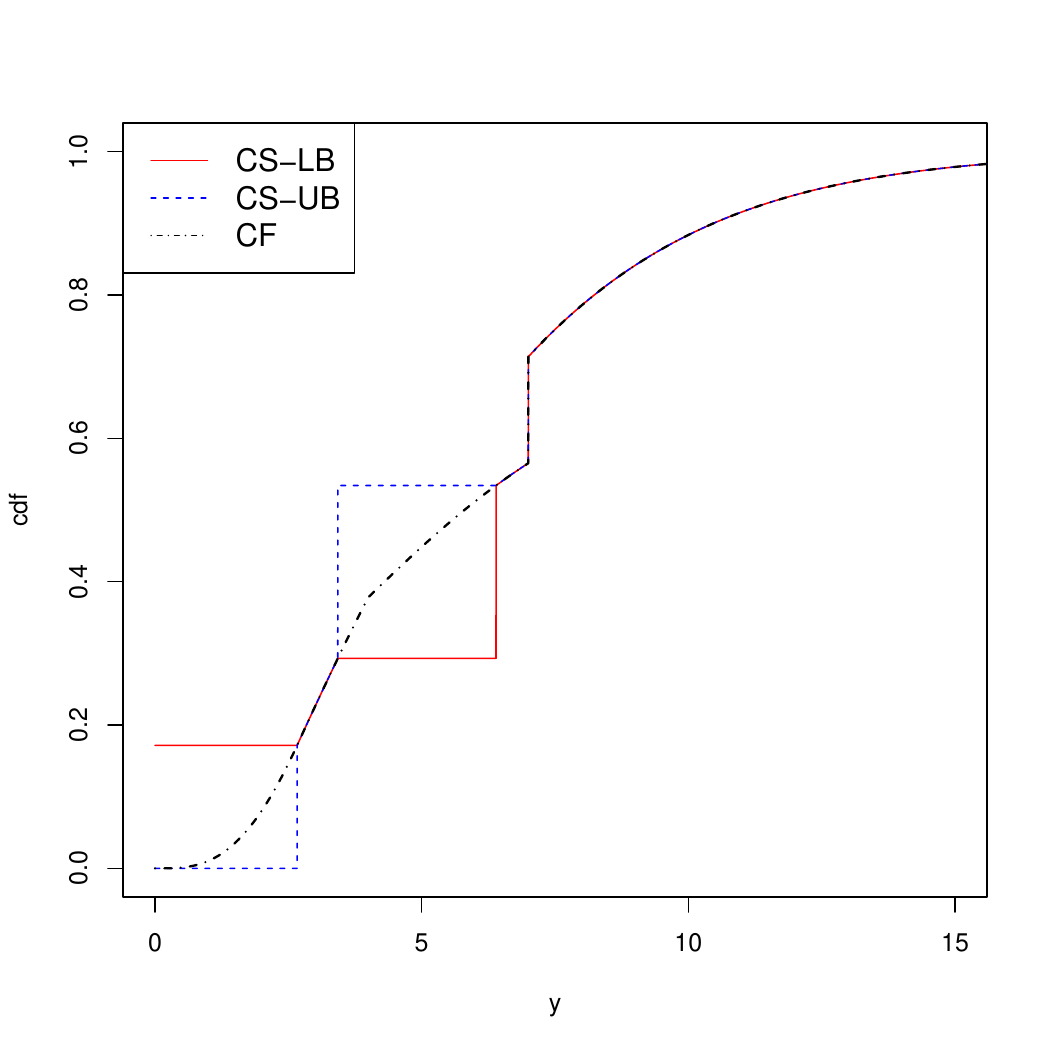}&    
    \includegraphics[width=5cm]{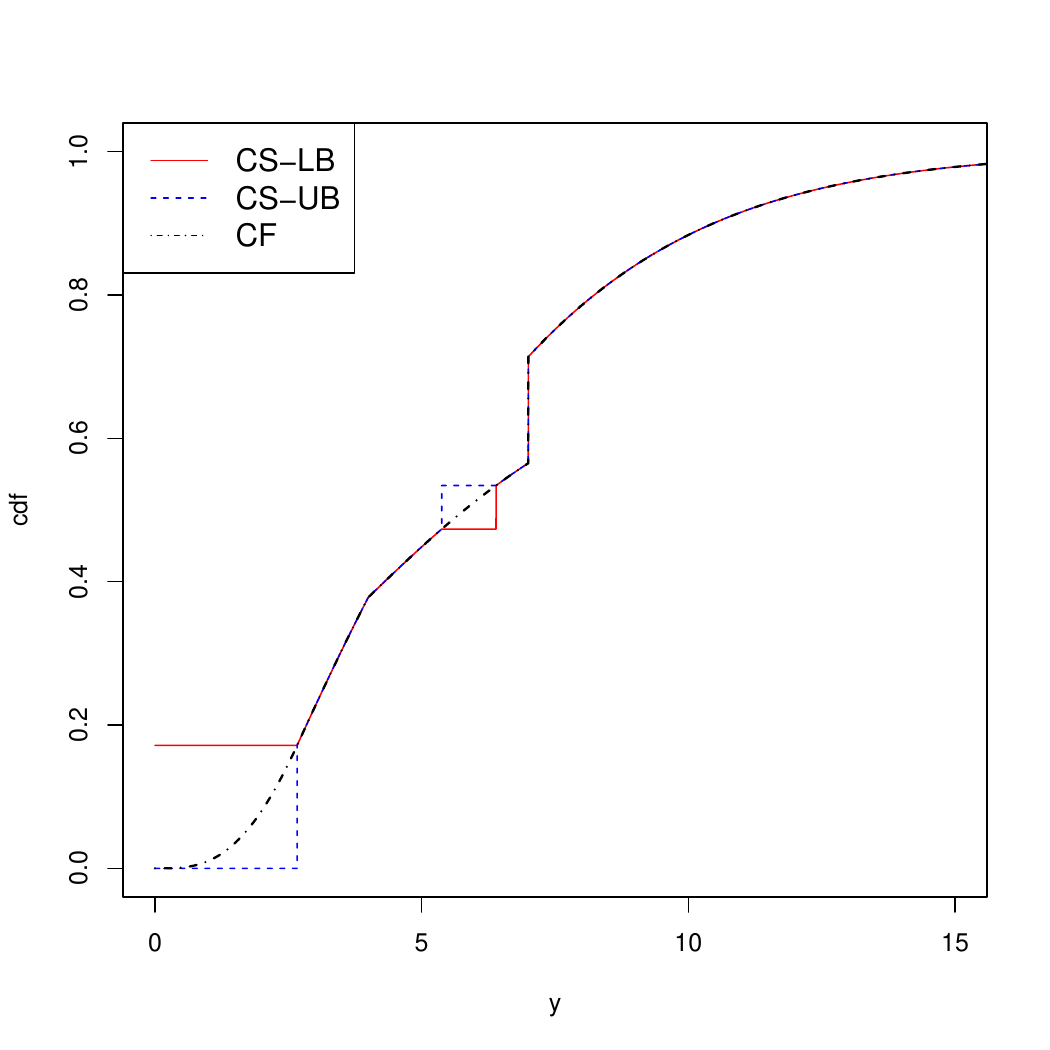}\\
    (d) $F_{Y_{t0}}(y)\mapsto F_{Y_{t0},D}(y,0)$&(e) $F_{Y_{t0}|D=0}(y)\mapsto F_{Y_{t0}|D=1}(y)$&(f) Model Testable Restriction ($\Delta$)\\
\includegraphics[width=5cm]{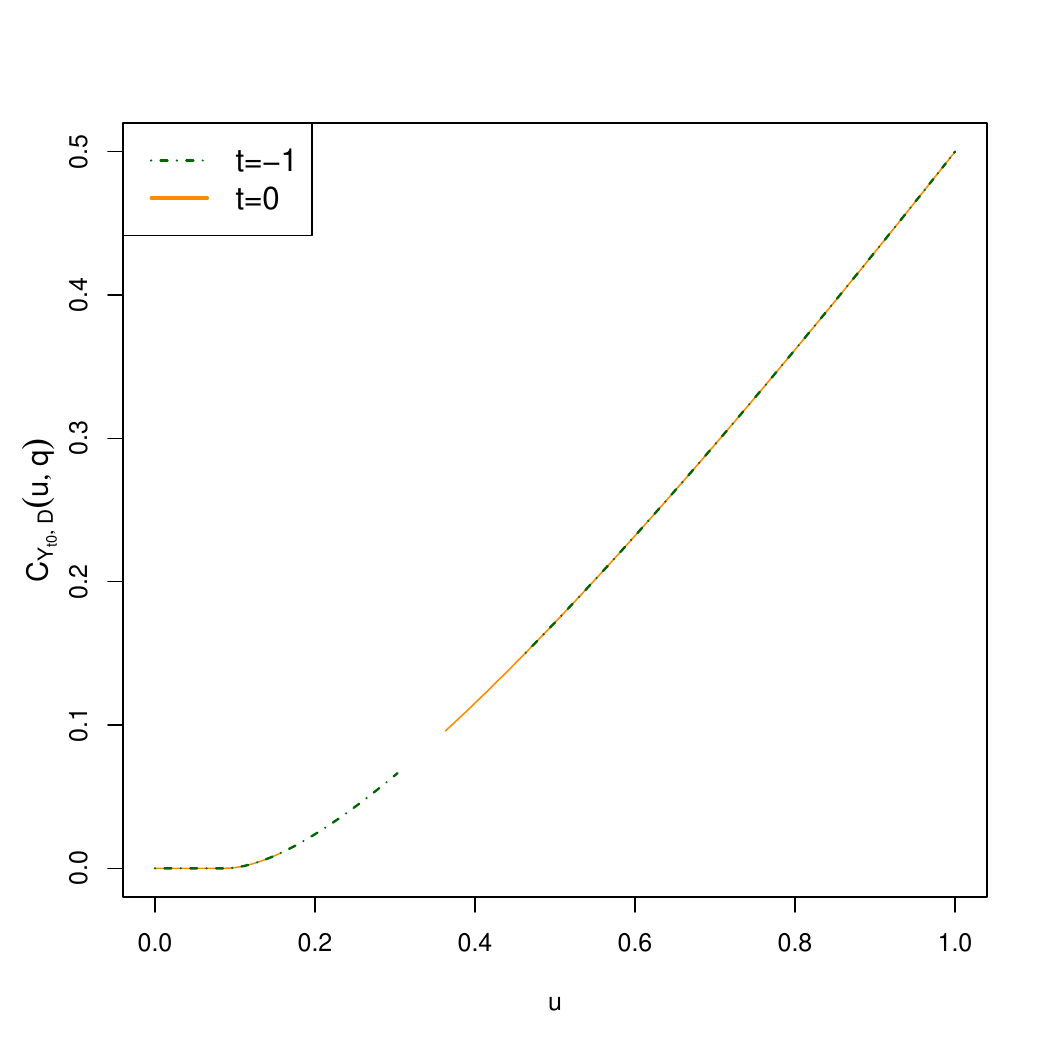}&\includegraphics[width=5cm]{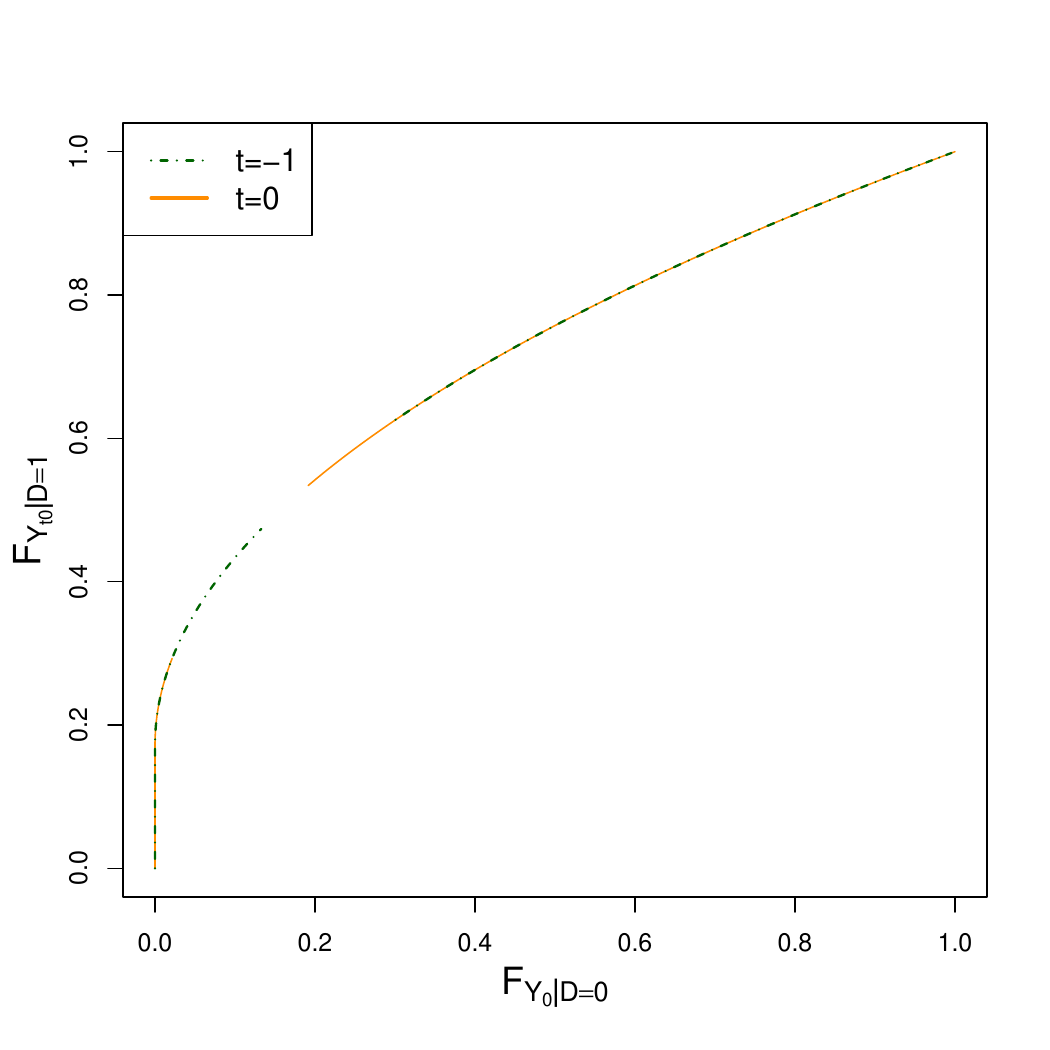}&\includegraphics[width=5cm]{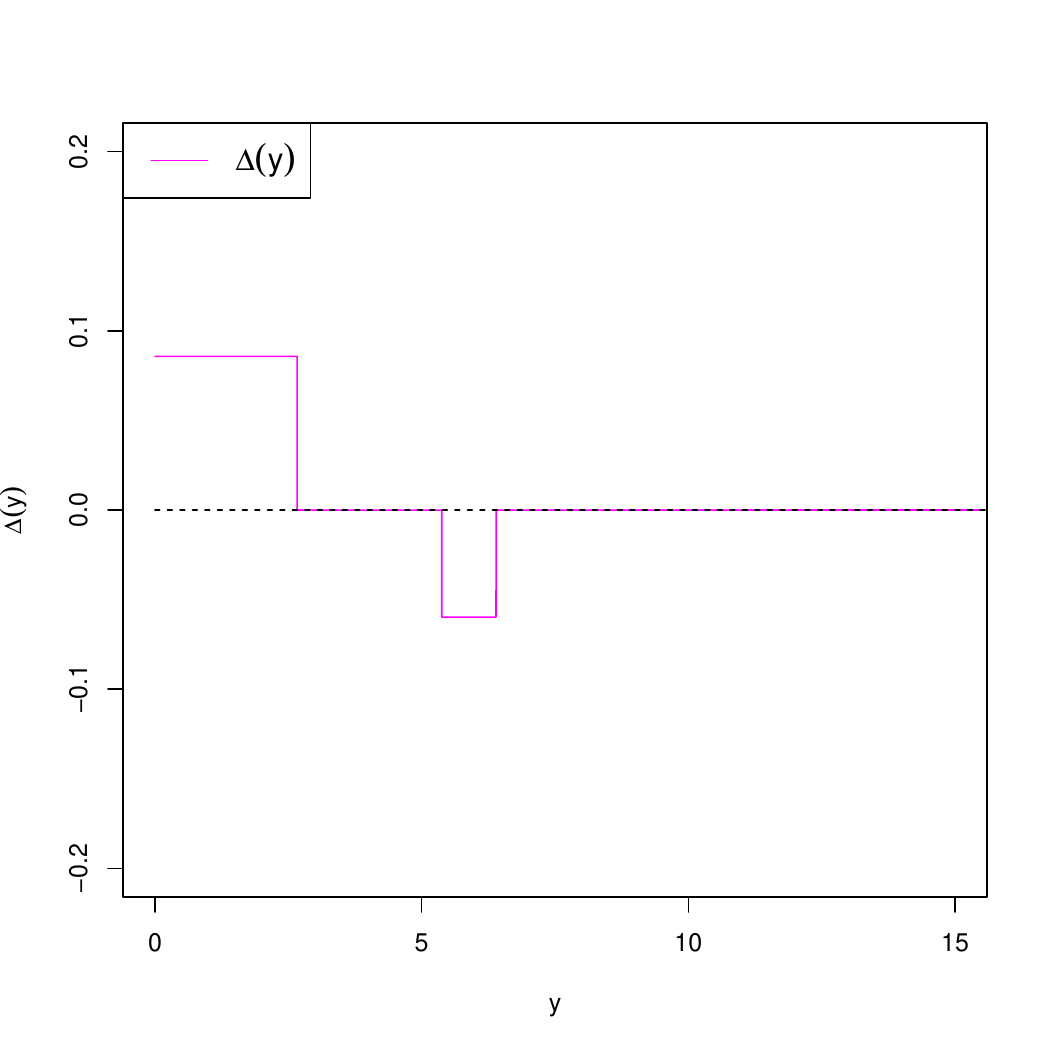}\\
\multicolumn{3}{c}{\parbox{\textwidth}{\scriptsize{\emph{Notes}: To satisfy the copula stability assumption in periods $t\in\{-1,0,1\}$ while violating the strict monotonicity of the copula, we set $C_{Y_{-1,0},D}=C_{Y_{00},D}=C_{Y_{10},D}$ to be the Clayton copula with $\theta=-0.5$. The potential outcome distributions for the treatment and control groups are generated as described in Figure \ref{fig:mwexample_multiT0}. }}}\\

    \end{tabular}}}\\
    \medskip
    \label{fig:mwexample_multiT0_monviolation}
    \end{figure}
\section{Supplementary empirical analysis}
\subsection{Supplementary Figures for Section \ref{sec:empirical}}
We include the CS bounds, distributional DiD and CiC estimates of the counterfactual distributional in Figure \ref{fig:distribution_multiT0}.
\begin{figure} [htbp]\caption{Empirical Application: Observed and Counterfactual Estimates (States with Pre-MW $\geq $ \$8)}
\vspace{0.25cm}{\footnotesize{
\begin{tabular}{cc}
(a) CS Bounds using 2010 pre-treatment period&(b) CS Bounds using 2011 pre-treatment period\\
\includegraphics[width=7cm]{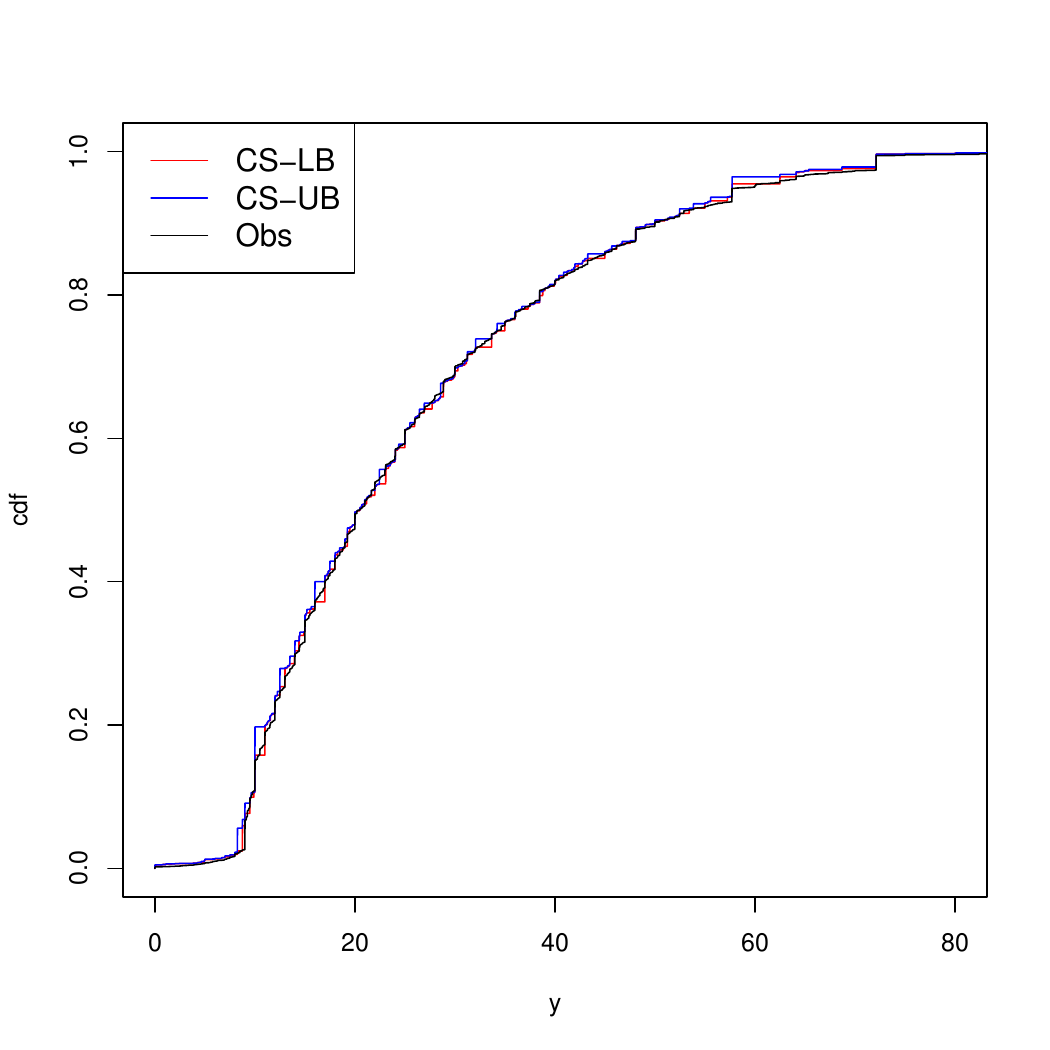}&\includegraphics[width=7cm]{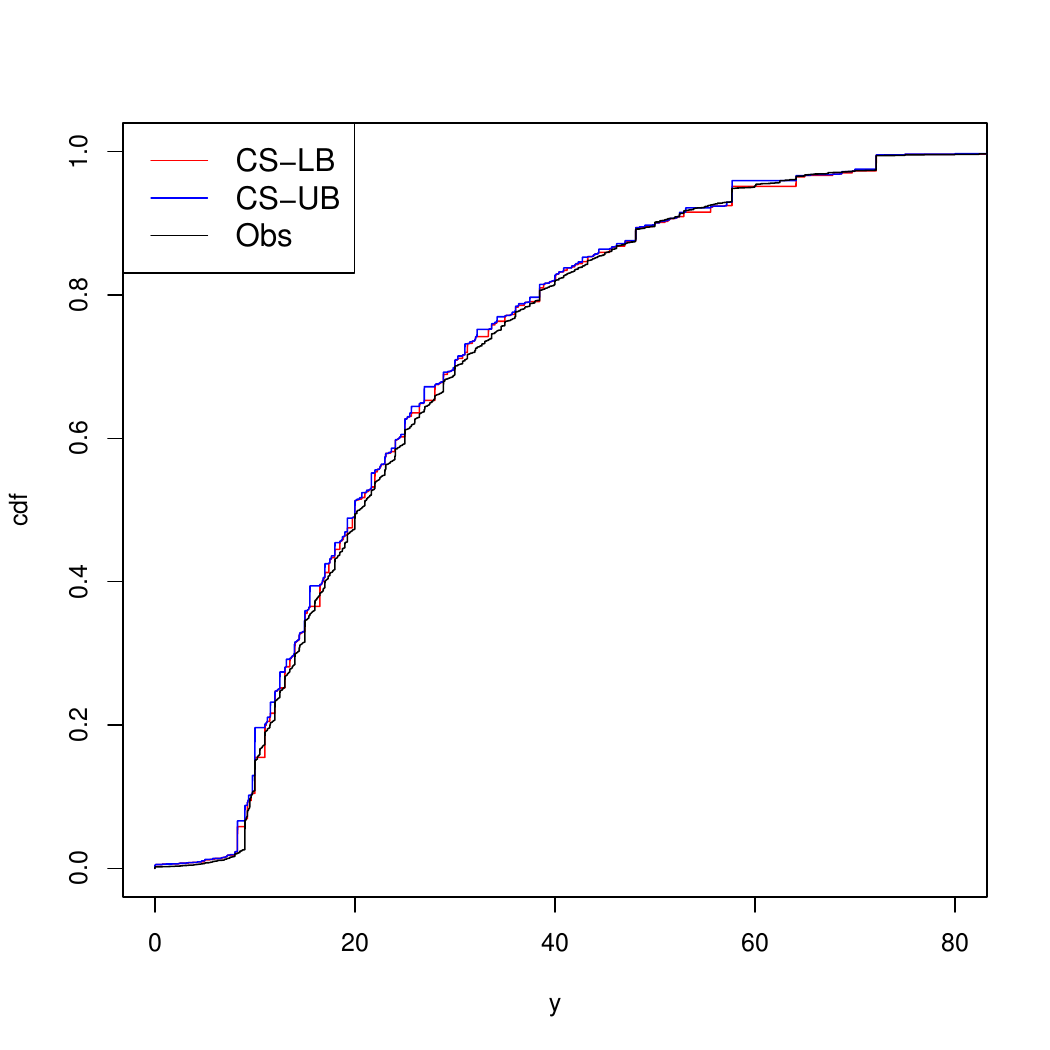}\\
(c) Dist-DiD using 2010 pre-treatment period&(d) Dist-DiD using 2011 pre-treatment period\\
\includegraphics[width=7cm]{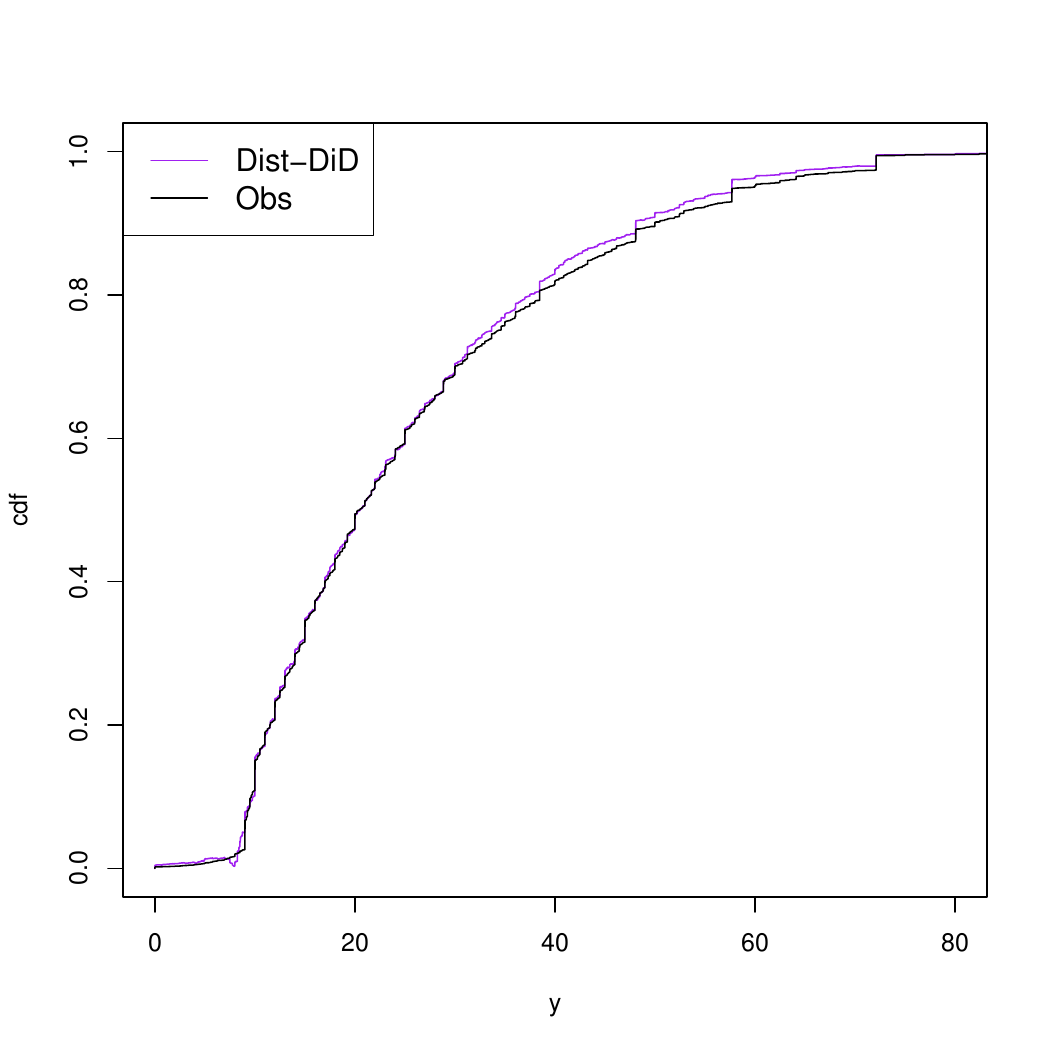}&\includegraphics[width=7cm]{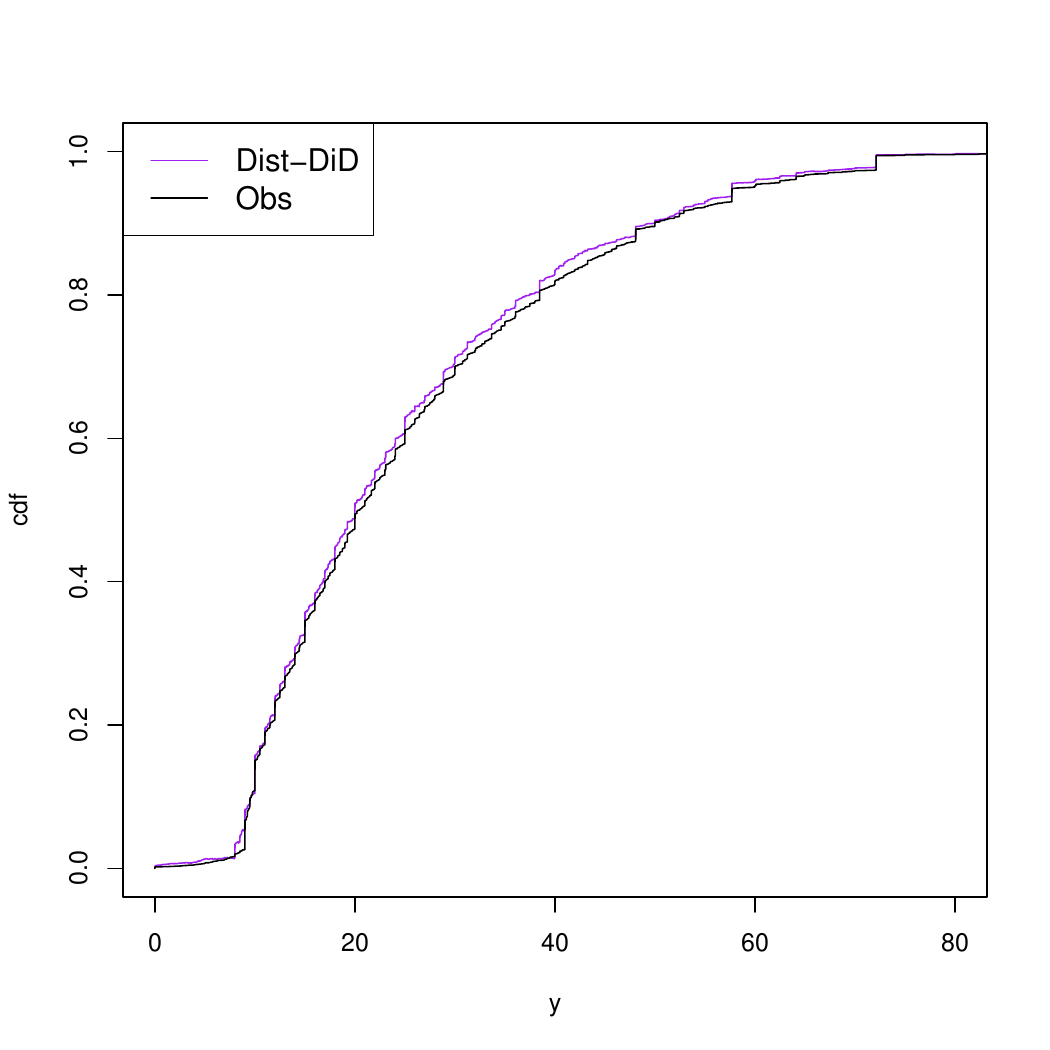}\\
(e) CiC using 2010 pre-treatment period&(f) CiC using 2011 pre-treatment period\\
\includegraphics[width=7cm]{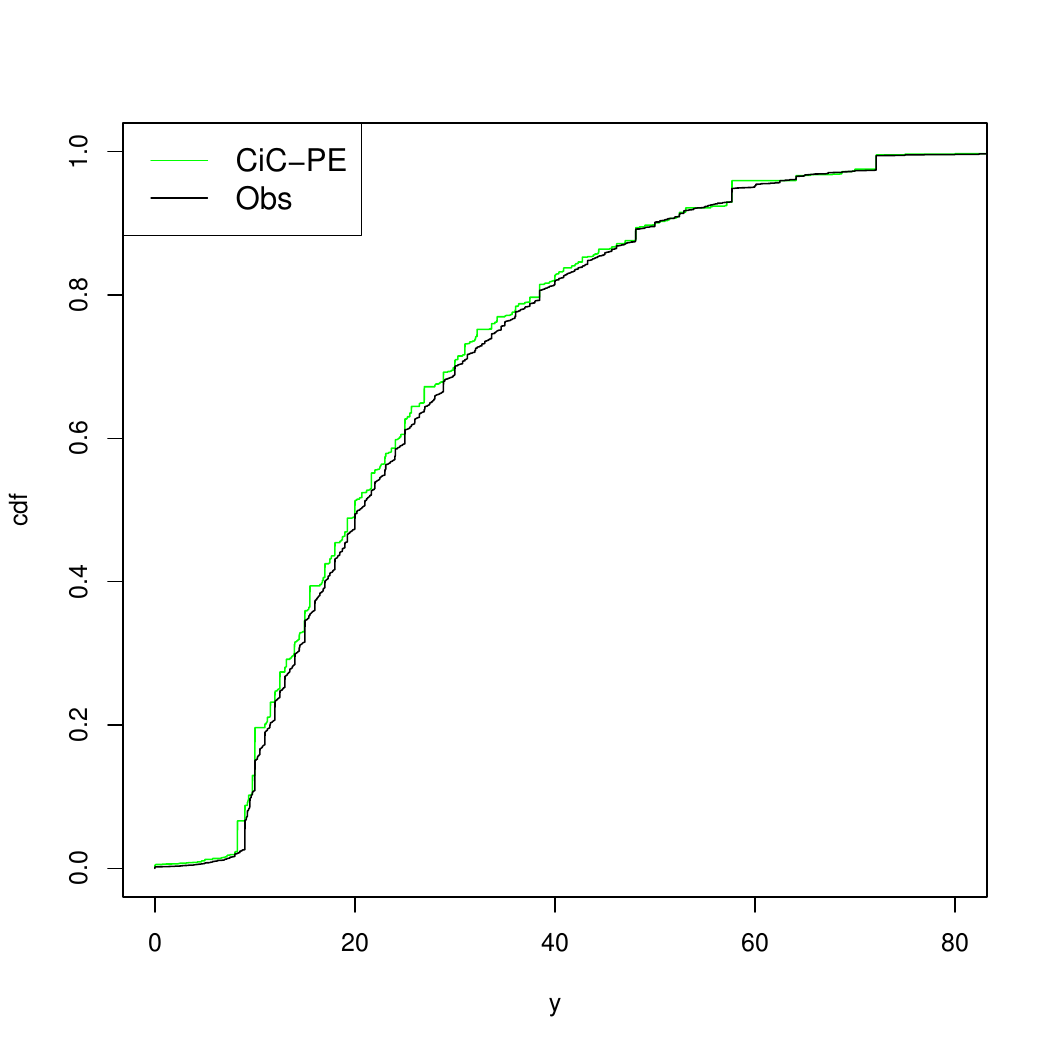}&\includegraphics[width=7cm]{Figures/FY10TCiC0_8_06232025.pdf}\\
\end{tabular}}}
\label{fig:distribution_multiT0}
\end{figure}
\newpage
\subsection{Parameters from \citet{Cengizetal2019}}\label{app:empirical_deltas} Finally, we compute the objects of interest in \citet{Cengizetal2019}, $\Delta b$ and $\Delta a$  depicted in Figure \ref{fig:Ceng}, which quantify the change in employment rates around the new minimum wage, as well as their sum $\Delta e$, which measures the overall impact on employment. Note that these quantities can be obtained from the cdf of the observed and counterfactual distribution as follows, 
\begin{eqnarray}\Delta b&=&F_{Y_1|D=1}(MW)-F_{Y_1|D=1}(0)-(F_{Y_{10}|D=1}(MW)-F_{Y_{10}|D=1}(0)),\label{eq:deltab}\\
\Delta a&=&F_{Y_1|D=1}(\overline{W})-F_{Y_1|D=1}(MW)-(F_{Y_{10}|D=1}(\overline{W})-F_{Y_{10}|D=1}(MW)),\label{eq:deltaa}\\
\Delta e&=&\Delta a+\Delta b=F_{Y_1|D=1}(\overline{W})-F_{Y_1|D=1}(0)-(F_{Y_{10}|D=1}(\overline{W})-F_{Y_{10}|D=1}(0)),\nonumber\end{eqnarray}
where $MW$ denotes the new minimum wage, and $\overline{W}$ is a user-specified quantity that should be the wage level beyond which the increase in the minimum wage should not have an impact on employment. The first quantity $\Delta b$ measures the impact of the minimum wage increase on the proportion of wage-earners with a wage below the new minimum wage, $MW$, whereas $\Delta a$ measures the impact of the minimum wage increase on the proportion of wage earners with hourly wages between $MW$ and $\overline{W}$. Finally, $\Delta e$, which equals the sum of $\Delta a$ and $\Delta b$ by definition, quantifies the impact on the proportion of employment around the minimum wage (below $\overline{W}$).

\begin{table}[htbp]
\caption{Parameters from \citet{Cengizetal2019}}
 {\footnotesize   \begin{tabular}{lrrrrrrrrrr}
    \toprule
  \parbox{1.25cm}{Pre-treatment period} &\multicolumn{4}{c}{2010}&\multicolumn{4}{c}{2011}&\multicolumn{2}{c}{ 2010 \& 2011}\\
  \cmidrule(lr){2-5}\cmidrule(lr){6-9}\cmidrule(lr){10-11}
   & \multicolumn{2}{c}{CS Bounds} & \multicolumn{1}{c}{DistDiD} & \multicolumn{1}{c}{CiC} & \multicolumn{2}{c}{CS Bounds}  & \multicolumn{1}{l}{DistDiD} & \multicolumn{1}{l}{CiC} & \multicolumn{2}{c}{CS Bounds}\\
   \cmidrule(lr){2-3}\cmidrule(lr){6-7}\cmidrule(lr){10-11}
    &\multicolumn{1}{c}{LB}&\multicolumn{1}{c}{UB}&&&\multicolumn{1}{c}{LB}&\multicolumn{1}{c}{UB}&&&\multicolumn{1}{c}{LB}&\multicolumn{1}{c}{UB}\\
    \midrule
$\Delta b$  &  -2.9\% & 0.3\% & -1.2\% & -2.9\% & -3.9\% & -3.1\% & -2.0\% & -3.9\% & -2.9\% & -3.1\% \\
$\Delta a$  &  -0.9\% & 2.3\% & 1.7\% & 2.3\% & 2.3\% & 3.3\% & 1.6\% & 3.1\% & 2.5\% & 2.3\% \\
$\Delta e$&    -0.6\% & -0.6\% & 0.5\% & -0.6\% & -0.8\% & -0.6\% & -0.4\% & -0.8\% & -0.6\% & -0.6\% \\
\midrule
\multicolumn{11}{c}{\parbox{0.8\textwidth}{\footnotesize \emph{Notes}: We compute the estimates of $\Delta b$ and $\Delta a$ using the sample analogues of Eq. \eqref{eq:deltab} and \eqref{eq:deltaa}, respectively, with $MW=8.5$ and $\bar{W}=11$. }}\\
\end{tabular}%
}\label{tab:deltas_multiT0}
\end{table}

Table \ref{tab:deltas_multiT0} presents the estimates of $\Delta b$, $\Delta a$ and $\Delta e$ for all estimators we consider using 2010 and 2011 as pre-treatment periods.\footnote{The CS bounds on $\Delta b$ are given by the following,
\begin{eqnarray}&&F_{Y_1|D=1}(MW)-F_{Y_1|D=1}(0)-(F_{Y_{10}|D=1}^{UB}(MW)-F_{Y_{10}|D=1}^{LB}(0))\nonumber\\
&\leq& \Delta b\leq F_{Y_1|D=1}(MW)-F_{Y_1|D=1}(0)-\max\left \{(F_{Y_{10}|D=1}^{LB}(MW)-F_{Y_{10}|D=1}^{UB}(0)),0\right\}.\label{eq:bounds_deltas}\end{eqnarray}

CS bounds on the $\Delta a$ and $\Delta e$ are obtained in the same manner.} The CS bounds on $\Delta b$ and $\Delta a$ using the 2010 pre-treatment period are wide and inconclusive regarding the sign of this parameter. The CS bounds on $\Delta e$ however collapse to a point and equal -0.6\% suggesting that the minimum wage may have slightly reduced employment. Using the 2011 pre-treatment period, the CS bounds estimates on $\Delta b$ suggest a reduction in employment below the minimum wage, whereas the CS bounds estimates on $\Delta a$ suggest an increase in employment just above the minimum wage. The CS bounds on $\Delta e$ using the 2011 pre-treatment period also suggest a slight reduction in employment. The multi-period CS bounds lead to a similar conclusion.

When we examine the distributional DiD estimates, we find that the distributional DiD point estimates using each pre-treatment period suggest similar results for $\Delta b$ and $\Delta a$, though slightly different magnitudes, but differ in the sign of $\Delta e$. Using the 2010 pre-treatment period, the distributional DiD estimate suggest that a slight increase in employment $\Delta e$, whereas it suggest a slight decrease in employment if one uses the 2011 pre-treatment period. Given the monotonicity violation around \$8 exhibited in the distributional DiD counterfactual estimate using the 2010 pre-treatment period, one would conclude that the resulting $\Delta e$ estimate is unreliable. 

Finally, since $\Delta b$, $\Delta a$ and $\Delta e$ consist of differences between the observed and counterfactual outcome distribution,  the CiC point estimates will thus fall inside the CS bounds on $\Delta b$, $\Delta a$ and $\Delta e$, respectively.
This is because the CiC point estimates of these objects will use the CS upper bound as the counterfactual estimate.\footnote{This is straightforward from examining Eq.\ \eqref{eq:bounds_deltas}.}

\subsection{Empirical analysis for subsample with pre-MW $<$\$8}\label{app:empirical_subgrouplessthan8}
Here we provide the empirical analysis for the subsample with pre-treatment minimum wage of less than \$8. Since the federal minimum wage was \$7.25 in 2009, this subsample consists of states with pre-treatment minimum wage at or slightly above the federal minimum wage. Table \ref{tab:Cengizetal_sumstats_multiT0_subsample} provides the summary statistics, Figure \ref{fig:distribution_zoombottom_multiT0_subsample} and \ref{fig:distribution_multiT0_subsample} provides the estimates of the counterfactual distribution for the bottom quartile as well as the entire distribution, respectively. Tables \ref{tab:swtt_multiT0_subsample} and \ref{tab:lowertail_swtt_multiT0_subsample}  provide confidence intervals on the SWTT parameters we consider.
\begin{table}[H]\caption{Summary Statistics by Treatment and Control Groups:  States with Pre-MW$<$\$8}
\vspace{0.2cm}
\footnotesize{
    \begin{tabular}{lrrrrrrrrr}
    \toprule
    & \multicolumn{3}{c}{2010 (Pre-treatment)} & \multicolumn{3}{c}{2011 (Pre-treatment)}   & \multicolumn{3}{c}{2015 (Post-treatment)} \\
\cmidrule(lr){2-4}\cmidrule(lr){5-7}\cmidrule(lr){8-10}
(\$) & \multicolumn{1}{c}{Mean} & \multicolumn{1}{c}{S.D. } & \multicolumn{1}{c}{\# Obs} & \multicolumn{1}{c}{Mean} & \multicolumn{1}{c}{S.D. } & \multicolumn{1}{c}{\# Obs}& \multicolumn{1}{c}{Mean} & \multicolumn{1}{c}{S.D. } & \multicolumn{1}{c}{\# Obs} \\
 \midrule
    \multicolumn{10}{l}{States with Pre-Treatment Minimum Wage $<8$}\\Control & 18.43 & 12.78 &          44,574  & 18.78 & 13.48 &              43,864  & 20.41 & 15.90 &              42,322  \\
    Treatment & 20.20 & 14.26 &          38,261  & 20.64 & 14.47 &              37,127  & 22.12 & 18.21 &              32,489  \\
    \bottomrule
    \end{tabular}}
\label{tab:Cengizetal_sumstats_multiT0_subsample}
\end{table}

\begin{table}[H]
  \centering
  \caption{Inference on SWTT using 2010 and/or 2011 as pre-treatment periods: States with Pre-MW$<$\$8}
{\footnotesize    \begin{tabular}{lrrrrrrrrrrrl}
\multicolumn{7}{l}{Panel A. CS bounds}\\
          \toprule
          & \multicolumn{6}{c}{95\% CI}  \\
          \midrule
 Pre-period        & \multicolumn{2}{c}{2010}        & \multicolumn{2}{c}{2011} & \multicolumn{2}{c}{  2010 \& 2011}\\
          \cmidrule(lr){2-3}\cmidrule(lr){4-5}\cmidrule(lr){6-7}
          \cmidrule(lr){2-3}\cmidrule(lr){4-5}\cmidrule(lr){6-7}
  (\$)        &  \multicolumn{1}{c}{LB} & \multicolumn{1}{c}{UB} & \multicolumn{1}{c}{LB} & \multicolumn{1}{c}{UB} & \multicolumn{1}{c}{LB} & \multicolumn{1}{c}{UB} \\
          \midrule

    ATT   & -0.94 & 0.18  & -0.92 & 0.16  & -0.68 & -0.02 \\
    Gini SWTT & -0.38 & 0.25  & -0.46 & 0.15  & -0.30 & 0.08 \\
    \bottomrule 
    \end{tabular}}\\
    \vspace{0.5cm}
 {\footnotesize{   \begin{tabular}{lcrrcrrcrrcrr}

\multicolumn{9}{l}{Panel B. Distributional DiD and CiC point estimators}\\
     \toprule
 & \multicolumn{4}{c}{Dist DiD: 95\% CI}        & \multicolumn{4}{c}{CiC: 95\% CI}  \\
 \midrule
Pre-period    & \multicolumn{2}{c}{2010} 
   &\multicolumn{2}{c}{2011}
           & \multicolumn{2}{c}{2010} 
   &\multicolumn{2}{c}{2011}\\
             \cmidrule(lr){2-3}\cmidrule(lr){4-5}\cmidrule(lr){6-7}\cmidrule(lr){8-9}
  (\$)    &\multicolumn{1}{c}{LB} & \multicolumn{1}{c}{UB} &   \multicolumn{1}{c}{LB} & \multicolumn{1}{c}{UB} &    \multicolumn{1}{l}{LB} & \multicolumn{1}{l}{UB} &   \multicolumn{1}{l}{LB} & \multicolumn{1}{l}{UB}   \\
\midrule
ATT   & -0.40 & 0.23  & -0.50 & 0.15  & -0.55 & 0.17  & -0.60 & 0.15 \\
    Gini SWTT & -0.26 & 0.11  & -0.35 & 0.02  & -0.12 & 0.27  & -0.20 & 0.16 \\
  
    \midrule
    \multicolumn{9}{c}{\parbox{10cm}{\scriptsize{\emph{Notes}: The definitions of the SWTT bounds/point estimators are provided in \eqref{eq:swtt-estimator}--\eqref{eq:swtt-estimator_pe}. For the CS bounds, we report 95\% confidence intervals on the identified set. For the point estimators, we report 95\% confidence intervals on the SWTT parameter. All confidence intervals use standard normal critical values and nonparameteric bootstrap standard errors using 500 bootstrap replications.}}}\\
\end{tabular}}}
\label{tab:swtt_multiT0_subsample}
\end{table}

\begin{figure} [htbp]\caption{Observed and Counterfactual Distributions: States with Pre-MW$< \$ 8$, Bottom Quartile}
\vspace{0.25cm}{\footnotesize{
\begin{tabular}{cc}
(a) CS Bounds using 2010 pre-treatment period&(b) CS Bounds using 2011 pre-treatment period\\
\includegraphics[width=6cm]{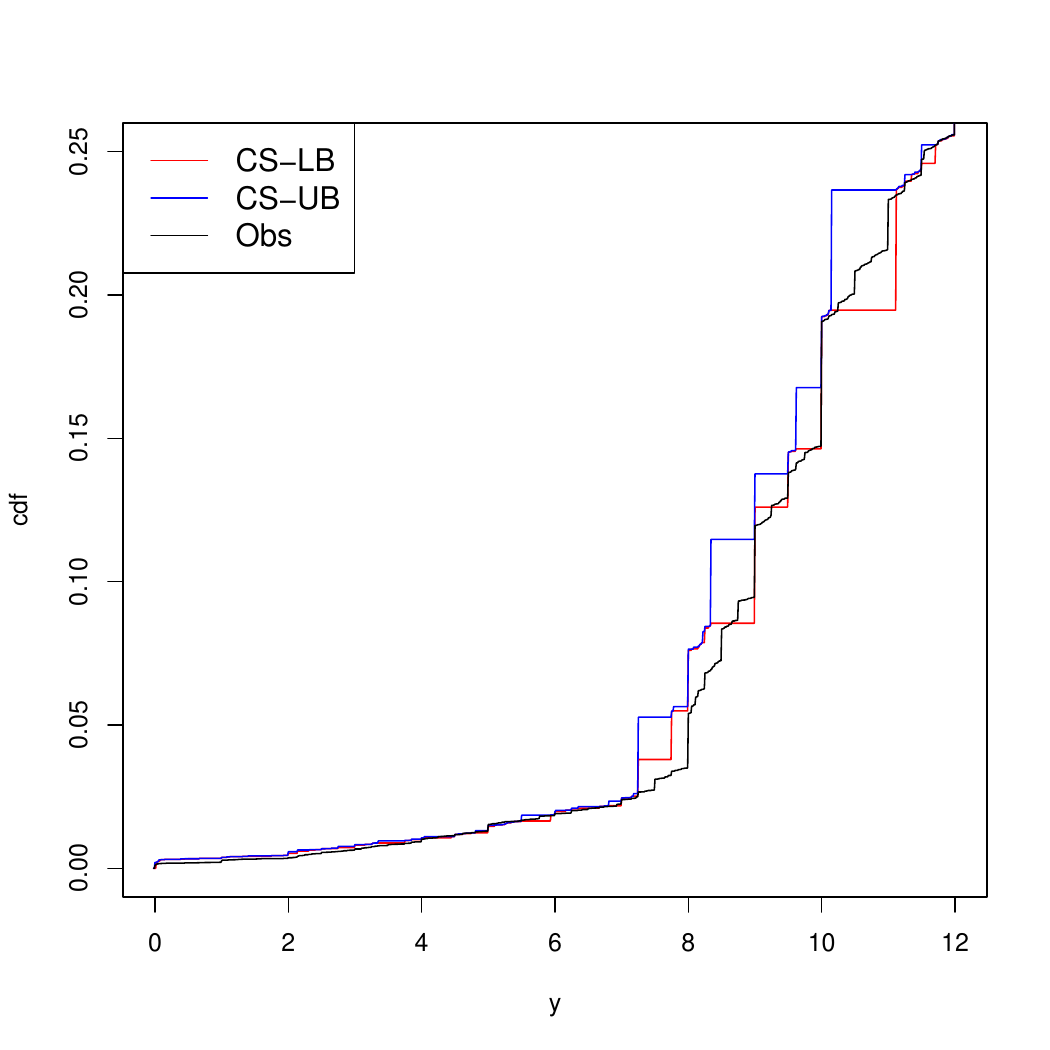}&\includegraphics[width=6cm]{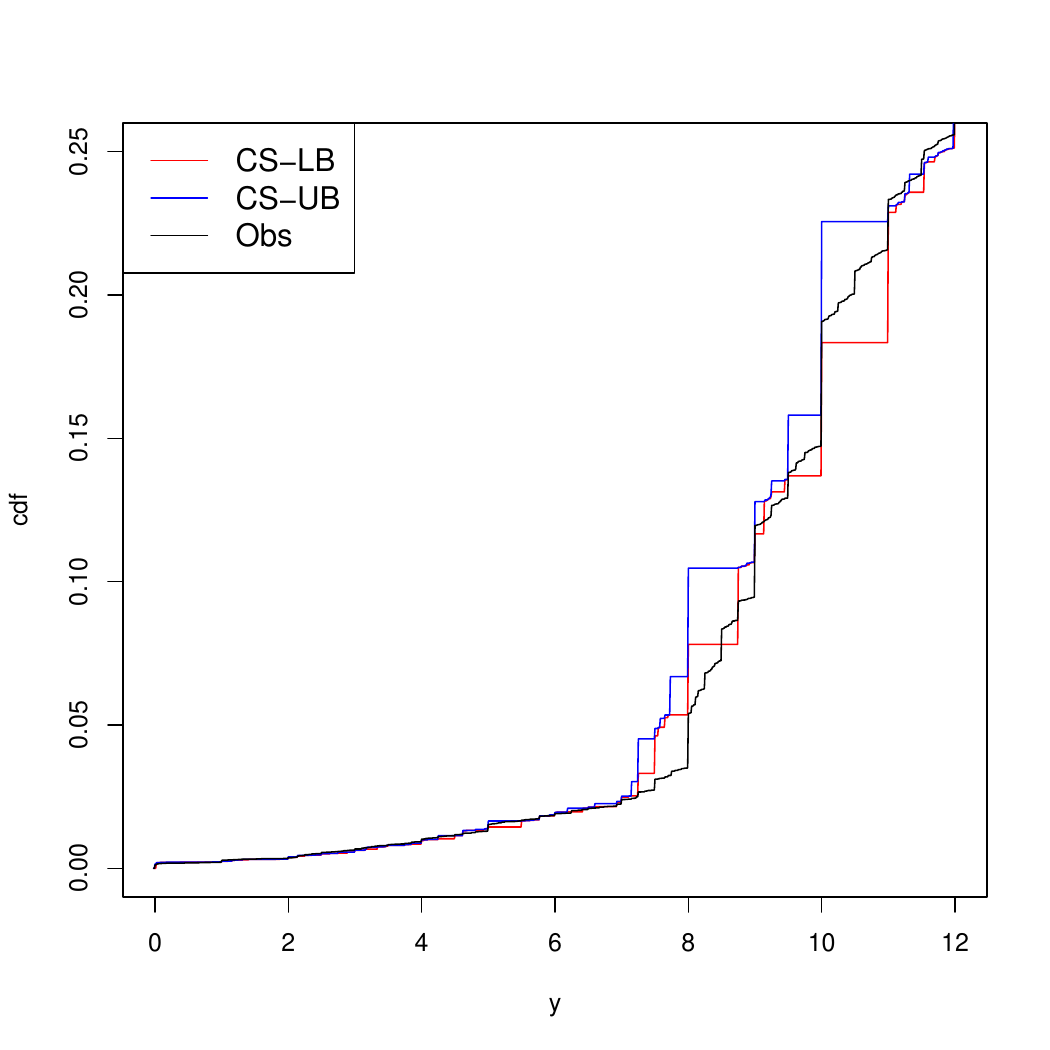}\\
(c) Dist-DiD using 2010 pre-treatment period&(d) Dist-DiD using 2011 pre-treatment period\\
\includegraphics[width=6cm]{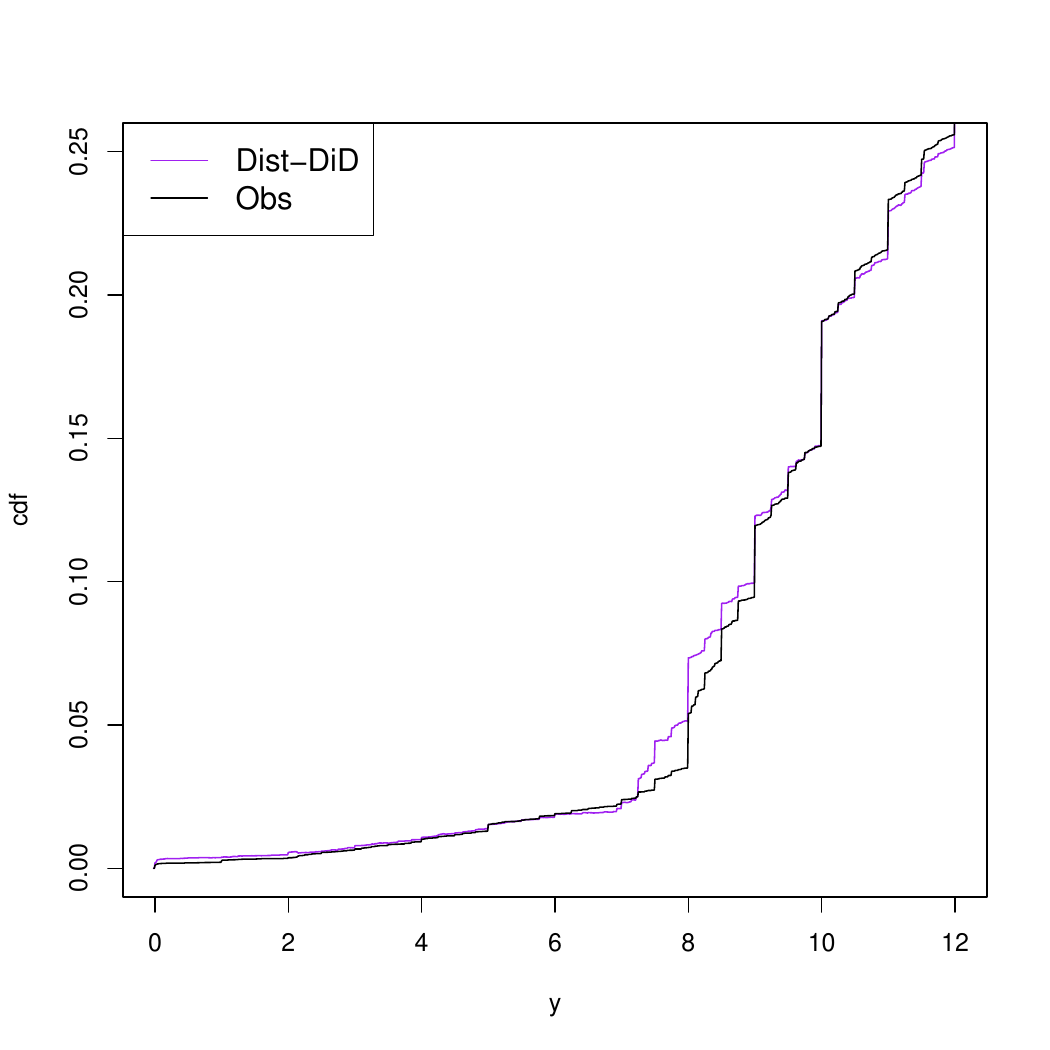}&\includegraphics[width=6cm]{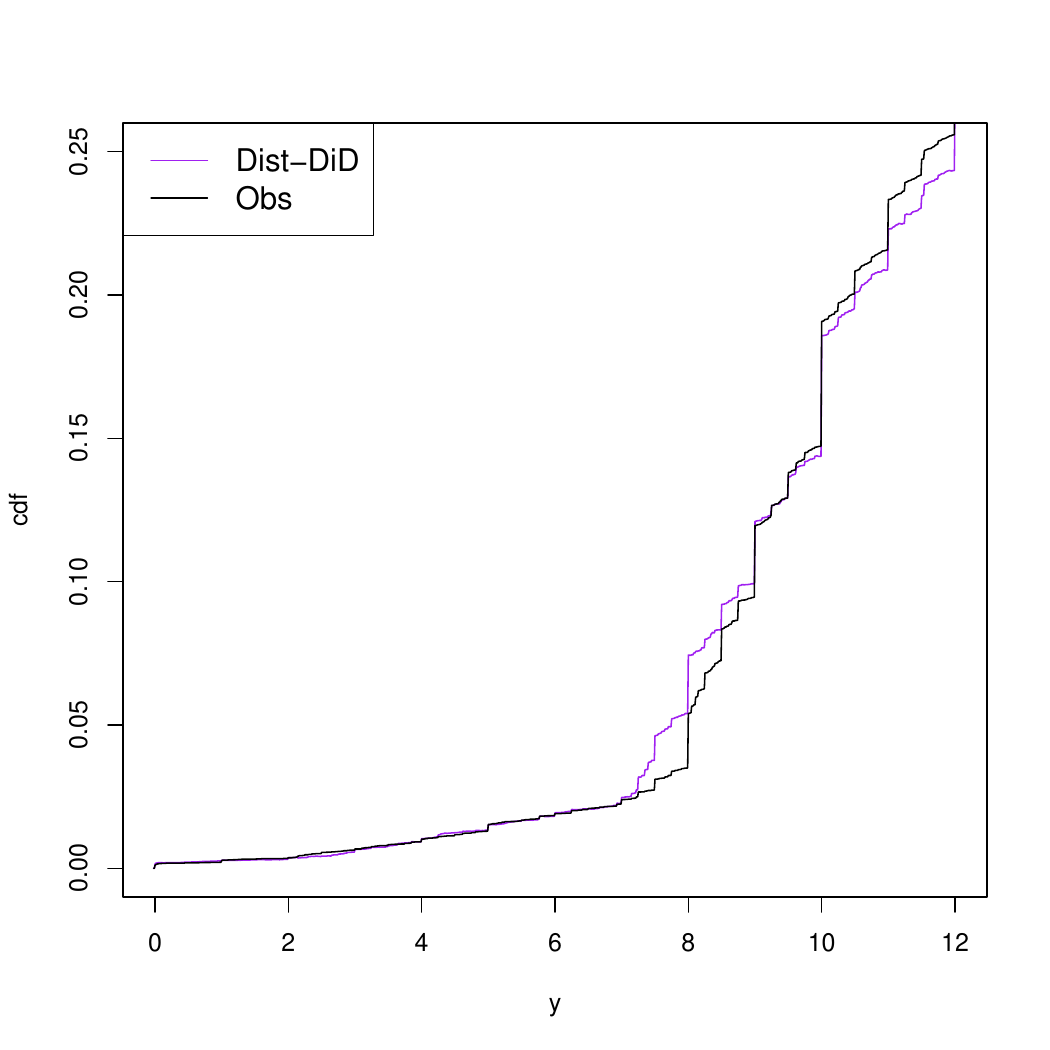}\\
(e) CiC using 2010 pre-treatment period&(f) CiC using 2011 pre-treatment period\\
\includegraphics[width=6cm]{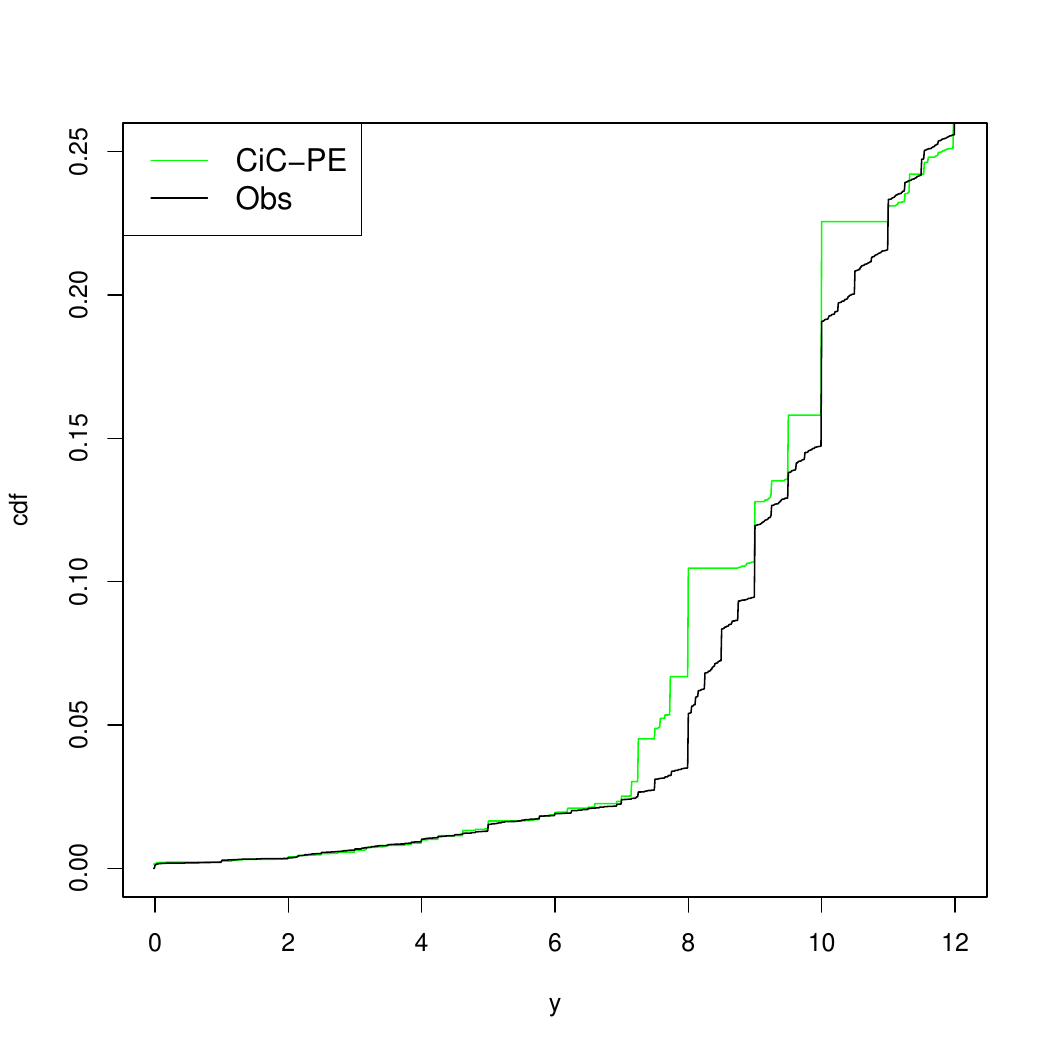}&\includegraphics[width=6cm]{Figures/FY10TCiC0_7.25_bottom25_06232025.pdf}\\
\multicolumn{2}{c}{\parbox{14cm}{\scriptsize{\emph{Notes}: $Obs$ denotes the observed factual $F_{Y_{11}|D=1}$, $CS$-$LB$ and $CS$-$UB$ denote the copula lower and upper bound estimates on the counterfactual distribution, respectively, $Dist$-$DiD$ depicts the distributional DiD estimator, and $CiC$-$PE$ denotes the CiC point estimator. For each point/bounds estimator, we provide estimates using each of the 2010 and 2011 pre-treatment periods. In this figure, we zoom into the lowest quartile of the distribution, see Figure \ref{fig:distribution_multiT0} for plots of the entire distribution. }}}
\end{tabular}}}
\label{fig:distribution_zoombottom_multiT0_subsample}
\end{figure}


\begin{figure} [H]\caption{Observed and Counterfactual Distributions: States with Pre-MW$< \$ 8$, Entire Distribution}
\vspace{0.25cm}{\footnotesize{
\begin{tabular}{cc}
(a) CS Bounds using 2010 pre-treatment period&(b) CS Bounds using 2011 pre-treatment period\\
\includegraphics[width=6cm]{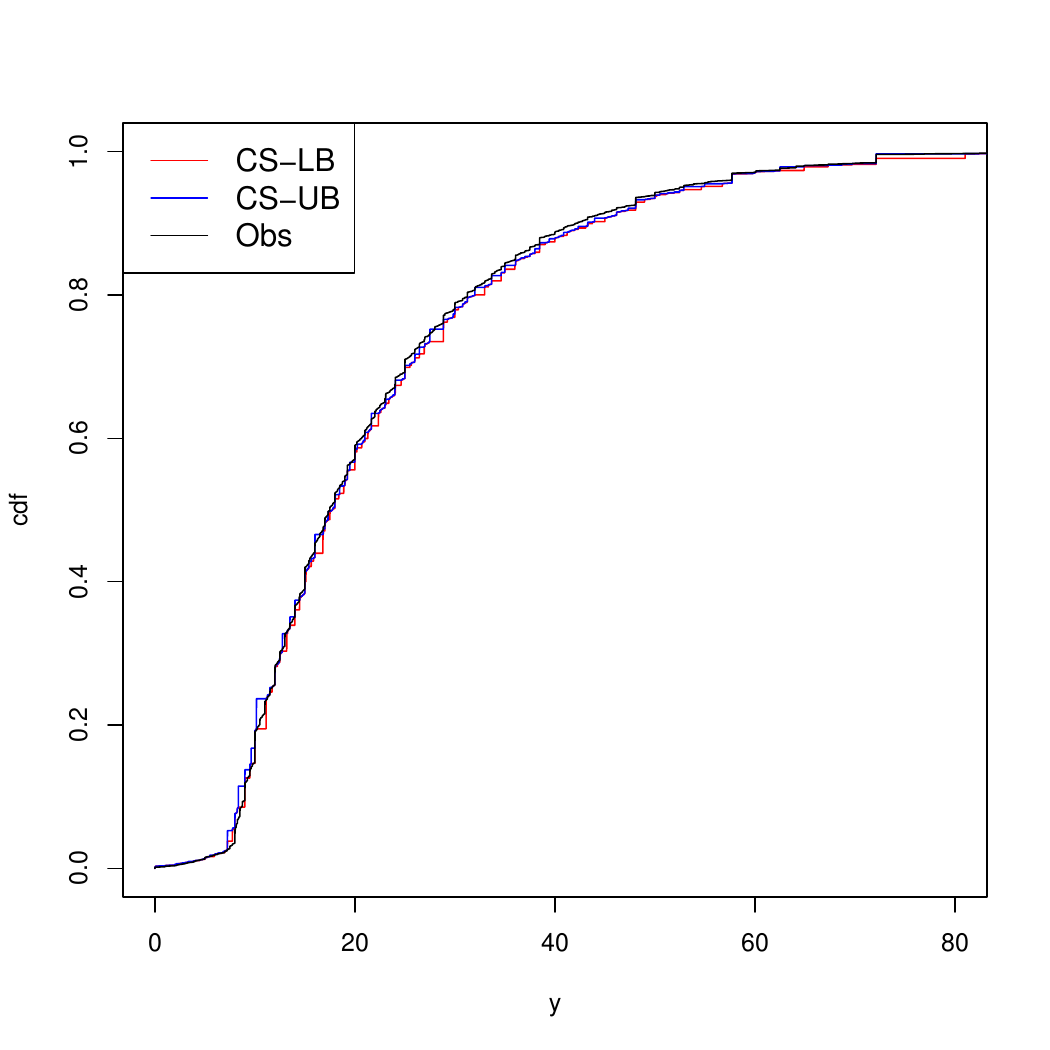}&\includegraphics[width=6cm]{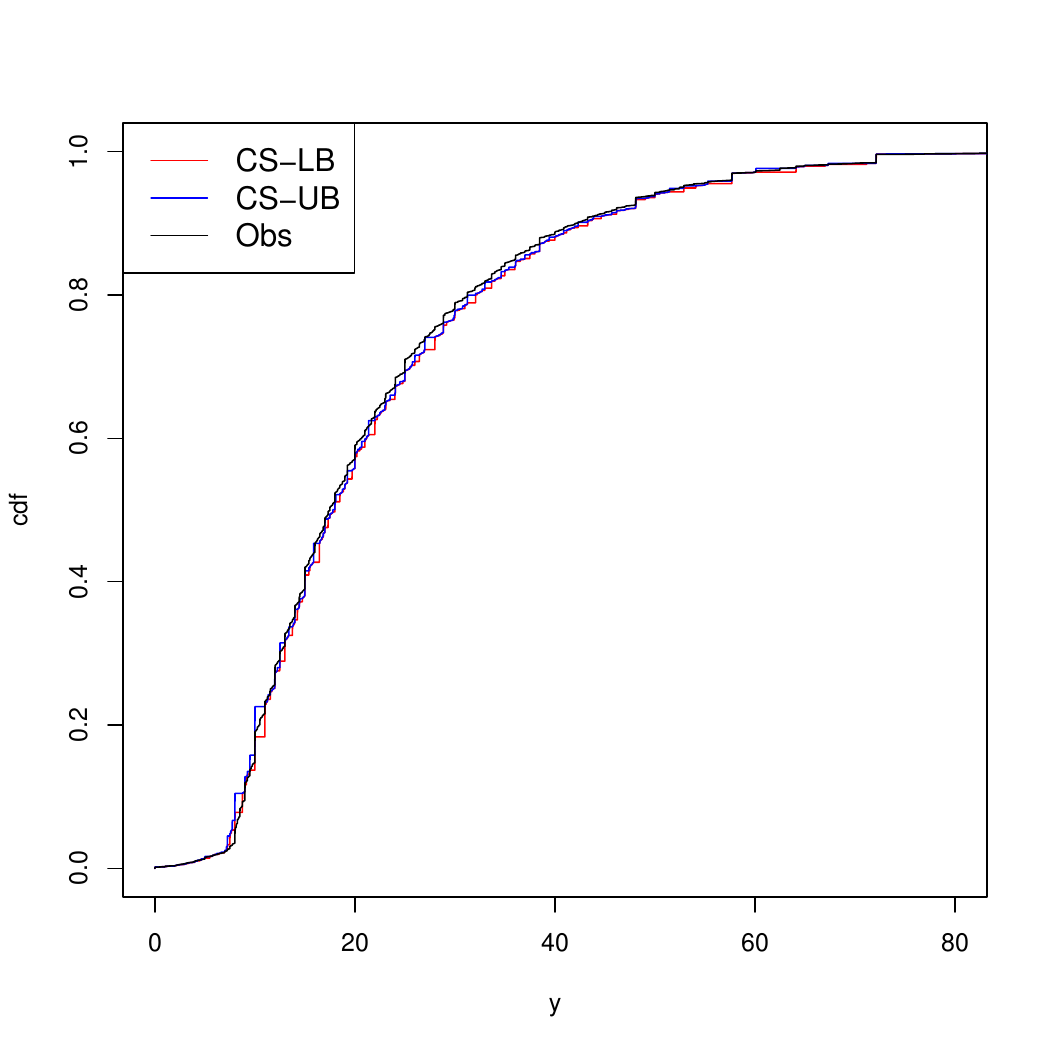}\\
(c) Dist-DiD using 2010 pre-treatment period&(d) Dist-DiD using 2011 pre-treatment period\\
\includegraphics[width=6cm]{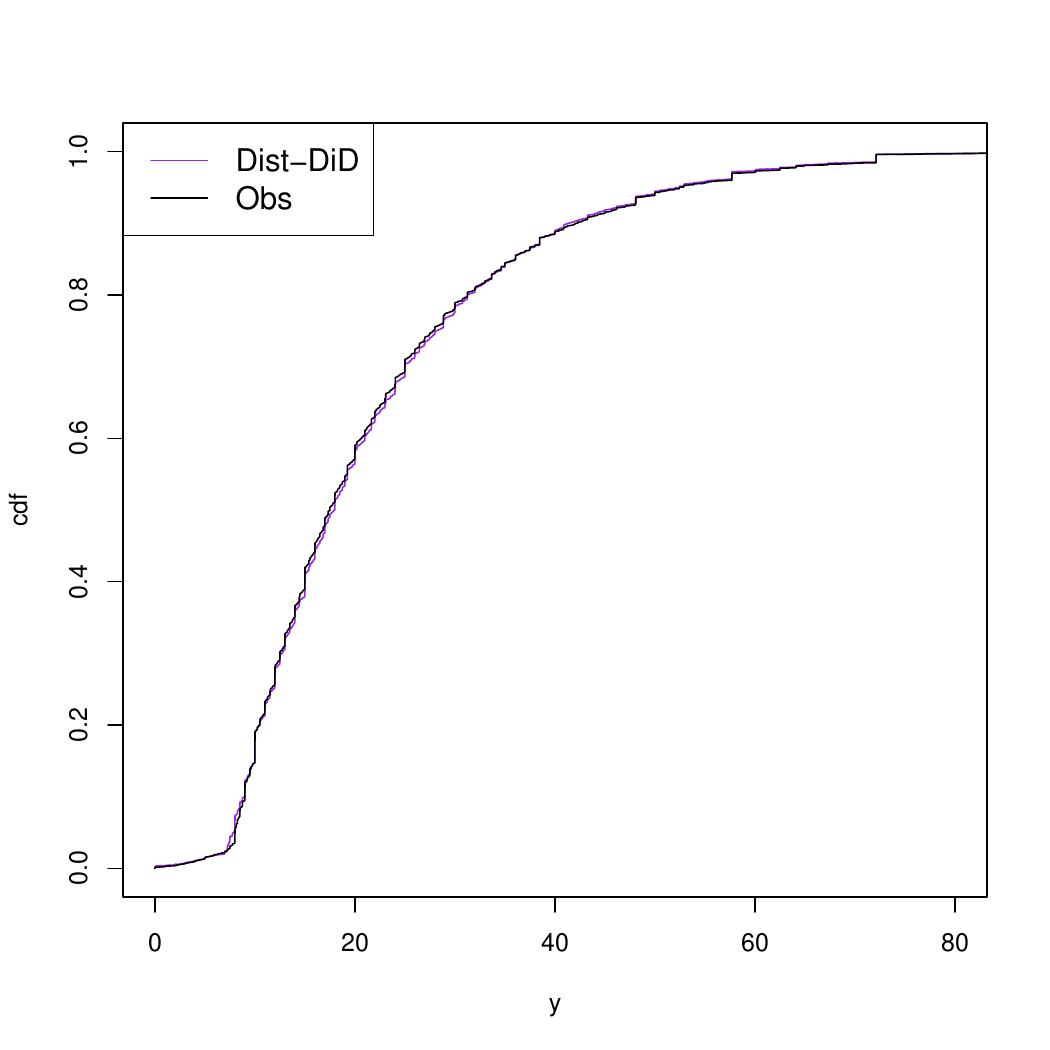}&\includegraphics[width=6cm]{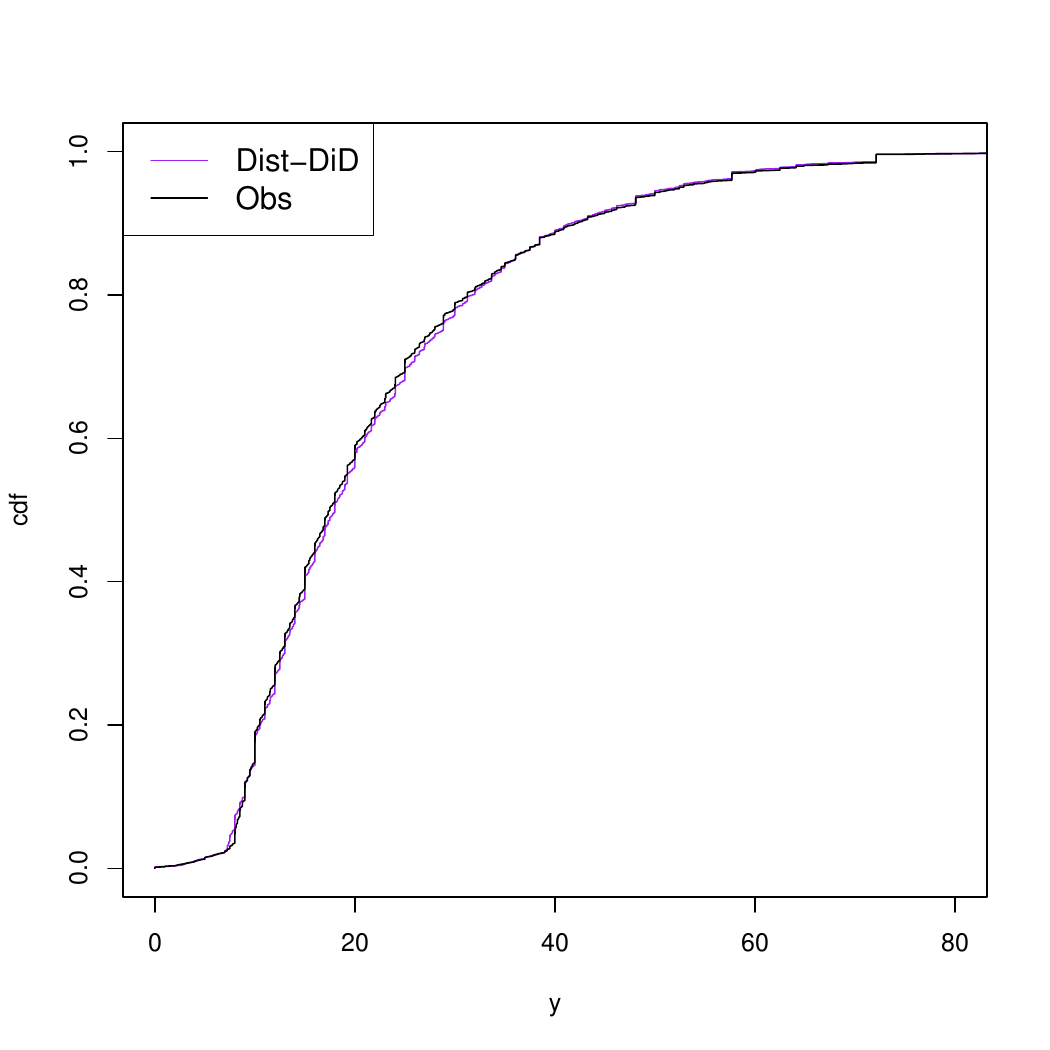}\\
(e) CiC using 2010 pre-treatment period&(f) CiC using 2011 pre-treatment period\\
\includegraphics[width=6cm]{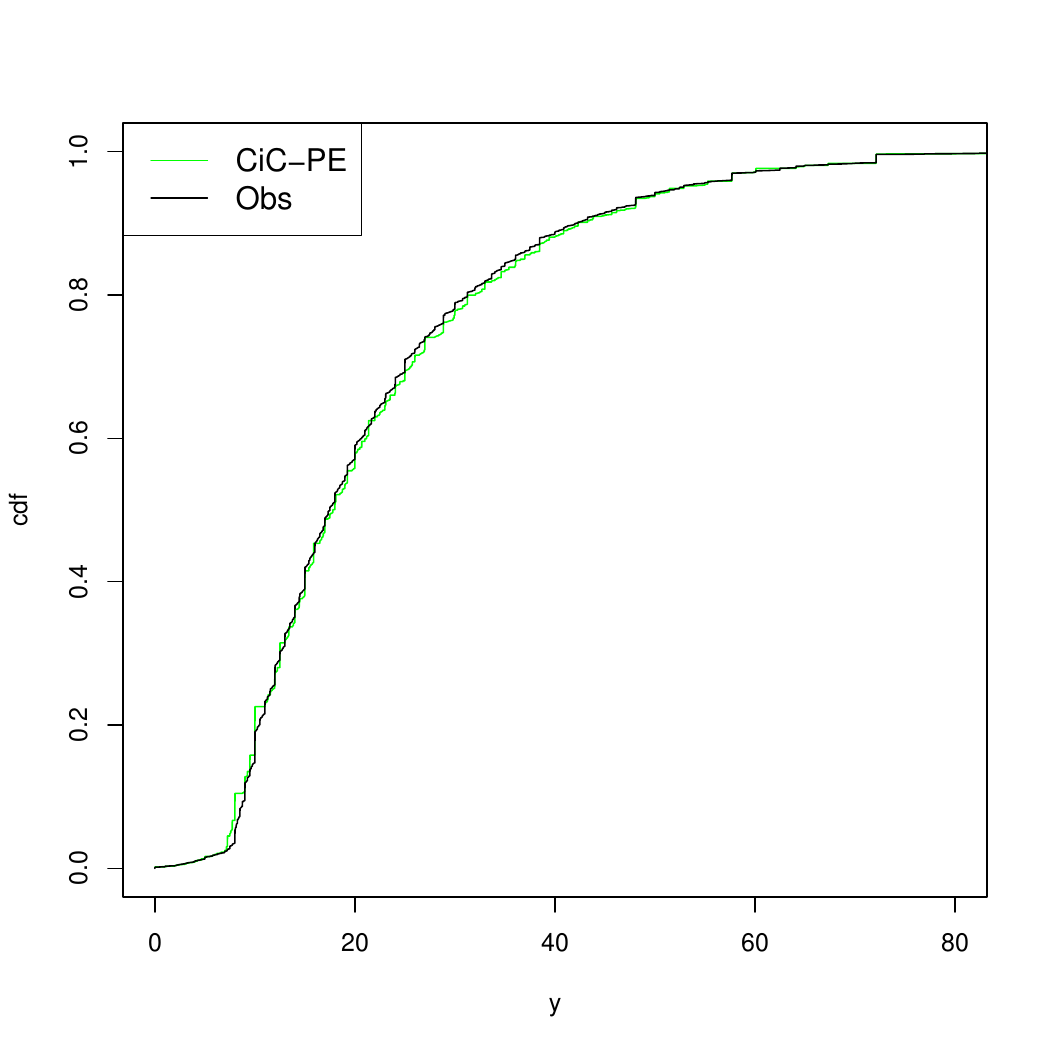}&\includegraphics[width=6cm]{Figures/FY10TCiC0_7.25_06232025.pdf}\\
\multicolumn{2}{c}{\parbox{14cm}{\scriptsize{\emph{Notes}: $Obs$ denotes the observed factual $F_{Y_{11}|D=1}$, $CS$-$LB$ and $CS$-$UB$ denote the copula lower and upper bound estimates on the counterfactual distribution, respectively, $Dist$-$DiD$ depicts the distributional DiD estimator, and $CiC$-$PE$ denotes the CiC point estimator. For each point/bounds estimator, we provide estimates using each of the 2010 and 2011 pre-treatment periods. }}}
\end{tabular}}}
\label{fig:distribution_multiT0_subsample}
\end{figure}

\begin{table}[H]
  \centering
  \caption{Inference on Lower-tail SWTT using 2010 and/or 2011 as pre-treatment periods: States with Pre-MW$<$\$8}
{\footnotesize    \begin{tabular}{lrrrrrr}
\multicolumn{7}{l}{Panel A. CS bounds}\\
          \toprule
          & \multicolumn{6}{c}{95\% CI}  \\
          \midrule
 Pre-period        & \multicolumn{2}{c}{2010}        & \multicolumn{2}{c}{2011} & \multicolumn{2}{c}{  2010\&2011}\\
          \cmidrule(lr){2-3}\cmidrule(lr){4-5}\cmidrule(lr){6-7}
          \cmidrule(lr){2-3}\cmidrule(lr){4-5}\cmidrule(lr){6-7}
  (\$)        &  \multicolumn{1}{c}{LB} & \multicolumn{1}{c}{UB} & \multicolumn{1}{c}{LB} & \multicolumn{1}{c}{UB} & \multicolumn{1}{c}{LB} & \multicolumn{1}{c}{UB} \\

\midrule
    $u=0.01$\\
    ATT(u) & -0.04 & 0.99  & -0.54 & 0.37  & -0.01 & 0.40 \\
    Gini SWTT(u) & 0.01  & 0.92  & -0.47 & 0.37  & 0.04  & 0.40\\
       [0.25em]
    $u=0.025$\\
    ATT(u) & -0.29 & 0.74  & -0.49 & 0.45  & -0.21 & 0.38 \\
    Gini SWTT(u) & -0.20 & 0.81  & -0.52 & 0.41  & -0.13 & 0.37\\
       [0.25em]
    $u=0.05$\\
    ATT(u) & -0.02 & 0.68  & -0.08 & 0.52  & 0.09  & 0.47 \\
    Gini SWTT(u) & -0.13 & 0.70  & -0.30 & 0.46  & -0.05 & 0.40\\
        [0.25em]
    $u=0.10$\\
    ATT(u) & 0.03  & 0.57  & 0.00  & 0.58  & 0.13  & 0.45 \\
    Gini SWTT(u)& 0.00  & 0.62  & -0.08 & 0.50  & 0.09  & 0.42\\
        [0.25em]
    $u=0.25$\\
    ATT(u) & -0.10 & 0.52  & -0.15 & 0.43  & -0.01 & 0.35 \\
    Gini SWTT(u) & -0.02 & 0.53  & -0.07 & 0.47  & 0.07  & 0.37\\
   \midrule

    \end{tabular}%
    }\\
 \vspace{0.5cm}
 {\footnotesize{  
\begin{tabular}{lcrrcrrcrrcrr}

\multicolumn{9}{l}{Panel B. Distributional DiD and CiC}\\
     \toprule
 & \multicolumn{4}{c}{Dist DiD: 95\% CI}        & \multicolumn{4}{c}{CiC: 95\% CI}  \\
 \midrule
Pre-period    & \multicolumn{2}{c}{2010} 
   &\multicolumn{2}{c}{2011}
           & \multicolumn{2}{c}{2010} 
   &\multicolumn{2}{c}{2011}\\
             \cmidrule(lr){2-3}\cmidrule(lr){4-5}\cmidrule(lr){6-7}\cmidrule(lr){8-9}
  (\$)    &\multicolumn{1}{c}{LB} & \multicolumn{1}{c}{UB} &   \multicolumn{1}{c}{LB} & \multicolumn{1}{c}{UB} &    \multicolumn{1}{l}{LB} & \multicolumn{1}{l}{UB} &   \multicolumn{1}{l}{LB} & \multicolumn{1}{l}{UB}   \\
\midrule
    $u=0.01$\\
    ATT(u) & -0.03 & 0.93  & -0.59 & 0.39  & 0.06  & 0.98  & -0.53 & 0.42 \\
    Gini SWTT(u) & 0.08  & 0.89  & -0.56 & 0.40  & 0.09  & 0.92  & -0.44 & 0.40\\
       [0.25em]
        $u=0.025$\\
    ATT(u) & -0.35 & 0.64  & -0.48 & 0.46  & -0.20 & 0.74  & -0.43 & 0.50 \\
    Gini SWTT(u) & -0.18 & 0.78  & -0.51 & 0.43  & -0.10 & 0.81  & -0.47 & 0.45\\
       [0.25em]
        $u=0.05$\\
    ATT(u) & -0.04 & 0.54  & -0.08 & 0.47  & 0.17  & 0.68  & 0.02  & 0.55 \\
    Gini SWTT(u) & -0.16 & 0.61  & -0.30 & 0.44  & -0.02 & 0.70  & -0.23 & 0.50\\
       [0.25em]
          $u=0.10$\\
    ATT(u) & 0.04  & 0.42  & 0.03  & 0.38  & 0.23  & 0.57  & 0.26  & 0.59 \\
    Gini SWTT(u) & -0.02 & 0.48  & -0.07 & 0.41  & 0.14  & 0.61  & 0.05  & 0.53\\
       [0.25em]
              $u=0.25$\\
    ATT(u) & -0.06 & 0.22  & -0.12 & 0.15  & 0.23  & 0.51  & 0.20  & 0.43 \\
    Gini SWTT(u) & -0.01 & 0.31  & -0.04 & 0.26  & 0.23  & 0.52  & 0.20  & 0.47 \\
    
    \midrule
        \multicolumn{9}{c}{\parbox{10cm}{\scriptsize{\emph{Notes}: The definitions of the SWTT bounds/point estimators are provided in \eqref{eq:swtt-estimator}--\eqref{eq:swtt-estimator_pe}. For the CS bounds, we report 95\% confidence intervals on the identified set. For the point estimators, we report 95\% confidence intervals on the SWTT parameter. All confidence intervals use standard normal critical values and nonparameteric bootstrap standard errors using 500 bootstrap replications.}}}\\
    \end{tabular}%
    }}
  \label{tab:lowertail_swtt_multiT0_subsample}%
\end{table}%

\begin{table}[H]
\caption{Parameters from \citet{Cengizetal2019}}
 {\footnotesize   \begin{tabular}{lrrrrrrrrrr}
    \toprule
  \parbox{1.25cm}{Pre-treatment period} &\multicolumn{4}{c}{2010}&\multicolumn{4}{c}{2011}&\multicolumn{2}{c}{ 2010 \& 2011}\\
  \cmidrule(lr){2-5}\cmidrule(lr){6-9}\cmidrule(lr){10-11}
   & \multicolumn{2}{c}{CS Bounds} & \multicolumn{1}{c}{DistDiD} & \multicolumn{1}{c}{CiC} & \multicolumn{2}{c}{CS Bounds}  & \multicolumn{1}{l}{DistDiD} & \multicolumn{1}{l}{CiC} & \multicolumn{2}{c}{CS Bounds}\\
   \cmidrule(lr){2-3}\cmidrule(lr){6-7}\cmidrule(lr){10-11}
    &\multicolumn{1}{c}{LB}&\multicolumn{1}{c}{UB}&&&\multicolumn{1}{c}{LB}&\multicolumn{1}{c}{UB}&&&\multicolumn{1}{c}{LB}&\multicolumn{1}{c}{UB}\\
    \midrule
$\Delta b$   & -3.2\% & -0.1\% & -0.8\% & -3.0\% & -2.2\% & 0.6\% & -0.8\% & -2.1\% & -2.2\% & -0.2\% \\
$\Delta a$  & -0.1\% & 7.0\% & 1.3\% & 2.8\% & -0.3\% & 2.6\% & 1.9\% & 2.3\% & 0.4\% & 2.6\% \\
$\Delta e$& -0.4\% & 4.0\% & 0.5\% & -0.2\% & 0.1\% & 0.5\% & 1.1\% & 0.3\% & 0.1\% & 0.5\% \\
\midrule
\multicolumn{11}{c}{\parbox{0.8\textwidth}{\footnotesize \emph{Notes}: We compute the estimates of $\Delta b$ and $\Delta a$ using the sample analogues of Eq. \eqref{eq:deltab} and \eqref{eq:deltaa}, respectively, with $MW=8.5$ and $\bar{W}=11$. }}\\
\end{tabular}%
}\label{tab:deltas_multiT0_subsample}
\end{table}

\section{Additional numerical examples}\label{App:num_examples}
In this section, we illustrate the wide applicability of the CS identification approach using several numerical examples of outcomes with discrete and mixed distributions. We consider four different marginal distributions presented in Table \ref{tab:num_examples}, including the Poisson distribution (Example I), left- and right-censoring (Examples II-III) and a bunching example (Example IV). While Example I falls under the AI2006 identification results, the remaining examples are not covered by their approach. 

\begin{table}[H]\caption{Examples of Marginal Distributions of $Y_{t0}$}
            \centering
{\footnotesize{\begin{tabular}{ll}
    \toprule
I. Poisson&~~$F_{Y_{t0}}(y)=\Pi_t(y)$, where $\Pi_t(\cdot)$ is the Poisson cdf with mean $\lambda_t$.\\
[0.5em]
II. Left-censoring&$\left\{\text{\parbox{12cm}{$F_{Y_{t0}}(y)=\left\{\begin{array}{ll}0&\text{if }y < c_t\\
\Lambda_t(y)&\text{if }y\geq c_t\end{array}\right.,$ \\ 
where $\Lambda_t(\cdot)$ is the $\chi^2$ cdf with $k_t$ degrees of freedom.}}\right.$\\
[1.5em]
III. Right-censoring&$\left\{\text{\parbox{12cm}{$F_{Y_{t0}}(y)=\left\{\begin{array}{ll}\Lambda_t(y)&\text{if }y < c_t\\
1&\text{if }y\geq c_t\end{array}\right.$,\\ where $\Lambda_t(\cdot)$ is the $\chi^2$ cdf with $k_t$ degrees of freedom.}}\right.$\\
[1.5em]
IV. Bunching&$\left\{\text{\parbox{12cm}{$F_{Y_{t0}}(y)=\left\{\begin{array}{ll}\Phi_t(y)&\text{if }y\not\in [c_t,w_t)\\\Phi_t(c_t)+b_t(\Phi_t(w_t)-\Phi_t(c_t))&\text{if }y=c_t\\\Phi_t(c_t)+b_t(\Phi_t(w_t)-\Phi_t(c_t))+(1-b_t)(\Phi_t(y)-\Phi_t(c_t)&\text{if }y\in(c_t,w_t)\end{array}\right.$\\ where $\Phi_t(.)$ is the standard normal cdf with mean $\mu_t$ and standard deviation $\sigma_t$.}}\right.$\\
[0.75cm]
\bottomrule
\end{tabular}}}
\label{tab:num_examples}
    \end{table}
    
Given marginal distributions of $Y_{00}$ and $Y_{10}$, we can generate conditional potential outcome distributions that satisfy the copula stability condition by the following, for $t=0,1$,
\begin{align}F_{Y_{t0}|D=0}(y)&=\frac{1}{q}C_{Y_0,D}(F_{Y_{t0}}(y),q),\label{eq:cs_FYt0C}\\
F_{Y_{t0}|D=1}(y)&=\frac{1}{p}\left(F_{Y_{t0}}(y)-C_{Y_0,D}(F_{Y_{t0}}(y),q)\right).\label{eq:cs_FYt0T}\end{align}
We set $C_{Y_0,D}(u,q)=(max(u^{-\theta}+q^{-\theta}-1,0))^{-1/\theta}$. In the following examples, we let $\theta=1$ to fulfill the strict monotonicity condition imposed on the horizontal copula for $u\in[0,1]$. Note that all parameters of the marginal distributions we consider are allowed to vary across time in an arbitrary manner.  

Figures \ref{fig:poisson}-\ref{fig:bunching} present the numerical examples. Each figure presents a plot of each of the observed distribution used in the evaluation of the CS bounds ($F_{Y_{0}|D=0}$, $F_{Y_{1}|D=0}$ and $F_{Y_{0}|D=1}$) in Panels A-C. Panel D of each figure presents the counterfactual distribution for the treatment group ($F_{Y_{10}|D=1}$) together with the CS bounds labeled as $CF$ and $LB$/$UB$, respectively. 

Figure \ref{fig:poisson} illustrates our bounds for the Poisson example with $\lambda_0=1$ and $\lambda_1=3$. Since the CiC bounds proposed in AI2006 can be applied, we compute them and compare them to the CS bounds proposed here. In this numerical example, both bounding approaches coincide as illustrated in Panel D of Figure \ref{fig:poisson}. 

Next, we examine mixed outcome distributions that fall outside the scope of the AI2006 identification results. Figures \ref{fig:left_censoring_1}-\ref{fig:left_censoring_2} provide two different parametrizations of the left-censoring example (Example II). In the first case (Figure \ref{fig:left_censoring_1}), $Ran F_{Y_{10}|D=1}\subset Ran F_{Y_0|D=1}$ and, as a result, the counterfactual distribution is point-identified. In the second case (Figure \ref{fig:left_censoring_2}), $Ran F_{Y_{10}|D=1}\not\subseteq Ran F_{Y_0|D=1}$, and we therefore only attain partial identification of the counterfactual distribution. Figure \ref{fig:right_censoring} illustrates the CS bounds for a right-censoring example (Example III), where the censoring cutoff as well as the degrees of freedom of the $\chi^2$ distribution vary across time. Finally, we consider a bunching example (Example IV), where the bunching cutoff ($c_t$), the width of the bunching window ($w_t-c_t$) and the bunching probability ($b_t$) are time-varying. One notable feature of the bunching example is that the potential outcome distributions are strictly increasing, but discontinuous. Panel D of Figure \ref{fig:bunching} shows that the CS bounds in this bunching example cover the counterfactual distribution. Overall, for these mixed outcome distributions, our numerical analysis illustrates that point-identification of the counterfactual distribution is possible on the intersection of the range of $F_{Y_{10|D=1}}$ and $F_{Y_0|D=1}$, whereas only set-identification is possible outside this intersection. 

Finally, it is important to discuss how the AI2006 CiC bounds would perform in the context of the mixed-outcome examples we consider. In several of these examples, the two quantiles used in the upper and lower bound in the AI2006 CiC bounds equal each other, specifically $Q_{Y_{0}|D=0}^{\mathbb{Y}_{0|0},+}(u)=Q_{Y_{0}|D=0}^{\mathbb{Y}_{0|0},-}(u)$ for $u\in(0,1)$ (e.g. Examples III and IV). It follows that the AI2006 CiC lower bound would equal its upper bound, and the CiC bounds would not include the counterfactual distribution. As AI2006 point out, the bound on quantiles that they exploit in their partial identification result for discrete outcomes is not valid for outcomes with mixed distributions.

\begin{singlespace}
\begin{figure}[H]\caption{Numerical Example I: Poisson with $\lambda_0=1$, $\lambda_1=3$}\label{fig:poisson}\centering
\footnotesize{\begin{tabular}{cccc}
\\
Panel A. $F_{Y_{00}|D=0}$ &Panel B. $F_{Y_{10}|D=0}$&
Panel C. $F_{Y_{00}|D=1}$ & Panel D. CS Bounds on $F_{Y_{10}|D=1}$\\
\includegraphics[width=4cm]{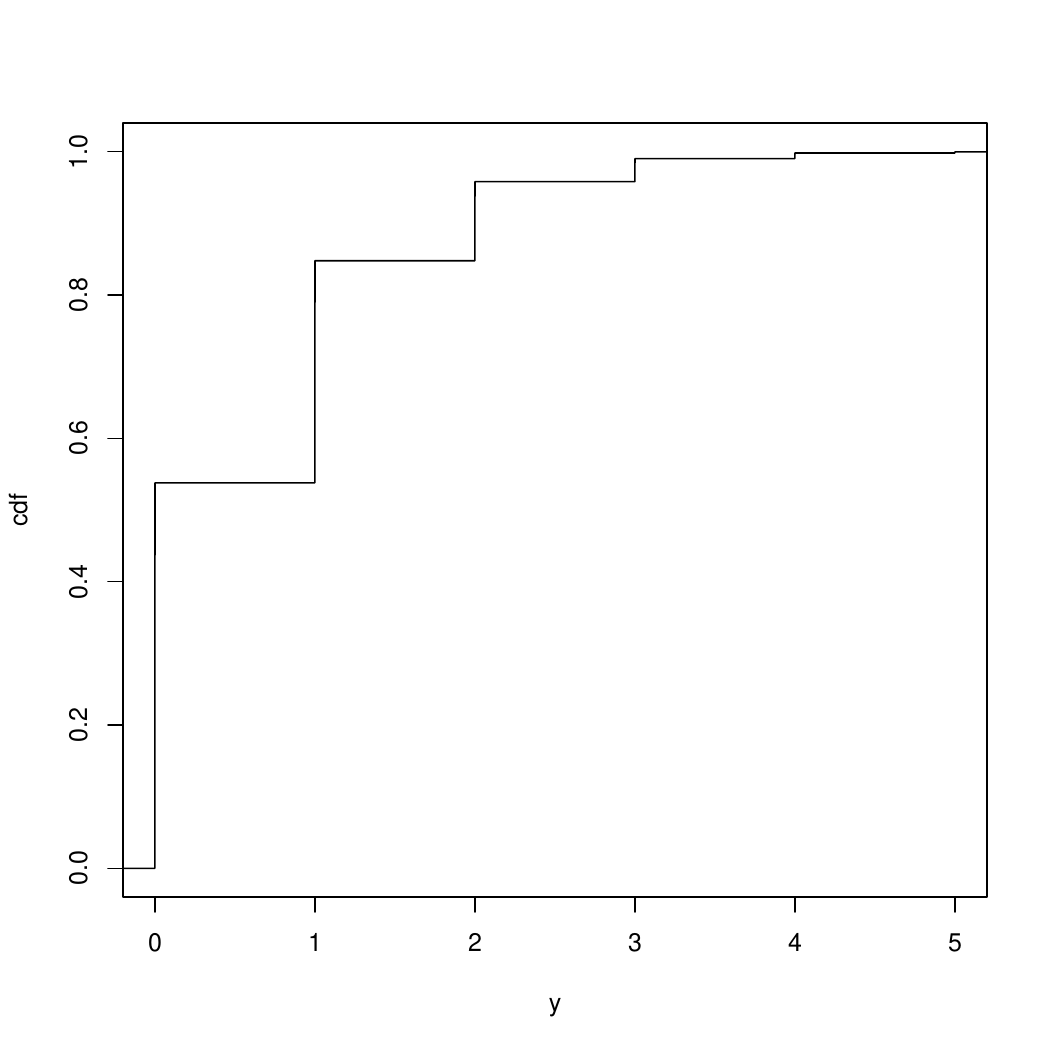}&\includegraphics[width=4cm]{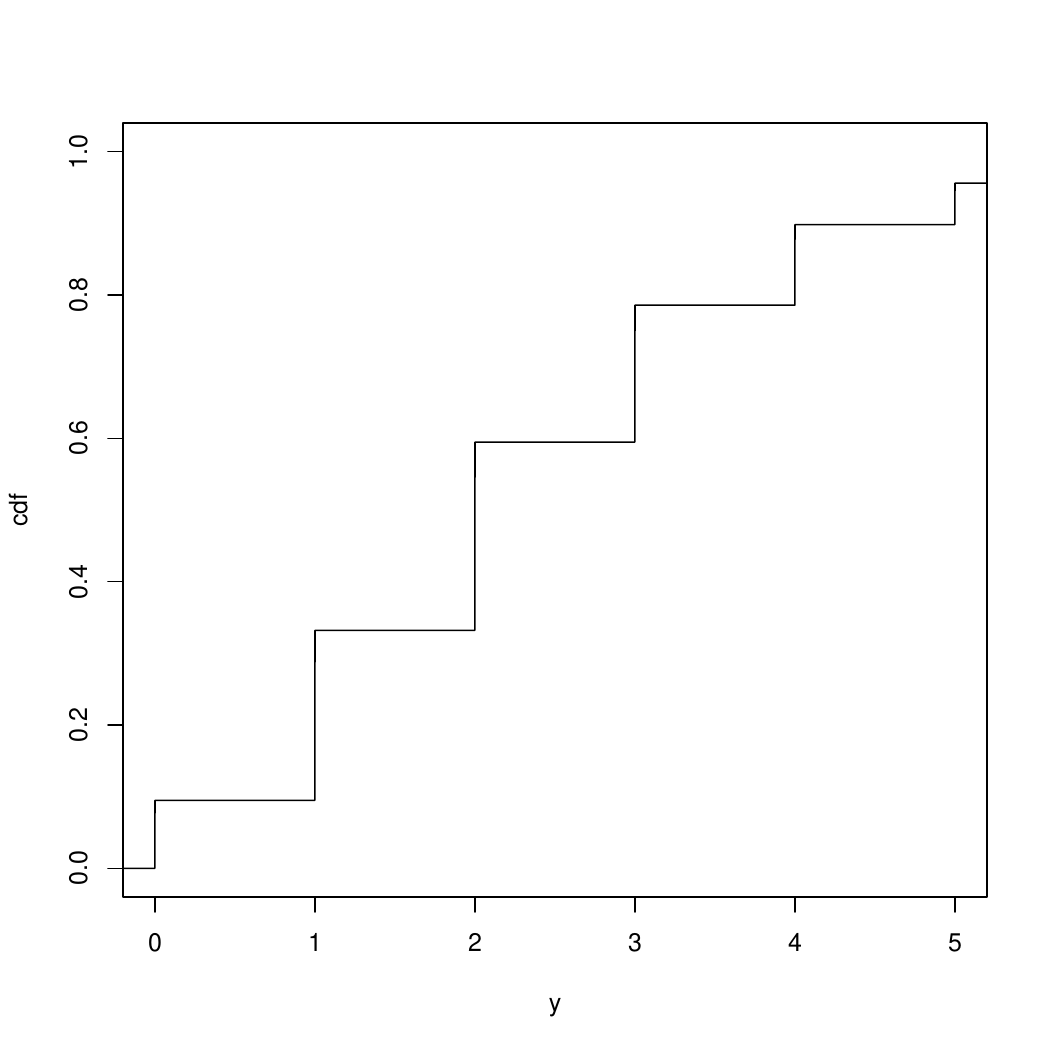}&
\includegraphics[width=4cm]{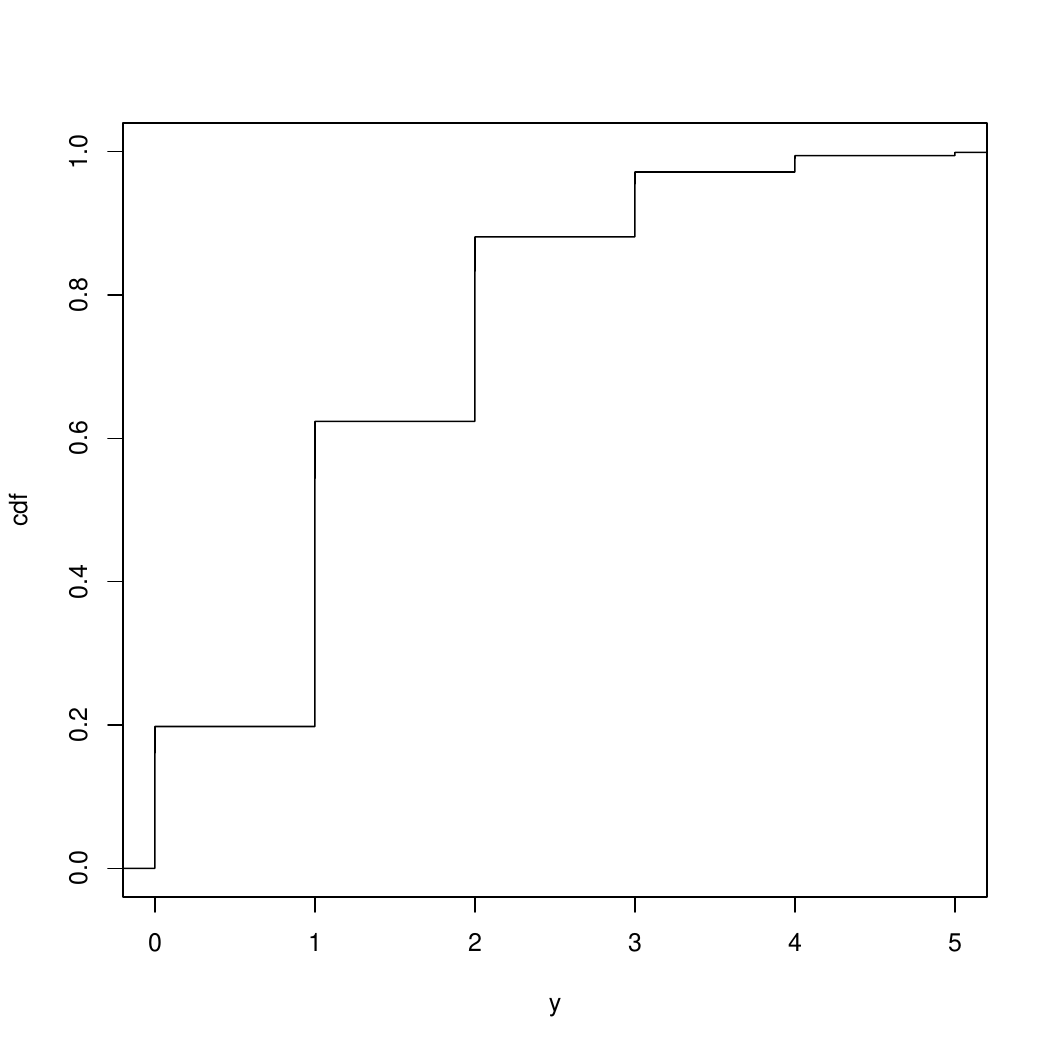}&\includegraphics[width=4cm]{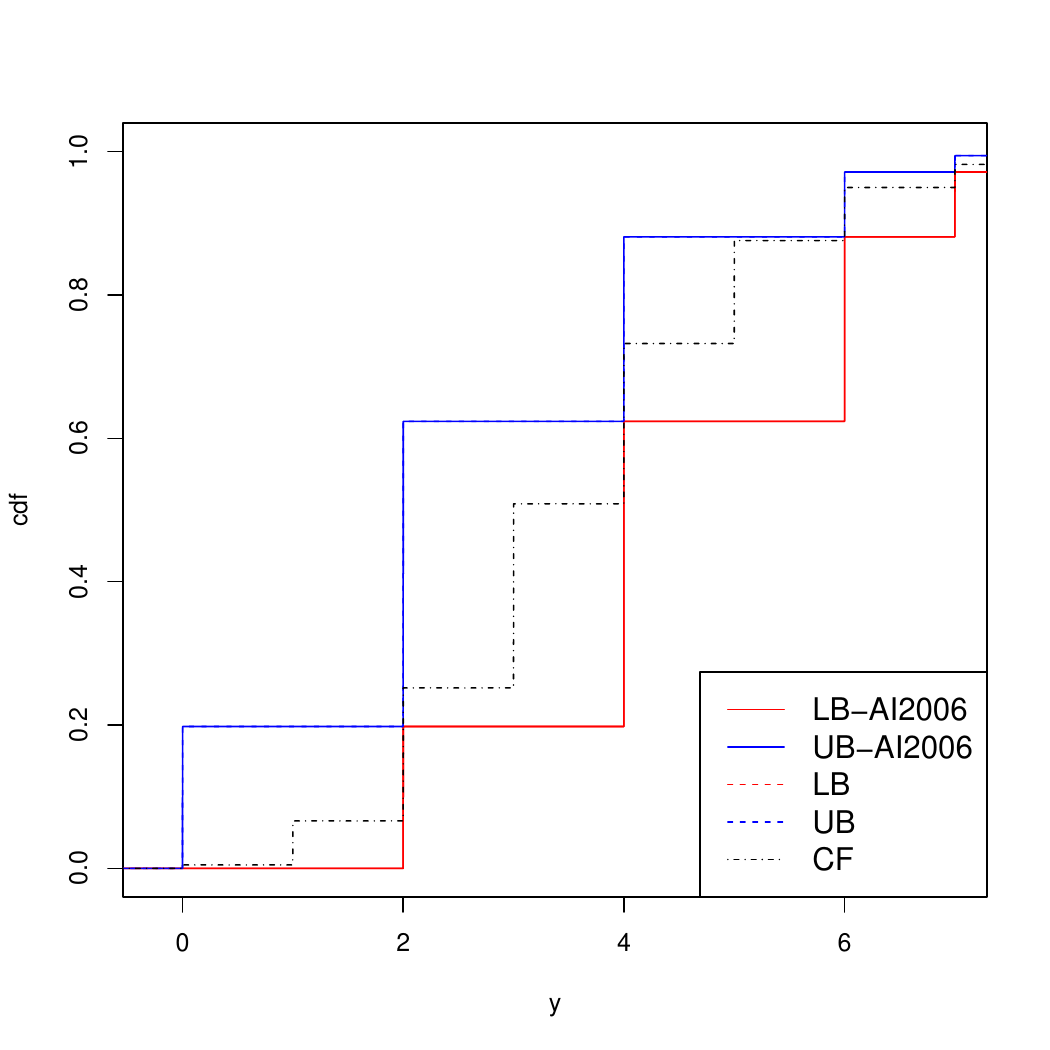}\\
\multicolumn{4}{c}{\parbox{\textwidth}{\scriptsize{\emph{Notes}: In Panel D, $CF$ denotes the counterfactual distribution for the treatment group ($F_{Y_{10}|D=1}$), $LB$-$AI2006$ ($UB$-$AI2006$) denotes the CiC lower (upper) bound from AI2006, and $LB$ ($UB$) denote the CS lower (upper) bound proposed here. }}}
\end{tabular}}
\end{figure}

\begin{figure}[htbp]\caption{Numerical Example II: Left-censoring, $c_0=c_1=5$,  $k_0=5$, $k_1=3$}\label{fig:left_censoring_1}
\footnotesize{\begin{tabular}{cccc}
\\
Panel A. $F_{Y_{00}|D=0}$ &Panel B. $F_{Y_{10}|D=0}$&Panel C. $F_{Y_{00}|D=1}$ & Panel D. CS Bounds on $F_{Y_{10}|D=1}$\\

\includegraphics[width=4cm]{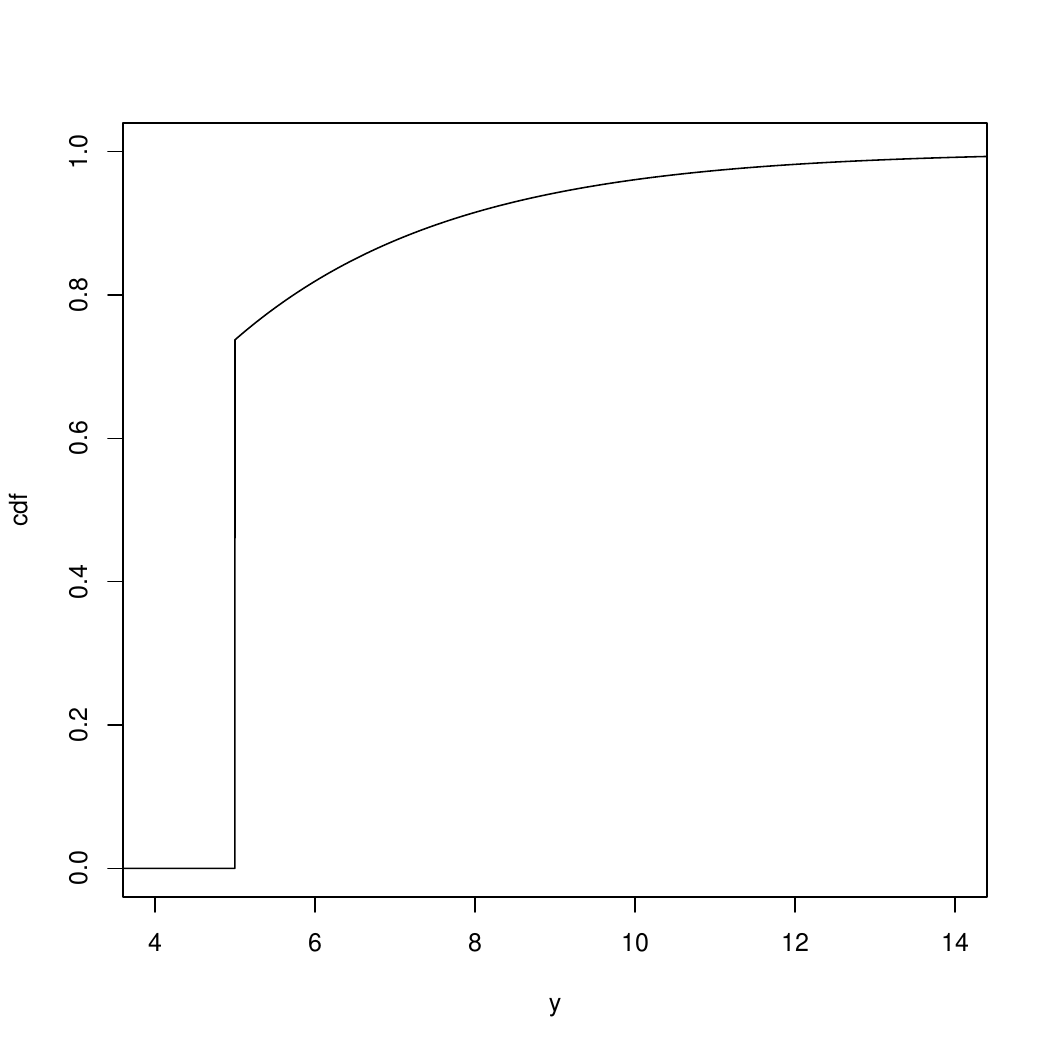}&\includegraphics[width=4cm]{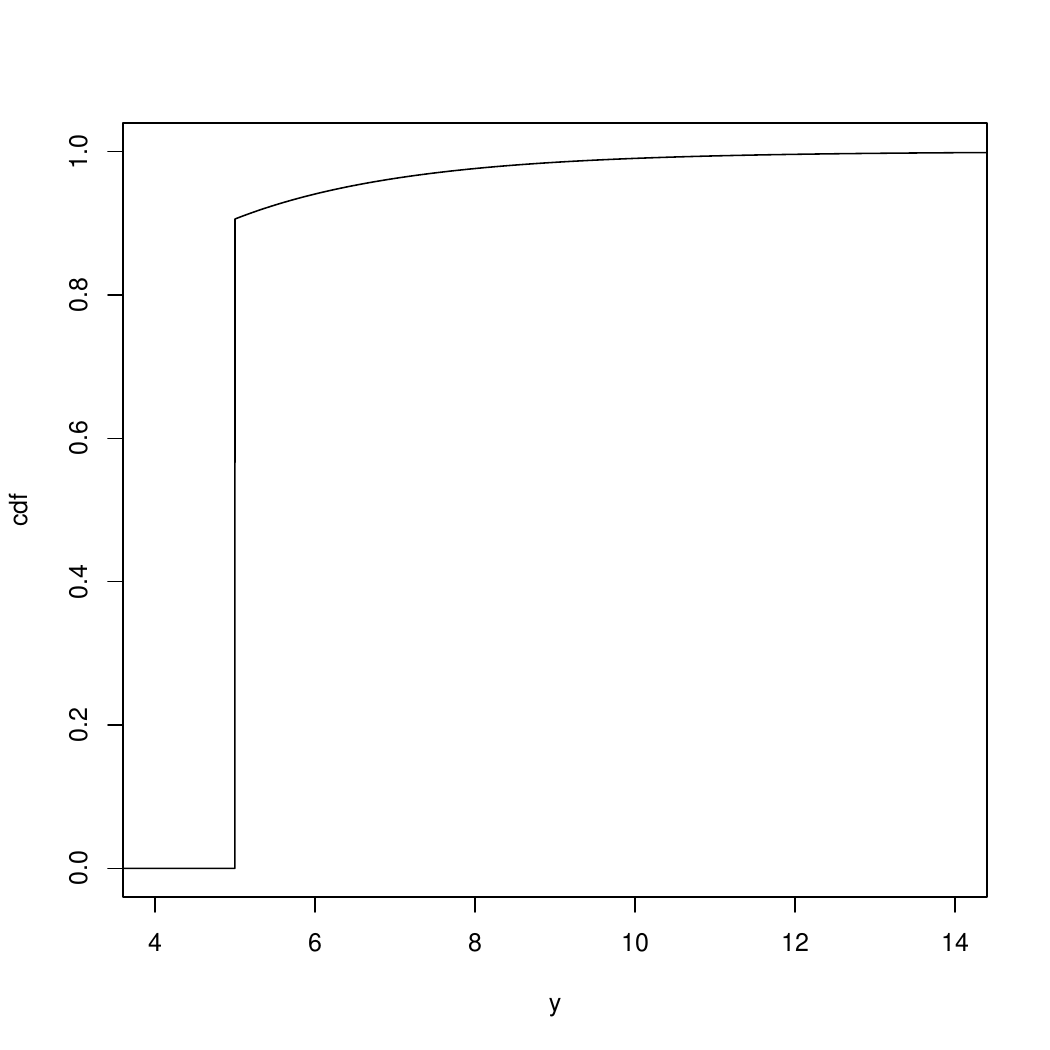}&
\includegraphics[width=4cm]{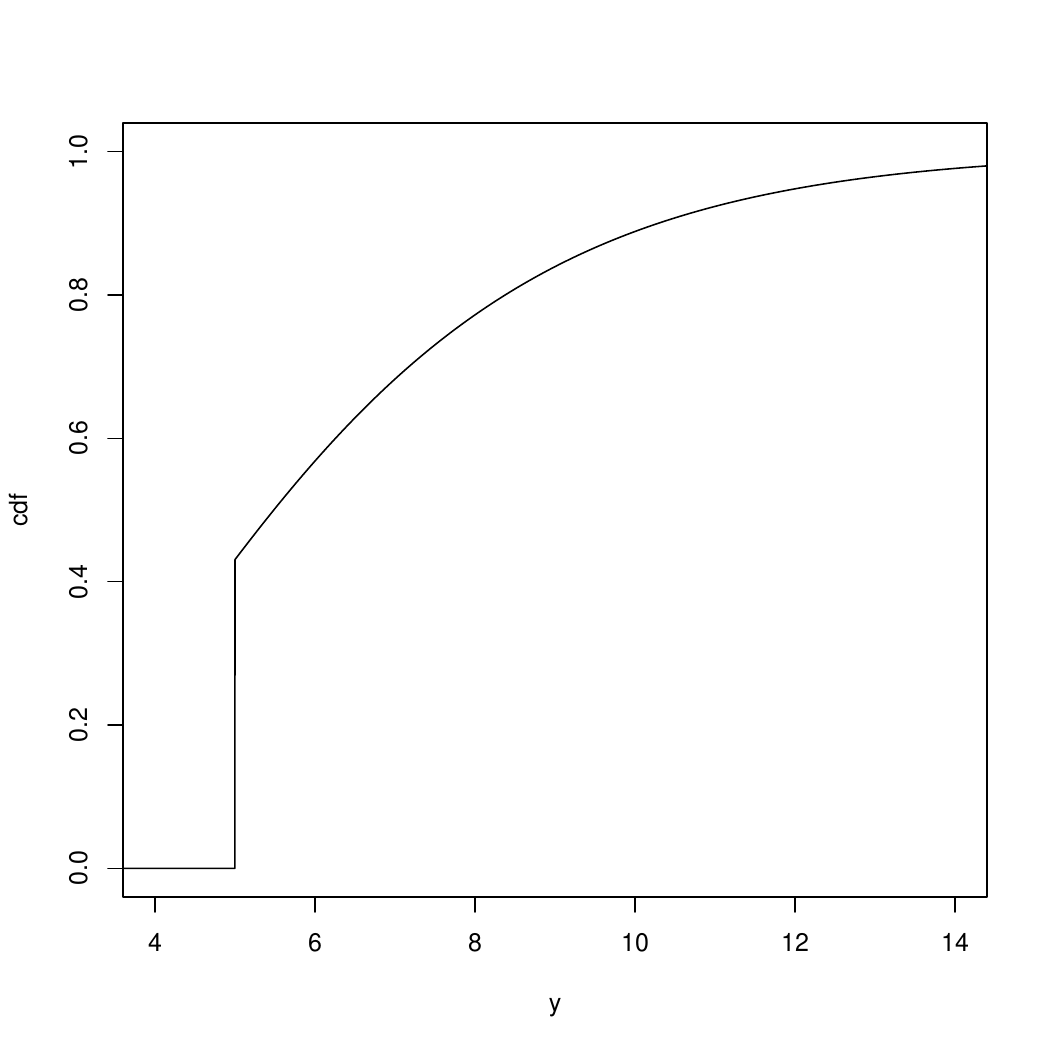}&\includegraphics[width=4cm]{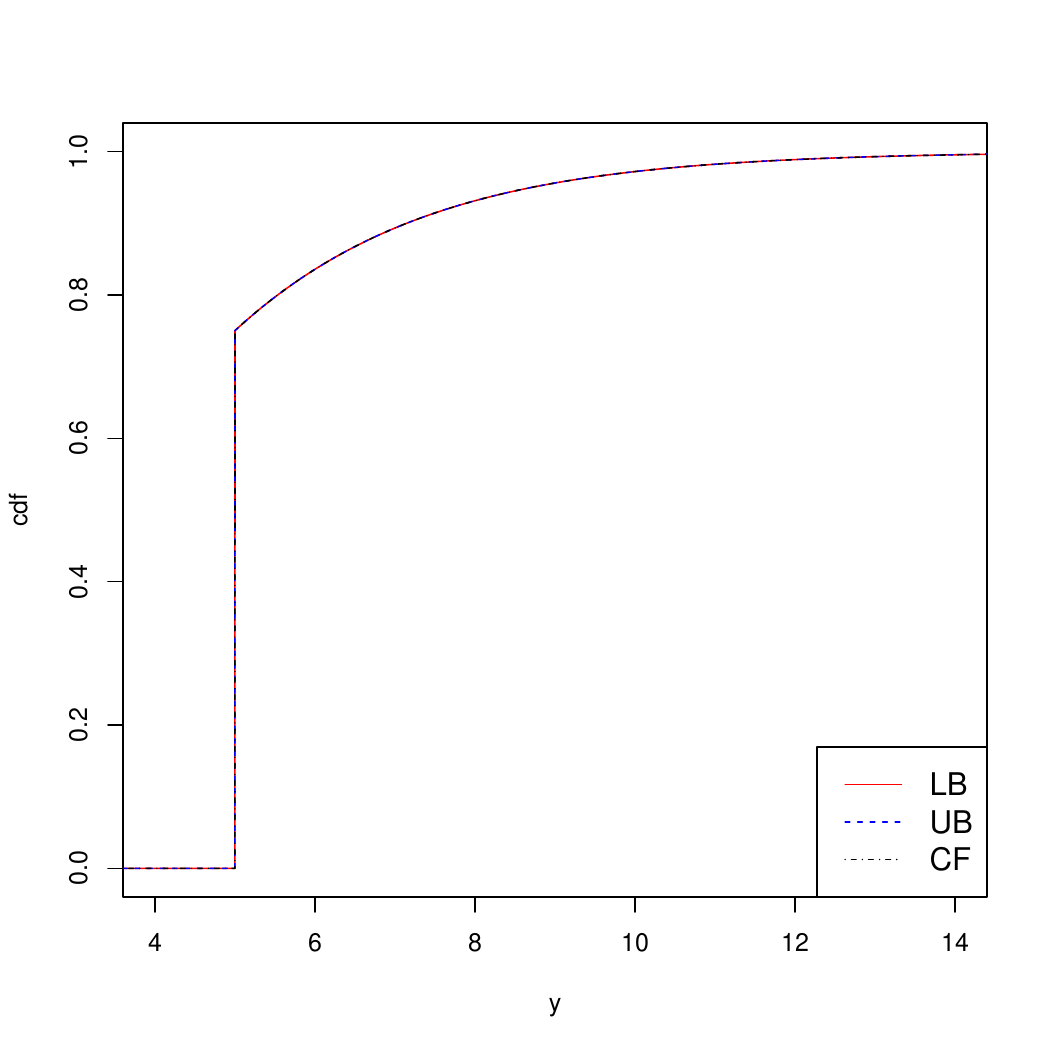}
\\
\end{tabular}}
\end{figure}

\begin{figure}[H]\caption{Numerical Example II: Left-censoring, $c_0=c_1=5$, $k_0=3$, $k_1=5$}
\footnotesize{\begin{tabular}{cccc}
\\
Panel A. $F_{Y_{00}|D=0}$ &Panel B. $F_{Y_{10}|D=0}$&Panel C. $F_{Y_{00}|D=1}$ & Panel D. CS Bounds on $F_{Y_{10}|D=1}$\\
\includegraphics[width=4cm]{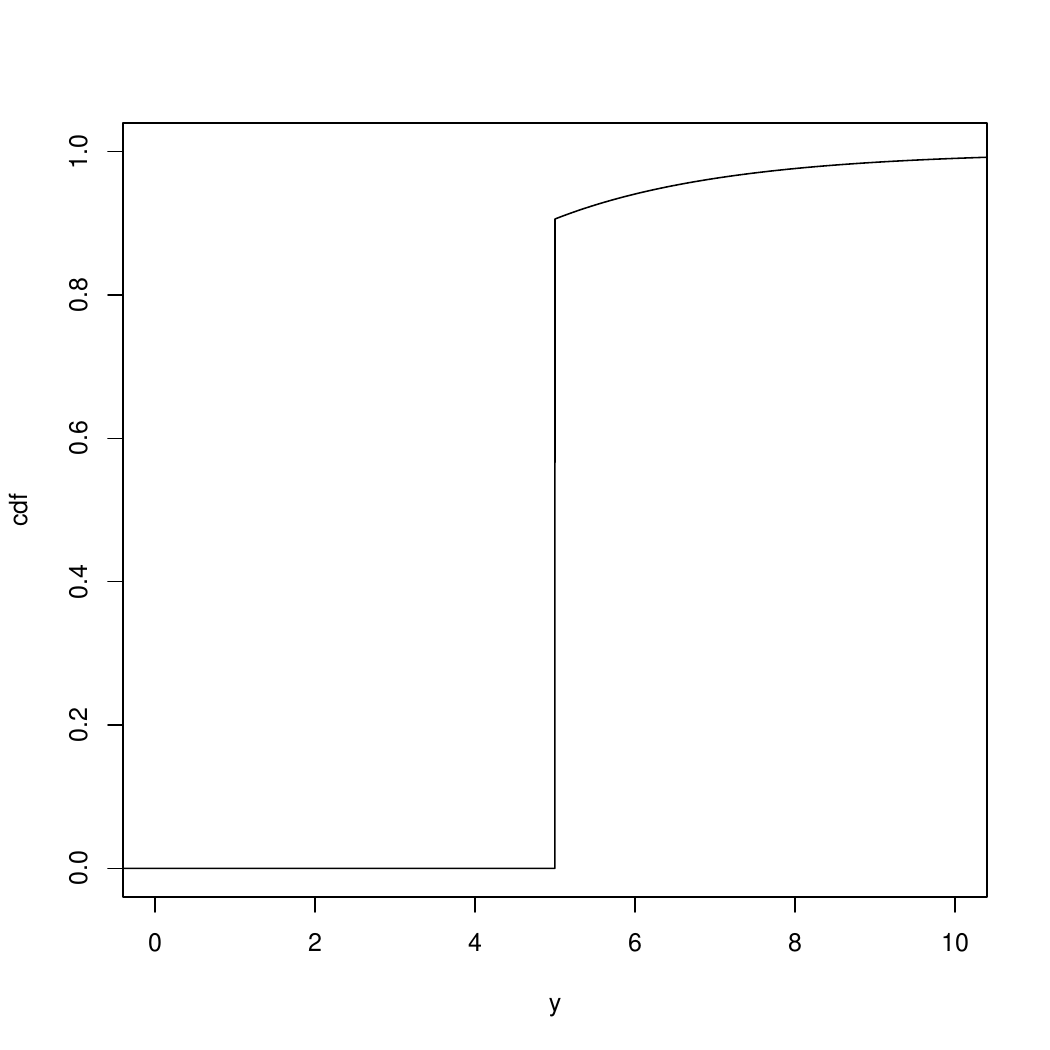}&\includegraphics[width=4cm]{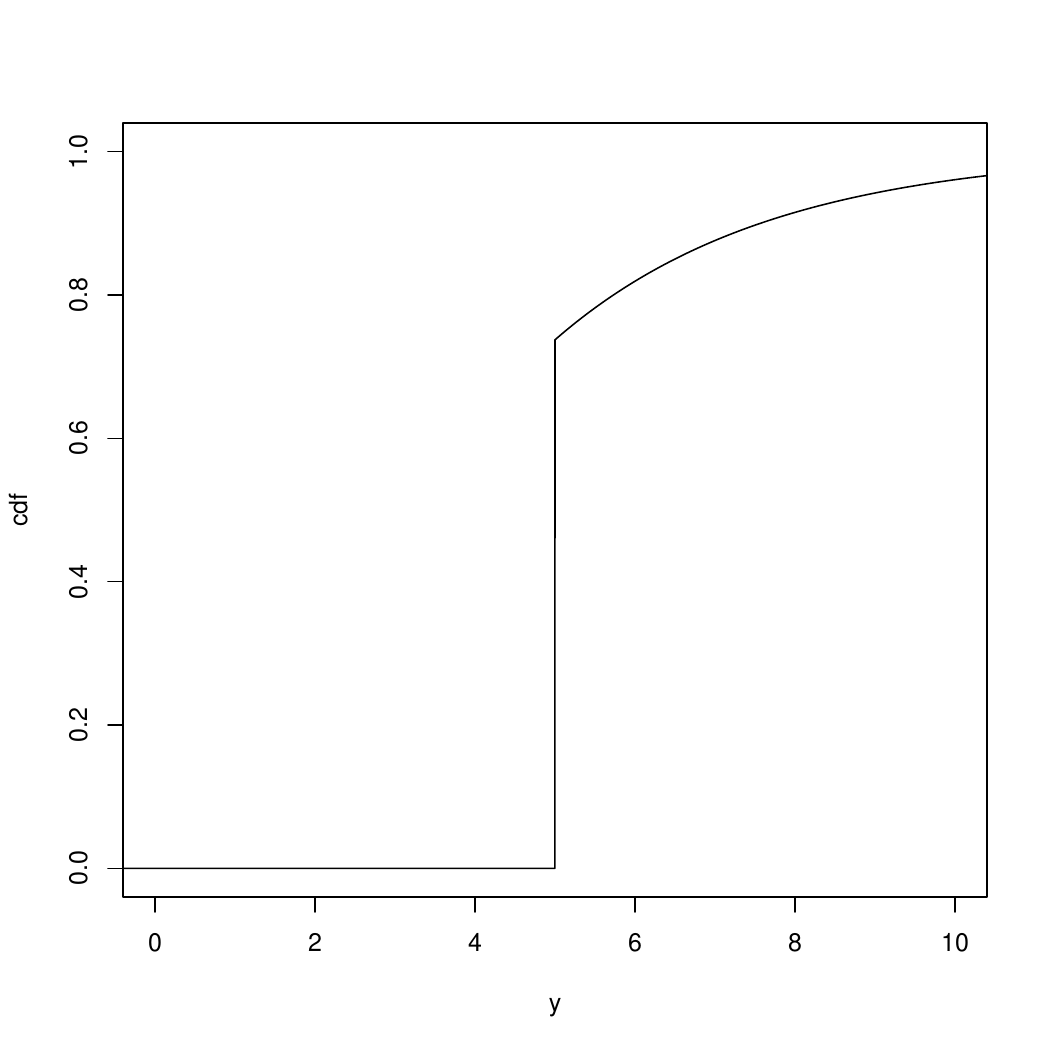}&
\includegraphics[width=4cm]{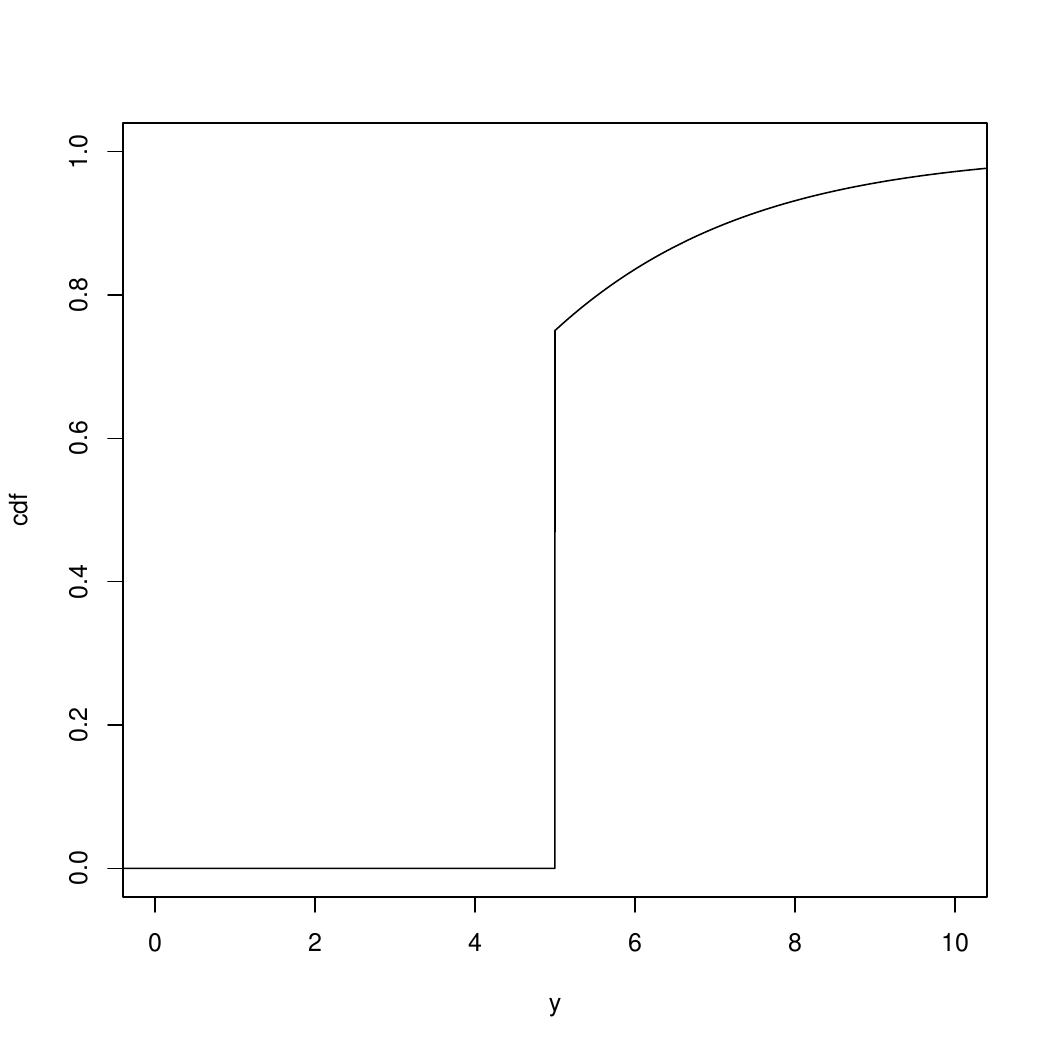}&\includegraphics[width=4cm]{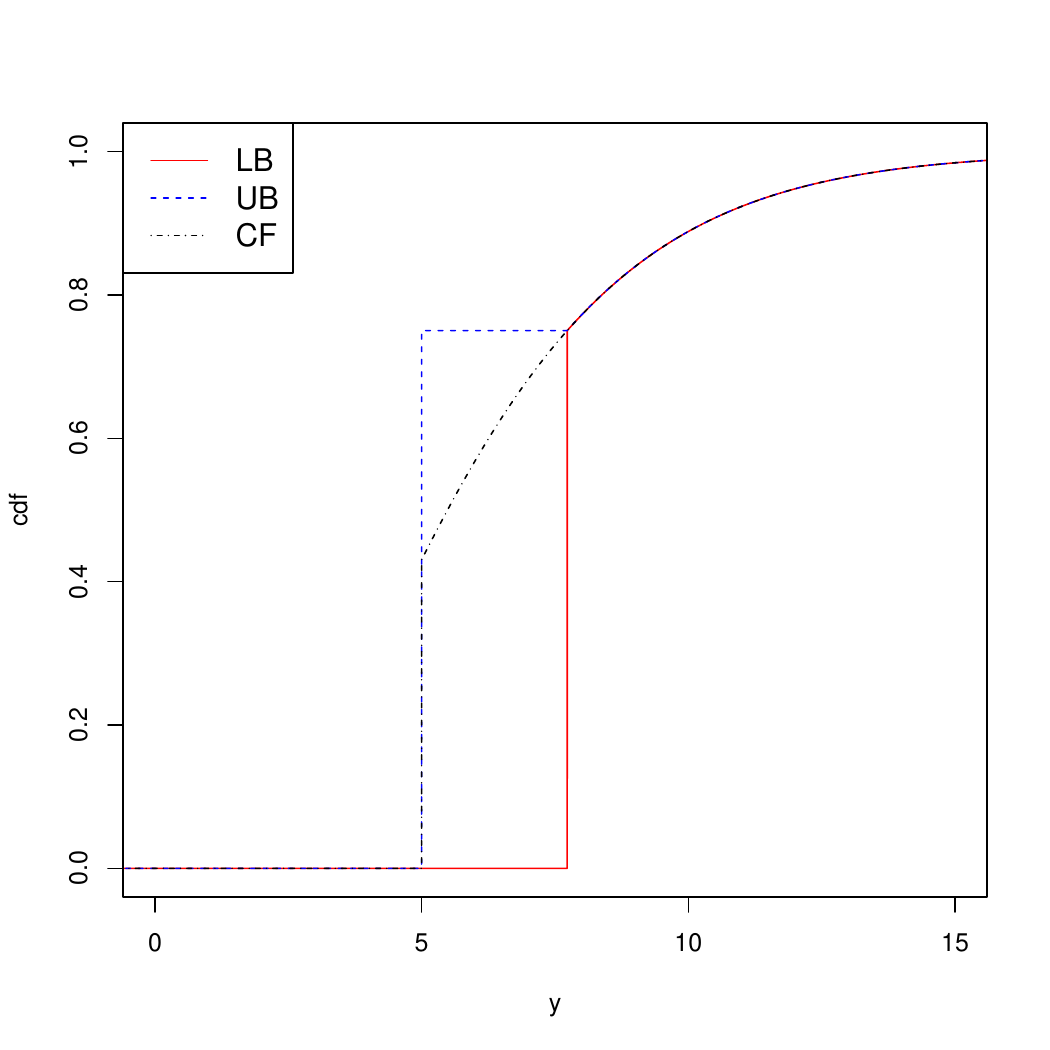}\\

\end{tabular}}
\end{figure}

\begin{figure}[htbp]\caption{Numerical Example III: Right-censoring, $c_0=5$, $c_1=10$, $k_0=3$, $k_1=5$}\label{fig:right_censoring}\footnotesize{
\begin{tabular}{cccc}
\\
Panel A. $F_{Y_{00}|D=0}$ &Panel B. $F_{Y_{10}|D=0}$&Panel C. $F_{Y_{00}|D=1}$ & Panel D. CS Bounds on $F_{Y_{10}|D=1}$\\
\includegraphics[width=4cm]{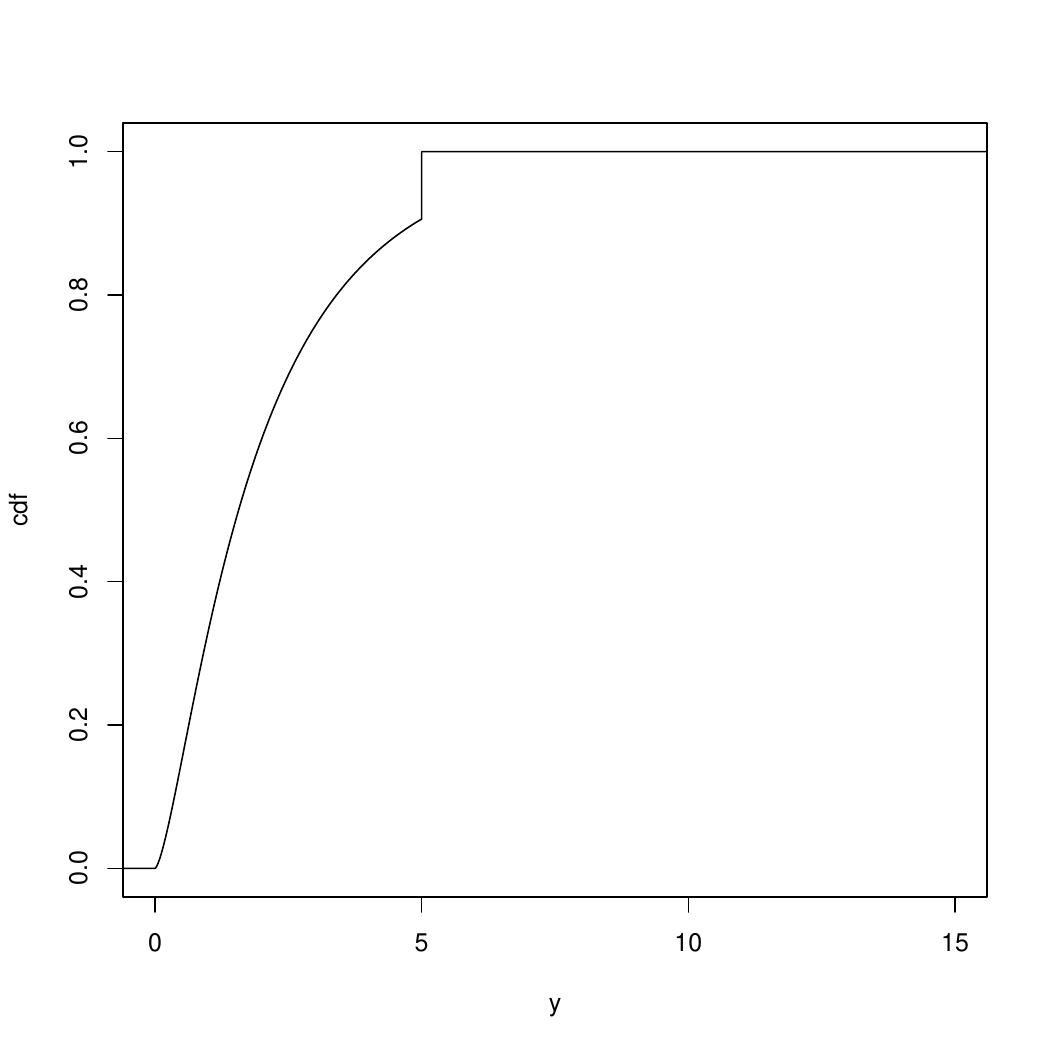}&\includegraphics[width=4cm]{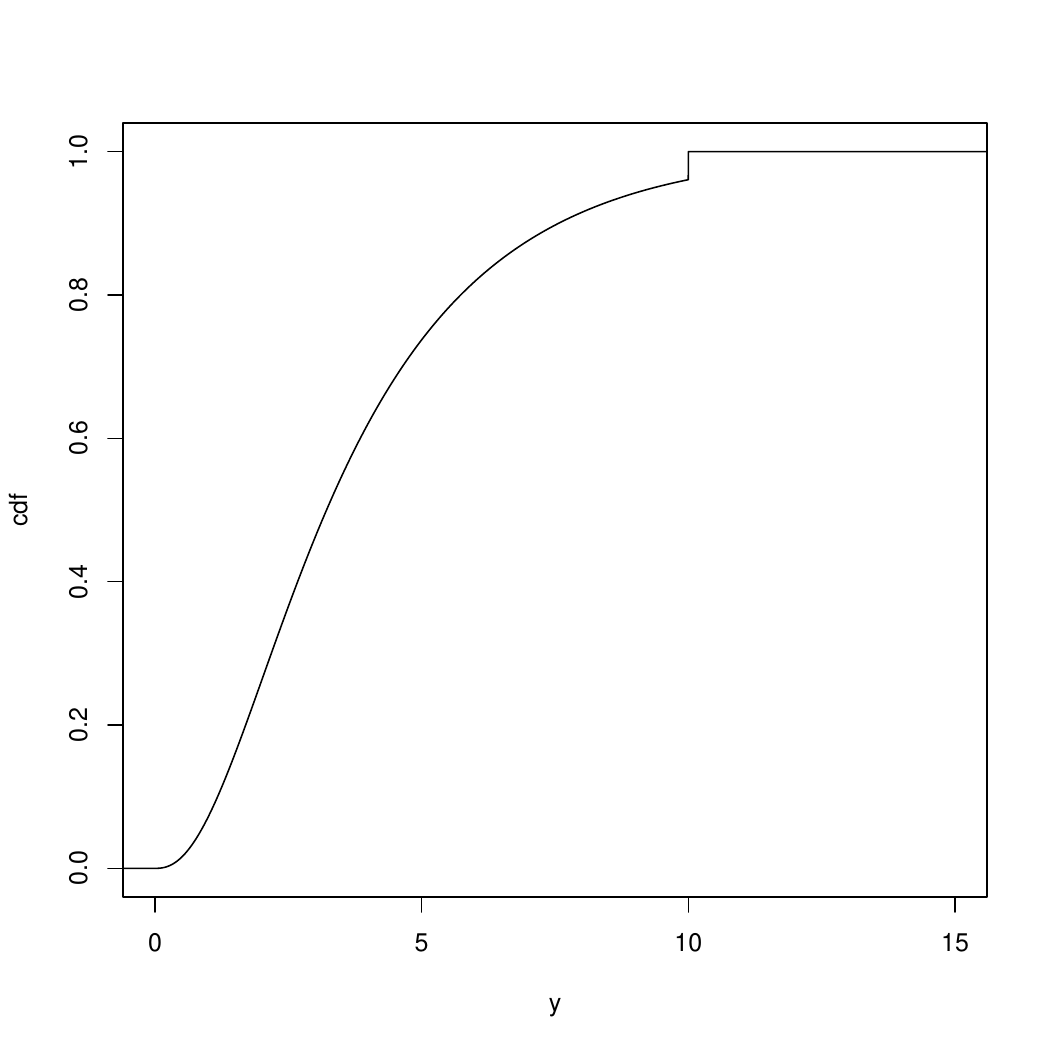}&
\includegraphics[width=4cm]{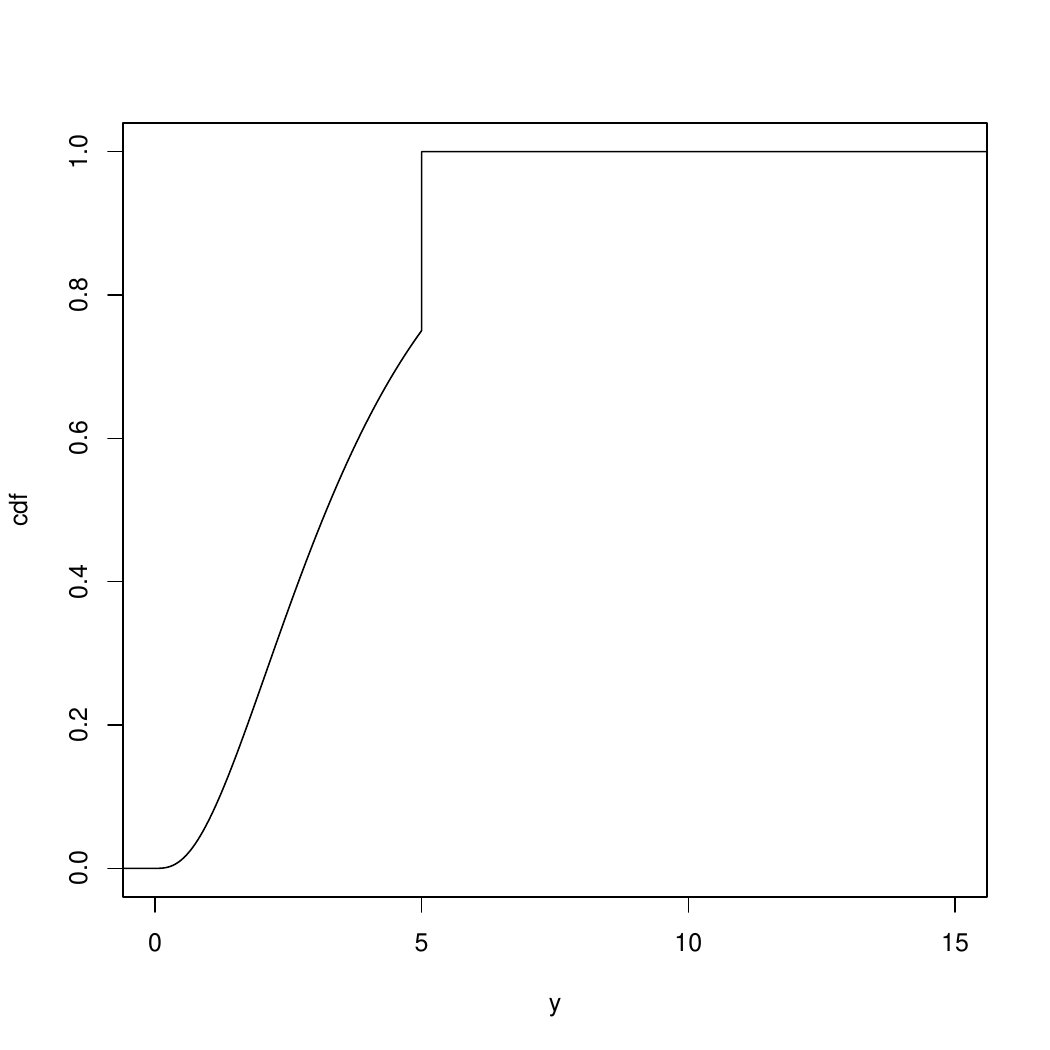}&\includegraphics[width=4cm]{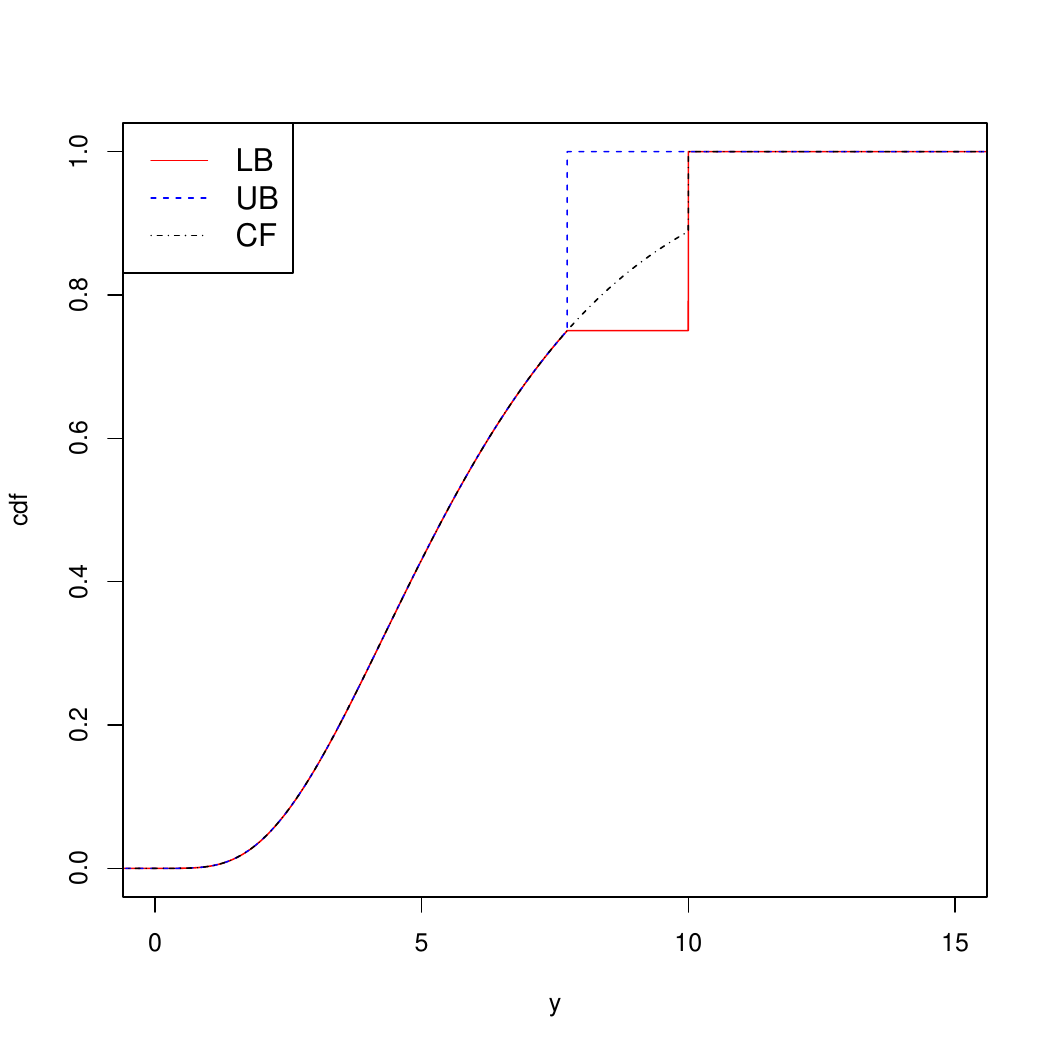}\\
\multicolumn{4}{c}{\parbox{\textwidth}{}}
\end{tabular}}\label{fig:left_censoring_2}
\end{figure}
\vspace{-1.5cm}
\begin{figure}[H]\caption{Numerical Example IV: Outcome Distribution with Bunching}\footnotesize{
\begin{tabular}{cccc}
\\
Panel A. $F_{Y_{00}|D=0}$ &Panel B. $F_{Y_{10}|D=0}$&Panel C. $F_{Y_{00}|D=1}$ & Panel D. CS Bounds on $F_{Y_{10}|D=1}$\\
\includegraphics[width=4cm]{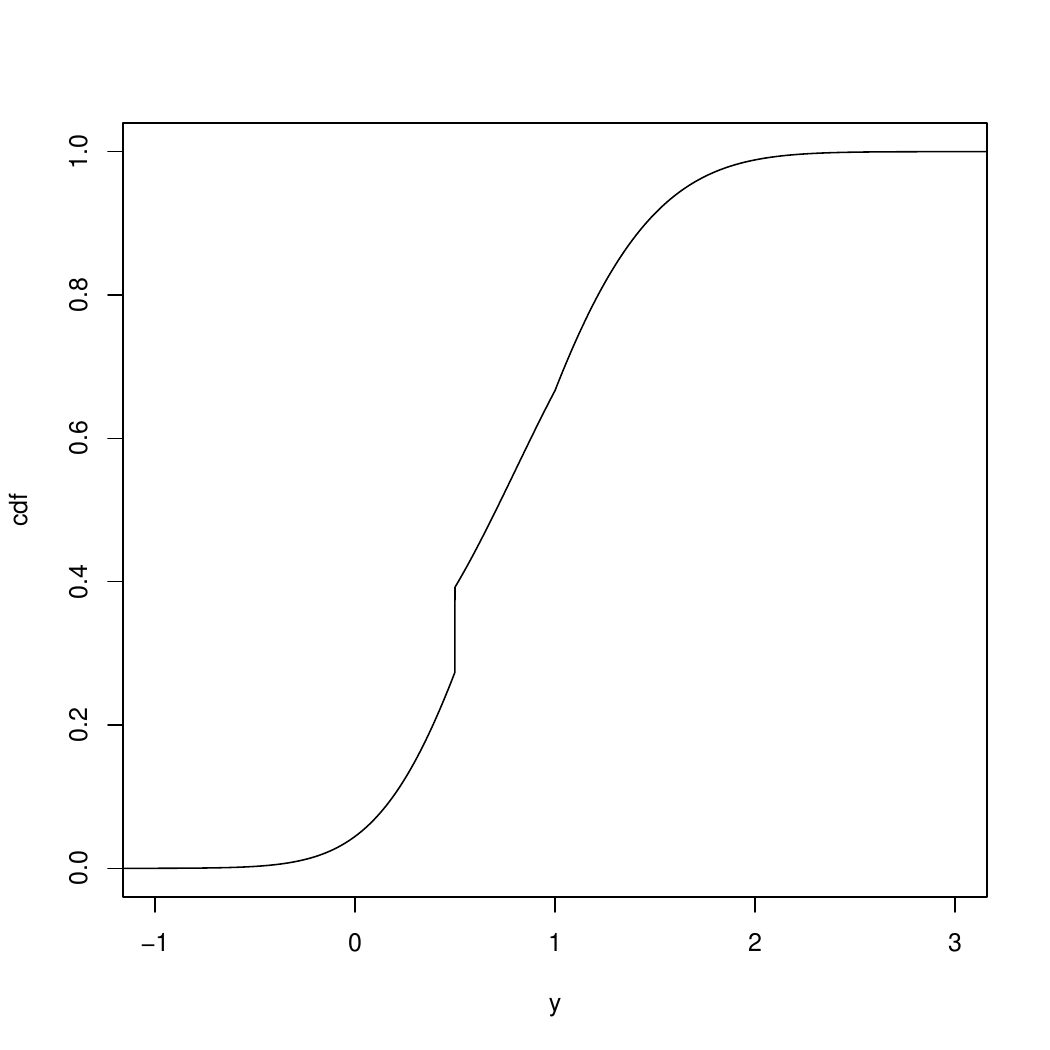}&\includegraphics[width=4cm]{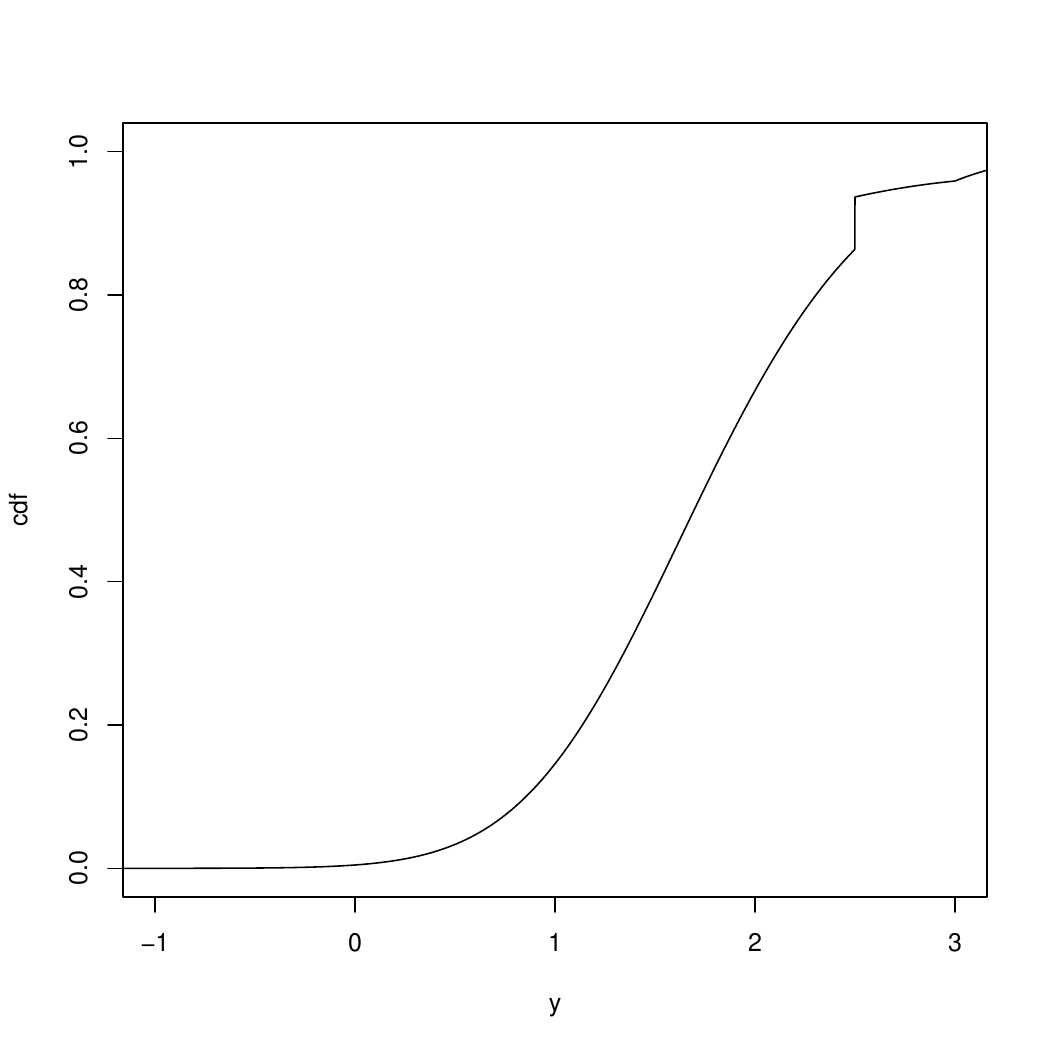}&
\includegraphics[width=4cm]{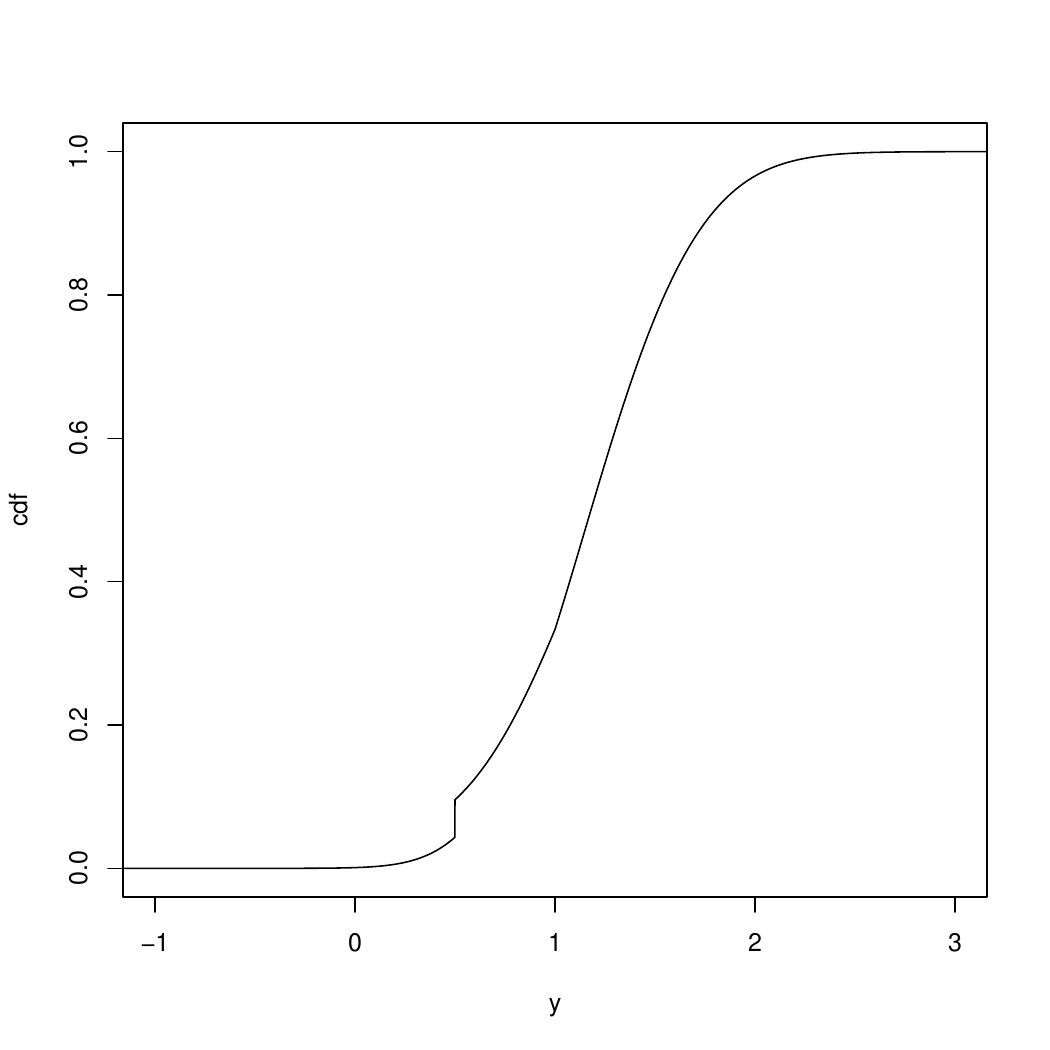}&\includegraphics[width=4cm]{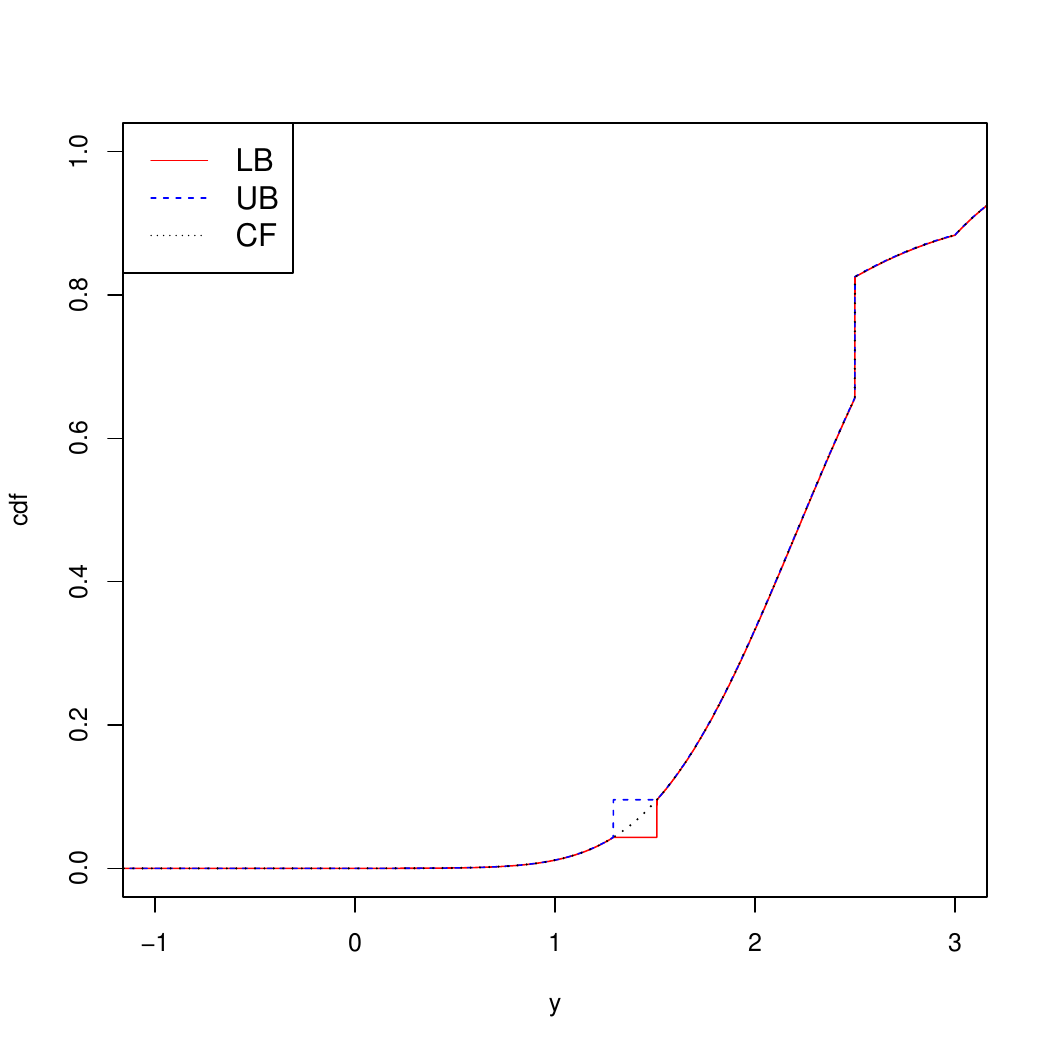}\\
\multicolumn{4}{c}{\parbox{\textwidth}{\scriptsize{\emph{Notes}: The figures are generated by numerically evaluating the conditional potential outcome distribution for the bunching example (IV) in Table \ref{tab:num_examples} with $c_0=0.5$, $w_0=1$, $c_1=2.5$, $w_1=3$, $b_0=0.25$, and $b_1=0.75$. }}}\\
\end{tabular}}\label{fig:bunching}
\end{figure}

\end{singlespace}

\section{Point-identification for any outcome with multiple pre-treatment periods}
We provide a condition under which we can achieve point-identification when multiple pre-treatment periods are available. 

\begin{corollary}\label{cor:pointid_binary}
Suppose that $\mathbb Y_{t0|1} \subseteq \mathbb Y_{t0|0}$ for $t \in \{-T_0,\ldots,0\}$, and Assumptions \ref{inc} and \ref{Gstab} hold. 
Suppose
\begin{eqnarray*}
    F_t^{LB}(s)&=&F_{Y_t|D=1}\left(Q^{\mathbb R,+}_{Y_t|D=0}\left(F_{Y_{1|D=0}}(s)\right)\right)\; \\\; F_t^{UB}(s)&=&F_{Y_t|D=1}\left(Q^{\mathbb R,-}_{Y_t|D=0}\left(F_{Y_{1|D=0}}(s)\right)\right).
\end{eqnarray*}
Suppose there exists $t_0 \in \{-T_0,\ldots,0\}$ such that $F_{Y_{t_0}|D=0}(s)=F_{Y_{1}|D=0}(s)$ for all $s$.\footnote{This can be easily verified by the researcher from the data.} Then, 
    $F_{Y_{10}|D=1}(y)=F_{Y_{t_0}|D=1}(y)$.
\end{corollary}

\begin{proof}
We have
\begin{eqnarray*}
\min_{t \in \{-T_0,\dots,0\}} F^{UB}_t(s) &\leq& F^{UB}_{t_0}(s) = F_{Y_{t_0}|D=1}\left(Q^{\mathbb R,-}_{Y_{t_0}|D=0}\left(F_{Y_{1|D=0}}(s)\right)\right),\\
&=& F_{Y_{t_0}|D=1}\left(Q^{\mathbb R,-}_{Y_{t_0}|D=0}\left(F_{Y_{t_0|D=0}}(s)\right)\right),\\
&\leq& F_{Y_{t_0}|D=1}(s),
\end{eqnarray*}
where the second equality holds because $F_{Y_{t_0}|D=0}(s)=F_{Y_{1}|D=0}(s)$, and second inequality holds because $F_{Y_{t_0}|D=1}$ is nondecreasing, and $Q^{\mathbb R,-}_{Y_{t_0}|D=0}\left(F_{Y_{t_0|D=0}}(s)\right) \leq s$ by definition of $Q^{\mathbb R,-}$. Similarly,
\begin{eqnarray*}
\max_{t \in \{-T_0,\dots,0\}} F^{LB}_t(s) &\geq& F^{LB}_{t_0}(s) = F_{Y_{t_0}|D=1}\left(Q^{\mathbb R,+}_{Y_{t_0}|D=0}\left(F_{Y_{1|D=0}}(s)\right)\right),\\
&=& F_{Y_{t_0}|D=1}\left(Q^{\mathbb R,+}_{Y_{t_0}|D=0}\left(F_{Y_{t_0|D=0}}(s)\right)\right),\\
&\geq& F_{Y_{t_0}|D=1}(s),
\end{eqnarray*}
where the second equality holds because $F_{Y_{t_0}|D=0}(s)=F_{Y_{1}|D=0}(s)$, and the second inequality holds because $F_{Y_{t_0}|D=1}$ is nondecreasing, and $Q^{\mathbb R,+}_{Y_{t_0}|D=0}\left(F_{Y_{t_0|D=0}}(s)\right) \geq s$ by definition of $Q^{\mathbb R,+}$. Hence, 
$$\min_{t \in \{-T_0,\dots,0\}} F^{UB}_t(s) \leq \max_{t \in \{-T_0,\dots,0\}} F^{UB}_t(s).$$
From Corollary \ref{result:fals-test}, we must have under our identifying assumptions
$$\max_{t \in \{-T_0,\dots,0\}} F^{LB}_t(s) \leq \min_{t \in \{-T_0,\dots,0\}} F^{UB}_t(s).$$
Therefore, the following equality holds. 
$$\max_{t \in \{-T_0,\dots,0\}} F^{LB}_t(s) = \min_{t \in \{-T_0,\dots,0\}} F^{UB}_t(s).$$
Hence,
$$\max_{t \in \{-T_0,\dots,0\}} F^{LB}_t(s) = \min_{t \in \{-T_0,\dots,0\}} F^{UB}_t(s)=F^{UB}_{t_0}(s)=F^{LB}_{t_0}(s)=F_{Y_{t_0}|D=1}(s).$$

\end{proof}

\section{Structural underpinnings of the copula stability assumption}

Consider a policymaker who wants to implement a policy in a specific region, i.e. introduction/increase of a minimum wage.
The policymaker decides to implement a policy if the gain in social welfare under the policy is higher than the gain in social welfare without the policy. The gain is evaluated by the policymaker given her information set $\mathcal I$. This decision rule is modeled as:
\begin{eqnarray*}
D&=&\mathbbm 1\left\{\mathbb E[W(Y_{11})-W(Y_{00})|\mathcal I] > \mathbb E[W(Y_{10})-W(Y_{00})|\mathcal I]\right\},\\
&=&\mathbbm 1\left\{\mathbb E[W(Y_{11})|\mathcal I] > \mathbb E[W(Y_{10})|\mathcal I]\right\}
\end{eqnarray*}
where $\mathcal I$ is the sigma-algebra characterizing
the decision maker  information set at the time of the decision, $Y_{td}$ for $t,d \in \{0,1\}$ are $\mathcal I$ measurable.
$W(.)$ is a measurable function that depends on the type of social welfare the policymaker wants to use. $W(.)$ can be specified to capture various types of societal welfare, like those discussed in the previous subsection.

To mimic our empirical illustration, we are considering the case where the outcomes of interest are mixed random variables because of the pre-existing minimum wage. We consider  a general case where a minimum wage $c_0$ exists in the pre-treatment period and the policymaker is considering an increase in this minimum wage, i.e. $c_1>c_0$.
\begin{eqnarray*}Y_{t0}&=&Y^{\ast}_{t0}\mathbbm{1}\{Y^{\ast}_{t0}>c_0\}+c_0\mathbbm{1}\{Y^{\ast}_{t0}\leq c_0\},~t=0,1,\\
Y_{11}&=&Y^{\ast}_{11}\mathbbm{1}\{Y^{\ast}_{11}>c_1\}+c_1\mathbbm{1}\{Y^{\ast}_{11}\leq c_1\}.\end{eqnarray*}

Assume that $Z$ is a vector of random variables that is measurable with respect to the policymaker information $\sigma$-algebra $\mathcal I$, and $\mathbb E[W(Y_{1d})|\mathcal{I}]=\psi_{1d}(Z)+ V_{1d}$, with $\mathbb E[V_{1d}\vert Z]=0$, where $V_{1d}$ for $d \in \{0,1\}$ are the prediction errors made by the policymaker given her information set. $Z$ could have a degenerate distribution and in such a case, the policymaker does not have additional information based on which she can form expectations. When $Z$ is observed by the econometrician, all our results hold conditional on $Z$.   

In the following, we assume that $\zeta\equiv V_{10}-V_{11}$ and the latent variables have continuous distributions. Our model simplifies to:
\begin{eqnarray}\label{eq:example1}
\left\{ \begin{array}{lcl}
     Y_0 &=& Y_{00}\\ \\
     Y_{1} &=& Y_{11} D+ Y_{10}(1-D) \\ \\
     D &=& \mathbbm{1}\left\{\psi_{11}(Z)-\psi_{10}(Z)\geq \zeta\right\}
     \end{array} \right.
\end{eqnarray}
Let $C_{Y^{\ast}_{t0}, \zeta\vert Z=z}(u,v;\rho_t(z))$ be the conditional copula that captures the dependence between $Y^{\ast}_{t0}$ and  $\zeta$. Suppose that $C_{Y^{\ast}_{t0}, \zeta\vert Z=z}(u,v;\rho_t(z))$ belongs to the class of totally ordered copulas.\footnote{$\{C_\theta\}$ is a totally strictly ordered family of copula if either $C_{\theta}(u,v)< C_{\theta'}(u,v)$ for all $(u,v) \in [0,1]^2$ whenever $\theta < \theta'$ for any $\theta, \theta'$ in the parameter space or $C_{\theta}(u,v) >C_{\theta'}(u,v)$ for all $(u,v) \in [0,1]^2$ when $\theta < \theta'$ for any $\theta, \theta'$ in the parameter space.} 
Therefore, it can be shown that 
if $\rho_0(z)=\rho_1(z)$ ---meaning that the dependence  between the policymaker prediction errors  $\zeta$ and $Y^{\ast}_{00}$ is the same as the dependence between $\zeta$ and $Y^{\ast}_{10}$, then the copula stability assumption holds conditional on $Z=z$, $C_{Y_{00},D|Z=z}(u,q)=C_{Y_{10},D|Z=z}(u,q)$ for all $u\in [0,1]$. A special case of this result is imposing a joint normal distribution on all the latent variables in the model such as 
$$\left(\begin{array}{c}Y^{\ast}_{00}\\ Y^{\ast}_{10} \\ \zeta \end{array}\right)\vert Z=z\sim N(0,\Sigma),~ \Sigma=\left(\begin{array}{ccc}\sigma_0^2(z) & \delta(z) \sigma_0(z) \sigma_1(z) &  \rho_0(z) \sigma_0(z)\\ \delta(z) \sigma_0(z) \sigma_1(z) & \sigma_1^2(z) &  \rho_1(z) \sigma_1(z)\\ \rho_0(z) \sigma_0(z)&  \rho_1(z) \sigma_1(z) & 1\end{array}\right).$$\\ \\
In this case, copula stability conditional on $Z$ is equivalent to 
$\rho_0(z)=\rho_1(z) \Leftrightarrow   Corr(\zeta, Y^{\ast}_{00}\vert Z=z)=Corr(\zeta,Y^{\ast}_{10}\vert Z=z)$, since the Gaussian copula belongs to the family of the strictly totally ordered copula. In this special case, our assumption is valid when the error of predictions made by the policymakers is correlated with the latent outcomes in the same way over time. 

\begin{proof}By definition, $Y_{t0}=Y^*_{t0}\mathbbm{1}\{Y^*_{t0} > c_0\}+c_0\mathbbm{1}\{Y^*_{t0} \leq c_0\}$. Take $y > c_0$. In the following, all arguments are conditional on $Z=z$:
\begin{eqnarray*}
\mathbb P(Y_{t0} \leq y, D \leq 0) &=& \mathbb P(Y_{t0} \leq y, \zeta \leq \psi(z)),\\
&=& \mathbb P(Y_{t0} \leq y, \zeta \leq \psi(z), Y^*_{t0} > c_0)+\mathbb P(Y_{t0} \leq y, \zeta \leq \psi(z), Y^*_{t0} \leq c_0),\\
&=&  \mathbb P(c_0 < Y^*_{t0} \leq y, \zeta \leq \psi(z)) + \mathbb P(Y^*_{t0} \leq c_0, \zeta \leq \psi(z)),\\
&=& \mathbb P(Y^*_{t0} \leq y, \zeta \leq \psi(z))
\end{eqnarray*}
Making the conditioning on $Z=z$ explicit, we have
$\mathbb P(Y_{t0} \leq y, D \leq 0|Z=z)=\mathbb P(Y^*_{t0} \leq y, \zeta \leq \psi(Z)|Z=z)=\Phi_2\left(\frac{y}{\sigma_t},\psi(z);\rho_t(z)\right)$,
which implies $C_{Y_{t0},D|Z=z}(u,q)=\Phi_2\left(\frac{Q^{\mathbb R, -}_{t0}(u)}{\sigma_t},Q^{\mathbb R,-}_D(q);\rho_t(z)\right)=\Phi_2(\Phi^{-1}(u),0;\rho_t(z))$. Therefore, it follows that $$C_{Y_{00},D|Z=z}(u,q)=C_{Y_{10},D|Z=z}(u,q) \Longleftrightarrow \rho_0(z)=\rho_1(z).$$
\end{proof}

\end{appendix}

\end{document}